%% file: Doctoral_Dissertation.tex
\documentclass[dvipdfmx,10pt,openright]{book}
\AtBeginDocument{\addtocontents{toc}{\protect\thispagestyle{empty}}} 
\usepackage[centering]{geometry}
\usepackage{latexsym}
\usepackage{fancyhdr}
\usepackage{pdfpages}
\usepackage{mathrsfs}
\usepackage{amsmath,amssymb}
\usepackage{amsthm}
\usepackage{mathtools}
\usepackage{bm}
\usepackage{bbm}
\usepackage{ascmac}
\usepackage{braket}
\usepackage{geometry}
\usepackage{chngcntr}
\usepackage{docmute}
\usepackage{here}
\usepackage[unicode,bookmarks=true,colorlinks=true]{hyperref}
\counterwithin*{section}{part}
\theoremstyle{plain}
\newtheorem{thm}{Theorem}[section]
\newtheorem{lem}[thm]{Lemma}
\newtheorem{prop}[thm]{Proposition}
\newtheorem{cor}[thm]{Corollary}
\newtheorem{dfn}[thm]{Definition}
\newtheorem{asm}[thm]{Assumption}
\newtheorem{rem}[thm]{Remark}
\newtheorem{eg}[thm]{Example}
\mathtoolsset{showonlyrefs=true}
\newcommand{\esssup}{\mathop{\rm ess~sup}\limits}

\numberwithin{equation}{section}
\providecommand{\keywords}[1]
{
  \small
  \textbf{{Keywords~}} Asset pricing, Backward stochastic differential equation, Exponential utility, Market clearing, Mean field game, Stochastic optimal control
  \normalsize
}
\providecommand{\MCS}[1]
{
  \small
  \textbf{{Mathematics Subject Classification~}} 49N80, 60H10, 91B51, 91G10, 93E11
  \normalsize
}
\providecommand{\JEL}[1]
{
  \small
  \textbf{{JEL Classification~}} C61, C73, D52, G11, G12
  \normalsize
}

\title{
    \vspace*{-30mm}
    \textsc{\Large{Doctoral Dissertation}}\vspace{5mm}\\ 
    \hrulefill\\
    \textbf{\LARGE{Mean Field Equilibrium Asset Pricing Models With Exponential Utility}}\\
    \hrulefill
}

\author{
    \vspace{5mm}\\ 
    \textit{Author:} \href{https://masashi-sekine.github.io/En/index.html}{Masashi Sekine} \vspace{10mm} \\
    \textsc{Department of Quantitative Finance}\\
    \textsc{Division of Management}\\
    \textsc{Graduate School of Economics}\\
    \textsc{The University of Tokyo} \vspace{10mm}\\
    A Dissertation Submitted for the Degree of \vspace{10mm}\\
    \Large{\textit{Doctor of Philosophy in Management}} \vspace{8mm}\\
}

\date{November 2025}
\begin{document}
\maketitle
\newpage
~
\newpage
\vspace{20mm}
\centerline{\LARGE{\textbf{Acknowledgement}}}\vspace{10mm}
First and foremost, I would like to express my deepest gratitude to Associate Professor Masaaki Fujii for his long-term guidance and support. 
His passion has taught me the depth and joy of research and his earnest attitude has been and always will be a guiding principle in my life.
Over the past seven years, he has devoted countless hours to mentoring me and providing valuable feedback on my papers, which has encouraged me to think critically and pushed me to achieve my best.

I would also like to express my sincere gratitude to my parents, Tsuyoshi and Shizuka, for their thoughtful advice and understanding, which enabled me to enjoy my research life and keep challenging myself.
I extend my heartfelt thanks to my dear wife, Mina, for her constant love and support since my undergraduate days.
I would like to thank my dear friends and relatives for their warm support and encouragement.
I am truly fortunate to have such wonderful people by my side.

This research is supported by the Grant-in-Aid for JSPS Fellows (Grant Number JP23KJ0648).

\begin{flushright}
    Tokyo, Japan\\
    November 2025\\
    Masashi Sekine
\end{flushright}
\newpage
~
\newpage
\vspace{10mm}
\centerline{\LARGE{\textbf{Abstract}}}
\vspace{10mm}
This thesis develops equilibrium asset pricing models in incomplete markets with a large number of heterogeneous agents, using mean field game theory. 
The contributions of the thesis are summarized below:
\begin{enumerate}
    \item It proposes novel models in asset pricing theory and characterizes the equilibrium using a novel form of equation.
    \item It examines the properties of the relevant equation and derives the risk premium that achieves market clearing under certain conditions.
    \item It provides several extensions for general settings.
\end{enumerate} 
The thesis begins with a brief introduction and review of the field, followed by three models of mean field asset pricing theory.

The first model is a theoretical model that endogenously derives the equilibrium risk premium of securities in a market with heterogeneous agents. 
The agents are assumed to have exponential-type preferences and to be heterogeneous in the initial wealth, the coefficient of absolute risk aversion, and the stochastic terminal liability, which is unspanned by the security prices. 
The study begins by solving the optimal investment problem using the optimal martingale principle. 
The market equilibrium is characterized by the solution of a novel mean field backward stochastic differential equation (BSDE), whose driver has quadratic growth both in the stochastic integrands and in their conditional expectations. 
The existence of a solution to such an equation is proved under several conditions. The equilibrium risk-premium process is characterized in terms of the solution to this mean field BSDE and is shown to clear the market in the large population limit.

The second model is an extension of the first model; it considers agents' consumption and incorporates habit formation. 
The incorporation of habit formation relaxes the assumption of time-separable utility functions by making utility depend not only on the current level of consumption but also on the agents' accumulated stock of past consumption. 
In order to characterize the market-clearing equilibrium, a mean field BSDE, which has a similar form to that of the first model, is derived.
Its well-posedness and asymptotic behavior are then examined. Additionally, an exponential quadratic Gaussian (EQG) reformulation of the asset pricing model is introduced, in which an equilibrium solution is obtained in a semi-analytic form.

The third model extends the first model to the problem of a partially observable market with a large number of heterogeneous agents. 
This model assumes that agents can only observe stock prices and must infer the risk premium from these observations when determining trading strategies. 
The equilibrium risk premium in such a market is characterized in terms of a solution to the mean field BSDE. 
Specifically, the solution to the mean field BSDE can be expressed semi-analytically by employing the EQG framework. 
The equilibrium risk-premium process is then constructed endogenously using Kalman-Bucy filtering theory.
In addition, a simple numerical simulation is included to visualize the dynamics of the market model.\\
~\\
\keywords~\\
\MCS~\\
\JEL

\newpage
\small
\tableofcontents
\normalsize
\newpage

\pagestyle{fancy}
\fancyhead{}
\fancyhead[RE]{\nouppercase{\rightmark}}
\fancyhead[LO]{\nouppercase{\leftmark}}
\fancyhead[LE,RO]{\thepage}
\fancyfoot{}
\setcounter{page}{1}
\chapter{Introduction}

\input{introduction.tex}

\newpage

\chapter{Notations}

\input{notation.tex}

\newpage
\chapter{A Review of the Recent Research}

\input{review.tex}

\newpage
\chapter{Mean Field Equilibrium Asset Pricing Model in an Incomplete Market}

\input{part1.tex}

\newpage
\chapter{Mean Field Equilibrium Asset Pricing Model With Habit Formation}
\input{part2.tex}

\newpage
\chapter{Mean Field Equilibrium Asset Pricing Model Under Partial Observation}

\input{part3.tex}

\newpage
\chapter{Concluding Remarks}

\input{conclusion.tex}

\newpage

\input{references.tex}
\end{document}

%% file: introduction.tex
\vspace{0mm}

Financial markets have enabled smooth funding and effective risk allocation among participants in contemporary society. 
Mathematical finance and financial economics are fields of study that mathematically analyze various phenomena in financial markets, and they have developed rapidly over the past half-century commencing with pioneering work such as Markowitz \cite{portfolio_selection} and Black \& Scholes \cite{black-scholes}. 
Specific areas of this study include the valuation of financial derivatives, risk management of investment assets, and the formulation of optimal investment strategies, all of which are widely applied in financial practice.
Asset pricing theory also plays a crucial role in this field.
The fundamental objective of the theory is to establish the equilibrium price at which the supply of assets matches the demand. 
The capital asset pricing model (CAPM), proposed by Sharpe \cite{w_sharpe}, Lintner \cite{j_lintner}, and Mossin \cite{mossin}, is the most commonly known model, and these theories provide deep insights into how asset prices are determined in the financial market. 

Various economic phenomena can be explained by the interaction of a large number of individuals who make decisions in an attempt to maximize their own profits.
This situation can be analyzed within the framework of a ``multiplayer game.'' Financial markets are, of course, no exception, since they are places where a large number of investors trade on a daily basis, making tactical decisions to execute their long-term strategic goals.
In the classical literature of mathematical finance, particularly in models of derivative pricing, optimal investment problems, and portfolio risk management, market environments are typically assumed to be exogenous.
These models can describe how investors behave in a given market environment.
However, market environments, especially security prices, are essentially determined by the balance of investors' buy and sell orders, and it is therefore desirable to consider a theoretical model that can express the market environments endogenously. 
Moreover, standard asset pricing models often assume the existence of representative investors.
While such an assumption has the advantage of analytical simplicity, it cannot describe situations in which each investor adopts a different investment strategy based on their own information, and equilibrium prices are determined through their interactions.

To overcome these issues, we need a theoretical model that can capture the financial market endogenously within the model and take into account the competitive or cooperative behavior of heterogeneous investors.
In other words, a model that views various phenomena of financial markets as a multiplayer game is necessary.
If such a model is developed, it becomes possible to analyze how investor interactions influence the formation of security prices and how market parameters affect the equilibrium state.
In particular, it can analyze how the existence of agents' idiosyncrasies affects the price formation as a whole. 
However, a typical approach to multiplayer games usually becomes mathematically intractable since it leads to a strongly coupled optimization problem.
The solution of such a problem can hardly be obtained analytically or even numerically, especially when the number of players is large, such as participants in financial markets.
Therefore, research on models that describe securities price formation through the interaction of a large number of heterogeneous investors is very limited, and many unresolved issues remain.

Mean field game theory provides a powerful framework to address this type of issue. The theory was independently proposed by Lasry \& Lions \cite{lasryMeanFieldGames2007} and Huang, Malham\'{e} \& Caines \cite{huangLargePopulationStochastic2006}. 
The advantage of this theory is that it replaces an intractable multiplayer-game problem with a simpler optimal control problem for a single player and a fixed-point problem. 
It can also be shown to provide an approximate Nash equilibrium, often referred to as an ``$\varepsilon$-Nash equilibrium'', for stochastic differential games involving many agents.
\cite{lasryMeanFieldGames2007, huangLargePopulationStochastic2006} showed that the multiplayer game problem results in a system of two highly coupled nonlinear partial differential equations (PDEs) in the large population limit.
One is called the Hamilton-Jacobi-Bellman (HJB) equation, which corresponds to the stochastic optimal control problem, and the other is called the Fokker-Planck (FP) equation, which corresponds to the self-consistency problem of the state of the whole system.
Meanwhile, Carmona \& Delarue \cite{carmona-mffbsde,carmonaProbabilisticAnalysisMeanField2013,carmonaForwardBackwardStochastic2015} developed a probabilistic approach to mean field game theory.
They formulated the mean field game using forward-backward stochastic differential equations (FBSDEs), instead of a system of PDEs.
Their approach is extensively covered in two volumes of monographs by Carmona \& Delarue \cite{carmonaProbabilisticTheoryMean2018,carmonaProbabilisticTheoryMean2018a}, offering thorough details and applications.

Along with its theoretical development, there has been an increase in research applying mean field game theory to various fields of economics.
For example, there are applications to optimal investment problems and optimal liquidation problems; see \cite{Espinosa-Touzi, Frei-DosReis, Fu2020MeanFE, fu_Mean_field_portfolio_games, Lacker-Z}.
In macroeconomics, earlier studies include \cite{PDE_macro,BM-2,Bayraktar,Gomes_econ}.
Moreover, \cite{Gomes_MFG, Gomes_rand_MFG, Gomes_MFG_noise, SREC, Firoozi_et_al} have studied price formation in electricity markets and energy markets.
\cite{evangelista2022price, fujiiMeanFieldGame2022, Fujii-Takahashi_strong, fujii2022equilibrium, Fujii-equilibrium-pricing} have developed price formation models in financial markets.
Details of the earlier research and applications will be reviewed in Chapter 3.

As will be explained in detail later, this thesis studies asset pricing models using mean field game theory.
The existing work on mean field price formation models employs the probabilistic approach of Carmona \& Delarue \cite{carmona-mffbsde,carmonaProbabilisticAnalysisMeanField2013,carmonaForwardBackwardStochastic2015},
and this approach adopts Pontryagin's maximum principle to solve the investors' stochastic optimal control problems.
As is well known, Pontryagin's maximum principle for dynamic stochastic systems involves formulating a Hamiltonian function that incorporates the dynamics of the state variables and the cost function.
The optimal control as well as the optimal state trajectory are characterized by the minimizer (or maximizer) of the Hamiltonian and the solution of the associated FBSDE.
Then, in solving the fixed-point problem, the theory employs FBSDEs of McKean-Vlasov type: FBSDEs where the distribution of the solution itself enters the coefficient function.
Therefore, the solvability of these FBSDEs is a crucial key in their theory. 
However, it is well known that the well-posedness of FBSDEs on general time intervals can be shown in very limited cases, such as \cite{pengFullyCoupledForwardBackward1999}. 
Thus, their theory necessitates the cost functions and the coefficient functions of state dynamics to satisfy some strong assumptions in order to guarantee the convexity of the Hamiltonian and the Lipschitz regularity of the associated FBSDEs.
In particular, utility functions that are commonly used in economic literature, such as exponential, power, and logarithmic-type utility functions, are not covered.
This implies that their framework is not suitable for the problem of optimizing CARA or CRRA-class utilities subject to the dynamics of a self-financing portfolio.

Hu, Imkeller \& M\"{u}ller \cite{huUtilityMaximizationIncomplete2005a} considers the utility maximization problem in incomplete markets.
This research develops the optimal martingale method, which can be applied to standard classes of utility functions, including exponential, power, and logarithmic types. 
Recently, Xing \cite{Hao-Xing} has solved the optimization problem for Epstein-Zin recursive utility using the same method.
The relevant equation that characterizes the optimality is given by a quadratic-growth BSDE (qg-BSDE).
The well-posedness of qg-BSDEs (See Kobylanski \cite{Kobylanski2000BackwardSD}) is well known and still more manageable than that of typical FBSDEs.
This result provides an important basis for constructing a multiplayer game model where a large number of agents interact and optimize their exponential-type utility functions subject to self-financing strategies.

Motivated by this background, this thesis aims to construct theoretical models of asset pricing in incomplete markets, using mean field game theory.
Specifically, the market involves a large number of heterogeneous agents with exponential-type preferences.
Each agent attempts to hedge the stochastic terminal liability, which differs for each agent, through security trading.
To solve the optimal investment problem of each agent, this thesis adopts the aforementioned optimal martingale method developed by \cite{huUtilityMaximizationIncomplete2005a} and proves the well-posedness of the associated qg-BSDE.
It then analyzes the market risk-premium process that achieves the market-clearing condition in the large population limit when all agents adopt their optimal strategies.
By considering the large population limit, the mean field equilibria are characterized by novel qg-BSDEs, whose driver has quadratic growth both in the stochastic integrands and in their conditional expectations.
The main contributions of this thesis are as follows:
\begin{enumerate}
    \item It provides an economic meaning to the novel mean field qg-BSDEs and proves the existence of a solution to such equations under several conditions.
    \item It proves that the risk-premium process expressed by the solution of the mean field qg-BSDE clears the market in the large population limit.
    \item It provides a special framework in which a solution to the mean field qg-BSDE is obtained in a semi-analytic form, allowing for numerical analyses.
    \item It extends the model to more general settings: a model with consumption habit formation and a model under partial observation.
\end{enumerate} 

The thesis consists of seven chapters, providing three novel models of equilibrium asset pricing in incomplete markets. 
Each chapter covers the following topics:
In Chapter 2, we introduce the notation for frequently used sets and function spaces. Chapter 3 provides an overview of recent research in this field.
In particular, Fujii \& Takahashi \cite{fujiiMeanFieldGame2022} will be reviewed in detail. 
While the aim of this thesis is similar to theirs, as it also deals with the theory of price formation using mean field games, there are significant differences in the setting, method, and results.
By explaining the similarities and differences, it is expected to help readers understand the motivation and methodology of this study.

The content of Chapter 4 is based on Fujii \& Sekine \cite{fujiiMeanFieldEquilibriumPrice2023a}, providing a fundamental model of equilibrium price formation using mean field game theory. 
In this model, agents are assumed to have exponential-type preferences and to be heterogeneous in their initial wealth, the coefficient of absolute risk aversion, and the stochastic liability at the terminal time, which is unspanned by the security prices. 
The optimal investment problem is solved using the optimal martingale principle developed by Hu, Imkeller \& M\"{u}ller \cite{huUtilityMaximizationIncomplete2005a}. 
While their study assumes the risk-premium process to be bounded, this chapter relaxes this assumption in order to deal with the equilibrium model.
This necessitates a slight extension of the result in \cite{Kobylanski2000BackwardSD} to address the well-posedness of the associated qg-BSDE.
As noted above, the goal is to construct the risk-premium process endogenously, where the demand and supply of the associated securities always balance among a large number of financial agents who deploy general self-financing trading strategies. 
The market equilibrium is characterized in terms of the solution of a novel mean field BSDE, as mentioned above. 
The existence of a solution to the mean field BSDE is shown under certain conditions on the size of the terminal liability. 
In addition, it verifies that the risk-premium process characterized by the mean field BSDE satisfies the market-clearing condition in the large population limit.

Chapter 5 is based on Fujii \& Sekine \cite{fujiiMeanFieldEquilibriumAsset2024}, which extends the model of Chapter 4 by considering agents' consumption behavior and habit formation.
Research on consumption habit formation has been a fundamental and classical subject in financial economics. 
The existence of habit formation relaxes the assumption of time-separable utility functions by making the utility dependent not only on the current level of consumption but also on the agent's accumulated stock of past consumption. 
When consumption is taken into account as a control variable, the method of \cite{huUtilityMaximizationIncomplete2005a} cannot be applied directly and requires modification. 
This also leads to a similar type of mean field BSDE, whose well-posedness is proved under several assumptions.
It further offers an exponential quadratic Gaussian (EQG) formulation of the model. 
In this formulation, a solution to the mean field BSDE can be characterized by a system of Riccati-type ordinary differential equations by assuming that the terminal liability is a quadratic form of Gaussian factor processes.

Chapter 6 is based on Sekine \cite{sekine2024MFE_PO}, which extends the model of Chapter 4 to a partially observable market.
In this model, it is assumed that agents do not have full access to market information, and they must infer the risk premium from stock price observations when determining trading strategies. 
The equilibrium risk premium in such a market is characterized in terms of a solution to a mean field BSDE of a similar type. 
Specifically, the solution to the mean field BSDE can be expressed semi-analytically using the EQG framework. 
It then constructs the risk-premium process endogenously using Kalman-Bucy filtering theory.
This chapter includes a simple numerical simulation that visualizes the dynamics of the market model.
The thesis concludes with Chapter 7, which summarizes the entire work and explains the advantages of these models. It also presents the limitations and challenges of this research and makes suggestions for future studies.

%% file: notation.tex
\vspace{0mm}

In this thesis, we shall work on a finite time interval $[0,T]$ for some $T>0$. 
For a given filtered probability space satisfying the usual conditions $(\Omega,\mathcal{F},\mathbb{P},\mathbb{F}~(:=(\mathcal{F}_t)_{t\in[0,T]}))$ and a vector space $E$ over $\mathbb{R}$, we use the following notation to describe frequently used sets and function spaces.\\

\noindent
(1) $\mathcal{T}(\mathbb{F})$ is the set of all $\mathbb{F}$-stopping times with values in $[0,T]$.\\

\noindent
(2) $\mathbb{L}^0(\mathcal{F},E)$ is the set of $E$-valued $\mathcal{F}$-measurable random variables. \\

\noindent
(3) $\mathbb{L}^2(\mathbb{P},\mathcal{F},E)$ is the set of $E$-valued $\mathcal{F}$-measurable random variables $\xi$ satisfying 
\[
    \|\xi\|_2:=\mathbb{E}^{\mathbb{P}}[|\xi|^2]^{\frac{1}{2}}<\infty.
\]

\noindent
(4) $\mathbb{L}^\infty(\mathbb{P},\mathcal{F},E)$ is the set of $E$-valued $\mathcal{F}$-measurable random variables $\xi$ satisfying 
\[
    \|\xi\|_\infty:=\esssup_{\omega\in\Omega}|\xi(\omega)|<\infty.
\]

\noindent
(5) $\mathbb{L}^0(\mathbb{F},E)$ is the set of $E$-valued $\mathbb{F}$-progressively measurable stochastic processes. \\

\noindent
(6) $\mathbb{H}^2(\mathbb{P},\mathbb{F},E)$ is the set of $E$-valued $\mathbb{F}$-progressively measurable stochastic processes $X$ satisfying
\[
\|X\|_{\mathbb{H}^2}:=\mathbb{E}^{\mathbb{P}}\Bigl[\int_0^T |X_t|^2 dt\Bigr]^{\frac{1}{2}}<\infty.
\]

\noindent
(7) $\mathbb{L}^\infty(\mathbb{P},\mathbb{F},E)$ is the set of $E$-valued $\mathbb{F}$-progressively measurable stochastic processes $X$ satisfying
\[
\|X\|_{\mathbb{L}^\infty}:=\esssup_{(t,\omega)\in[0,T]\times\Omega}|X_t(\omega)| ~<\infty.
\]

\noindent
(8) $\mathbb{H}^2_{\mathrm{BMO}}(\mathbb{P},\mathbb{F},E)$ is the set of $E$-valued $\mathbb{F}$-progressively measurable stochastic processes $X$ satisfying
\begin{equation}
\|X\|_{\mathbb{H}^2_{\mathrm{BMO}}}:=\sup_{\tau\in\mathcal{T}(\mathbb{F})} \Bigl\| \mathbb{E}^{\mathbb{P}}\Bigl[\int_\tau^T |X_t|^2dt | \mathcal{F}_\tau\Bigr]^{\frac{1}{2}}\Bigr\|_\infty <\infty,
\label{eq-h2bmo-norm}
\end{equation}
where $\|\cdot\|_{\infty}$ denotes the $\mathbb{P}$-essential supremum as in (4). Moreover, for $X\in \mathbb{H}^2_{\mathrm{BMO}}(\mathbb{P},\mathbb{F},E)$, the result of Kazamaki~\cite{kazamakiContinuousExponentialMartingales1994} shows that
the Dol\'{e}ans-Dade exponential $\Bigl\{\mathcal{E}\Bigl(\displaystyle\int_0^\cdot X_s dW_s\Bigr)_\tau;~\tau\in\mathcal{T}(\mathbb{F})\Bigr\}$ is uniformly integrable (i.e. of class $\mathcal{D}$), and that
$\Bigl(\displaystyle\int_0^t X_s dW_s\Bigr)_{t\in[0,T]}$ satisfies the BMO-martingale property. Here, $W$ is a $d$-dimensional standard Brownian motion defined on $(\Omega, \mathcal{F}, \mathbb{P}, \mathbb{F})$. 
As an important property of $X\in \mathbb{H}^2_{\rm BMO}$, let us mention the so-called energy inequality. For every $n\in \mathbb{N}$, the following inequality holds:
\begin{equation}
\mathbb{E}^{\mathbb{P}}\Bigl[\Bigl(\int_0^T |X_s|^2 ds\Bigr)^n\Bigr]\leq n!\bigl(\|X\|^2_{\mathbb{H}^2_{\mathrm BMO}}\bigr)^n.
\label{ineq-energy}
\end{equation}
For the proof, see \cite{Cvitanic-Zhang} [Lemma 9.6.5]. \\

\noindent
(9) $\mathbb{S}^2(\mathbb{P},\mathbb{F},E)$ is the set of $E$-valued $\mathbb{F}$-adapted continuous stochastic processes $X$ satisfying
\[
\|X\|_{\mathbb{S}^2}:=\mathbb{E}^{\mathbb{P}}\Bigl[\sup_{t\in[0,T]}|X_t|^2\Bigr]^{\frac{1}{2}}<\infty.
\]

\noindent
(10) $\mathbb{S}^\infty(\mathbb{P},\mathbb{F},E)$ is the set of $E$-valued $\mathbb{F}$-adapted continuous stochastic processes $X$ satisfying
\[
\|X\|_{\mathbb{S}^\infty}:=\esssup_{(t,\omega)\in[0,T]\times\Omega}|X_t(\omega)| ~< \infty.
\]

\noindent
(11) $\mathcal{C}([0,T],E)$ is the set of continuous functions $f:[0,T]\to E$. \\

\noindent
(12) $\mathcal{C}^1([0,T],E)$ is the set of once continuously differentiable functions $f:[0,T]\to E$. \\

\noindent
(13) $\mathbb{R}^n_+:=\{x\in\mathbb{R}^n; x\geq 0\}$ and $\mathbb{R}^n_{++}:=\{x\in\mathbb{R}^n; x> 0\}$ for $n\in\mathbb{N}$, where $x\geq 0$ and $x> 0$ mean that all elements of $x$ are nonnegative and strictly positive, respectively. Also, $\mathbb{M}_n$ is the set of real symmetric matrices of size $n\times n$.\par

For (1) to (12), we may omit the arguments such as $(\mathbb{P},\mathcal{F},\mathbb{F},E)$ if they are obvious. Throughout the thesis, the symbol $C$ represents a generic nonnegative constant used in various estimates. Also, the argument $\omega\in\Omega$ is usually omitted when there is no risk of misinterpretation.

%% file: review.tex
\vspace{0mm}

\section{Preliminary}
Since the pioneering research of Lasry \& Lions \cite{lasryMeanFieldGames2007} and Huang, Malham\'{e} \& Caines \cite{huangLargePopulationStochastic2006}, the mean field game theory has become one of the most actively studied topics in probability theory, applied mathematics, finance, and economics.
As mentioned in Chapter 1, there are many applications of mean field games in macroeconomics.
Achdou et al. \cite{PDE_macro} focuses on PDEs related to some macroeconomic models. Achdou et al. \cite{BM-2} reformulates the heterogeneous agent model in continuous time using mean field game theory.
Bayraktar, Mitra \& Zhang \cite{Bayraktar} analyzes the effect of countercyclical unemployment benefit policies using the general equilibrium model.
In the realm of finance, \cite{Espinosa-Touzi,Lacker-Z,Frei-DosReis} study optimal investment problems under relative performance criteria.
\cite{Fu2020MeanFE, fu_Mean_field_portfolio_games} study mean field portfolio games among a large number of agents.

In recent years, there has also been an increasing number of studies of equilibrium price formation models that use the mean field game theory.
Gu\'{e}ant, Lasry \& Lions \cite{Gueant-Oil} models the balance of demand and supply in the oil market.
Shrivats, Firoozi \& Jaimungal \cite{SREC} studies the price formation model in Solar Renewable Energy Certificate (SREC) markets by solving McKean-Vlasov FBSDEs. 
Gomes \& S\'{a}ude \cite{Gomes_MFG} develops an endogenous price formation model in the electricity market by considering the market-clearing condition, using the mean field game theory.
Due to the absence of common noise, the resultant price process is deterministic in their model. 
Gomes, Gutierrez \& Ribeiro \cite{Gomes_rand_MFG} extends this model by considering random supply.
In financial economics, Evangelista, Saporito \& Thamsten \cite{evangelista2022price} develops a mean field game theoretic model of the price formation of assets traded in a limit order book. 
Fujii \& Takahashi \cite{fujiiMeanFieldGame2022, Fujii-Takahashi_strong} present a mean field pricing model for securities under stochastic order flows and \cite{fujii2022equilibrium} provides its extension with a major player.  
Fujii \cite{Fujii-equilibrium-pricing} develops a model that allows the co-presence of cooperative and non-cooperative populations and shows how the price process is formed when a group of agents act in a coordinated manner.

Among these earlier studies, this chapter specifically focuses on Fujii \& Takahashi \cite{fujiiMeanFieldGame2022} and provides a thorough overview of their model and theory. 
Their study uses the mean field game theory to analyze the price formation of securities, which is of common interest to this thesis.
On the other hand, the way the model is set up, the mathematical tools used, and the conclusions are quite different.
It is expected that contrasting their study with the present study will help the reader to better understand the models of this thesis that are to be presented in the next three chapters.

\section{Mean Field Game Approach to Equilibrium Pricing With Market Clearing Condition}
This section offers a detailed review of Fujii \& Takahashi \cite{fujiiMeanFieldGame2022} [Section 1-5]. 
They seek to derive a security price process endogenously by requiring the price to balance the flow of orders in a market with a large number of agents.
After offering a review, this section makes a comparison with the content of this thesis. 
As we shall see, both studies aim to construct equilibrium models of asset pricing using the mean field game theory, but the set-up of the model, the mathematical methodologies, and the results are quite different. 
\subsection{Overview}
For $N\in\mathbb{N}$, $(\Omega^i,\mathcal{F}^i,\mathbb{P}^i)_{i=0}^N$ are complete probability spaces endowed with filtrations $\mathbb{F}^i:=(\mathcal{F}_t^i)_{t\in[0,T]}$ for $i\in\{0,\ldots, N\}$.
$\mathbb{F}^0$ is a complete and right-continuous filtration generated by a $d_0$-dimensional Brownian motion $W^0$. 
For each $i\in\{1,\ldots, N\}$,  $\mathbb{F}^i$ is a complete and right-continuous augmentation of the filtration generated by a $d$-dimensional Brownian motion $W^i$ and a $W^i$-independent $n$-dimensional square-integrable random variable $\xi^i$. 
$(\xi^i)_{i=1}^N$ are assumed to follow the same distribution.

For each $i\in\{1,\ldots, N\}$, $(\Omega^{0,i},\mathcal{F}^{0,i},\mathbb{P}^{0,i})$ is a product probability space over $\Omega^{0,i}:=\Omega^0\times \Omega^i$ with $(\mathcal{F}^{0,i},\mathbb{P}^{0,i})$, which is the completion of $(\mathcal{F}^0\otimes \mathcal{F}^i, \mathbb{P}^0\otimes \mathbb{P}^i)$.
The associated filtration $\mathbb{F}^{0,i}:=(\mathcal{F}^{0,i}_t)_{t\in[0,T]}$ is the complete and right-continuous augmentation of $(\mathcal{F}^0_t\otimes \mathcal{F}_t^i)_{t\in[0,T]}$.
$(\Omega,\mathcal{F},\mathbb{P})$ is a complete probability space endowed with a filtration $\mathbb{F}:=(\mathcal{F}_t)_{t\in[0,T]}$, which satisfies the usual conditions and is defined as a product probability space of $(\Omega^i,\mathcal{F}^i,\mathbb{P}^i,\mathbb{F}^i)_{i=0}^N$. 

Suppose that there are $N$ agents, each of which is indexed by ``$i$'' for $i\in\{1,\ldots,N\}$. Agent-$i$ solves the cost-minimization problem
\begin{equation}
    \begin{split} 
        \inf_{\alpha^i\in\mathbb{A}^i}J^i(\alpha^i)
    \end{split}
\end{equation}
subject to the state dynamics:
\begin{equation}
    \begin{split} 
        dX_t^i=\bigl(\alpha_t^i+l(t,\varpi_t,c_t^0, c_t^i)\bigr)dt+\sigma_0(t,\varpi_t,c_t^0, c_t^i)dW_t^0+\sigma(t,\varpi_t,c_t^0,c_t^i)dW_t^i,~~~t\in[0,T],~~~X^i_0=\xi^i.
    \end{split}
\end{equation}
Here, $l:[0,T]\times (\mathbb{R}^n)^3\rightarrow \mathbb{R}^n$, $\sigma_0:[0,T]\times (\mathbb{R}^n)^3\rightarrow \mathbb{R}^{n\times d_0}$, and $\sigma:[0,T]\times (\mathbb{R}^n)^3\rightarrow \mathbb{R}^{n\times d}$ are measurable functions. 
$(\varpi_t)_{t\in[0,T]}$ is an $\mathbb{R}^n$-valued $\mathbb{F}^0$-adapted process and represents the price process of securities.
$(c_t^0)_{t\in[0,T]}\in \mathbb{H}^2(\mathbb{F}^0,\mathbb{R}^n)$ with $c_T^0\in \mathbb{L}^2(\mathcal{F}^0_T,\mathbb{R}^n)$ denotes the cash flow from the securities and the market news commonly available to all the agents, 
while $(c_t^i)_{t\in[0,T]}\in \mathbb{H}^2(\mathbb{F}^i,\mathbb{R}^n)$ with $c_T^i\in \mathbb{L}^2(\mathcal{F}_T^i,\mathbb{R}^n)$ denotes some independent factors and news affecting only agent-$i$.
$(X_t^i)^k$ denotes the \textit{number of shares} of the $k$-th equity possessed by agent-$i$ at time $t$.
If it is negative, it means that the agent is taking a short position in the $k$-th equity at time $t$.
$\alpha^i:=(\alpha_t^i)_{t\in[0,T]}$ is the control variable of agent-$i$, and denotes the trading speed of the securities.
That is, $(\alpha^i_t)^k dt$ for $k\in\{1,\ldots,n\}$ denotes the number of shares of the $k$-th equity bought or sold by agent-$i$ within the time interval $[t,t+dt]$.
The admissible space $\mathbb{A}^i$ is the set of processes $\alpha^i$ such that
\begin{equation}
    \mathbb{E}\int_0^T |\alpha_t^i|^2 dt<\infty.
\end{equation}
The cost functional $J^i$ is defined by
\begin{equation}
    J^i(\alpha^i):=\mathbb{E}\Bigl[\int_0^T f(t,X_t^i,\alpha_t^i,\varpi_t,c_t^0, c_t^i)dt+g(X_T^i,\varpi_T,c_T^0,c_T^i)\Bigr].
\end{equation}

The functions $f:[0,T]\times (\mathbb{R}^n)^5\to \mathbb{R}$ and $g:(\mathbb{R}^n)^4\to \mathbb{R}$, denoting the running and terminal costs, respectively, are introduced as follows.
\begin{equation}
    \begin{split}
    f(t,x,\alpha,\varpi,c^0,c)&:=\langle \varpi,\alpha\rangle+\frac{1}{2}\langle \alpha,\Lambda\alpha\rangle+\overline{f}(t,x,\varpi,c^0,c), \\
    g(x,\varpi,c^0,c)&:=-b \langle \varpi,x\rangle+\overline{g}(x,c^0,c),
\end{split}
\end{equation}
where $\overline{f}:[0,T]\times (\mathbb{R}^n)^4\to \mathbb{R}$ and $\overline{g}:(\mathbb{R}^n)^3\to \mathbb{R}$ are measurable functions.
The economic interpretation of each term in $f$ is as follows. The term $\langle \varpi,\alpha\rangle$ represents the amount of cash each agent pays (resp. receives) to buy (resp. sell) shares at price $\varpi$ with trading speed $\alpha$.
The term $\langle \alpha,\Lambda\alpha\rangle$ represents the transaction costs that the agents must pay. Here, $\Lambda$ is an $n\times n$ positive definite matrix.
The function $\overline{f}$ represents the costs incurred by financial risks and proper inventory management.
Moreover, the term $\langle \varpi,x\rangle$ in $g$ is the mark-to-market value of the portfolio at time $T$, and $b<1$ denotes the discount factor. 
The cost $\overline{g}$ is a penalty imposed on the terminal position size.

The purpose of their work is to construct the equilibrium price process $(\varpi_t)_{t\in[0,T]}$ endogenously.
They consider the condition where the total number of shares being purchased is equal to the number of shares being sold at each time.
In other words, the buy and sell orders for the shares must balance at any point in time.
This condition is called the market-clearing condition and is expressed as follows.
\begin{equation}
    \label{rev-market-clearing}
    \sum_{i=1}^N \widehat\alpha_t^i=0~, ~~~ dt\otimes \mathbb{P}\text{-}\mathrm{a.e.},
\end{equation}
where $\widehat{\alpha}^i$ is the optimal control of agent-$i$. They make the following assumptions on the functions.

\begin{asm} (Cited from \cite{fujiiMeanFieldGame2022} [Assumption 3.1])\\ 
    \label{rev-asm-1} 
{\rm (i)} $\Lambda\in\mathbb{M}_n$ is a positive definite matrix satisfying $\underline{\lambda}I_{n}\leq \Lambda \leq \overline{\lambda}I_{n}$ for some constants $0<\underline{\lambda}\leq\overline{\lambda}$. 
Here, $I_n$ denotes the $n\times n$ identity matrix.\\
{\rm (ii)} For any $(t,x,\varpi,c^0,c)\in[0,T]\times (\mathbb{R}^n)^4$, the functions $\overline{f}$ and $\overline{g}$ satisfy
\begin{equation}
    |\overline{f}(t,x,\varpi,c^0,c)|+|\overline{g}(x,c^0,c)|\leq L(1+|x|^2+|\varpi|^2+|c^0|^2+|c|^2)
\end{equation}
for some constant $L\geq 0$.\\
{\rm (iii)} The functions $\overline{f}$ and $\overline{g}$ are continuously differentiable in $x$ and the derivatives satisfy
\begin{equation}
    \begin{split}
    &|\partial_x \overline{f}(t,x',\varpi,c^0,c)-\partial_x\overline{f}(t,x,\varpi,c^0,c)|+|\partial_x \overline{g}(x',c^0,c)-\partial_x \overline{g}(x,c^0,c)|\leq L |x'-x|,\\
    &|\partial_x \overline{f}(t,x,\varpi,c^0,c)|+|\partial_x \overline{g}(x,c^0,c)|\leq L(1+|x|+|\varpi|+|c^0|+|c|)
    \end{split}
\end{equation}
for any $(t,x,x',\varpi,c^0,c)\in[0,T]\times (\mathbb{R}^n)^5$. \\
{\rm (iv)} The functions $\overline{f}$ and $\overline{g}$ are strongly convex in $x$, i.e., for any $(t,x,x',\varpi,c^0,c)\in[0,T]\times (\mathbb{R}^n)^5$, they satisfy
\begin{equation}
    \begin{split}
    \overline{f}(t,x',\varpi,c^0,c)-\overline{f}(t,x,\varpi,c^0,c)-\langle x'-x, \partial_x \overline{f}(t,x,\varpi,c^0,c)\rangle &\geq \frac{\gamma^f}{2}|x'-x|^2, \\
    \overline{g}(x',c^0,c)-\overline{g}(x,c^0,c)-\langle x'-x,\partial_x \overline{g}(x,c^0,c)\rangle &\geq \frac{\gamma^g}{2}|x'-x|^2
    \end{split}
\end{equation}
for some constants $\gamma^f,\gamma^g\geq 0$. \\
{\rm (v)} The functions $l,\sigma_0$ and $\sigma$ satisfy the linear growth condition
\begin{equation}
    |(l,\sigma_0,\sigma)(t,\varpi,c^0,c)|\leq L(1+|\varpi|+|c^0|+|c|) 
\end{equation}
for any $(t,\varpi,c^0,c)\in[0,T]\times (\mathbb{R}^n)^3$. \\
{\rm (vi)} $b \in [0,1)$ is a given constant.
\end{asm}

The Hamiltonian for this problem is
\begin{equation}
    H(t,x,y,\alpha,\varpi,c^0,c)=\langle y, \alpha+l(t,\varpi,c^0,c)\rangle+f(t,x,\alpha,\varpi,c^0,c). 
\end{equation}
Since $\partial_\alpha H(t,x,y,\alpha,\varpi,c^0,c)=y+\varpi+\Lambda \alpha$ and the Hamiltonian $H$ is convex in $\alpha$, the minimizer of the Hamiltonian is given by
\begin{equation}
    \widehat\alpha(y,\varpi):=-\overline{\Lambda}(y+\varpi),
\label{def-ha}
\end{equation}
where $\overline{ \Lambda}:= \Lambda^{-1}$. The FBSDE associated with the stochastic maximum principle is given by, for $i\in\{1,\ldots,N\}$,
\begin{equation}
    \begin{split}
        \label{agent-FBSDE}
        dX_t^i&=\Bigl( \widehat\alpha(Y_t^i,\varpi_t)+l(t,\varpi_t,c_t^0, c_t^i)\Bigr)dt+\sigma_0(t,\varpi_t,c_t^0, c_t^i)dW_t^0+\sigma(t,\varpi_t,c_t^0,c_t^i)dW_t^i, \\
        dY_t^i&=-\partial_x \overline{f}(t,X_t^i,\varpi_t,c_t^0,c_t^i)dt+Z_t^{i,0}dW_t^0+Z_t^{i}dW_t^i~,
\end{split}
\end{equation}
for $t\in[0,T]$ with $X_0^i=\xi^i$ and $Y_T^i=\partial_x g(X_T^i, \varpi_T, c_T^0, c_T^i)$. Under Assumption \ref{rev-asm-1}, the FBSDE \eqref{agent-FBSDE} has a unique solution $(X^i,Y^i,Z^{i,0}, Z^{i})\in  \mathbb{S}^2( \mathbb{F}^{0,i}, \mathbb{R}^n)\times 
 \mathbb{S}^2( \mathbb{F}^{0,i}, \mathbb{R}^n)\times  \mathbb{H}^2( \mathbb{F}^{0,i}, \mathbb{R}^{n\times d_0})\times  \mathbb{H}^2( \mathbb{F}^{0,i}, \mathbb{R}^{n\times d})$ for given 
$(\varpi_t)_{t\in[0,T]}\in  \mathbb{H}^2( \mathbb{F}^0, \mathbb{R}^n)$ with $ \varpi_T\in \mathbb{L}^2(\mathcal{F}^0_T, \mathbb{R}^n)$. See \cite{fujiiMeanFieldGame2022} [Theorem 3.1].

Given the security price $(\varpi_t)_{t\in[0,T]}$, the optimal strategy of agent-$i$ is
\begin{equation}
    \widehat\alpha^i_t:=-\overline{\Lambda}(Y^i_t+\varpi_t),~~~t\in[0,T].
\end{equation}
Under the market clearing condition \eqref{rev-market-clearing}, this becomes
\begin{equation}
    \varpi_t=-\frac{1}{N}\sum_{i=1}^N Y_t^i
\end{equation}
for each $t\in[0,T]$. This is not consistent with the assumption that the price process $\varpi$ is $\mathbb{F}^0$-adapted. However, since the idiosyncratic noises are assumed to be i.i.d. and interactions among agents are symmetric,
the random variables $(Y^i_t)_{1\leq i\leq N}$ are exchangeable for each $t\in[0,T]$. Then, the De Finetti's theorem for exchangeable sequences of random variables yields
\begin{equation}
    \frac{1}{N}\sum_{i=1}^N Y_t^i \to \mathbb{E}\Bigl[Y^1_t | \bigcap_{k=1}^\infty \sigma(Y_t^j,j\geq k)\Bigr],~~~N\to\infty,~~~\mathrm{a.s.}
\end{equation}
It is natural to expect that the tail $\sigma$-algebra $\bigcap_{k=1}^\infty \sigma(Y_t^j,j\geq k)$ reduces to $\mathcal{F}^0_t$ for each $t\in[0,T]$.
Therefore, in the large population limit, namely as $N\to\infty$, the equilibrium price is expected to be given by
\begin{equation}
    \label{heuristic}
    \varpi_t=-\mathbb{E}[Y_t^1|\mathcal{F}^0_t],~~~t\in[0,T],
\end{equation}
which motivates us to study the McKean-Vlasov FBSDE: for $t\in[0,T]$,
\begin{equation}
    \begin{split}
        \label{MKV-FBSDE}
        dX_t&=\Bigl(\widehat\alpha\bigl(Y_t,- \mathbb{E}[Y_t|\mathcal{F}^0_t]\bigr)+l\bigl(t,- \mathbb{E}[Y_t|\mathcal{F}^0_t],c_t^0, c_t\bigr)\Bigr)dt \\
         &~~~+\sigma_0\bigl(t,- \mathbb{E}[Y_t|\mathcal{F}^0_t],c_t^0, c_t\bigr)dW_t^0+\sigma\bigl(t,- \mathbb{E}[Y_t|\mathcal{F}^0_t],c_t^0,c_t\bigr)dW_t^1~, \\
        dY_t&=-\partial_x \overline{f}\bigl(t,X_t,- \mathbb{E}[Y_t|\mathcal{F}^0_t],c_t^0,c_t\bigr)dt+Z_t^{0}dW_t^0+Z_tdW_t^1~,
\end{split}
\end{equation}
with $X_0=\xi$ and 
\begin{equation}
    Y_T=\frac{b}{1-b}\mathbb{E}\bigl[\partial_x \overline{g}(X_T,c_T^0,c_T)|\mathcal{F}^0_T\bigr]+\partial_x \overline{g}(X_T,c_T^0,c_T).
\end{equation}
To simplify the notation, we omit the superscript ``$1$'' from $Y^1$, $X^1$, $\xi^1$ and $c^1$.

To show the solvability of \eqref{MKV-FBSDE}, they make the following assumptions:
\begin{asm} (Cited from \cite{fujiiMeanFieldGame2022} [Assumption 4.1] and [Assumption 4.2]) \\
\label{rev-asm-2}
{\rm (i)} The functions $\sigma_0$ and $\sigma$ are independent of the argument $\varpi$. \\
{\rm (ii)} For any $(t,x,c^0,c)\in[0,T]\times (\mathbb{R}^n)^3$ and any $\varpi,\varpi' \in \mathbb{R}^n$, the functions $l$ and $\overline{f}$ satisfy
    \begin{equation}
    \begin{split}
        &|l(t,\varpi,c^0,c)-l(t,\varpi',c^0,c)|\leq L |\varpi-\varpi'|, \\
        &|\partial_x \overline{f}(t,x,\varpi,c^0,c)-\partial_x\overline{f}(t,x,\varpi',c^0,c)|\leq L_{\varpi} |\varpi-\varpi'|
    \end{split}
    \end{equation}
for some constant $L\geq 0$ and another constant $L_\varpi\geq 0$.\\
{\rm (iii)} For any $t\in[0,T]$, any random variables $x,x',c^0,c\in \mathbb{L}^2(\mathcal{F},\mathbb{R}^n)$, and any sub $\sigma$-algebra $\mathcal{G}\subset \mathcal{F}$, 
the function $l$ satisfies the monotone condition
\begin{equation}
    \mathbb{E}\Bigl[\langle l(t,\mathbb{E}[x|\mathcal{G}],c^0,c)-l(t,\mathbb{E}[x'|\mathcal{G}],c^0,c),x-x'\rangle\Bigr]\geq \gamma^l  \mathbbm{1}_{\{L_{\varpi}>0\}} \mathbb{E}\bigl[|\mathbb{E}[x-x'|\mathcal{G}]|^2\bigr]
\end{equation}
for some constant $\gamma^l>0$.\\
{\rm (iv)} There exists a constant $\gamma$ satisfying $0<\gamma\leq \Bigl(\gamma^f-\dfrac{L_\varpi^2}{4\gamma^l}\Bigr)\wedge \gamma^g$. Moreover, the function $\overline{g}$ satisfies
\begin{equation}
    \gamma^g\mathbb{E}[|x-x'|^2]+\frac{b}{1-b}\mathbb{E}\Bigl[\langle \mathbb{E}\bigl[\partial_x\overline{g}(x,c^0,c)-\partial_x\overline{g}(x',c^0,c)|\mathcal{G}\bigr],x-x'\rangle\Bigr]\geq \gamma \mathbb{E}[|x-x'|^2]
\end{equation}
for any random variables $x,x', c^0,c\in \mathbb{L}^2(\mathcal{F},\mathbb{R}^n)$ and any sub $\sigma$-algebra $\mathcal{G}\subset \mathcal{F}$.
\end{asm}

This is the first main result of their paper.
\begin{thm} (Cited from \cite{fujiiMeanFieldGame2022} [Theorem 4.2])\\
Under Assumptions~\ref{rev-asm-1} and \ref{rev-asm-2}, there exists a unique strong solution $(X,Y,Z^0,Z)\in \mathbb{S}^2(\mathbb{F}^{0,1},\mathbb{R}^n)
\times \mathbb{S}^2(\mathbb{F}^{0,1},\mathbb{R}^n)\times \mathbb{H}^2(\mathbb{F}^{0,1},\mathbb{R}^{n\times d_0})\times \mathbb{H}^2(\mathbb{F}^{0,1},\mathbb{R}^{n\times d})$
to the FBSDE \eqref{MKV-FBSDE}.
\end{thm}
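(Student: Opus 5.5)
We prove the theorem by Peng--Wu's method of continuation, adapted to the conditional McKean--Vlasov setting as in Carmona--Delarue and Bensoussan--Yam--Zhang. For $\lambda\in[0,1]$ we introduce the interpolated family of FBSDEs
\begin{equation}
\begin{split}
dX_t&=\Bigl(-\overline{\Lambda}\bigl(Y_t-\lambda\mathbb{E}[Y_t|\mathcal{F}^0_t]\bigr)+\lambda l\bigl(t,-\mathbb{E}[Y_t|\mathcal{F}^0_t],c^0_t,c_t\bigr)+\mathcal{I}^b_t\Bigr)dt+\sigma_0(t,c^0_t,c_t)dW^0_t+\sigma(t,c^0_t,c_t)dW^1_t,\\
dY_t&=-\Bigl(\lambda\,\partial_x\overline{f}\bigl(t,X_t,-\mathbb{E}[Y_t|\mathcal{F}^0_t],c^0_t,c_t\bigr)+(1-\lambda)\gamma X_t+\mathcal{I}^f_t\Bigr)dt+Z^0_tdW^0_t+Z_tdW^1_t,
\end{split}
\end{equation}
with $X_0=\xi$ and
\[
Y_T=\lambda\Bigl(\tfrac{b}{1-b}\mathbb{E}[\partial_x\overline{g}(X_T,c^0_T,c_T)|\mathcal{F}^0_T]+\partial_x\overline{g}(X_T,c^0_T,c_T)\Bigr)+(1-\lambda)\gamma X_T+\eta .
\]
Here $(\mathcal{I}^b,\mathcal{I}^f,\eta)\in\mathbb{H}^2\times\mathbb{H}^2\times\mathbb{L}^2(\mathcal{F}^{0,1}_T)$ are arbitrary inputs, and the diffusion coefficients do not depend on $\varpi$ by Assumption~\ref{rev-asm-2}(i). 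At $\lambda=0$ the system is linear and decoupled from the conditional expectations. It is the standard strongly monotone linear FBSDE $dX=(-\overline{\Lambda}Y+\dots)dt$, $dY=-(\gamma X+\dots)dt$, $Y_T=\gamma X_T+\eta$. It is uniquely solvable, for example through a Riccati decoupling $Y=PX+\text{(BSDE part)}$ with $P\geq0$ bounded, or directly by Peng--Wu. At $\lambda=1$ with zero inputs we recover \eqref{MKV-FBSDE}.

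The core step is an a priori stability estimate that is uniform in $\lambda$. Take two solutions $(X,Y,Z^0,Z)$ and $(X',Y',Z^{0\prime},Z')$ with inputs $(\mathcal{I},\eta)$ and $(\mathcal{I}',\eta')$, and write $\Delta X=X-X'$, and so on. Apply Itô's formula to $\langle \Delta X_t,\Delta Y_t\rangle$ and take expectations; the diffusions coincide, so no quadratic-variation cross term appears. We then bound each contribution from below.
\begin{enumerate}
\item The terminal term gives $\mathbb{E}\langle\Delta X_T,\Delta Y_T\rangle\ge(\lambda\gamma+(1-\lambda)\gamma)\mathbb{E}|\Delta X_T|^2-\mathbb{E}\langle\Delta X_T,\Delta\eta\rangle$, by Assumption~\ref{rev-asm-2}(iv) with $\mathcal{G}=\mathcal{F}^0_T$ together with the strong convexity of $\overline{g}$.
\item The transaction term gives $-\mathbb{E}\langle\Delta Y,\overline{\Lambda}(\Delta Y-\lambda\mathbb{E}[\Delta Y|\mathcal{F}^0])\rangle$. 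Since $\mathbb{E}\langle\Delta Y,\overline{\Lambda}\mathbb{E}[\Delta Y|\mathcal{F}^0]\rangle=\mathbb{E}\langle\mathbb{E}[\Delta Y|\mathcal{F}^0],\overline{\Lambda}\mathbb{E}[\Delta Y|\mathcal{F}^0]\rangle$, this term equals minus a nonnegative quantity, namely the conditional variance part plus $(1-\lambda)$ times the mean part. It can therefore be discarded on the favourable side.
\item The market term $\lambda\mathbb{E}\langle\Delta Y,\Delta l\rangle$ is handled by the monotonicity in Assumption~\ref{rev-asm-2}(iii), applied with $x=-Y$, $x'=-Y'$. It yields $-\lambda\gamma^l\mathbb{1}_{\{L_\varpi>0\}}\mathbb{E}|\mathbb{E}[\Delta Y|\mathcal{F}^0]|^2$ with the right sign.
\item The driver term is $-\lambda\mathbb{E}\langle\Delta X,\Delta\partial_x\overline{f}\rangle$. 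Split it into the $x$-difference, which is bounded by $-\gamma^f|\Delta X|^2$ via strong convexity, and the $\varpi$-difference, which is bounded by $L_\varpi|\Delta X||\mathbb{E}[\Delta Y|\mathcal{F}^0]|$.
\end{enumerate}
Young's inequality $L_\varpi ab\le\gamma^l b^2+\frac{L_\varpi^2}{4\gamma^l}a^2$ absorbs the cross term into the $\gamma^l$ term. What remains is the coercivity $\gamma\,\mathbb{E}[|\Delta X_T|^2+\int_0^T|\Delta X_t|^2dt]\le C(\text{input differences})\times\|\Delta\|$, with $\gamma$ exactly as in Assumption~\ref{rev-asm-2}(iv). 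Combining this with standard forward-SDE and BSDE estimates, which bound $\Delta Y,\Delta Z$ by $\Delta X$ and the inputs using the Lipschitz bounds in Assumptions~\ref{rev-asm-1}(iii) and \ref{rev-asm-2}(ii), gives
\[
\|\Delta X\|_{\mathbb{S}^2}+\|\Delta Y\|_{\mathbb{S}^2}+\|\Delta Z^0\|_{\mathbb{H}^2}+\|\Delta Z\|_{\mathbb{H}^2}\le C\bigl(\|\Delta\mathcal{I}\|_{\mathbb{H}^2}+\|\Delta\eta\|_2\bigr),
\]
with $C$ independent of $\lambda$. The main obstacle is exactly this bookkeeping. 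One must make sure the conditional-expectation terms carry the right signs; this works because conditioning on $\mathcal{F}^0$ is an $\mathbb{L}^2$-orthogonal projection. One must also check that the $(1-\lambda)\gamma$ interpolation preserves the constant $\gamma$.

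Finally, the continuation step is standard. Suppose the system is solvable at $\lambda_0$ for every input. For $\lambda=\lambda_0+\delta$, move the $\delta$-increments of the coefficients into the inputs and define a Picard map on $\mathbb{S}^2\times\mathbb{S}^2\times\mathbb{H}^2\times\mathbb{H}^2$. By the a priori estimate and the Lipschitz bounds, this map is a contraction with constant $C\delta$ for a $\delta_0$ that does not depend on $\lambda_0$. Finitely many steps therefore reach $\lambda=1$, and uniqueness follows from the same estimate with equal inputs. Two further checks complete the argument. The conditional expectation $\mathbb{E}[\cdot|\mathcal{F}^0_t]$ preserves $\mathbb{F}^0$-progressive measurability; this holds because of the product structure, and for $\mathcal{F}^0_t$-measurable integrands it coincides with conditioning on $\mathcal{F}^0_T$. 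The quadratic growth bounds in Assumption~\ref{rev-asm-1} ensure the coefficients map $\mathbb{H}^2$ into $\mathbb{H}^2$.
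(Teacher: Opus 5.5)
Your proposal is correct and is essentially the argument behind the cited result: the thesis attributes the proof of \cite{fujiiMeanFieldGame2022} [Theorem 4.2] to the Peng--Wu continuation method. Your scheme is that method, namely the interpolation with $-\overline{\Lambda}(Y-\lambda\mathbb{E}[Y|\mathcal{F}^0])$ in the forward drift and $(1-\lambda)\gamma X$ in the backward part, together with a $\lambda$-uniform monotonicity estimate built from Assumption~\ref{rev-asm-2}(iii)--(iv) and the orthogonal-projection property of $\mathbb{E}[\cdot|\mathcal{F}^0]$. Only cosmetic slips remain: the term $\mathbb{E}\langle\Delta X_T,\Delta\eta\rangle$ in your item 1 enters with a plus sign, and it is the linear growth bounds in Assumption~\ref{rev-asm-1}(iii),(v), not the quadratic ones, that keep the coefficients in $\mathbb{H}^2$.
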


After proving the well-posedness of the FBSDE \eqref{MKV-FBSDE}, they show that the price process characterized by \eqref{heuristic} satisfies the market clearing condition in the large population limit $N\to\infty$, which is the second main result of their paper.
\begin{thm} (Cited from \cite{fujiiMeanFieldGame2022} [Theorem 5.1])\\
\label{th-clearing}
Under Assumptions~\ref{rev-asm-1} and \ref{rev-asm-2}, it holds that
\begin{equation}
    \lim_{N\to \infty} \mathbb{E}\int_0^T \Bigl|\frac{1}{N}\sum_{i=1}^N\widehat\alpha^i_{\rm{mf}}(t)\Bigr|^2 dt=0.
\end{equation}
Moreover, if there exists some constant $\Gamma$ such that $\sup_{t\in[0,T]}\mathbb{E}\bigl[|Y_t|^q\bigr]^\frac{1}{q}\leq \Gamma< \infty$ for some $q>4$, 
then there exists some constant $C$ independent of $N$ such that
\begin{equation}
    \mathbb{E}\int_0^T \Bigl|\frac{1}{N}\sum_{i=1}^N\widehat\alpha^i_{\rm{mf}}(t)\Bigr|^2 dt\leq C\Gamma^2 \epsilon_N,
\end{equation}
where $\epsilon_N:=N^{-2/\max(n,4)}\bigl(1+\log(N)\mathbbm{1}_{\{n=4\}}\bigr)$.
\end{thm}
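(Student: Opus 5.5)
The plan is to compare the mean field controls with their conditional mean, using the structure of $\widehat\alpha^i_{\rm mf}$. Here $\widehat\alpha^i_{\rm mf}(t):=-\overline{\Lambda}(Y^i_t+\varpi_t)$ with $\varpi_t=-\mathbb{E}[Y_t|\mathcal{F}^0_t]$. The processes $(X^i,Y^i,Z^{i,0},Z^i)$ are the solutions of \eqref{MKV-FBSDE} with $W^1,\xi^1,c^1$ replaced by $W^i,\xi^i,c^i$; they exist and are unique by Theorem 4.2 quoted above.

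\textbf{Step 1 (conditional i.i.d. structure).} By strong uniqueness, $Y^i=\Phi(W^0,W^i,\xi^i,c^0,c^i)$ for one common measurable functional $\Phi$. The price $\varpi$ is $\mathbb{F}^0$-adapted, and each $(W^i,\xi^i,c^i)$ is independent of $\mathcal{F}^0$. Hence, conditionally on $\mathcal{F}^0_T$, the variables $(Y^i_t)_{i\ge1}$ are i.i.d. for each $t$. Their common conditional mean is $\mathbb{E}[Y_t|\mathcal{F}^0_t]$, because $\mathbb{E}[Y^i_t|\mathcal{F}^0_T]=\mathbb{E}[Y^i_t|\mathcal{F}^0_t]$ by the independence of future increments of $W^0$ and the adaptedness of $Y^i$ to $\mathbb{F}^{0,i}$. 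It follows that
\begin{equation}
\frac1N\sum_{i=1}^N\widehat\alpha^i_{\rm mf}(t)=-\overline{\Lambda}\,\frac1N\sum_{i=1}^N\bigl(Y^i_t-\mathbb{E}[Y^i_t|\mathcal{F}^0_t]\bigr).
\end{equation}

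\textbf{Step 2 (first claim).} Conditionally on $\mathcal{F}^0_T$, the summands are i.i.d. and centered, so the cross terms vanish. This gives
\begin{equation}
\mathbb{E}\Bigl|\frac1N\sum_{i=1}^N\widehat\alpha^i_{\rm mf}(t)\Bigr|^2\le \frac{\overline{\lambda}'}{N}\,\mathbb{E}\bigl|Y_t-\mathbb{E}[Y_t|\mathcal{F}^0_t]\bigr|^2\le\frac{C}{N}\,\mathbb{E}|Y_t|^2,
\end{equation}
where $\overline{\lambda}'=|\overline\Lambda|^2\le\underline{\lambda}^{-2}$ by Assumption~\ref{rev-asm-1}(i). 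Since $Y\in\mathbb{S}^2$, integrating over $[0,T]$ gives a bound $C\|Y\|_{\mathbb{S}^2}^2/N\to0$. This already yields the first limit, at rate $1/N$.

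\textbf{Step 3 (second claim).} The direct variance bound gives $C\Gamma^2/N$, using $\sup_t\mathbb{E}|Y_t|^2\le\Gamma^2$ (Jensen with $q>4$). This is at most $C\Gamma^2\epsilon_N$, because $N^{-1}\le N^{-2/\max(n,4)}$. So the stated rate follows a fortiori.

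The authors' $\epsilon_N$ presumably comes from a different route. They most likely compare the empirical conditional law with the conditional law $\mathcal{L}(Y_t|\mathcal{F}^0_t)$ in the Wasserstein-2 distance, using the Fournier--Guillin rate with moments $q>4$. They would then use $|\frac1N\sum Y^i_t-\mathbb{E}[Y_t|\mathcal{F}^0_t]|\le W_2(\bar\mu^N_t,\mathcal{L}(Y_t|\mathcal{F}^0_t))$. If the intended setting makes Step 1's conditional independence delicate, I would follow that route instead. I would apply Fournier--Guillin conditionally on $\mathcal{F}^0_T$, bound the conditional moment by $\mathbb{E}[|Y_t|^q|\mathcal{F}^0_T]^{2/q}$, and take expectations with Jensen to obtain $C\Gamma^2\epsilon_N$ uniformly in $t$.

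\textbf{Main obstacle.} The main obstacle is rigorously justifying the conditional i.i.d. property, i.e. the representation $Y^i=\Phi(W^0,W^i,\xi^i,c^0,c^i)$ via Yamada--Watanabe-type strong uniqueness on the product space. This includes the measurability of the conditional expectations with respect to $\mathcal{F}^0_T$ versus $\mathcal{F}^0_t$.
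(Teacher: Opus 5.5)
Your argument is correct, but it is not the route behind the cited result. The thesis only quotes this theorem from Fujii--Takahashi, without proof. The rate $\epsilon_N$ together with the condition $q>4$ is exactly the Fournier--Guillin bound for $\mathbb{E}[W_2(\bar\mu^N_t,\mathcal{L}(Y_t|\mathcal{F}^0_t))^2]$. So the cited proof presumably controls $|\frac1N\sum_i Y^i_t-\mathbb{E}[Y_t|\mathcal{F}^0_t]|$ by the Wasserstein distance between the empirical measure and the conditional law, as you guessed.

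You instead exploit linearity. The aggregate control is $-\overline{\Lambda}$ times an average of $\mathcal{F}^0$-conditionally i.i.d.\ centered vectors, so conditional orthogonality gives the exact identity $\mathbb{E}|\frac1N\sum_i\widehat\alpha^i_{\rm mf}(t)|^2=\frac1N\mathbb{E}|\overline{\Lambda}(Y_t-\mathbb{E}[Y_t|\mathcal{F}^0_t])|^2$. This gives the rate $1/N$ for both claims using only $Y\in\mathbb{S}^2$. That is strictly better than $\epsilon_N\ge N^{-1/2}$ and makes the hypothesis $q>4$ unnecessary. It is also the same argument the thesis uses for its own clearing results (Theorems~\ref{th-mfg-clearing} and \ref{Asymp}).

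What the Wasserstein route buys is control of the whole empirical law. That is needed when the quantity to be cleared, or the coefficients, depend nonlinearly on the empirical distribution. For a linear empirical mean it only loses rate.

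Two small points. First, your Wasserstein ``fallback'' is not a way around Step~1: applying Fournier--Guillin conditionally on $\mathcal{F}^0_T$ needs exactly the same conditional i.i.d.\ structure. Second, Step~1 is cleanest if you proceed as in the proof of Theorem~\ref{th-mfg-clearing}:
\begin{itemize}
\item take $\widehat\alpha^i_{\rm mf}$ to be agent-$i$'s control from \eqref{agent-FBSDE} under the fixed $\mathbb{F}^0$-adapted price $\varpi_t=-\mathbb{E}[Y_t|\mathcal{F}^0_t]$;
\item obtain a common $\Phi$ from uniqueness of \eqref{agent-FBSDE} for a given $\varpi$ together with Yamada--Watanabe in a random environment;
\item identify $Y^1$ with the McKean--Vlasov solution by uniqueness.
\end{itemize}
For $\Phi$ to be common, you also need $(\xi^i,W^i,c^i)_{i\ge1}$ to be i.i.d., not merely independent of $\mathcal{F}^0$.
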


In conclusion, Fujii \& Takahashi \cite{fujiiMeanFieldGame2022} [Section 1-5] is summarized as follows:
\begin{enumerate}
    \item[(1)] They formulated the financial market in which a large number of agents minimize their cost functionals by controlling their inventory size. 
    In particular, each agent controls the ``trading speed'' $\alpha^i$, and the cost functionals consist of transaction costs and penalty on the terminal inventory size. 
    \item[(2)] The stochastic optimal control problem was solved by the framework of Pontryagin's maximum principle. Its solution was characterized by an FBSDE \eqref{agent-FBSDE}, which can be shown to have a unique solution.
    \item[(3)] They introduced the notion of the market clearing condition \eqref{rev-market-clearing} and provided a heuristic derivation \eqref{heuristic} for the equilibrium price in the large population limit.
    \item[(4)] By plugging \eqref{heuristic} into the FBSDE \eqref{agent-FBSDE}, they proposed an FBSDE of McKean-Vlasov type \eqref{MKV-FBSDE}. They proved the well-posedness of \eqref{MKV-FBSDE} using the technique of Peng \& Wu \cite{pengFullyCoupledForwardBackward1999}.
    \item[(5)] Based on the above results, they also showed that the price process \eqref{heuristic} indeed achieved the market clearing condition when $N\to\infty$.
\end{enumerate}

\subsection{Comparison With This Research}
This section outlines the differences between the model presented in Fujii \& Takahashi \cite{fujiiMeanFieldGame2022} and that of this thesis.
In \cite{fujiiMeanFieldGame2022}, agents control the ``trading speed,'' and the state variable is the number of shares they hold. 
In contrast, in the models of this thesis, agents directly control the amount of money invested in securities, with the value of self-financing portfolio $(\mathcal{W}^{\pi}_t)_{t\in[0,T]}$ being the state variable:
\begin{equation}
    \begin{split}
        \mathcal{W}^{\pi}_t 
        &:= 
        \xi + \int_0^t \pi_r^\top \mathrm{diag}(S_r)^{-1}dS_r,~~~t\in[0,T].
    \end{split}
\end{equation}
Here, $\xi$ is the initial wealth, $\pi$ is the control variable representing the amount of money invested in securities, and $S$ is the security price process, whose dynamics is given by
\begin{equation}
		S_t=S_0+\int_0^t {\rm diag}(S_s)(\mu_s ds+\sigma_s dW_s^0), ~~~t\in[0,T].
\end{equation}
The utility function (the negative of the cost function) in this thesis is given by an exponential-type preference, which has been widely used in economic literature as a typical CARA utility function:
\begin{equation}
    U(\pi):=\mathbb{E}\Bigl[-\exp\Bigl(-\gamma (\mathcal{W}^{\pi}_T-F)\Bigr)\Bigr],
\end{equation}
where $F$ denotes the terminal liability. 
Consequently, the problem is formulated in a way that is closer to the standard setting of financial economics. 
In this setting, the agents' optimal strategy is characterized by a quadratic-growth BSDE (qg-BSDE) instead of an FBSDE derived from Pontryagin's maximum principle.
The equilibrium state is characterized by a mean field quadratic growth BSDE instead of an FBSDE of McKean-Vlasov type.

Furthermore, this thesis derives the equilibrium market risk-premium process whereas \cite{fujiiMeanFieldGame2022} derived the price process itself.
That is, this thesis does not directly derive the process $(S_t)_{t\in[0,T]}$ but derives the risk premium $(\theta_t)_{t\in[0,T]}$ defined by $\theta_t:=\sigma_t^\top(\sigma_t\sigma_t^\top)^{-1}\mu_t$ instead.
Therefore, this model cannot endogenously derive the return $\mu$ or the volatility $\sigma$.
The same observation is made in the case of a complete market.
Karatzas \& Shreve~\cite{karatzas_methods_1998} points out that this is because of the possibility of replacing stocks by self-financing portfolios (see \cite{karatzas_methods_1998} [Section 4.6]).
On the other hand, \cite{fujiiMeanFieldGame2022} imposes transaction costs via the term $\langle \alpha,\Lambda\alpha\rangle$, which can essentially work as a restriction on stock liquidity.

In summary, the differences between the two studies are presented in the following table.\\

\small
\noindent
\begin{tabular}{|l|l|l|} \hline
     & Fujii \& Takahashi \cite{fujiiMeanFieldGame2022} & This thesis \\ \hline
    Set-up & Optimal inventory management & Exponential utility maximization \\ \hline
    Methodology for optimal control & Pontryagin's maximum principle & Optimal martingale method \\  \hline
    Interaction among agents & \multicolumn{2}{|c|}{Market clearing condition} \\ \hline 
    Liquidity restrictions & Transaction cost & None \\ \hline
    Relevant equation & McKean-Vlasov FBSDE & Mean field qg-BSDE \\ \hline
    Result & Equilibrium price & Equilibrium risk premium \\ \hline
\end{tabular}
\normalsize

%% file: part1.tex
\vspace{0mm}

\section{Preliminary}
The key objective of theories of equilibrium price formation is to find asset prices for which the financial market is in equilibrium.  
Equilibrium prices are prices for which the optimal decisions of all the agents, given these asset prices, are such that the aggregate demand of each asset equals its aggregate supply.
Constructing and investigating the properties of the equilibrium prices are one of the major issues in financial economics.
See, for example,  Back~\cite{backAssetPricingPortfolio2017} and Munk~\cite{Munk} for details on this topic. 
The existence of market equilibrium in complete markets is well known, the details of which can be found in Karatzas \& Shreve~\cite{karatzas_methods_1998}.
However, the situation in incomplete markets has been unclear except for finite dimensional economic models, such as the one in Cuoco \& He~\cite{CuocoHe}.
Although a set of sufficient conditions for the existence of a representative agent has been derived, confirming these conditions for given models involves difficult mathematical problems and still remains open.
See discussions in Jarrow~\cite[Section 14.5]{Jarrow} and references therein. 
In particular, it has been known that obtaining a concrete characterization of the equilibrium prices is very difficult when there are idiosyncratic liabilities, labor incomes, or endowments that are unspanned by the security prices. 
This is a quite unsatisfactory situation, because we can imagine that the existence of a variety of idiosyncratic shocks among the agents induced from these terms is the very reason why we observe lively transactions in financial markets.
In this chapter, with the help of the recent developments of the mean-field game (MFG) theory, we propose a new concise technique to tackle the problem of equilibrium price formation in an incomplete market under an exponential utility.

Since the pioneering works of Lasry \& Lions~\cite{Lions-1, Lions-2, lasryMeanFieldGames2007} and 
Huang et al.~\cite{huangLargePopulationStochastic2006, Caines-Huang-2, Caines-Huang-3}, significant developments in the mean-field game 
theory have enabled us to understand some of the long-standing issues of multi-agent games.
If the interactions among the agents are symmetric, then
the MFG techniques can render, in the large population limit, a very complex problem of solving a large coupled system of equations
that characterizes a Nash equilibrium feasible by 
transforming it into separate and simpler problems of the optimization for a representative agent
and of finding a fixed point. 
The resultant solution in the mean field equilibrium is known to provide an $\epsilon$-Nash equilibrium for the original game with a large but finite number of agents.
The details of the MFG theory and many applications can be found, for example, 
in two volumes by Carmona \& Delarue~\cite{carmonaProbabilisticTheoryMean2018, carmonaProbabilisticTheoryMean2018a} and 
in a lecture note by Cardaliaguet~\cite{Cardaliaguet-note}.

We want to construct the price process of risky stocks
(more precisely, the associated risk-premium process) {\it endogenously} 
so that it balances the demand and supply among a large number of financial firms facing the market-wide common noise as well as
their own idiosyncratic noise.  
Unfortunately, this {\it market-clearing condition} does not fit well with the concept of Nash equilibrium.
Actually, if we change a trading strategy of one agent away from her equilibrium solution while keeping the other agents'
strategies unchanged, then the balance of demand and supply will inevitably be broken down.
Since the MFG theory has been developed primarily for the Nash games, 
most of its applications have not treated the market-clearing equilibrium.
In fact, in many of the existing examples, their primary interests are not in the price formation 
and the asset price process is typically assumed exogenously. 
For example, let us refer to \cite{Espinosa-Touzi, Frei-DosReis, Fu2020MeanFE, fu_Mean_field_portfolio_games, Lacker-Z} 
as related works dealing with optimal investment problems with exponential and power utilities.
All of these works are concerned with the Nash equilibrium among the investors competing in a relative performance criterion,
while the relevant price processes are given exogenously.

Recently, there also has been progress in the MFG theory for the problem of equilibrium price formation under the market clearing condition. 
Gomes \& Sa\'ude~\cite{Gomes_MFG} present a deterministic model of electricity price.
Its extension with random supply is given by Gomes et al.~\cite{Gomes_rand_MFG}.
The same authors also study, in \cite{Gomes_MFG_noise}, a price formation problem of a commodity whose production is subject to random fluctuations.
Evangelista et al.~\cite{evangelista2022price} investigate the price formation of an asset being traded in a limit order book and show promising numerical results using the actual high-frequency data of the listed stocks in several exchanges.
Shrivats et al.~\cite{SREC} deal with a price formation problem for the solar renewable energy certificate (SREC) by solving forward-backward stochastic differential equations (FBSDEs) of McKean-Vlasov type,
and Firoozi et al.~\cite{Firoozi_et_al} deal with a principal-agent problem in the associated emission market.
Fujii \& Takahashi~\cite{fujiiMeanFieldGame2022} solve a stochastic mean-field price model of securities in the presence of stochastic order flows, and in \cite{Fujii-Takahashi_strong},  the same authors prove the strong convergence to
the mean-field limit from the setup with finite number of agents. 
Fujii \& Takahashi~\cite{fujii2022equilibrium} extend the above model to the presence of a major player.
Recently, Fujii~\cite{Fujii-equilibrium-pricing} develops a model that allows the co-presence of cooperative and non-cooperative populations to learn how the price process is formed when a group of agents act in a coordinated manner.

But still, there exist two important restrictions in all of these papers mentioned above:
firstly,  the relevant control of each agent is interpreted as the {\it trading rate}
that is absolutely continuous with respect to the Lebesgue measure $dt$; secondly, the  cost function of each agent consists
of  terms representing some penalties on the trading speed and on the inventory size of the assets.
In other words, from the perspective of financial applications, the existing results cannot deal with the general self-financing trading strategies or the utility (or cost) functions given directly in terms of the associated wealth process of the portfolio.\footnote{See also a very recent work by Lavigne \& Tankov~\cite{Lavigne-Tankov}
which adopts a very different approach to investigate the mean-field equilibrium of the carbon emissions among the firms.}
Hence the question of equilibrium price formation, at least in the standard and traditional setup in financial economics, remains unanswered.
Here, a major problem in dealing with a utility function of 
wealth has been the difficulties in guaranteeing the convexity of the Hamiltonian 
associated with the Pontryagin's maximum principle and in obtaining enough regularity to 
prove the well-posedness of the associated FBSDEs.

In this chapter, we investigate the equilibrium price formation of the risky stocks to address the above two concerns by using the method of Hu, Imkeller \& M\"uller \cite{huUtilityMaximizationIncomplete2005a} and the mean field game theory.
In contrast to the existing works using the method of \cite{huUtilityMaximizationIncomplete2005a}, such as \cite{Fu2020MeanFE,fu_Mean_field_portfolio_games}, which found an MFG equilibrium with relative performance criterion, we establish a market-clearing equilibrium for the price-formation problem.
Our goal is to construct the risk-premium process endogenously so that the demand and supply of the associated stocks always balance among a large number of financial firms (agents) who are allowed to deploy
general self-financing trading strategies. 
We assume that the agents have a common type of preference specified by an exponential utility with respect to their wealth.
The agents are heterogeneous in their initial wealth, the coefficients of risk aversion, as well as the stochastic liability at the terminal time.
We allow the liabilities (not restricted to be positive) to depend both on the common noise
and on the idiosyncratic noise so that
they can describe idiosyncratic shocks of each agent in addition to the market-wide financial shocks.
For example, since our agents are financial firms, it is realistic to assume that they have significant derivative liabilities.
Although payoffs specified by financial contracts usually depend only on the common  information, the volume of the contracts
can naturally depend also on idiosyncratic information of each agent, such as her corporate size, corporate culture, and her client base. 
Personnel expenses of each agent can also be included in her liability.

To solve the optimization problem of each agent, we adopt the optimal martingale principle developed by Hu, Imkeller \& M\"uller \cite{huUtilityMaximizationIncomplete2005a}, instead of the Pontryagin's maximum principle. 
Their method can be applied to many popular classes of utility functions, such as exponential, power and logarithmic types. 
Recently, Xing~\cite{Hao-Xing} solved the optimization problem for  Epstein-Zin recursive utility by the same method.
In every case, the relevant equation characterizing the optimality is given by a quadratic-growth BSDE (qg-BSDE)~\cite{Kobylanski2000BackwardSD}.
In contrast to the existing literature, 
we cannot simply assume the risk-premium to be a bounded process, since this condition may be too stringent to achieve a market-clearing equilibrium.
In fact, we will see that it is necessary to relax the boundedness assumption on the risk-premium process in general. 
Since the standard results on a qg-BSDE cannot handle the existence of such a process in its driver, we need a special treatment to show its well-posedness. 
Fortunately, the special structure of the BSDE inherited from the exponential utility allows us to solve the problem by a standard application of the comparison principle.

After we obtain the optimal strategy of each agent, we move on to characterize the mean-field equilibrium in terms of a novel mean-field BSDE whose driver has quadratic growth both in the stochastic integrands and in their conditional expectations. 
The new BSDE is interesting in its own right since the same type of equations may become relevant for similar applications of mean-field equilibrium to other utility functions. 
As the main contributions of this work, we show the existence of a solution to this novel mean-field BSDE under several conditions, and then prove that the risk-premium process expressed by its solution actually clears the market in the large population limit.

The organization of this chapter is as follows. We solve the optimization problem of each agent with exponential utility in Section~\ref{sec-each-optimization}.
In particular, special attention is paid to allow the unbounded risk-premium process.
In Section~\ref{sec-mean-field-BSDE}, we derive a novel mean-field BSDE and prove the existence of a solution under several conditions. 
We show that the risk-premium process characterized by the solution of this BSDE actually clears the market in the large population limit in Section~\ref{sec-market-clearing}.
We conclude the chapter by Section~\ref{sec-conclusion}.

\section{Exponential Utility Optimization for a Given Agent}
\label{sec-each-optimization}
The relevant probability spaces used in the first part of this work are given below.\\

\noindent
(1) $(\Omega^0, \mathcal{F}^0, \mathbb{P}^0)$ is a complete probability space with a complete and right-continuous filtration 
$\mathbb{F}^0:=(\mathcal{F}_t^0)_{t\in[0,T]}$ generated by a $d_0$-dimensional standard Brownian motion $W^0:=(W_t^0)_{t \in[0,T]}$.
We set $\mathcal{F}^0:=\mathcal{F}^0_T$.
This space is used to model market-wide noise and information common to all the agents.\\

\noindent
(2) $(\Omega^1, \mathcal{F}^1, \mathbb{P}^1)$ is a complete probability space with a complete and right-continuous filtration
$\mathbb{F}^1:=(\mathcal{F}_t^1)_{t\in [0,T]}$ generated by a $d$-dimensional standard Brownian motion $W^1:=(W^1_t)_{t\in [0,T]}$
and a $\sigma$-algebra $\sigma(\xi^1,\gamma^1)$, which defines $\mathcal{F}_0^1$. The $\sigma$-algebra $\mathcal{F}_0^1$ is generated by a bounded $\mathbb{R}$-valued 
random variable $\xi^1$ and a strictly positive bounded random variable $\gamma^1$. We set $\mathcal{F}^1:=\mathcal{F}^1_T$.
This space is used to model idiosyncratic noise and information for an agent (agent-1).
In later sections, we create independent copies $(\Omega^i,\mathcal{F}^i, \mathbb{P}^i)_{i\in \mathbb{N}}$ of this  space endowed with 
$\mathbb{F}^i:=(\mathcal{F}_t^i)_{t \in[0,T]}$ to model idiosyncratic information for a large number of agents, (agent-$i$, $i\in\mathbb{N}$).\\

\noindent
(3) $(\Omega^{0,1}, \mathcal{F}^{0,1}, \mathbb{P}^{0,1})$ is a probability space defined on the product set $\Omega^{0,1}:=\Omega^0\times \Omega^1$
with $(\mathcal{F}^{0,1},\mathbb{P}^{0,1})$ the completion of $(\mathcal{F}^0\otimes \mathcal{F}^1, \mathbb{P}^0\otimes \mathbb{P}^1)$.
$\mathbb{F}^{0,1}:=(\mathcal{F}^{0,1}_t)_{t\in [0,T]}$ denotes the complete and right-continuous augmentation of $(\mathcal{F}_t^0\otimes \mathcal{F}_t^1)_{t\in[0,T]}$.
A generic element of $\Omega^{0,1}$
is denoted by $\omega:=(\omega^0,\omega^1)\in \Omega^0\times \Omega^1$.\\

We set $\mathcal{T}^{0,1}:=\mathcal{T}(\mathbb{F}^{0,1})$ and $\mathcal{T}^{0}:=\mathcal{T}(\mathbb{F}^{0})$. 
Throughout the chapter, we do not distinguish a random variable defined on a marginal probability space from its trivial extension to a product space for notational simplicity.
For example, we will use a same symbol $X$ for a random variable $X(\omega^0)$ defined on 
the space $(\Omega^0,\mathcal{F}^0,\mathbb{P}^0)$  and for its trivial extension $X(\omega^0,\omega^1):=X(\omega^0)$
defined on  the product space $(\Omega^{0,1}, \mathcal{F}^{0,1},\mathbb{P}^{0,1})$.

In this section, we consider the optimization problem for an agent (agent-1) whose preference is given by an exponential 
utility. We characterize her optimal trading strategy in terms of the quadratic-growth BSDE 
by the approach proposed by Hu, Imkeller \& M\"uller~\cite{huUtilityMaximizationIncomplete2005a}.
In particular, however, in order to deal with the mean-field price formation as in \cite{fujiiMeanFieldGame2022},
we need to relax their assumption on the boundedness of the risk-premium process  $\theta:=(\theta_t)_{t\in[0,T]}$ to the unbounded one in $\mathbb{H}^2_{\rm BMO}$.

\subsection{The Market and the Utility Function}
The market dynamics and the (agent-1)'s idiosyncratic environment are modeled on the filtered probability space $(\Omega^{0,1},\mathcal{F}^{0,1},\mathbb{P}^{0,1},\mathbb{F}^{0,1})$ defined in the previous section. In this section, 
the expectation with respect to $\mathbb{P}^{0,1}$ is simply denoted by $\mathbb{E}[\cdot]$. The financial market is specified as follows.

\begin{asm} ~~\\
\label{assumption-market}
{\rm (i)} The risk-free interest rate is zero.\\
{\rm (ii)} There are $n\in \mathbb{N}$ non-dividend paying risky stocks whose price dynamics is given by
\begin{equation}
		S_t=S_0+\int_0^t {\rm diag}(S_s)(\mu_s ds+\sigma_s dW_s^0), ~~~t\in[0,T],
\label{eq-stock-price}
\end{equation}
where $S_0\in\mathbb{R}^n_{++}$ denotes the initial stock prices, $\mu:=(\mu_t)_{t\in[0,T]}$ is an $\mathbb{R}^n$-valued, $\mathbb{F}^0$-progressively
measurable process belonging to $\mathbb{H}^2_{\rm BMO}(\mathbb{P}^{0,1},\mathbb{F}^0)$~\footnote{Clearly, $\mu$ 
is also in $\mathbb{H}^2_{\rm BMO}(\mathbb{P}^{0,1},\mathbb{F}^{0,1})$. In fact,  additional information from $\mathbb{F}^1$
cannot increase the $\mathbb{H}^2_{\rm BMO}$-norm $\eqref{eq-h2bmo-norm}$ 
since $\mu$ is $\mathbb{F}^0$-adapted which is independent from $\mathbb{F}^1$.}.
$\sigma:=(\sigma_t)_{t\in[0,T]}$ is an $\mathbb{R}^{n\times d_0}$-valued, bounded, and $\mathbb{F}^0$-progressively measurable process
 such that  there exist positive constants $0<\underline{\lambda}<\overline{\lambda}$ satisfying 
\begin{equation}
\underline{\lambda}I_n\leq \sigma_t\sigma_t^\top \leq \overline{\lambda}I_n, \quad {\text{$dt\otimes \mathbb{P}^0$-a.e.}} \nonumber
\end{equation} 
Here, $I_n$ denotes $n\times n$ identity matrix.
\end{asm}

\begin{rem}
	$\mu\in \mathbb{H}^2_{\rm BMO}$ is a very strong assumption. 
    In fact, typical Gaussian processes such as the Ornstein-Uhlenbeck process are not contained in this class. 
    This is unfortunate, however, such an assumption is essential to deal with the quadratic-growth BSDE, which will be introduced later.
\end{rem}

Since the interest rate is zero, the risk-premium process  $\theta:=(\theta_t)_{t\in[0,T]}$ is defined by $\theta_t:=\sigma_t^\top
(\sigma_t\sigma_t^\top)^{-1}\mu_t$. Hence, for any $t\geq 0$, $\theta_t\in {\rm Range}(\sigma_t^\top)={\rm Ker}(\sigma_t)^\perp$.
Here, $\top$ denotes the transposition, and ${\rm Ker} (\sigma_t)^\perp$ is the orthogonal complement of ${\rm Ker}(\sigma_t)$ in $\mathbb{R}^{d_0}$.
Note that $\theta$ is in  $\mathbb{H}^2_{\rm BMO}$ due to the boundedness of the process $\sigma$. 
By the regularity of $(\sigma\sigma^\top)$, we have ${\rm rank}(\sigma)=n$. 
The financial market is incomplete even within the common information
$\mathbb{F}^0$ in general, since we have $n\leq d_0$. Recall that we have additional noise associated with $\mathbb{F}^1$ in the liability given below.

\begin{rem}
We emphasize that, although $\mu$ is unbounded, the dynamics of stock price $\eqref{eq-stock-price}$ 
is well defined. The easiest way to check this is to change the probability measure to the
risk-neutral one, which is possible since $\theta\in \mathbb{H}^2_{\rm BMO}$. Under this equivalent probability measure,
the stock price process is a uniquely specified as a martingale.
\end{rem}

\begin{dfn}
For each $s\in[0,T]$, let us denote by
\begin{equation}
L_s:=\{u^\top \sigma_s; u\in \mathbb{R}^n\} \nonumber
\end{equation}
the  linear subspace of $\mathbb{R}^{1\times d_0}$ spanned by the $n$ row vectors of $\sigma_s$.
For any $z\in \mathbb{R}^{1\times d_0}$,  $\Pi_s(z)$ denotes the orthogonal projection of $z$ onto the 
linear subspace $L_s$.
\end{dfn}

Notice that $\theta_s^\top\in L_s$ for every $s\in[0,T]$ by its construction.
\begin{rem}
The projections to the range as well as the kernel of a finite dimensional matrix are Borel measurable.
It then follows that the process $(\Pi_s(z_s))_{s\in[0,T]}$ is progressively measurable if so is the process $(\sigma_s, z_s)_{s\in[0,T]}$.
See \cite[Chapter 1; Lemma 4.4, Corollary 4.5]{karatzas_methods_1998} for the proof.
\end{rem}

The idiosyncratic environment for the agent-1 is modelled by a triple $(\xi^1, \gamma^1, F^1)$.
\begin{asm}~\\
\label{assumption-agent}
{\rm (i)} $\xi^1$ is an $\mathbb{R}$-valued, bounded, and $\mathcal{F}_0^1$-measurable random variable denoting the initial wealth for the agent-1.\\
{\rm (ii)} $\gamma^1$ is an $\mathbb{R}$-valued, bounded, and $\mathcal{F}_0^1$-measurable random variable, satisfying
\begin{equation}
\underline{\gamma}\leq \gamma^1\leq \overline{\gamma}, \nonumber
\end{equation}
with some positive constants $0<\underline{\gamma}\leq \overline{\gamma}$. $\gamma^1$ denotes the coefficient of (absolute) risk aversion of the agent-1. \\
{\rm (iii)} $F^1$ is an $\mathbb{R}$-valued, bounded, and $\mathcal{F}^{0,1}_T$-measurable random variable denoting the liability of the agent-1 at time $T$. \\
{\rm (iv)} The agent-1 has a negligible market share and hence his/her trading activities have no impact on the stock prices, i.e., 
he/she is a price taker. 
\end{asm}
\begin{rem}
Notice that the liability $F^1$ is subject to common shocks from $\mathcal{F}_T^0$ as well as idiosyncratic shocks from $\mathcal{F}_T^1$.
\end{rem}

The wealth process of the agent-$1$ under the self-financing trading strategy $\pi$ is given by
\begin{equation}
\mathcal{W}_t^{1,\pi}=\xi^1+\sum_{j=1}^n\int_0^t\frac{\pi_{j,s}}{S^j_s}dS^j_s=\xi^1+\int_0^t \pi_s^\top \sigma_s(dW_s^0+\theta_s ds),~~~t\in[0,T]. \nonumber
\end{equation}
Here, $\pi:=(\pi_t)_{t\in[0,T]}$ is an $\mathbb{R}^n$-valued, $\mathbb{F}^{0,1}$-progressively measurable process
representing the invested amount of money in each of the $n$ stocks.
The problem of the agent-$1$ is to solve
\begin{equation}
\sup_{\pi \in \mathbb{A}^1} U^1(\pi), \nonumber
\end{equation}
where the functional $U^1$ is called exponential utility (for the agent-$1$). It is defined by
\begin{equation}
U^1(\pi):=\mathbb{E}\Bigl[-\exp\Bigl(-\gamma^1 \Bigl(\xi^1+\int_0^T \pi_s^\top \sigma_s (dW_s^0+\theta_s ds)-F^1\Bigr)\Bigr)\Bigr]. 
\label{eq-utility-functional}
\end{equation}
It means that the low performance in the sense of $\mathcal{W}^{1,\pi}_T-F^1<0$ is punished heavily and
the high performance $\mathcal{W}^{1,\pi}_T-F^1>0$ is only weakly valued. 

In this thesis, we will not delve into the concrete modelling of the liability, which can include common as well as idiosyncratic shocks associated with,
for example, financial market,  endowment, consumption, local price of commodities,  and/or budgetary target imposed on the agent by his/her manager.
\begin{dfn}
\label{def-admissible-space}
The admissible space $\mathbb{A}^1$ is the set of all $\mathbb{R}^n$-valued, $\mathbb{F}^{0,1}$-progressively measurable trading strategies $\pi$
that satisfy $\mathbb{E}\Bigl[ \displaystyle\int_0^T |\pi_s^\top \sigma_s|^2ds \Bigr]<\infty$ and such that
\begin{equation}
\bigl\{\exp(-\gamma^1 \mathcal{W}_\tau^{1,\pi}); \tau \in \mathcal{T}^{0,1}\bigr\} \nonumber
\end{equation}
is uniformly integrable (i.e. of class $\mathcal{D}$). We also define $\mathcal{A}^1:=\bigl\{p=\pi^\top \sigma; \pi\in \mathbb{A}^1\bigr\}.$
\end{dfn}

\noindent
Note that $p$ is an $\mathbb{R}^{1\times d_0}$-valued process with $p_s\in L_s$ for any $s\in[0,T]$.
The problem for the agent-1 can be equivalently said to find the value function:
\begin{equation}
V^{1,*}:=\sup_{p\in \mathcal{A}^1}\mathbb{E}\Bigl[-\exp\Bigl(-\gamma^1\Bigl(\xi^1+\int_0^T p_s (dW_s^0+\theta_sds)-F^1\Bigr)\Bigr)\Bigr]. \nonumber
\end{equation}

\begin{rem}
	\label{mu-theta} Before closing this section, let us give some remarks on the choice of $\mathbb{H}^2_{\rm BMO}$ 
	for the class of $\theta$. In this work, our goal is  to find an appropriate $\theta$ (and hence $\mu$)
	to achieve an equilibrium. If we put an agent in a stochastic environment as we did, his/her demand/supply of goods becomes inevitably stochastic. 
	In order to take balance among many agents, the risk-premium process $\theta$ must also  be stochastic in general.
	This is already the case in the classical results for the complete market with finite number of agents (See Chapter 4 in Karatzas~\&~Shreve~\cite{karatzas_methods_1998}).
	In order not to miss a valuable candidate of equilibrium, 
	it is of course desirable to weaken assumptions on $\theta $ as much as possible,  but we need mathematical tractability at the same time.
	We have chosen the class of $\mathbb{H}^2_{\rm BMO}$ particularly because that it makes Dol\'{e}ans-Dade exponential uniformly integrable for {\it every finite time interval} $[0,T]$, which is necessary to justify the measure change we need. As we will see below, it also turns out to be the class of 
	the stochastic integrands $Z$ of qg-BSDEs with solutions in $\mathbb{S}^\infty$.  This plays a crucial role in maintaining the consistency of our logic. See Sections~\ref{sec-mean-field-BSDE} and \ref{sec-market-clearing}.
\end{rem}

\subsection{Characterization of the Optimal Trading Strategy}
Thanks to the work~\cite{huUtilityMaximizationIncomplete2005a}, we can characterize the optimal trading strategy by using a solution to a certain BSDE (instead of FBSDEs)
in a rather straightforward way. When a utility (or equivalently cost) function has a special homothetic form as in the current case,
their method often provides much simpler description of the optimal strategy than in the case where the  Pontryagin's maximum principle is applied.

We try to construct a family of stochastic processes $\{R^p:=(R^p_t)_{t\in[0,T]}, p\in \mathcal{A}^1\}$ satisfying the following properties:
\begin{dfn} 
(Condition-R) \\
{\rm (i)} $R_T^p=-\exp\bigl(-\gamma^1 (\mathcal{W}_T^{1,p}-F^1)\bigr)$ a.s. for all $p\in\mathcal{A}^1$.\\
{\rm (ii)} $R_0^p=R_0$ a.s. for all $p\in \mathcal{A}^1$ and for some $\mathcal{F}_0^1~(=\mathcal{F}^{0,1}_0)$-measurable random variable $R_0$.\\
{\rm (iii)} $R^p$ is an $(\mathbb{F}^{0,1},\mathbb{P}^{0,1})$-supermartingale for all $p\in \mathcal{A}^1$, and there exists some $p^*\in \mathcal{A}^1$
such that $R^{p^*}$ is an $(\mathbb{F}^{0,1},\mathbb{P}^{0,1})$-martingale.
\end{dfn}
\noindent
In fact, if we can find such a family $\{R^p\}$, then for any $p\in \mathcal{A}^1$, we have
\begin{equation}
\mathbb{E}\bigl[-\exp\bigl(-\gamma^1(\mathcal{W}_T^{1,p}-F^1)\bigr)\bigr]\leq \mathbb{E}[R_0]=\mathbb{E}\bigl[-\exp\bigl(-\gamma^1(\mathcal{W}_T^{1,p^*}-F^1)\bigr)\bigr], \nonumber
\end{equation}
and hence $p^*$ is an optimal trading strategy for the agent-1.

In order to construct the family $\{R^p\}$, we try to find an appropriate process $Y:=(Y_t)_{t\in[0,T]}$ with
which the process $R^p$ is given by 
\begin{equation}
R_t^p=-\exp\bigl(-\gamma^1(\mathcal{W}_t^{1,p}-Y_t)\bigr), ~~~ t\in[0,T], ~~~p\in \mathcal{A}^1.
\label{eq-R-hypo}
\end{equation} 
Here, the triple $(Y,Z^0,Z^1)$, which is an $(\mathbb{R},\mathbb{R}^{1\times d_0}, \mathbb{R}^{1\times d})$-valued process, 
is an $\mathbb{F}^{0,1}$-adapted solution to the BSDE
\begin{equation}
Y_t=F^1+\int_t^T f(s,Z_s^0,Z_s^1)ds-\int_t^T Z_s^0 dW_s^0-\int_t^T Z_s^1 dW_s^1,~~~ t\in[0,T]. \nonumber
\end{equation}
The concrete form of  the driver $f$ is to be determined below so that $\{R^p\}$ satisfies the desired properties.  

Under the hypothesis of $\eqref{eq-R-hypo}$, we get, by Ito formula,
\begin{equation}
\begin{split}
dR_t^p&=R_t^p\Bigl(-{\gamma^1} d(\mathcal{W}_t^{1,p}-Y_t)+\frac{(\gamma^1)^2}{2}d\langle \mathcal{W}^{1,p}-Y\rangle_t\Bigr)\\
&=R_t^p\Bigl(-\gamma^1 (p_t\theta_t+f(t,Z_t^0,Z_t^1))+\frac{(\gamma^1)^2}{2}(|p_t-Z_t^0|^2+|Z_t^1|^2)\Bigr)dt\\
&\quad+R_t^p\bigl(-\gamma^1(p_t-Z_t^0)dW_t^0+\gamma^1 Z_t^1 dW_t^1\bigr), \quad t\in[0,T]. \nonumber
\end{split}
\end{equation} 
In order to guess an appropriate form of $f$, let us formally solve it as
\begin{equation}
\begin{split}
R_t^p&=-\exp\bigl(-{\gamma^1}(\xi^1-Y_0)\bigr)\exp\Bigl(\int_0^t\bigl[-{\gamma^1}(p_s\theta_s+f(s,Z_s^0,Z_s^1))+\frac{({\gamma^1})^2}{2}(|p_s-Z_s^0|^2+|Z_s^1|^2)\bigr]ds\Bigr)\\
&\quad \times \mathcal{E}\Bigl(\int_0^\cdot\bigl[-{\gamma^1}(p_s-Z_s^0)dW_s^0+{\gamma^1} Z_s^1 dW_s^1\bigr]\Bigr)_t. \nonumber
\end{split}
\end{equation}
We search for the driver $f(s,Z_s^0, Z_s^1)$ that satisfies
\begin{itemize}
\item $-{\gamma^1}(p_s\theta_s+f(s,Z_s^0,Z_s^1))+\dfrac{({\gamma^1})^2}{2}(|p_s-Z_s^0|^2+|Z_s^1|^2)\geq 0$ for all $p\in \mathcal{A}^1$,
\item There exists $p^*\in \mathcal{A}^1$ such that $-{\gamma^1}(p_s^*\theta_s+f(s,Z_s^0,Z_s^1))+\dfrac{({\gamma^1})^2}{2}(|p_s^*-Z_s^0|^2+|Z_s^1|^2)=0$
\end{itemize}
for all $s\in[0,T]$. The above conditions suggest that
\begin{equation}
\begin{split}
f(s,Z_s^0,Z_s^1)&=\inf_{p_s\in L_s}\Bigl\{-p_s\theta_s+\frac{{\gamma^1}}{2}(|p_s-Z_s^0|^2+|Z_s^1|^2)\Bigr\}\\
&=\inf_{p_s\in L_s}\Bigl\{ \frac{\gamma^1}{2}\Bigl|p_s-\Bigl(Z_s^0+\frac{\theta_s^\top}{\gamma^1}\Bigr)\Bigr|^2-Z_s^0\theta_s-\frac{1}{2\gamma^1}|\theta_s|^2
+\frac{\gamma^1}{2}|Z_s^1|^2\Bigr\}. \nonumber
\end{split}
\end{equation}
This is a special case treated by \cite{huUtilityMaximizationIncomplete2005a}[Section 2] 
with a trading constraint $\pi_t\in \widetilde{C}$ by a general closed subset $\widetilde{C}\subset \mathbb{R}^{n}$,
which is now replaced by the whole space $\mathbb{R}^{n}$. A candidate for the optimal strategy $p^*$ is then given by
\begin{equation}
p^*_t=Z_t^{0\|}+\frac{\theta_t^\top}{{\gamma^1}}, ~~~t\in[0,T].
\label{optimal-p}
\end{equation}
Here, for notational simplicity, we have written $Z_s^{0\|}:=\Pi_s(Z_s^0)$ and $Z_s^{0\perp}:=Z_s^0-\Pi_s(Z_s^0)$.
They are orthogonal to each other and $|Z_s^0|^2=|Z_s^{0\|}|^2+|Z_s^{0\perp}|^2$. Recall that $\Pi_s(\theta_s^\top)=\theta_s^\top$ for every $s\in[0,T]$.
With this convention, we have
\begin{equation}
\begin{split}
f(s,Z_s^0,Z_s^1)&=-Z_s^0\theta_s-\frac{1}{2\gamma^1}|\theta_s|^2+\frac{\gamma^1}{2}(|Z_s^{0\perp}|^2+|Z_s^1|^2 )\\
&=-Z_s^{0\|}\theta_s-\frac{1}{2\gamma^1}|\theta_s|^2+\frac{\gamma^1}{2}(|Z_s^{0\perp}|^2+|Z_s^1|^2 ),
\end{split}
\end{equation}
where we used the fact $\theta_s^\top \in L_s$ and hence $Z_s^{0\perp}\theta_s=0$ in the second equality.

Therefore, the associated qg-BSDE is given by
\begin{equation}
Y_t=F^1+\int_t^T \Bigl(-Z_s^{0\|}\theta_s-\frac{|\theta_s|^2}{2{\gamma^1}}+\frac{{\gamma^1}}{2}(|Z_s^{0\perp}|^2+|Z_s^1|^2)\Bigr)ds-\int_t^T Z_s^0 dW_s^0-\int_t^T Z_s^1 dW_s^1,
~~~t\in[0,T]. 
\label{BSDE-org}
\end{equation}
For simplicity, we rewrite the equation with $G^1:={\gamma^1} F^1$, and $(y, z^0,z^1):=({\gamma^1} Y,{\gamma^1} Z^0,{\gamma^1} Z^1)$.
Then, we can equivalently work on the normalized BSDE,
\begin{equation}
y_t=G^1+\int_t^T\Bigl(-z_s^{0\|}\theta_s-\frac{1}{2}|\theta_s|^2+\frac{1}{2}(|z_s^{0\perp}|^2+|z_s^1|^2)\Bigr)ds-\int_t^T z_s^0 dW_s^0-\int_t^T z_s^1 dW_s^1,~~~t\in[0,T].
\label{BSDE-norm}
\end{equation}
In this case, $p^*$ is given by $\gamma^1 p^*_t=z_t^{0\|}+\theta_t^\top$. There should be no confusion which BSDE is being discussed by checking the terminal function and the presence of $\gamma^1$.
In order to conclude that they are actually  what we want, we need to verify that the resultant family $\{R^p\}$ $\eqref{eq-R-hypo}$ and the process $p^*$ $\eqref{optimal-p}$ satisfy (Condition-R).

\subsection{Solution of the BSDE and Its Verification}

We  emphasize that, in contrast to the work~\cite{huUtilityMaximizationIncomplete2005a}, our risk-premium process $\theta\in \mathbb{H}^2_{\rm BMO}$ is unbounded.
As we will see in later sections, this generalization is necessary to handle the mean-field market clearing equilibrium.
Due to this unbounded risk-premium process,  we cannot apply the  standard results on qg-BSDEs given by Kobylanski~\cite{Kobylanski2000BackwardSD}. 
Moreover, since the exponential integrability of $(|\theta_t|^2, t\in[0,T])$
is not guaranteed in general, we cannot apply the extensions on the qg-BSDE theories such as \cite{briandBSDEQuadraticGrowth2006, briandQuadraticBSDEsConvex2008, Hu-Tang}, either.
In particular, the case $(|\theta_t|^{1+\alpha}, \alpha<1)$ is covered by the result in $\cite{Hu-Tang}$ but not the case where 
$|\theta_t|^2$-term is contained in the driver. 
Fortunately, thanks to the special form of its driver inherited from the exponential utility, 
we can show the existence of a unique solution $(y,z^0,z^1)$ to the BSDE $\eqref{BSDE-norm}$ (and equivalently $\eqref{BSDE-org})$
in the space $\mathbb{S}^\infty\times \mathbb{H}^2_{\rm BMO}\times 
\mathbb{H}^2_{\rm BMO}$ by a simple modification of the standard approach \cite{Kobylanski2000BackwardSD}.

\begin{lem}
\label{lemma-Z-norm}
Let Assumptions~\ref{assumption-market} and \ref{assumption-agent} be in force.
If there exists a bounded (with respect to the $y$-component) solution, i.e. $(y,z^0,z^1)\in \mathbb{S}^\infty\times \mathbb{H}^2 \times \mathbb{H}^2$,
to the BSDE $\eqref{BSDE-norm}$, then $z:=(z^0,z^1)$ is in $\mathbb{H}^2_{\rm BMO}$.  
\end{lem}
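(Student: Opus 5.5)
We will follow the standard argument for bounded solutions of quadratic BSDEs (as in Kobylanski~\cite{Kobylanski2000BackwardSD}), applying Itô's formula to an exponential transform of $y$. The one new feature is that the driver of $\eqref{BSDE-norm}$ contains the unbounded terms $-z^{0\|}\theta-\frac12|\theta|^2$. These will be controlled by the assumption $\theta\in\mathbb{H}^2_{\rm BMO}$ (Assumption~\ref{assumption-market} together with the boundedness of $\sigma$) rather than by a uniform bound.

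Set $M:=\|y\|_{\mathbb{S}^\infty}$ and fix $\beta>0$, to be chosen. Apply Itô's formula to $\phi(y_t)$ with $\phi(x):=e^{\beta x}$, on $[\tau,T]$ for an arbitrary $\tau\in\mathcal{T}^{0,1}$. Since $z\in\mathbb{H}^2$ and $y$ is bounded, the stochastic integral $\int \phi'(y)z\,dW$ is a true martingale, so taking $\mathbb{E}[\cdot|\mathcal{F}_\tau]$ removes it. This gives
\begin{equation}
\mathbb{E}\Bigl[\int_\tau^T\Bigl(\tfrac{\beta^2}{2}e^{\beta y_s}|z_s|^2+\beta e^{\beta y_s}f_s\Bigr)ds\Big|\mathcal{F}_\tau\Bigr]=\mathbb{E}[e^{\beta y_T}|\mathcal{F}_\tau]-e^{\beta y_\tau}\le e^{\beta M},
\end{equation}
where $f_s$ denotes the driver of $\eqref{BSDE-norm}$ and $|z_s|^2=|z_s^0|^2+|z_s^1|^2$.

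Next we bound $f$ from below. Its quadratic part $\frac12(|z^{0\perp}|^2+|z^1|^2)$ is nonnegative, and Young's inequality gives $-z^{0\|}\theta\ge -\frac14|z^{0\|}|^2-|\theta|^2$. Hence
\begin{equation}
f_s\ge -\tfrac14|z_s|^2-\tfrac32|\theta_s|^2 .
\end{equation}
Substituting this and using $e^{\beta y}\ge e^{-\beta M}$ yields
\begin{equation}
\Bigl(\tfrac{\beta^2}{2}-\tfrac{\beta}{4}\Bigr)e^{-\beta M}\,\mathbb{E}\Bigl[\int_\tau^T|z_s|^2ds\Big|\mathcal{F}_\tau\Bigr]\le e^{\beta M}+\tfrac{3\beta}{2}e^{\beta M}\,\mathbb{E}\Bigl[\int_\tau^T|\theta_s|^2ds\Big|\mathcal{F}_\tau\Bigr].
\end{equation}
With $\beta=1$ the coefficient on the left is $\frac14 e^{-M}>0$. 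The right-hand side is at most $e^{M}\bigl(1+\frac32\|\theta\|^2_{\mathbb{H}^2_{\rm BMO}}\bigr)$, and this bound does not depend on $\tau$. Taking the essential supremum over $\omega$ and then the supremum over $\tau$ gives $z\in\mathbb{H}^2_{\rm BMO}$.

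A caveat on the martingale claim: since $\phi'(y)=\beta e^{\beta y}$ is bounded by $\beta e^{\beta M}$ and $z\in\mathbb{H}^2$, the integrand of $\int\phi'(y)z\,dW$ is square integrable, so this stochastic integral is a true martingale and its conditional expectation vanishes; no localization is needed.

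The one step that genuinely needs care is handling the cross term $z^{0\|}\theta$. The quadratic term $\frac12|z^{0\|}|^2$ is absent from the driver, so this cross term cannot be absorbed by the driver itself. Instead it has to be absorbed by the convexity contribution $\frac{\beta^2}{2}|z|^2$ coming from Itô's formula. That is why $\beta$ must be large enough that $\frac{\beta^2}{2}$ dominates the Young coefficient (here $\beta=1$ suffices against $\frac14$). The $\theta$-terms that remain after this absorption are then controlled exactly by the BMO norm of $\theta$.
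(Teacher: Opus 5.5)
The Itô step in your proposal has the wrong sign, and as written the argument does not prove the lemma. Since $dy_t=-f_t\,dt+z^0_t\,dW^0_t+z^1_t\,dW^1_t$, Itô's formula for $e^{\beta y}$ gives the drift $e^{\beta y_t}\bigl(\tfrac{\beta^2}{2}|z_t|^2-\beta f_t\bigr)$, not $e^{\beta y_t}\bigl(\tfrac{\beta^2}{2}|z_t|^2+\beta f_t\bigr)$. So your displayed identity is false for $\phi(x)=e^{\beta x}$. The correct identity needs an \emph{upper} bound on $f$, and your lower bound $f\ge-\tfrac14|z|^2-\tfrac32|\theta|^2$ points the wrong way.

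With your choice $\beta=1$, the correct identity carries no information at all. The drift of $e^{y}$ is exactly $\tfrac12 e^{y}|(z^{0\|})^\top+\theta|^2$, because the factor $\tfrac{\beta^2}{2}=\tfrac12$ exactly cancels the quadratic part $\tfrac12(|z^{0\perp}|^2+|z^1|^2)$ of the driver. This drift contains no $|z^{0\perp}|^2$ or $|z^1|^2$ term, so it cannot control $z$ in $\mathbb{H}^2_{\rm BMO}$.

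The identity you wrote is, however, exactly Itô's formula for $e^{-\beta y}$:
\[
de^{-\beta y_t}=e^{-\beta y_t}\bigl(\beta f_t+\tfrac{\beta^2}{2}|z_t|^2\bigr)dt-\beta e^{-\beta y_t}\bigl(z^0_t\,dW^0_t+z^1_t\,dW^1_t\bigr).
\]
If you take $\phi(x)=e^{-\beta x}$, everything after that line goes through verbatim. You have $e^{-\beta y}\ge e^{-\beta M}$, the right-hand side is at most $e^{\beta M}$, and the $|\theta|^2$ term is controlled by the BMO norm. This gives $\|z\|^2_{\mathbb{H}^2_{\rm BMO}}\le 4e^{2M}\bigl(1+\tfrac32\|\theta\|^2_{\mathbb{H}^2_{\rm BMO}}\bigr)$.

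That repaired argument is a genuinely different route from the paper's.
\begin{itemize}
\item The paper works on the upper-bound side with $e^{2y}$. It absorbs the cross term by completing the square with the driver's own $-\tfrac12|\theta|^2$ term: $2z^{0\|}\theta+|\theta|^2\ge-|z^{0\|}|^2$. Hence $f\le\tfrac12|z|^2$, and the drift of $e^{2y}$ dominates $e^{2y}|z|^2$. The result is $\|z\|^2_{\mathbb{H}^2_{\rm BMO}}\le e^{4\|y\|_{\mathbb{S}^\infty}}$, independent of $\theta$ and without using $\theta\in\mathbb{H}^2_{\rm BMO}$.
\item Your repaired route exploits the favorable sign of the positive quadratic part of $f$ instead. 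It must then pay for the cross term through the BMO norm of $\theta$, which gives a $\theta$-dependent constant.
\end{itemize}
So, contrary to your closing remark, the cross term can be absorbed within the driver itself, once one works on the upper-bound side.
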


\noindent
\textbf{\textit{Proof}}\\
By Ito formula, we have,
\begin{equation}
\begin{split}
de^{2y_t}&=e^{2 y_t}\bigl(2z_t^{0\|}\theta_t+ |\theta_t|^2-\bigl(|z_t^{0\perp}|^2+|z_t^1|^2)+2
(|z_t^0|^2+|z_t^1|^2)\bigr)dt\\
&\quad+2 e^{2y_t}(z_t^0 dW_t^0+z_t^1 dW_t^1)\\
&\geq e^{2y_t}\bigl(-|z_t^{0\|}|^2-\bigl(|z_t^{0\perp}|^2+|z_t^1|^2)+2(|z_t^0|^2+|z_t^1|^2)\bigr)dt\\
&\quad+2 e^{2 y_t}(z_t^0 dW_t^0+z_t^1 dW_t^1)\\
&\geq e^{2y_t}(|z_t^0|^2+|z_t^1|^2)dt+2 e^{2y_t}(z_t^0dW_t^0+z_t^1 dW_t^1)
\end{split}
\nonumber
\end{equation}
and thus, for any $t\in[0,T]$,
\begin{equation}
e^{2 y_T}-e^{2y_t}\geq \int_t^T e^{2 y_s}(|z_s^0|^2+|z_s^1|^2)ds+\int_t^T 2 e^{2 y_s}(z_s^0dW_s^0+z_s^1 dW_s^1). \nonumber
\end{equation}
It is easy to obtain $\|z\|^2_{\mathbb{H}^2_{\rm BMO}}:=\sup_{\tau\in\mathcal{T}^{0,1}}\Bigl\|\mathbb{E}\Bigl[\int_\tau^T (|z^0_s|^2+|z_s^1|^2)ds|\mathcal{F}_\tau^{0,1}\Bigr]\Bigr\|_{\infty}\leq e^{4 \|y\|_{\mathbb{S}^\infty}}$. $\square$

The above lemma is now used to guarantee the uniqueness of solution if $y \in \mathbb{S}^\infty$.
\begin{thm}
\label{th-uniqueness}
Let Assumptions~\ref{assumption-market} and \ref{assumption-agent} be in force.
If the solution to $\eqref{BSDE-norm}$ is bounded, i.e. $(y,z^0,z^1)\in \mathbb{S}^\infty\times \mathbb{H}^2\times \mathbb{H}^2$, then 
such a solution is unique.
\end{thm}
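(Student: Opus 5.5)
Take two bounded solutions $(y,z^0,z^1)$ and $(y',z'^0,z'^1)$ of \eqref{BSDE-norm} and linearize the difference. By Lemma~\ref{lemma-Z-norm}, all four processes $z^0,z^1,z'^0,z'^1$ lie in $\mathbb{H}^2_{\rm BMO}$, and $\theta\in\mathbb{H}^2_{\rm BMO}$ by Assumption~\ref{assumption-market}.

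Set $\delta y:=y-y'$ and $\delta z^i:=z^i-z'^i$ for $i=0,1$. The linear $\theta$-term in the driver is already linear in $z^0$. The term $-\frac12|\theta|^2$ cancels in the difference. For the quadratic terms, write
\[
\tfrac12\bigl(|z^{0\perp}|^2-|z'^{0\perp}|^2\bigr)=\tfrac12\bigl(z^{0\perp}+z'^{0\perp}\bigr)\cdot \delta z^{0\perp},
\qquad
\tfrac12\bigl(|z^1|^2-|z'^1|^2\bigr)=\tfrac12\bigl(z^1+z'^1\bigr)\cdot\delta z^1.
\]
Here we use that $\Pi_s$ is linear, so $\delta z^{0\|}=\Pi_s(\delta z^0)$ and $\delta z^{0\perp}=\delta z^0-\Pi_s(\delta z^0)$. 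Also, $\delta z^{0\|}\theta_s=\delta z^0\theta_s$ because $\theta_s^\top\in L_s$, and $\delta z^{0\perp}\cdot a=\delta z^0\cdot \Pi^\perp_s(a)$ since the orthogonal projection is symmetric. Hence
\[
\delta y_t=\int_t^T\bigl(\delta z^0_s b^0_s+\delta z^1_s b^1_s\bigr)ds-\int_t^T\delta z^0_s dW^0_s-\int_t^T\delta z^1_sdW^1_s,
\]
with
\[
b^0_s:=-\theta_s+\tfrac12\bigl(z^{0\perp}_s+z'^{0\perp}_s\bigr)^\top,\qquad b^1_s:=\tfrac12\bigl(z^1_s+z'^1_s\bigr)^\top.
\]
These coefficients are progressively measurable by the earlier remark on $\Pi_s$, and $(b^0,b^1)\in\mathbb{H}^2_{\rm BMO}$ as a sum of BMO integrands.

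By Kazamaki's result recalled in Chapter 2, $\mathcal{E}\bigl(\int_0^\cdot b^0_sdW^0_s+b^1_sdW^1_s\bigr)$ is a uniformly integrable martingale. It therefore defines an equivalent measure $\mathbb{Q}$, under which $\widetilde W^i:=W^i-\int_0^\cdot b^i_sds$ are Brownian motions. Then $\delta y_t=-\int_t^T\delta z^0_sd\widetilde W^0_s-\int_t^T\delta z^1_sd\widetilde W^1_s$. Since $\delta y$ is bounded, the stochastic integral $\int_0^\cdot \delta z\, d\widetilde W$ equals $\delta y_\cdot-\delta y_0$ and is a bounded $\mathbb{Q}$-local martingale, hence a true $\mathbb{Q}$-martingale. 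Taking $\mathbb{Q}$-conditional expectations gives $\delta y_t=0$ for all $t$ a.s. Applying It\^o's formula to $|\delta y|^2$, or using uniqueness of the martingale representation, then gives $\delta z=0$ $dt\otimes d\mathbb{P}$-a.e.

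The main point requiring care, and where the unboundedness of $\theta$ would otherwise break the standard Kobylanski argument, is the Girsanov step. One must check that $b$ is genuinely in $\mathbb{H}^2_{\rm BMO}$ under $\mathbb{P}^{0,1}$ with respect to $\mathbb{F}^{0,1}$. This follows from Lemma~\ref{lemma-Z-norm} and the footnote remark that $\theta$ keeps its BMO norm in the enlarged filtration. One must also confirm that boundedness of $\delta y$ makes the $\mathbb{Q}$-local martingale a martingale, which is immediate.
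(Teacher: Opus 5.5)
Your proof is correct and follows the paper's argument. You linearize the driver using the orthogonality of $\Pi_s$ and $\theta_s^\top\in L_s$, invoke Lemma~\ref{lemma-Z-norm} to place the drift $(b^0,b^1)$ in $\mathbb{H}^2_{\rm BMO}$, and change to a measure $\mathbb{Q}$ that is exactly the paper's $\widetilde{\mathbb{P}}$. The only difference is the final martingale step: the paper proves $\mathbb{E}^{\widetilde{\mathbb{P}}}\bigl[\int_0^T|\Delta z^j_s|^2ds\bigr]<\infty$ via Kazamaki's reverse H\"older inequality and the energy inequality, whereas you note that $\delta y-\delta y_0$ is a bounded $\mathbb{Q}$-local martingale and hence a true martingale, which is valid and slightly more economical since the boundedness of $\delta y$ is part of the hypothesis.
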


\noindent
\textbf{\textit{Proof}}\\
Suppose that there are two bounded solutions $(y,z^0,z^1)$ and $(\acute{y},\acute{z}^0,\acute{z}^1)$.
By Lemma~\ref{lemma-Z-norm}, we know  that $(z^0,z^1)$ and $(\acute{z}^0, \acute{z}^1)$ are actually in $\mathbb{H}^2_{\rm BMO}$.
Let us put;
\begin{equation}
\Delta y_t:=y_t-\acute{y}_t, \quad \Delta z_t^0:=z_t^0-\acute{z}_t^0, \quad \Delta z_t^1:=z_t^1-\acute{z}_t^1. \nonumber
\end{equation}
From the orthogonality between $z^{0\|}$ and $z^{0\perp}$, we have
\begin{equation}
\begin{split}
\Delta y_t&=\int_t^T\Bigl(-\Delta z_s^{0\|} \theta_s+\frac{1}{2}\Delta (z_s^{0\perp})(z_s^{0\perp}+\acute{z}_s^{0\perp})^\top+
\frac{1}{2}\Delta z_s^1(z_s^1+\acute{z}_s^1)^\top\Bigr)ds-\int_t^T \Delta z_s^0 dW_s^0-\int_t^T \Delta z_s^1 dW_s^1\\
&=\int_t^T\Bigl(-\Delta z_s^0 \Bigl(\theta_s-\frac{1}{2}(z_s^{0\perp}+\acute{z}_s^{0\perp})^\top\Bigr)+
\frac{1}{2}\Delta z_s^1(z_s^1+\acute{z}_s^1)^\top\Bigr)ds-\int_t^T \Delta z_s^0 dW_s^0-\int_t^T \Delta z_s^1 dW_s^1\\
&=-\int_t^T \Delta z_s^0\Bigl(dW_s^0+\bigl(\theta_s-\frac{1}{2}(z_s^{0\perp}+\acute{z}_s^{0\perp})^\top\bigr)ds\Bigr)
-\int_t^T \Delta z_s^1\Bigl(dW_s^1-\frac{1}{2}(z_s^1+\acute{z}_s^1)^\top ds\Bigr)\\
&=-\int_t^T \Delta z_s^0d\widetilde{W}_s^0
-\int_t^T \Delta z_s^1d\widetilde{W}_s^1,  \nonumber
\end{split}
\end{equation}
where we have defined a new measure $\widetilde{\mathbb{P}}$ equivalent to $\mathbb{P}^{0,1}$ by
\begin{equation}
\frac{d\widetilde{\mathbb{P}}}{d\mathbb{P}^{0,1}}\Bigr|_{\mathcal{F}^{0,1}_t}:=M_t:=\mathcal{E}\Bigl(-\int_0^\cdot \bigl(\theta_s^\top-
\frac{1}{2}(z_s^{0\perp}+\acute{z}_s^{0\perp})\bigr) dW_s^0+\int_0^\cdot \frac{1}{2}(z_s^1+\acute{z}_s^1)dW_s^1\Bigr)_t,~~~t\in[0,T]
\nonumber
\end{equation}
and 
$
(\widetilde{W}_t^0, \widetilde{W}_t^1)_{t\in[0,T]}
$
denote the standard Brownian motions under $\widetilde{\mathbb{P}}$. 
This measure change is well-defined since $(\theta^\top, z^{0\perp}+\acute{z}^{0\perp}, z^1+\acute{z}^1)$ are in $\mathbb{H}^2_{\rm BMO}$ and hence $M$ is a uniformly integrable martingale.
By the result of Kazamaki~\cite{kazamaki_sufficient_1979} and \cite{kazamakiContinuousExponentialMartingales1994}[Remark 3.1], the following so-called reverse H\"older inequality holds:
\begin{equation}
\mathbb{E}\bigl[M_T^r|\mathcal{F}_t^{0,1}\bigr]\leq C M_t^r, \nonumber
\end{equation}
where $C>0$ and $r>1$ are some constants depending only on the $\mathbb{H}^2_{\rm BMO}$-norm of $(\theta^\top, z^{0\perp}+\acute{z}^{0\perp}, z^1+\acute{z}^1)$. 
With $q=\dfrac{r}{r-1}>1$ and $j=0,1$, H\"older and the energy inequality \eqref{ineq-energy} imply
\begin{equation}
\begin{split}
\mathbb{E}^{\widetilde{\mathbb{P}}}\Bigl[\int_0^T |\Delta z_s^j|^2ds\Bigr]&=\mathbb{E}\Bigl[M_T\Bigl(\int_0^T |\Delta z_s^j|^2 ds\Bigr)\Bigr]\\
&\leq \mathbb{E}[M_T^r]^\frac{1}{r}\mathbb{E}\Bigl[\Bigl(\int_0^T |\Delta z_s^j|^2 ds\Bigr)^q\Bigr]^\frac{1}{q}<\infty. \nonumber
\end{split}
\end{equation}
Thus $\Delta y$ is an $(\mathbb{F}^{0,1},\widetilde{\mathbb{P}})$-martingale.
Thus we can conclude  that $\Delta y=0$ and so are $(\Delta z^0, \Delta z^1)$. $\square$

Since $\theta$ is in $\mathbb{H}^2_{\rm BMO}$, it is natural to change the measure to absorb the term $(-z^{0\|}\theta)~(=-z^0\theta)$ in the driver of $\eqref{BSDE-norm}$.
Let us define the measure $\mathbb{Q}~(\sim \mathbb{P}^{0,1})$ by
\begin{equation}
\frac{d\mathbb{Q}}{d\mathbb{P}^{0,1}}\Bigr|_{\mathcal{F}^{0,1}_t}:=\mathcal{E}\Bigl(-\int_0^\cdot \theta_s^\top dW_s^0\Bigr)_t, \nonumber
\end{equation}
where the standard Brownian motions under $\mathbb{Q}$ are given by
\begin{equation}
\widetilde{W}_t^0=W_t^0+\int_0^t \theta_s ds, \quad \widetilde{W}_t^1=W_t^1, \quad t\in[0,T]. \nonumber
\end{equation}
Therefore, instead of $\eqref{BSDE-norm}$, we can equivalently work on the BSDE defined on $(\Omega^{0,1},\mathcal{F}^{0,1},\mathbb{Q},\mathbb{F}^{0,1})$
endowed with the Brownian motions $(\widetilde{W}^0, \widetilde{W}^1)$;
\begin{equation}
y_t=G^1+\int_t^T \Bigl(-\frac{1}{2}|\theta_s|^2+\frac{1}{2}(|z_s^{0\perp}|^2+|z_s^1|^2)\Bigr)ds-\int_t^T z_s^0d\widetilde{W}^0_s-\int_t^T z_s^1d\widetilde{W}_s^{1},
~~~t\in[0,T].
\label{BSDE-norm-Q}
\end{equation}
Although in general, the filtration $\mathbb{F}^{0,1}$ is larger than the one generated by $(\widetilde{W}^0,\widetilde{W}^1)$,
we can still apply the standard techniques of BSDEs. This is  due to 
the stability property of the martingale representation under the absolutely continuous 
measure changes. See \cite{HWY}[Theorem 13.12] for general case and \cite{jeanblanc_mathematical_2009}[Section 1.7.7] for Brownian case.
Moreover, by Kazamaki~\cite{kazamakiContinuousExponentialMartingales1994}[Theorem 3.3], $\theta$ is still in $\mathbb{H}^2_{\rm BMO}(\mathbb{Q},\mathbb{F}^{0,1})$.
Obviously, BSDE $\eqref{BSDE-norm}$ (and equivalently $\eqref{BSDE-org})$ has a bounded solution if and only if  BSDE $\eqref{BSDE-norm-Q}$ has a bounded solution.  

\begin{thm}
Let Assumptions~\ref{assumption-market} and \ref{assumption-agent} be in force.
Then there is a unique bounded solution $(y,z^0,z^1)\in \mathbb{S}^\infty(\mathbb{Q},\mathbb{F}^{0,1})\times \mathbb{H}^2_{\rm BMO}(\mathbb{Q},\mathbb{F}^{0,1})\times \mathbb{H}^2_{\rm BMO}(\mathbb{Q},\mathbb{F}^{0,1})$ to the BSDE $\eqref{BSDE-norm-Q}$.
\end{thm}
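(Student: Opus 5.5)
Existence is the main task here; uniqueness then comes from Theorem~\ref{th-uniqueness}. Since the measure change is equivalent and $\theta\in\mathbb{H}^2_{\rm BMO}$ under both measures, a bounded solution of \eqref{BSDE-norm-Q} corresponds to a bounded solution of \eqref{BSDE-norm}, so uniqueness transfers. Lemma~\ref{lemma-Z-norm}, whose proof uses only boundedness of $y$ and the structure of the driver, then gives the $\mathbb{H}^2_{\rm BMO}$ property, under $\mathbb{P}^{0,1}$. The same computation can be repeated under $\mathbb{Q}$: the driver of \eqref{BSDE-norm-Q} is $\geq -\frac12|\theta|^2+\frac12(|z^{0\perp}|^2+|z^1|^2)$, and the $-\frac12|\theta|^2$ term is controlled by $\|\theta\|_{\mathbb{H}^2_{\rm BMO}(\mathbb{Q})}$. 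Alternatively one may invoke Kazamaki's invariance of BMO under the measure change.

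For existence I truncate the unbounded term. Set $\theta^{(k)}:=\theta\,\mathbbm{1}_{\{|\theta|\leq k\}}$ and consider, under $\mathbb{Q}$,
\[
y^k_t=G^1+\int_t^T\Bigl(-\tfrac12|\theta^{(k)}_s|^2+\tfrac12(|z^{k,0\perp}_s|^2+|z^{k,1}_s|^2)\Bigr)ds-\int_t^T z^{k,0}_sd\widetilde W^0_s-\int_t^T z^{k,1}_sd\widetilde W^1_s .
\]
This driver has a bounded inhomogeneous term and quadratic growth in $z$. Kobylanski's result \cite{Kobylanski2000BackwardSD} therefore yields a bounded solution, together with a comparison principle; the martingale representation in the enlarged filtration is supplied by the stability results cited above. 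Comparison gives monotonicity: the drivers decrease in $k$, so $y^k$ is nonincreasing in $k$.

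The key step is a uniform bound on $\|y^k\|_{\mathbb{S}^\infty}$, which the exponential structure makes explicit. For the upper bound, comparison with $-\tfrac12|\theta^{(k)}|^2\le 0$ gives $y^k\le y^{0}$, and $y^{0}$ solves the BSDE with driver $\frac12(|z^{0\perp}|^2+|z^1|^2)\leq\frac12|z|^2$; hence $y^0\leq \log\mathbb{E}^{\mathbb{Q}}[e^{G^1}|\mathcal{F}_t]\leq\|G^1\|_\infty$. For the lower bound, drop the nonnegative quadratic terms and compare with the linear BSDE with driver $-\tfrac12|\theta^{(k)}|^2$. This gives
\[
y^k_t\geq -\|G^1\|_\infty-\tfrac12\mathbb{E}^{\mathbb{Q}}\Bigl[\int_t^T|\theta_s|^2ds\Big|\mathcal{F}_t\Bigr]\geq -\|G^1\|_\infty-\tfrac12\|\theta\|^2_{\mathbb{H}^2_{\rm BMO}(\mathbb{Q})}.
\]
This is precisely where $\theta\in\mathbb{H}^2_{\rm BMO}(\mathbb{Q})$ substitutes for boundedness. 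The bound is uniform in $k$.

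With $y^k$ uniformly bounded and monotone, I pass to the limit by Kobylanski's monotone stability argument. Lemma~\ref{lemma-Z-norm}-type estimates give $\sup_k\|z^k\|_{\mathbb{H}^2_{\rm BMO}}<\infty$. Kobylanski's argument then upgrades this to strong convergence of $z^k$ in $\mathbb{H}^2$: apply It\^o's formula to $\phi(y^k-y^m)$ for a suitable exponential-type test function $\phi$, and note that $\int|\theta^{(k)}-\theta^{(m)}|^2ds\to0$ in $L^1$ by dominated convergence. This passage to the limit is the main obstacle, because the quadratic terms prevent naive $L^2$ estimates. The limit $(y,z^0,z^1)$ then solves \eqref{BSDE-norm-Q}, with $y$ bounded by the same constants, and continuity of $y$ follows from the uniform convergence in $L^2$ of the suprema obtained in that argument.
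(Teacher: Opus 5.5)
Your proposal is correct and follows essentially the same route as the paper. You truncate the $|\theta|^2$ term; your $\theta\mathbbm{1}_{\{|\theta|\le k\}}$ in place of the paper's $|\theta|^2\wedge n$ is immaterial. You then use comparison to get a monotone sequence with the uniform bounds $-(\|G^1\|_\infty+\frac12\|\theta\|^2_{\mathbb{H}^2_{\rm BMO}})\le y^k\le\|G^1\|_\infty$, and derive uniform $\mathbb{H}^2_{\rm BMO}$ bounds on $z^k$ from the exponential estimate of Lemma~\ref{lemma-Z-norm}. Finally you pass to the limit with Kobylanski's $\phi(y^k-y^m)$ argument and inherit uniqueness from Theorem~\ref{th-uniqueness}, all exactly as the paper does.

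The only cosmetic difference is in that last step. You control the $\theta$-contribution through the vanishing tail $\mathbb{E}^{\mathbb{Q}}\int_0^T|\theta_s|^2\mathbbm{1}_{\{|\theta_s|>k\}}ds$, whereas the paper bounds it by $\phi'(y^n-y)|\theta|^2$ and lets $\phi'(y^n-y)\to0$; both work because $\theta\in\mathbb{H}^2(\mathbb{Q})$.
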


\noindent
\textbf{\textit{Proof}}\\
Since we work under the measure $\mathbb{Q}$ throughout this proof,  we write $\mathbb{E}[\cdot]$ instead of $\mathbb{E}^{\mathbb{Q}}[\cdot]$
for notational simplicity.
Firstly, for each $n\in \mathbb{N}$, we consider the truncated BSDE;
\begin{equation}
y_t^n=G^1+\int_t^T \Bigl(-\frac{1}{2}(|\theta_s|^2\wedge n)+\frac{1}{2}(|z_s^{n,0\perp}|^2+|z_s^{n,1}|^2)\Bigr)ds-\int_t^T z_s^{n,0}d\widetilde{W}_s^0
-\int_t^T z_s^{n,1}d\widetilde{W}_s^1, ~~~ t\in[0,T].  
\label{BSDE-truncated}
\end{equation}
Clearly, the truncated BSDE $\eqref{BSDE-truncated}$ has a 
unique bounded solution $(y^n,z^{n,0},z^{n,1})\in \mathbb{S}^\infty\times \mathbb{H}^2_{\rm BMO}\times \mathbb{H}^2_{\rm BMO}$ by 
the standard result of \cite{Kobylanski2000BackwardSD}. 
Moreover, by the comparison principle\footnote{The comparison principle tells that the size of the solution $y$
responds monotonically with respect to that of the terminal and driver functions.
See, for example, \cite{ZhangBSDE}[Theorem~7.3.1].} obtained in the same work, we have $y^{n+1}\leq y^n$ for all 
$n\in \mathbb{N}$~\footnote{Here, we use  $|z_s^{n,0\perp}|^2-|z_s^{n+1,0\perp}|^2=\Delta (z_s^{n,0\perp})(z_s^{n,0\perp}+z_s^{n+1,0\perp})^\top
=\Delta z_s^{n,0}(z_s^{n,0\perp}+z_s^{n+1,0\perp})^\top$ 
to absorb it into the stochastic integral.}.
In particular, uniformly in $n\in \mathbb{N}$, the solution $y^n$ is bounded from above as $y^n\leq \overline{y}$, where $\overline{y}$ is the solution to 
another quadratic-growth BSDE;
\begin{equation}
\overline{y}_t=G^1+\int_t^T \frac{1}{2}\bigl(|\overline{z}_s^{0}|^2+|\overline{z}_s^1|^2\bigr)ds-\int_t^T \overline{z}_s^0d\widetilde{W}_s^0-\int_t^T \overline{z}_s^1 d\widetilde{W}_s^1, ~~~ t\in[0,T]. \nonumber
\end{equation}
It also has a unique bounded solution $(\overline{y},\overline{z}^0,\overline{z}^1)\in \mathbb{S}^\infty\times \mathbb{H}^2_{\rm BMO}\times \mathbb{H}^2_{\rm BMO}$ 
with $\|\overline{y}\|_{\mathbb{S}^\infty}\leq \|G^1\|_{\infty}$  by the standard result.

Once again, by the comparison principle, $y^n$  is also bounded from below uniformly in $n\in\mathbb{N}$ as $\underline{y}\leq y^n$,
where $\underline{y}$ is the solution to the next simple BSDE;
\begin{equation}
\underline{y}_t=G^1+\int_t^T \Bigl(-\frac{1}{2}|\theta_s|^2 \Bigr)ds-\int_t^T \underline{z}_s^0 d\widetilde{W}_s^0-\int_t^T \underline{z}_s^1 d\widetilde{W}_s^1, \quad t\in[0,T]. \nonumber
\end{equation}
Obviously, it has a unique solution $(\underline{y}, \underline{z}^0,\underline{z}^1)\in \mathbb{S}^2\times \mathbb{H}^2\times \mathbb{H}^2$.
Moreover, for any $t\in[0,T]$, 
\begin{equation}
\begin{split}
\underline{y}_t&=\mathbb{E}[G^1|\mathcal{F}^{0,1}_t]-\frac{1}{2}\mathbb{E}\Bigl[\int_t^T |\theta_s|^2 ds|\mathcal{F}^{0,1}_t\Bigr] \\
&\geq -\Bigl(\|G^1\|_{\infty}+\frac{1}{2}\|\theta\|^2_{\mathbb{H}^2_{\rm BMO}}\Bigr)>-\infty. \nonumber
\end{split}
\end{equation}
Hence we conclude that, uniformly in $n\in \mathbb{N}$, $y^n$ satisfies the following bound,
\begin{equation}
-\Bigl(\|G^1\|_{\infty}+\frac{1}{2}\|\theta\|^2_{\mathbb{H}^2_{\rm BMO}}\Bigr)\leq y^n\leq \|G^1\|_{\infty}. 
\label{yn-bound}
\end{equation}

Since $\{y^n\}$ is bounded from below and it is monotonically decreasing in $n\in \mathbb{N}$, we can define 
the process $y:=(y_t)_{t\in[0,T]}$ by
\begin{equation}
y=\lim_{n\rightarrow \infty }y^n. \nonumber
\end{equation}
Moreover, by repeating the proof of Lemma~\ref{lemma-Z-norm}, we get from the estimate $\eqref{yn-bound}$
\begin{equation}
\|(z^{n,0},z^{n,1})\|_{\mathbb{H}^2_{\rm BMO}}^2\leq \exp\bigl(4\|G^1\|_{\infty}+2\|\theta\|^2_{\mathbb{H}^2_{\rm BMO}}\bigr)
\end{equation}
for all $n\in \mathbb{N}$. In particular, $(z^{n,0}, z^{n,1})_{n\in \mathbb{N}}$ are weakly relatively compact in $\mathbb{H}^2$ and hence, under an appropriate subsequence (still denoted by $n$), there exists $(z^0,z^1)\in \mathbb{H}^2\times \mathbb{H}^2$, such that
\begin{equation}
z^{n,0}\rightharpoonup z^0, \quad z^{n,1}\rightharpoonup z^1\quad {\text{ weakly in $\mathbb{H}^2$ as $n\rightarrow \infty$}}. \nonumber
\end{equation}
The remaining procedures to show the triple $(y,z^0,z^1)$ actually solves $\eqref{BSDE-norm-Q}$ are the same as those in \cite{Kobylanski2000BackwardSD}.
See also \cite{Cvitanic-Zhang}[Section 9.6]. For readers' convenience, we shall give the details below.

Let us define a smooth convex function $\phi:\mathbb{R}\rightarrow \mathbb{R}_+$ satisfying
\begin{equation}
\phi(0)=0, \quad \phi^\prime(0)=0, \nonumber
\end{equation} 
whose concrete form is to be determined later.
We consider $m, n\in \mathbb{N}$ such that $m\geq n$ and put
\begin{equation}
\Delta y^{n,m}:=y^n-y^m, \quad \Delta z^{n,m;0}:=z^{n,0}-z^{m,0}, \quad \Delta z^{n,m;1}:=z^{n,1}-z^{m,1}. \nonumber
\end{equation}
Note that $\Delta y^{n,m}\geq 0$ since $m\geq n$. From Ito formula, we obtain for any $t\in[0,T]$,
\begin{equation}
\begin{split}
&\phi(\Delta y_t^{n,m})+\int_t^T \frac{1}{2}\phi^{\prime\prime}(\Delta y_s^{n,m})(|\Delta z_s^{n,m;0}|^2+|\Delta z_s^{n,m;1}|^2)ds\\
&=\int_t^T\phi^\prime(\Delta y^{n,m}_s)\bigl[-\frac{1}{2}(|\theta_s|^2\wedge n)+\frac{1}{2}(|z_s^{n,0\perp}|^2+|z^{n,1}_s|^2)+\frac{1}{2}(|\theta_s|^2\wedge m)-\frac{1}{2}(|z_s^{m,0\perp}|^2+|z_s^{m,1}|^2)\bigr]ds\\
&\quad-\int_t^T \phi^\prime(\Delta y_s^{n,m})(\Delta z_s^{n,m;0}d\widetilde{W}_s^0+\Delta z_s^{n,m;1}d\widetilde{W}_s^1).  \nonumber
\end{split}
\end{equation}
Since $\phi(y), \phi^\prime(y)\geq 0$ for all $y\geq 0$, we get
\begin{equation}
\begin{split}
&\mathbb{E}\int_0^T \frac{1}{2}\phi^{\prime\prime}(\Delta y^{n,m}_s)(|\Delta z_s^{n,m;0}|^2+|\Delta z_s^{n,m;1}|^2)ds\\
&\quad \leq \mathbb{E}\int_0^T \frac{1}{2}\phi^\prime(\Delta y_s^{n,m})\bigl(|\theta_s|^2+|z_s^{n,0}|^2+|z_s^{n,1}|^2\bigr)ds\\
&\quad \leq \mathbb{E}\int_0^T \phi^\prime(\Delta y_s^{n,m})\bigl(|\theta_s|^2+|z_s^{n,0}-z_s^0|^2+|z_s^{n,1}-z_s^1|^2+|z_s^0|^2+|z_s^1|^2\bigr)ds. 
\end{split}
\label{phi-prpr}
\end{equation}
We now choose the function $\phi$ as 
\begin{equation}
\phi(y):=\frac{1}{8}[e^{4y}-4y-1], \nonumber
\end{equation}
which gives $\phi^\prime(y)=\frac{1}{2}[e^{4y}-1]$ and $\phi^{\prime\prime}(y)=2e^{4y}$. In particular, 
we have $\phi^{\prime\prime}(y)=4\phi^\prime(y)+2$. This yields, from $\eqref{phi-prpr}$,
\begin{equation}
\begin{split}
&\mathbb{E}\int_0^T[2\phi^\prime(\Delta y_s^{n,m})+1](|\Delta z^{n,m;0}_s|^2+|\Delta z_s^{n,m;1}|^2)ds\\
&\quad\leq \mathbb{E}\int_0^T \phi^\prime(\Delta y^{n,m}_s) (|\theta_s|^2+|z_s^{n,0}-z_s^0|^2+|z_s^{n,1}-z_s^1|^2+|z_s^0|^2+|z_s^1|^2)ds.
\end{split}
\label{phi-pr}
\end{equation}

Note that, since $(\Delta y^{n,m})_{m\geq n}$ are bounded and strongly convergent $\Delta y^{n,m}\rightarrow y^n-y$ as $m\rightarrow \infty$,
we also have, under an appropriate subsequence (still denoted by $m$), the following weak convergence in $\mathbb{H}^2$;
\begin{equation}
\sqrt{2\phi^\prime(\Delta y^{n,m})+1}\Delta z^{n,m;j}\rightharpoonup \sqrt{2\phi^\prime(y^n-y)+1}(z^{n,j}-z^j), \quad {\text{as $m\rightarrow \infty$}}, \nonumber
\end{equation}
with $j=0,1$. Hence, by \cite{Brezis}[Proposition 3.5] and monotone convergence, we obtain from $\eqref{phi-pr}$,
\begin{equation}
\begin{split}
&\mathbb{E}\int_0^T [2\phi^\prime(y^n_s-y_s)+1](|z_s^{n,0}-z_s^0|^2+|z_s^{n,1}-z_s^1|^2)ds\\
&\quad \leq \liminf_{m\rightarrow \infty}\mathbb{E}\int_0^T [2\phi^\prime(\Delta y_s^{n,m})+1](|\Delta z_s^{n,m;0}|^2+|\Delta z_s^{n,m;1}|^2)ds\\
&\quad \leq \mathbb{E}\int_0^T\phi^\prime(y^n_s-y_s)(|\theta_s|^2+|z_s^{n,0}-z_s^0|^2+|z_s^{n,1}-z_s^1|^2+|z_s^0|^2+|z_s^1|^2)ds. \nonumber
\end{split}
\end{equation}
By rearranging the $|z^{n,j}-z^j|^2$-terms $(j=0,1)$, we get
\begin{equation}
\begin{split}
&\mathbb{E}\int_0^T [\phi^\prime(y^n_s-y_s)+1](|z_s^{n,0}-z_s^0|^2+|z_s^{n,1}-z_s^1|^2)ds\\
&\quad \leq  \mathbb{E}\int_0^T\phi^\prime(y^n_s-y_s)(|\theta_s|^2+|z_s^0|^2+|z_s^1|^2)ds. \nonumber
\end{split}
\end{equation}
Since the right-hand side converges to zero as $n\rightarrow \infty$ by the  monotone convergence theorem, we obtain
\begin{equation}
z^{n,0}\rightarrow z^0, \quad z^{n,1}\rightarrow z^1, \quad {\text{strongly in $\mathbb{H}^2$. }} \nonumber
\end{equation}
Then, from Burkholder-Davis-Gundy (BDG) inequality~\footnote{See, for example, \cite{Protter}[Thorem 48 in IV]. }, 
it implies that the following convergence holds for $(j=0,1)$ under an appropriate subsequence,   
\begin{equation}
\sup_{t\in[0,T]}\Bigl|\int_t^T(z_s^{n,j}-z_s^j)d\widetilde{W}_s^j\Bigr|\rightarrow 0, \quad {\text{$\mathbb{Q}$-a.s. as $n\rightarrow \infty$}} \nonumber
\end{equation}
so is the case for $\sup_{t\in[0,T]}|y^n_t-y_t|$. It is now easy to see $(y,z^0,z^1)\in \mathbb{S}^\infty\times \mathbb{H}^2_{\rm BMO}\times 
\mathbb{H}^2_{\rm BMO}$ actually solves the BSDE $\eqref{BSDE-norm-Q}$.
The uniqueness of the solution follows exactly in the same way as in Theorem~\ref{th-uniqueness}. $\square$

\begin{cor}
\label{corollary-existence}
Let Assumptions~\ref{assumption-market} and \ref{assumption-agent} be in force.
Then the BSDE $\eqref{BSDE-org}$ $(\text{resp.} ~\eqref{BSDE-norm} )$ has a
unique bounded solution $(Y,Z^0,Z^1)$ $(\text{resp.}~ (y,z^0,z^1))$ in $\mathbb{S}^\infty(\mathbb{P}^{0,1},\mathbb{F}^{0,1})\times \mathbb{H}^2_{\rm BMO}(\mathbb{P}^{0,1},\mathbb{F}^{0,1})\times  \mathbb{H}^2_{\rm BMO}(\mathbb{P}^{0,1},\mathbb{F}^{0,1})$.
\end{cor}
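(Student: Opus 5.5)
The plan is to transfer the result of the preceding theorem from $\mathbb{Q}$ back to $\mathbb{P}^{0,1}$, then undo the normalization by $\gamma^1$. Let $(y,z^0,z^1)$ be the unique solution of \eqref{BSDE-norm-Q} in $\mathbb{S}^\infty(\mathbb{Q})\times\mathbb{H}^2_{\rm BMO}(\mathbb{Q})\times\mathbb{H}^2_{\rm BMO}(\mathbb{Q})$. Substituting $\widetilde{W}^0_t=W^0_t+\int_0^t\theta_sds$ and $\widetilde{W}^1=W^1$ gives $\int z^0_sd\widetilde{W}^0_s=\int z^0_sdW^0_s+\int z^0_s\theta_sds$. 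Using $z^0\theta=z^{0\|}\theta$, which holds because $\theta^\top_s\in L_s$, the triple then satisfies \eqref{BSDE-norm} pathwise. The stochastic integrals are understood as continuous local martingales, and these are invariant under the equivalent change of measure by Girsanov. Boundedness of $y$ is a pathwise property, and $\mathbb{Q}\sim\mathbb{P}^{0,1}$ preserves null sets, so $y\in\mathbb{S}^\infty(\mathbb{P}^{0,1},\mathbb{F}^{0,1})$.

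Next I will show $(z^0,z^1)\in\mathbb{H}^2_{\rm BMO}(\mathbb{P}^{0,1})$. A first option is to invoke Kazamaki \cite{kazamakiContinuousExponentialMartingales1994}[Theorem 3.3]: $-\int\theta^\top dW^0$ is BMO under $\mathbb{P}^{0,1}$, so the BMO spaces under $\mathbb{Q}$ and $\mathbb{P}^{0,1}$ coincide with equivalent norms. Simpler, and avoiding any reliance on that equivalence, is to show first that $z\in\mathbb{H}^2(\mathbb{P}^{0,1})$ and then apply Lemma~\ref{lemma-Z-norm} directly. Lemma~\ref{lemma-Z-norm} only requires a bounded $y$ with $z\in\mathbb{H}^2$.

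To obtain the $\mathbb{H}^2$ property, I will localize the Itô computation in the proof of Lemma~\ref{lemma-Z-norm} with stopping times $\tau_k:=\inf\{t:\int_0^t|z_s|^2ds\ge k\}\wedge T$. This gives
\[
\mathbb{E}^{\mathbb{P}^{0,1}}\Bigl[\int_\tau^{\tau_k}e^{2y_s}|z_s|^2ds\,\Big|\,\mathcal{F}_\tau^{0,1}\Bigr]\le e^{2\|y\|_{\mathbb{S}^\infty}}.
\]
Since $e^{2y}\ge e^{-2\|y\|_{\mathbb{S}^\infty}}$, letting $k\to\infty$ by monotone convergence yields the BMO bound $e^{4\|y\|_{\mathbb{S}^\infty}}$, and in particular $z\in\mathbb{H}^2$. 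Uniqueness in the stated class then follows from Theorem~\ref{th-uniqueness}.

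Finally, I set $(Y,Z^0,Z^1):=(y,z^0,z^1)/\gamma^1$. Because $\gamma^1$ is $\mathcal{F}_0^{0,1}$-measurable and satisfies $\underline{\gamma}\le\gamma^1\le\overline{\gamma}$, dividing \eqref{BSDE-norm} by $\gamma^1$ yields \eqref{BSDE-org} exactly, and all three norms are preserved up to the constant $1/\underline{\gamma}$. Uniqueness for \eqref{BSDE-org} follows by multiplying any other bounded solution by $\gamma^1$, which gives a bounded solution of \eqref{BSDE-norm}, and then applying Theorem~\ref{th-uniqueness}. The only genuinely delicate point is the measure-change bookkeeping in the first paragraph, namely that the $\mathbb{Q}$-stochastic integrals remain $\mathbb{P}^{0,1}$-local martingales. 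The localization in the second step means this is all that is needed.
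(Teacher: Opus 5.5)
Your proposal is correct and follows the route the paper takes implicitly; the corollary has no separate proof there. You read the $\mathbb{Q}$-solution of \eqref{BSDE-norm-Q} back under $\mathbb{P}^{0,1}$ by undoing the Girsanov shift, get the $\mathbb{H}^2_{\rm BMO}(\mathbb{P}^{0,1})$ property from the exponential estimate of Lemma~\ref{lemma-Z-norm}, take uniqueness from Theorem~\ref{th-uniqueness}, and rescale by the $\mathcal{F}^{0,1}_0$-measurable $\gamma^1$; your localization with $\tau_k$ usefully removes that lemma's a priori $\mathbb{H}^2(\mathbb{P}^{0,1})$ hypothesis, a point the paper leaves implicit (it could equally be covered by Kazamaki's invariance of BMO under the change of measure). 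One wording fix: it is $\int z^0\,dW^0$, not the $\mathbb{Q}$-integral $\int z^0\,d\widetilde{W}^0$, that is a $\mathbb{P}^{0,1}$-local martingale; what carries over unchanged under the equivalent measure change is the semimartingale stochastic integral itself.
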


We are now ready to verify the Condition-R.
\begin{thm}
	\label{verification-part1}
Let Assumptions~\ref{assumption-market} and \ref{assumption-agent} be in force. Then
the family of processes $\{R^p, p\in \mathcal{A}^1 \}$ defined by $\eqref{eq-R-hypo}$ with the process $Y$ as the unique bounded solution to
the BSDE $\eqref{BSDE-org}$ satisfies the Condition-R, and the process $p^*$ given by $\eqref{optimal-p}$
gives the unique $($up to $dt\otimes \mathbb{P}^{0,1}$-null set$)$ optimal trading strategy for the agent-1.
\end{thm}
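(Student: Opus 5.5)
The plan is to verify the three items of Condition-R directly from the explicit form of $R^p$, and then to deduce optimality and uniqueness. Let $(Y,Z^0,Z^1)$ be the unique bounded solution from Corollary~\ref{corollary-existence}, with $Z:=(Z^0,Z^1)\in\mathbb{H}^2_{\rm BMO}$.

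\textbf{Items (i) and (ii).} These are immediate. Since $Y_T=F^1$, we get $R^p_T=-\exp(-\gamma^1(\mathcal{W}^{1,p}_T-F^1))$. Since $\mathcal{W}^{1,p}_0=\xi^1$, we get $R^p_0=-\exp(-\gamma^1(\xi^1-Y_0))=:R_0$. This is $\mathcal{F}^{0,1}_0$-measurable, because $\mathcal{F}^{0,1}_0=\mathcal{F}^1_0$ is generated by $(\xi^1,\gamma^1)$ up to null sets, and $Y_0$ is $\mathcal{F}^{0,1}_0$-measurable.

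\textbf{Supermartingale property.} Substituting the driver $f$ into the Itô formula computed above, the drift of $R^p$ equals $R^p_t\cdot\frac{(\gamma^1)^2}{2}\,|p_t-p^*_t|^2$. This follows from completing the square over $L_t$, together with $p_t-p^*_t\in L_t$. Since $R^p<0$, the drift is nonpositive, so $R^p$ is a local supermartingale. Being a negative local supermartingale does not immediately give a true supermartingale, so we need integrability. Write $R^p_t=-e^{-\gamma^1\mathcal{W}^{1,p}_t}e^{\gamma^1 Y_t}$. Because $Y$ is bounded, $|R^p_\tau|\le C\exp(-\gamma^1\mathcal{W}^{1,p}_\tau)$ for all $\tau\in\mathcal{T}^{0,1}$. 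The admissibility condition in Definition~\ref{def-admissible-space} says this family is of class $\mathcal{D}$, so $R^p$ is of class $\mathcal{D}$. A local supermartingale bounded above by $0$ is a supermartingale by Fatou applied along a localizing sequence; class $\mathcal{D}$ is used mainly for the martingale case.

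\textbf{Item (iii) for $p^*$ (main obstacle).} For $p^*$ the drift vanishes, so
\[
R^{p^*}_t=R_0\,\mathcal{E}\Bigl(\int_0^\cdot\bigl[-\gamma^1(p^*_s-Z^0_s)dW^0_s+\gamma^1Z^1_sdW^1_s\bigr]\Bigr)_t .
\]
Here $p^*-Z^0=\theta^\top/\gamma^1-Z^{0\perp}$ lies in $\mathbb{H}^2_{\rm BMO}$ because $\theta$, $Z$ are BMO and $\gamma^1\ge\underline{\gamma}$. By Kazamaki, this stochastic exponential is a uniformly integrable martingale, and $R_0$ is bounded, so $R^{p^*}$ is a martingale.

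The real work is showing $p^*\in\mathcal{A}^1$. First, $\mathbb{E}\int_0^T|p^*_s|^2ds<\infty$, since BMO processes are in $\mathbb{H}^2$ by the energy inequality. Second, we need class $\mathcal{D}$ of $\exp(-\gamma^1\mathcal{W}^{1,p^*}_\tau)$. Using again $e^{-\gamma^1\mathcal{W}^{1,p^*}_\tau}=-R^{p^*}_\tau e^{-\gamma^1Y_\tau}\le C|R^{p^*}_\tau|$, this reduces to uniform integrability of the BMO stochastic exponential at stopping times, which Kazamaki provides. The delicate point is that $\theta$ is unbounded, so all these steps must go through BMO arguments rather than bounded-coefficient arguments; this is where the $\mathbb{H}^2_{\rm BMO}$ assumption is essential.

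\textbf{Optimality and uniqueness.} Optimality then follows from the argument given after Condition-R. For uniqueness, suppose $\hat p\in\mathcal{A}^1$ is also optimal. Then $\mathbb{E}[R^{\hat p}_T]=\mathbb{E}[R_0]$, so the supermartingale $R^{\hat p}$ has constant expectation and is therefore a martingale. Its finite-variation part must then vanish, i.e.
\[
\int_0^T R^{\hat p}_t\,\tfrac{(\gamma^1)^2}{2}\,|\hat p_t-p^*_t|^2\,dt=0 \quad \text{a.s.}
\]
This uses uniqueness of the Doob--Meyer decomposition, valid here by class $\mathcal{D}$. Since $R^{\hat p}<0$ strictly, $\hat p=p^*$ holds $dt\otimes\mathbb{P}^{0,1}$-a.e.
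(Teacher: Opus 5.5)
Your argument is essentially the paper's proof. The steps match: items (i)--(ii) by inspection; class $\mathcal{D}$ of $R^p$ from admissibility plus boundedness of $Y$; $R^{p^*}=R_0\,\mathcal{E}(\cdot)$ with an $\mathbb{H}^2_{\rm BMO}$ integrand, hence a uniformly integrable martingale by Kazamaki; admissibility of $p^*$ via $e^{-\gamma^1\mathcal{W}^{1,p^*}_\tau}\le C|R^{p^*}_\tau|$; and uniqueness from the drift $R^p_t\frac{(\gamma^1)^2}{2}|p_t-p^*_t|^2$. Your uniqueness argument (equal expectations force $R^{\hat p}$ to be a martingale, so its drift vanishes and $R^{\hat p}<0$ gives $\hat p=p^*$ a.e.) is a rigorous version of the paper's one-line strict-convexity remark. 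You also explicitly check $\mathbb{E}\int_0^T|p^*_s|^2ds<\infty$, which the paper leaves implicit.

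One justification in your supermartingale step is false. A local supermartingale bounded \emph{above} by $0$ need not be a supermartingale. Conditional Fatou along a localizing sequence gives the supermartingale inequality only under a \emph{lower} bound. For example, if $B$ is a three-dimensional Brownian motion started away from the origin, $-1/|B_t|$ is a negative local martingale whose expectation strictly increases.

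What actually gives the supermartingale property is the class $\mathcal{D}$ property you had just proved. For a localizing sequence $\tau_n$ you have $\mathbb{E}[R^p_{t\wedge\tau_n}\mid\mathcal{F}^{0,1}_s]\le R^p_{s\wedge\tau_n}$. Uniform integrability of $\{R^p_{t\wedge\tau_n}\}_n$ upgrades the a.s.\ convergence $R^p_{t\wedge\tau_n}\to R^p_t$ to $L^1$ convergence, so the inequality passes to the limit. This is exactly the paper's closing remark (``Since $R^p$ is of class $\mathcal{D}$, its supermartingale property is now obvious''). The fix therefore costs nothing, but class $\mathcal{D}$ is needed for the supermartingale part too, not only for the martingale part.
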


\noindent
\textbf{\textit{Proof}}\\
From $\eqref{eq-R-hypo}$, we have
\begin{equation}
R_t^p=-\exp\bigl(-\gamma^1(\mathcal{W}_t^{1,p}-Y_t)\bigr)=-\exp\bigl(-{\gamma^1}\mathcal{W}_t^{1,p}+y_t\bigr), ~t\in[0,T], \nonumber
\end{equation}
and 
\begin{equation}
R_0^p=-\exp(-\gamma^1 \xi^1+y_0) \nonumber
\end{equation} 
for all $p\in\mathcal{A}^1$. 
Here, $y:=(y_t)_{t\in[0,T]}$ is the rescaled solution of $\eqref{BSDE-norm}$.
Since $y\in \mathbb{S}^\infty$, $(R_t^p, t\in[0,T])$ is clearly of class $\mathcal{D}$ for any $p\in\mathcal{A}^1$  by the definition of admissibility $\mathcal{A}^1$.

Let us choose $p=p^*$ as in $\eqref{optimal-p}$. Then we have
\begin{equation}
\begin{split}
dR_t^{p^*}&=R_t^{p^*}\bigl(-\gamma^1(p_t^*-Z_t^0)dW_t^0+\gamma^1 Z_t^1 dW_t^1\bigr)\\
&=R_t^{p^*}\bigl(-(\theta_t^\top-z_t^{0\perp}) dW_t^0+z_t^1 dW_t^1\bigr), ~~~t\in[0,T], \nonumber
\end{split}
\end{equation}
and hence, for any $t\in[0,T]$, 
\begin{equation}
\begin{split}
R_t^{p*}&=-\exp\bigl(-\gamma^1 \mathcal{W}_t^{1,p^*}+y_t\bigr)\\
&=-\exp\bigl(-\gamma^1 \xi^1+y_0\bigr)\mathcal{E}\Bigl(-\int_0^\cdot (\theta_s^\top-z_s^{0\perp}) dW_s^0+\int_0^\cdot z_s^1 dW_s^1\Bigr)_t. \nonumber
\end{split}
\end{equation}
Since $(\theta^\top-z^{0\perp}, z^1)$ are in $\mathbb{H}^2_{\rm BMO}$ and $(\gamma^1, \xi^1, y_0)$ are all bounded, $R^{p*}$ is a uniformly integrable martingale.
Uniform integrability of $R^{p*}$ and the boundedness of $y$ then imply that $(\exp(-\gamma^1 \mathcal{W}_t^{1,p^*}))_{t\in[0,T]}$
is also uniformly integrable. Therefore, we obtain the admissibility $p^*\in \mathcal{A}^1$.
The uniqueness of $p^*$ follows from the strict convexity of 
$-\gamma^1 (p \theta+f(s,z^0,z^1))+\dfrac{({\gamma^1})^2}{2}(|p-z^0|^2+|z^1|^2)$
with respect to $p$, which induces a strictly negative drift for $R^p$ if $p\neq p^*$.
Since $R^p$ is of class $\mathcal{D}$, its supermartingale property is now obvious. $\square$

\begin{rem}
It is important to note that the optimal trading strategy $\pi^*$ (or equivalently $p^*$) is independent from the initial wealth $\xi^1$.
This is a well-known characteristic of  exponential-type utility functions.
It follows, combined with the specification of $U^1(\pi)$ in $\eqref{eq-utility-functional}$, that the problem for the agent-1 and her optimal trading
strategy are invariant under the following transformation:
\begin{equation}
\begin{split}
\xi^1\longrightarrow &~(\xi^1-\mathbb{E}[F^1|\mathcal{F}_0^1]), \\
F^1\longrightarrow &~(F^1-\mathbb{E}[F^1|\mathcal{F}_0^1]). 
\end{split}
\label{duality-relation}
\end{equation}
\end{rem}

\begin{rem}
	The optimization method of \cite{huUtilityMaximizationIncomplete2005a} we used in this section can be applied exactly in the same way even if the market is complete.
	However, we are going to find a market-clearing equilibrium based on the MFG technique, and it crucially relies on the presence of idiosyncratic noise of each agent to simplify the system in the large population limit.
	Since this inevitably makes the market incomplete, our strategy in the following sections does not work in a complete market.
	We refer to Karatzas~\&~Shreve~\cite{karatzas_methods_1998}[Chapter 4] for the construction of an equilibrium in a complete market,
	which uses convex duality and Lagrange multipliers.
\end{rem}

\section{Mean-Field Equilibrium Model}
\label{sec-mean-field-BSDE}

We are now going to investigate a financial market being participated by many agents,
who are interacting each other through the price process of risky stocks.
Recall that, our final goal of this chapter is to find a risk-premium process $\theta=(\theta_t)_{t\geq 0}$ 
of the $n$ risky stocks {\it endogenously} by imposing {\it the market-clearing condition}, which requires 
the demand and supply of the risky stocks to be always balanced among the agents.
In this section, we shall propose a novel  mean-field BSDE with a quadratic-growth driver,
which is expected to provide, at least intuitively,  the characterization of the desired equilibrium in the large population limit.

\subsection{Heuristic Derivation of the Mean-Field BSDE}
Suppose that there are $N\in \mathbb{N}$ agents (agent-$i$, $1\leq i\leq N$) participating in
the same financial market given in Assumption~\ref{assumption-market}.
For each $1\leq i\leq N$, the information set of agent-$i$ is provided by the probability space $(\Omega^{0,i},\mathcal{F}^{0,i}, \mathbb{P}^{0,i})$
which is a completion of the product space $(\Omega^0,\mathcal{F}^0, \mathbb{P}^0)\otimes (\Omega^i,\mathcal{F}^i,\mathbb{P}^i)$.
The associated filtration $\mathbb{F}^{0,i}:=(\mathcal{F}_t^{0,i})_{t\in [0,T]}$ is the complete and right-continuous augmentation of 
$(\mathcal{F}_t^0\otimes \mathcal{F}_t^i)_{t\in[0,T]}$. $\mathcal{T}^{0,i}$ is the set of $\mathbb{F}^{0,i}$-measurable stopping times with values in $[0,T]$.
Here,  for each $i$, the filtered probability space $(\Omega^i, \mathcal{F}^i, \mathbb{P}^i,\mathbb{F}^i)$
is an independent copy of $(\Omega^1,\mathcal{F}^1,\mathbb{P}^1,\mathbb{F}^1)$
constructed exactly in the same way  as in Section~\ref{sec-each-optimization} 
with $(\xi^i,\gamma^i, W^i)$ instead of $(\xi^1, \gamma^1, W^1)$. 
Let us define a complete probability space $(\Omega,\mathcal{F},\mathbb{P})$ with filtration $\mathbb{F}:=(\mathcal{F}_t)_{t\in[0,T]}$ in such a way that
it allows us to model all the agents in a common  space:
$\Omega:=\Omega^0\times \prod_{i=1}^N \Omega^i$
and $(\mathcal{F},\mathbb{P})$ is the completion of $\bigl(\mathcal{F}^0\otimes \mathcal{F}^1\otimes\cdots\otimes\mathcal{F}^N,
\mathbb{P}^0\otimes \mathbb{P}^1\otimes \cdots\otimes \mathbb{P}^N\bigr)$.
$\mathbb{F}$ denotes the complete and the right-continuous augmentation of $(\mathcal{F}^0_t\otimes \mathcal{F}_t^1\otimes\cdots\otimes\mathcal{F}_t^N)_{t\in [0,T]}$.
$\mathbb{E}[\cdot]$ denotes the expectation with respect to $\mathbb{P}$. 

We also introduce the liability $F^i$ of the agent-$i$, $1\leq i\leq N$.
Each agent-$i$ is assumed to face the optimization problem 
\begin{equation}
\sup_{\pi\in \mathbb{A}^i}U^i(\pi), \nonumber
\end{equation}
where the utility functional $U^i$ is defined by 
\[
U^i(\pi):=\mathbb{E}\Bigl[-\exp\Bigl(-\gamma^i \bigl(\mathcal{W}_T^{i,\pi}-F^i\bigr)\Bigr)\Bigr], \nonumber
\]
with 
\[
\displaystyle\mathcal{W}_t^{i,\pi}:=\xi^i+\int_0^t \pi_s^\top \sigma_s(dW_s^0+\theta_s ds). \nonumber
\]
The admissible space $\mathbb{A}^i$ (and $\mathcal{A}^i)$ is defined in the same way as in Definition~\ref{def-admissible-space}
with all the superscripts $``1"$ replaced by $``i"$.
\begin{dfn}
The admissible space $\mathbb{A}^i$ is the set of all $\mathbb{R}^n$-valued, $\mathbb{F}^{0,i}$-progressively measurable trading strategies $\pi$
that satisfy
$
\displaystyle\mathbb{E}\Bigl[ \int_0^T |\pi_s^\top \sigma_s|^2ds \Bigr]<\infty, \nonumber
$
and such that
\begin{equation}
\bigl\{\exp(-\gamma^i \mathcal{W}_\tau^{i,\pi}); \tau \in \mathcal{T}^{0,i}\bigr\} \nonumber
\end{equation}
is uniformly integrable (i.e. of class $\mathcal{D}$). We also define
$
\mathcal{A}^i:=\bigl\{p=\pi^\top \sigma; \pi\in \mathbb{A}^i\bigr\}. \nonumber
$
\end{dfn}

We work under the following assumption.

\begin{asm}~\\
\label{assumption-hetero}
{\rm (i)} The statements in Assumption~\ref{assumption-agent} hold with $``1"$ replaced by $``i"$, $1\leq i\leq N$.\\
{\rm (ii)} $\{(\xi^i, \gamma^i), 1\leq i\leq N\}$ have the same distribution. In other words, they are independently and identically 
distributed (i.i.d.) on $(\Omega,\mathcal{F},\mathbb{P})$. \\
{\rm (iii)} $\{F^i, 1\leq i\leq N\}$ are $\mathcal{F}^0$-conditionally i.i.d.
\end{asm}

We want to find the risk-premium process $\theta$ that clears the market.
Let us first derive, heuristically,  the relevant mean-field BSDE, which then will be shown to characterize the market-clearing
risk-premium process in the large population limit, by the idea proposed by Fujii \& Takahashi~\cite{fujiiMeanFieldGame2022}.

Suppose that a risk-premium process $\theta\in \mathbb{H}^2_{\rm BMO}(\mathbb{P},\mathbb{F}^0,\mathbb{R}^{d_0})$ is given.
Recall that $\theta$ has values in ${\rm Range}(\sigma^\top)={\rm Ker}(\sigma)^\perp$, i.e. $\theta_s^\top\in L_s$ for every $s\in[0,T]$. 
By repeating the analysis done in Section~\ref{sec-each-optimization}, one can show that
the unique optimal strategy of agent-$i$, $1\leq i\leq N$, is given by
\begin{equation}
p^{i,*}_t~(=(\pi_t^{i,*})^\top\sigma_t)=Z_t^{i,0\|}+\frac{\theta_t^\top}{\gamma^i}, ~~~t\in[0,T], 
\label{p-i-optimal}
\end{equation}
where $Z^{i,0}$ is associated to the solution $(Y^i, Z^{i,0},Z^i)$ of the following BSDE:
\begin{equation}
Y_t^i=F^i+\int_t^T \Bigl(-Z_s^{i,0\|}\theta_s-\frac{|\theta_s|^2}{2\gamma^i}+\frac{\gamma^i}{2}(|Z_s^{i,0\perp}|^2+|Z_s^i|^2)\Bigr)ds
-\int_t^T Z_s^{i,0}dW_s^0-\int_t^T Z_s^i dW_s^i, ~~~t\in[0,T]. 
\label{BSDE-agent-i}
\end{equation}
Under Assumptions~\ref{assumption-market} and \ref{assumption-hetero}, 
we already know from Corollary~\ref{corollary-existence}
that there is a unique bounded solution $(Y^i,Z^{i,0},Z^i)\in \mathbb{S}^\infty(\mathbb{P},\mathbb{F}^{0,i})
\times \mathbb{H}^2_{\rm BMO}(\mathbb{P},\mathbb{F}^{0,i})\times \mathbb{H}^2_{\rm BMO}(\mathbb{P},\mathbb{F}^{0,i})$ to $\eqref{BSDE-agent-i}$, for each $1\leq i\leq N$.

\begin{dfn}
\label{def-market-clearing}
We say that the market-clearing condition is satisfied if
\begin{equation}
	\label{mc-eqn}
	\frac{1}{N}\sum_{i=1}^N \pi_t^{i,*}=0, ~~~ dt\otimes \mathbb{P}\text{-}\mathrm{a.e.} 
\end{equation}
where $\pi^{i,*}$ is the optimal trading strategy of the agent $i$, $1\leq i\leq N$.
\end{dfn}

\begin{rem}~\\
(1) The  market clearing condition in Definition~\ref{def-market-clearing} means that the net supply (demand) of securities is zero in the whole period
under the optimal decisions of all the agents.
For finite $N$, the factor $1/N$ in the above definition is irrelevant.
However, in order to  handle the large population limit $N\rightarrow\infty$, we need the factor $1/N$.
We are going to find a risk-premium process so that the excess demand (or supply)  per capita converges to zero in 
the large population limit.\\
(2) If we consider the case in which agents have $\alpha^i$-shares of $i$-th stock ($\alpha^i\in\mathbb{N}$, $i=\{1,\ldots,n\}$) in total, the market clearing condition should be
\begin{equation}
	\frac{1}{N}\sum_{i=1}^N \pi_t^{i,*}=\frac{1}{N}\mathrm{diag}(\alpha)S_t, ~~~dt\otimes \mathbb{P}\text{-}\mathrm{a.e.},
\end{equation}
where $\alpha:=(\alpha^1,\ldots,\alpha^n)^\top\in\mathbb{N}^n$. Note that the right hand side goes to zero as $N\to\infty$. 
Therefore, even in this case, we can set \eqref{mc-eqn} as the market-clearing condition since our goal is to characterize the equilibrium risk-premium process in the large population limit.
\end{rem}

The market-clearing condition given by Definiton~\ref{def-market-clearing} requires
that the risk-premium process $\theta$ to be
\begin{equation}
\theta_t=-\Bigl(\frac{1}{N}\sum_{j=1}^N \frac{1}{\gamma^j}\Bigr)^{-1} \frac{1}{N}\sum_{j=1}^N (Z_t^{j,0\|})^\top, \quad t\in[0,T]. 
\label{theta-naive}
\end{equation}
Unfortunately, the suggested $\theta$ by $\eqref{theta-naive}$ is inconsistent with 
our information assumption that requires $\theta$ to be $\mathbb{F}^0$-adapted i.e. being dependent only on the market-wide information.
However, at this moment, let us {\it formally} consider the $N$-coupled system of quadratic-BSDEs obtained from $\eqref{BSDE-agent-i}$
with $\theta$ replaced by the one given in $\eqref{theta-naive}$; for $1\leq i\leq N$, 
\begin{equation}
\begin{split}
Y_t^i&=F^i+\int_t^T \Bigl\{Z_s^{i,0\|}\Bigl(\frac{1}{N}\sum_{j=1}^N \frac{1}{\gamma^j}\Bigr)^{-1}\frac{1}{N}\sum_{j=1}^N (Z_s^{j,0\|})^\top
-\frac{1}{2\gamma^i}\Bigl(\frac{1}{N}\sum_{j=1}^N \frac{1}{\gamma^j}\Bigr)^{-2}\Bigl|  \frac{1}{N}\sum_{j=1}^N Z_s^{j,0\|} \Bigr|^2 \\
&\hspace{20mm}+\frac{\gamma^i}{2}(|Z_s^{i,0\perp}|^2+|Z_s^i|^2)\Bigr\}ds-\int_t^T Z_s^{i,0}dW_s^0-\int_t^T Z_s^i dW_s^i.
\end{split}
\label{qg-BSDE-system}
\end{equation}

In order to make this system well defined, we need at least additional stochastic integral terms with respect to all of the $(W^j)_{j\neq i}$.
This stems from the martingale representation theorem since $\theta$ is only $\mathbb{F}$-adapted. 
We also have to change the space of admissible controls of each agent
to make the associated optimization problem meaningful. 
In any case however, since each agent is a price taker, the interaction among them appears only through the risk-premium process
which has a symmetric form. Thus, from Assumption~\ref{assumption-hetero}, if there is a solution 
$\{(Y^i,Z^{i,0},Z^i), 1\leq i\leq N\}$ to the system $\eqref{qg-BSDE-system}$ (after appropriate modifications), then they are expected to be exchangeable.  In particular,  $(Z^{i,0}_t)_{i=1}^N$ (and hence $(Z_t^{i,0\|})_{i=1}^N$) would be  exchangeable random variables, i.e. their joint distribution is invariant under the permutation $\sigma(i)$ of their orders.
If this is the case,  De Finetti's theory of exchangeable sequence of random variables would imply that
\begin{equation}
	\label{De-Finetti-eq}
\lim_{N\rightarrow \infty}\frac{1}{N}\sum_{i=1}^N Z^{i,0\|}_t=\mathbb{E}\Bigl[Z_t^{1,0\|}|\bigcap_{k\geq 1}\sigma\{Z_t^{j,0\|}, j\geq k\}\Bigr]\quad {\text{a.s.}}
\end{equation}
See, for example, \cite{carmonaProbabilisticTheoryMean2018a}[Theorem 2.1].  We can also naturally expect that the tail $\sigma$-field
in $\eqref{De-Finetti-eq}$ converges to $\mathcal{F}^0_t$ since $(\mathcal{F}_t^i)_{i\geq 1}$ are independent by construction.
In this way, we can expect from $\eqref{theta-naive}$, at least heuristically, that the equilibrium risk-premium process $\theta$ in the large-$N$ limit may be 
given by
\begin{equation}
\theta_t=-\widehat{\gamma}\mathbb{E}[Z_t^{1,0\|}|\mathcal{F}_t^0]^\top=-\widehat{\gamma}\mathbb{E}[Z_t^{1,0\|}|\mathcal{F}^0]^\top,  
\label{eq-theta-new}
\end{equation}
which is $\mathbb{F}^0$-adapted as desired, and  $\widehat{\gamma}$ is defined by
\begin{equation}
\widehat{\gamma}:=\frac{1}{\mathbb{E}[1/\gamma^1]}. 
\end{equation}
The replacement by the conditional expectation $\eqref{eq-theta-new}$ makes the system decoupled since there is no interaction among $(Z^{0,j\|})_{j\in\mathbb{N}}$ any more as opposed to $\eqref{theta-naive}$.
Such a decoupling phenomenon is usually referred to as {\it propagation of chaos}, and it is the 
main driving force of MFG theory to transform a complex coupled problem into a simple decoupled one.

The above heuristic discussion motivates us to study the following mean-field BSDE defined
on the filtered probability space $(\Omega^{0,1}, \mathcal{F}^{0,1},\mathbb{P}^{0,1},\mathbb{F}^{0,1})$ by choosing the agent-1 as the representative;
\begin{equation}
	\begin{split}
		Y_t=&F^1+\int_t^T \Bigl(\widehat{\gamma}Z_s^{0\|}\overline{\mathbb{E}}[Z_s^{0\|}]^\top-\frac{\widehat{\gamma}^2}{2\gamma^1 }|\overline{\mathbb{E}}[Z_s^{0\|}]|^2+\frac{\gamma^1}{2}(|Z_s^{0\perp}|^2+|Z_s^1|^2)\Bigr)ds\\
		&-\int_t^T Z_s^{0} dW_s^0-\int_t^T Z_s^1 dW_s^1,~~~t\in[0,T].
\label{mfg-BSDE-org}
	\end{split}
\end{equation}
Here, we have defined, for any $X\in \mathbb{H}^2(\mathbb{P}^{0,1},\mathbb{F}^{0,1})$, that
\begin{equation}
\overline{\mathbb{E}}[X_t](\omega^0):=\begin{cases} 
\mathbb{E}[X_t|\mathcal{F}^0](\omega^0)=\mathbb{E}^{\mathbb{P}^1}[X_t(\omega^0,\cdot)] \quad {\text{if it exists}}\\
 0\hspace{20mm} {\text{otherwise}}
\end{cases}~.
\nonumber
\end{equation}
Note that we have $\mathbb{E}[X_t|\mathcal{F}_t^0]=\mathbb{E}[X_t|\mathcal{F}^0]$ a.s.~for any $\mathbb{F}^{0,1}$-adapted process.  This is because that $X_t$ is independent from  $\sigma(\{W_s^0-W_t^0, s\geq t\})$ and thus the additional information $\mathcal{F}^0=\mathcal{F}_T^0\supset \mathcal{F}_t^0$
does not affect the expectation value. 
As in \cite{carmonaProbabilisticTheoryMean2018a}[Section 4.3.1], we always choose $\mathbb{F}^0$-progressively measurable 
modification of $\overline{\mathbb{E}}[X]$. In the remainder of this section, we show that there is a solution to this mean-field BSDE under some conditions.
In Section~\ref{sec-market-clearing}, we will show that the risk-premium process defined as 
\begin{equation}
\theta_t^{\rm mfg}:=-\widehat{\gamma}\overline{\mathbb{E}}[Z_t^{0\|}]^\top,~~~ t\in[0,T]
\label{theta-mfg}
\end{equation}
by the solution of the mean-field BSDE $\eqref{mfg-BSDE-org}$ actually clears the market 
in the large population limit.
\subsection{Existence of a Solution to the Mean-Field BSDE}
We will work on the filtered probability space $(\Omega^{0,1}, \mathcal{F}^{0,1}, \mathbb{P}^{0,1},\mathbb{F}^{0,1})$.
For notational ease, we simply write $(F,\gamma)$ instead of $(F^1,\gamma^1)$ and 
investigate the well-posedness of the mean-field BSDE:
\begin{equation}
	\begin{split}
	Y_t=&F+\int_t^T \Bigl(\widehat{\gamma}Z_s^{0\|}\overline{\mathbb{E}}[Z_s^{0\|}]^\top-\frac{\widehat{\gamma}^2}{2\gamma}|\overline{\mathbb{E}}[Z_s^{0\|}]|^2+\frac{\gamma}{2}(|Z_s^{0\perp}|^2+|Z_s^1|^2)\Bigr)ds\\
	&-\int_t^T Z_s^0 dW_s^0-\int_t^T Z_s^1 dW_s^1,~~~ t\in[0,T]. 
	\label{mfg-BSDE}
	\end{split}
\end{equation}
Note that, by Assumption~\ref{assumption-agent} (ii), $\widehat{\gamma}$ is a strictly positive constant.

\begin{lem}
If there exists a bounded solution $(Y,Z^0,Z^1)\in \mathbb{S}^\infty\times \mathbb{H}^2\times \mathbb{H}^2$ to the BSDE $\eqref{mfg-BSDE}$,
then $Z:=(Z^0,Z^1)$ is in $\mathbb{H}^2_{\rm BMO}$.
\end{lem}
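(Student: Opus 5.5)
The plan is to repeat the exponential-transform argument of Lemma~\ref{lemma-Z-norm}. The new ingredients are the mean-field cross term $\widehat{\gamma}Z_s^{0\|}\overline{\mathbb{E}}[Z_s^{0\|}]^\top$ and the negative term $-\frac{\widehat{\gamma}^2}{2\gamma}|\overline{\mathbb{E}}[Z_s^{0\|}]|^2$. I will absorb both by completing a square, which gives a one-sided bound on the driver in terms of $|Z|^2$ alone.

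First, a pointwise upper bound on the driver. Since $Z\in\mathbb{H}^2$, the quantity $b_s:=\overline{\mathbb{E}}[Z_s^{0\|}]$ is well defined, and I treat it as a given $\mathbb{F}^0$-progressively measurable process. Write $a_s:=Z_s^{0\|}$. Then
\begin{equation}
\widehat{\gamma}\,a_s b_s^\top-\frac{\widehat{\gamma}^2}{2\gamma}|b_s|^2=\frac{\gamma}{2}|a_s|^2-\frac{\gamma}{2}\Bigl|a_s-\frac{\widehat{\gamma}}{\gamma}b_s\Bigr|^2\leq \frac{\gamma}{2}|Z_s^{0\|}|^2 .
\end{equation}
Adding the remaining terms and using $|Z^0|^2=|Z^{0\|}|^2+|Z^{0\perp}|^2$, the driver $f_s$ of \eqref{mfg-BSDE} satisfies
\begin{equation}
f_s\leq \frac{\gamma}{2}\bigl(|Z_s^0|^2+|Z_s^1|^2\bigr)\leq \frac{\overline{\gamma}}{2}|Z_s|^2 .
\end{equation}
Here I use Assumption~\ref{assumption-agent} (ii). The point is that no lower bound on $f$ is needed, and the mean-field term disappears from the estimate.

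Next, apply Ito's formula to $e^{\beta Y_t}$ with $\beta:=2\overline{\gamma}$. Since $dY_t=-f_tdt+Z_t^0dW_t^0+Z_t^1dW_t^1$,
\begin{equation}
de^{\beta Y_t}=\beta e^{\beta Y_t}\Bigl(-f_t+\frac{\beta}{2}|Z_t|^2\Bigr)dt+\beta e^{\beta Y_t}(Z_t^0dW_t^0+Z_t^1dW_t^1)\geq \overline{\gamma}\,\beta\, e^{\beta Y_t}|Z_t|^2dt+\beta e^{\beta Y_t}(Z^0_tdW^0_t+Z^1_tdW^1_t).
\end{equation}
Fix $\tau\in\mathcal{T}^{0,1}$ and integrate over $[\tau,\tau_k]$, where $\tau_k$ is a localizing sequence for the local martingale part. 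Since $Z\in\mathbb{H}^2$ and $Y$ is bounded, one may simply take $\tau_k=T$, because the stochastic integral is then a true martingale. Taking $\mathbb{E}[\cdot|\mathcal{F}^{0,1}_\tau]$, and using $e^{\beta Y}\geq e^{-\beta\|Y\|_{\mathbb{S}^\infty}}$, gives
\begin{equation}
\mathbb{E}\Bigl[\int_\tau^T|Z_s|^2ds\Big|\mathcal{F}^{0,1}_\tau\Bigr]\leq \frac{1}{2\overline{\gamma}^2}\,e^{2\beta\|Y\|_{\mathbb{S}^\infty}}=\frac{1}{2\overline{\gamma}^2}\,e^{4\overline{\gamma}\|Y\|_{\mathbb{S}^\infty}} .
\end{equation}
Taking the supremum over $\tau$ yields $Z\in\mathbb{H}^2_{\rm BMO}$.

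The only delicate step is the first one: the driver contains the conditional expectation $\overline{\mathbb{E}}[Z^{0\|}]$, which is not controlled pathwise by $Z$. The completion of squares removes it entirely from the upper bound, which is the only direction of the bound that the Ito argument uses.
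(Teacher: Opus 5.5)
Your proof is correct and follows the paper's argument: the paper also completes the square to get $f\le\frac{\gamma}{2}(|Z^0|^2+|Z^1|^2)$, then applies Ito's formula to $e^{2\gamma Y_t}$, using the agent's own $\mathcal{F}^1_0$-measurable $\gamma$ where you use the deterministic $\beta=2\overline{\gamma}$, and this difference is immaterial. The one slip is arithmetic: the drift lower bound is $\frac{\overline{\gamma}\beta}{2}e^{\beta Y_t}|Z_t|^2=\overline{\gamma}^2e^{\beta Y_t}|Z_t|^2$, not $\overline{\gamma}\beta e^{\beta Y_t}|Z_t|^2$, so your final constant should be $\overline{\gamma}^{-2}e^{4\overline{\gamma}\|Y\|_{\mathbb{S}^\infty}}$, which leaves the conclusion unchanged.
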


\noindent
\textbf{\textit{Proof}}\\
By applying Ito formula to $e^{2\gamma Y_t}$ and using $|Z_t^0|^2=|Z_t^{0\|}|^2+|Z_t^{0\perp}|^2$, we get
\begin{equation}
\begin{split}
d(e^{2\gamma Y_t})&=e^{2\gamma Y_t}\bigl(2\gamma dY_t+2\gamma^2 (|Z_t^0|^2+|Z_t^1|^2)dt\bigr)\\
&=e^{2\gamma Y_t}\bigl(-2\gamma\widehat{\gamma}Z_t^{0\|}\overline{\mathbb{E}}[Z_t^{0\|}]^\top+\widehat{\gamma}^2|\overline{\mathbb{E}}[Z_t^{0\|}]|^2
-\gamma^2(|Z_t^{0\perp}|^2+|Z_t^1|^2)+2\gamma^2(|Z_t^0|^2+|Z_t^1|^2)\bigr)dt\\
&\quad+e^{2\gamma Y_t}2\gamma (Z_t^0dW_t^0+Z_t^1 dW_t^1)\\
&\geq e^{2\gamma Y_t}\gamma^2(|Z_t^0|^2+|Z_t^1|^2)dt+e^{2\gamma Y_t}2\gamma(Z_t^0 dW_t^0+Z_t^1 dW_t^1).  \nonumber
\end{split}
\end{equation}
Hence, for any $\tau \in \mathcal{T}^{0,1}$, 
\begin{equation}
\mathbb{E}\bigl[e^{2\gamma Y_T}|\mathcal{F}_\tau^{1,0}\bigr]\geq \mathbb{E}\bigl[e^{2\gamma Y_T}-e^{2\gamma Y_\tau}|\mathcal{F}_\tau^{1,0}\bigr]\geq \mathbb{E}\Bigl[\int_\tau^T e^{2\gamma Y_s}\gamma^2 (|Z_s^0|^2+
|Z_s^1|^2)ds|\mathcal{F}_\tau^{1,0}\Bigr], \nonumber 
\end{equation}
which then gives~
$
\displaystyle
\|(Z^0,Z^1)\|_{\mathbb{H}^2_{\rm BMO}}^2\leq \frac{1}{\underline{\gamma}^2}\exp\bigl({4\overline{\gamma}\|Y\|_{\mathbb{S}^\infty}}\bigr). \nonumber
$ $\square$

Proving the well-posedness of the mean-field BSDE $\eqref{mfg-BSDE}$ is  quite difficult.
First of all, due to the presence of conditional expectations, the comparison principle is not available.
This makes  many of the existing techniques for qg-BSDEs such as in \cite{Kobylanski2000BackwardSD, briandBSDEQuadraticGrowth2006, briandQuadraticBSDEsConvex2008}
inapplicable. 
Due to the conditional McKean-Vlasov nature of the BSDE $\eqref{mfg-BSDE}$, the traditional approach of MFGs
using Schauder's fixed point theorem does not work, either.
Second, we did not succeed in obtaining any {\it a priori} bound on $\|Y\|_{\mathbb{S}^\infty}$ nor
any stability result such as \cite{Fujii-qgBSDE}[Lemma 3.3]. Thus, the strategy of constructing a compact set for the decoupling functions
of the regularized BSDEs under the Markovian setup as in \cite{Fujii-ABSDE}[Theorem 4.1] cannot be used.

Therefore, in the following, we will try the method proposed by Tevzadze~\cite{tevzadzeSolvabilityBackwardStochastic2008}[Proposition 1].
Among the existing literature on qg-BSDEs,  a unique characteristic of his work  is 
to solve a problem by a contraction mapping theorem without relying on the comparison principle. 
This method is also adopted by Hu et.al.~\cite{Hu-ABSDE} to deal with an anticipated BSDE with quadratic growth terms,  
which is related to an optimization problem with delay.

For notational simplicity, let us define the driver $f$ of the mean-field BSDE by,
for any $(z^0,z^1)\in \mathbb{H}^2(\mathbb{P}^{0,1},\mathbb{F}^{0,1},\mathbb{R}^{1\times d_0} \times \mathbb{R}^{1\times d})$, 
\begin{equation}
\begin{split}
f(z_s^0,z_s^1)&:=\widehat{\gamma}z_s^{0\|}\overline{\mathbb{E}}[z_s^{0\|}]^\top-\frac{\widehat{\gamma}^2}{2\gamma}|\overline{\mathbb{E}}[z_s^{0\|}]|^2
+\frac{\gamma}{2}(|z_s^{0\perp}|^2+|z_s^1|^2), \quad s\in[0,T]. \nonumber
\end{split}
\end{equation}
We have, by completing the square,  
\begin{equation}
\begin{split}
f(z^0,z^1)&=-\Bigl|\frac{\widehat{\gamma}}{\sqrt{2\gamma}}\overline{\mathbb{E}}[z^{0\|}]-\frac{\sqrt{\gamma}}{\sqrt{2}}z^{0\|}\Bigr|^2
+\frac{\gamma}{2}|z^{0\|}|^2+\frac{\gamma}{2}(|z^{0\perp}|^2+|z^1|^2) \\
&=-\Bigl|\frac{\widehat{\gamma}}{\sqrt{2\gamma}}\overline{\mathbb{E}}[z^{0\|}]-\frac{\sqrt{\gamma}}{\sqrt{2}}z^{0\|}\Bigr|^2
+\frac{\gamma}{2}(|z^{0}|^2+|z^1|^2). \nonumber
\end{split}
\end{equation}
Hence, for any $(z^0,z^1)$, 
\begin{equation}
\begin{split}
f^+(z^0,z^1)&\leq \frac{\overline{\gamma}}{2}(|z^0|^2+|z^1|^2), \\
f^-(z^0,z^1)&\leq 0\vee \Bigl(\Bigl|\frac{\widehat{\gamma}}{\sqrt{2\gamma}}\overline{\mathbb{E}}[z^{0\|}]-\frac{\sqrt{\gamma}}{\sqrt{2}}z^{0\|}\Bigr|^2
-\frac{\gamma}{2}|z^0|^2\Bigr) \leq \frac{\widehat{\gamma}^2}{\underline{\gamma}}|\overline{\mathbb{E}}[z^{0\|}]|^2+\frac{\overline{\gamma}}{2}|z^{0\|}|^2, \nonumber
\end{split}
\end{equation}
where $f^+(z^0,z^1):=\max(f(z^0,z^1),0)$ and $f^-(z^0,z^1):=\max(-f(z^0,z^1),0)$.
Thus, in particular, 
\begin{equation}
|f(z^0,z^1)|\leq \frac{\overline{\gamma}}{2}(|z^0|^2+|z^1|^2)+\frac{\widehat{\gamma}^2}{\underline{\gamma}}|\overline{\mathbb{E}}[z^{0}]|^2. 
\label{f-abs}
\end{equation}
Moreover,  by Assumption~\ref{assumption-agent} (ii), there is a positive constant $C_\gamma$, which depends only on $(\underline{\gamma},
\overline{\gamma},\widehat{\gamma})$,  such that, for any $(z^0,z^1), (\acute{z}^0,\acute{z}^1)\in  \mathbb{H}^2$, 
\begin{equation}
\begin{split}
&|f(z^0,z^1)-f(\acute{z}^0,\acute{z}^1)|\\
&\quad \leq C_\gamma \bigl(|z^0|+|\acute{z}^0|+|z^1|+|\acute{z}^1|+|\overline{\mathbb{E}}[z^{0}]|+|\overline{\mathbb{E}}[\acute{z}^{0}]|\bigr)\bigl(|z^0-\acute{z}^0|+|z^1-\acute{z}^1|+
|\overline{\mathbb{E}}[z^{0}-\acute{z}^{0}]|\bigr).
\end{split}
\label{f-spread}
\end{equation}

Let us observe the following simple fact.
\begin{lem}
\label{lemma-f-abs}
For any input $(z^0,z^1)\in \mathbb{H}^2_{\rm BMO}(\mathbb{P}^{0,1},\mathbb{F}^{0,1})$,
the next  inequality holds;
\begin{equation}
\sup_{\tau\in\mathcal{T}^{0,1}}\Bigl\|\mathbb{E}\Bigl[\int_\tau^T |f(z^0_s,z_s^1)|ds|\mathcal{F}_\tau^{0,1}\Bigr]\Bigr\|_{\infty}
\leq c_\gamma \|(z^0,z^1)\|^2_{\mathbb{H}^2_{\rm BMO}}, \nonumber
\end{equation}
where $c_\gamma$ is a positive constant given by $\displaystyle c_\gamma:=\frac{\overline{\gamma}}{2}+\frac{\widehat{\gamma}^2}{\underline{\gamma}}$.
\end{lem}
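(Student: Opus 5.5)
The plan is to start from the pointwise bound \eqref{f-abs}:
\[
|f(z^0_s,z^1_s)|\leq \frac{\overline{\gamma}}{2}\bigl(|z_s^0|^2+|z_s^1|^2\bigr)+\frac{\widehat{\gamma}^2}{\underline{\gamma}}\bigl|\overline{\mathbb{E}}[z_s^{0}]\bigr|^2 .
\]
We integrate it over $[\tau,T]$ and take $\mathbb{E}[\cdot|\mathcal{F}^{0,1}_\tau]$ for an arbitrary $\tau\in\mathcal{T}^{0,1}$. The first term is bounded directly by $\frac{\overline{\gamma}}{2}\|(z^0,z^1)\|^2_{\mathbb{H}^2_{\rm BMO}}$, by the definition \eqref{eq-h2bmo-norm} of the norm. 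So everything reduces to showing
\[
\Bigl\|\mathbb{E}\Bigl[\int_\tau^T |\overline{\mathbb{E}}[z^0_s]|^2ds\,\Big|\,\mathcal{F}^{0,1}_\tau\Bigr]\Bigr\|_\infty\leq \|z^0\|^2_{\mathbb{H}^2_{\rm BMO}} .
\]
Adding the two pieces then gives the constant $c_\gamma=\frac{\overline{\gamma}}{2}+\frac{\widehat{\gamma}^2}{\underline{\gamma}}$.

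For this key step, first note that $\overline{\mathbb{E}}[z^0]$ is $\mathbb{F}^0$-progressively measurable. Hence $\int_\tau^T|\overline{\mathbb{E}}[z^0_s]|^2ds$ depends only on $\omega^0$ once $\tau$ is fixed. By conditional Jensen (Fubini in $\omega^1$), $|\overline{\mathbb{E}}[z^0_s]|^2(\omega^0)\leq \mathbb{E}^{\mathbb{P}^1}[|z^0_s(\omega^0,\cdot)|^2]$. The subtle point is that $\tau$ is an $\mathbb{F}^{0,1}$-stopping time rather than an $\mathbb{F}^0$ one, so the BMO bound cannot simply be averaged over $\omega^1$ at time $\tau$.

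To handle this, I would first treat deterministic times $\tau=t$. Because $\overline{\mathbb{E}}[z^0]$ is $\mathbb{F}^0$-adapted and $\mathcal{F}^1$ is independent of $\mathcal{F}^0$, we have
\[
\mathbb{E}\Bigl[\int_t^T|\overline{\mathbb{E}}[z^0_s]|^2ds\,\Big|\,\mathcal{F}^{0,1}_t\Bigr]=\mathbb{E}\Bigl[\int_t^T|\overline{\mathbb{E}}[z^0_s]|^2ds\,\Big|\,\mathcal{F}^{0}_t\Bigr].
\]
By Jensen this is at most $\mathbb{E}[\int_t^T|z^0_s|^2ds\,|\,\mathcal{F}^0_t]$. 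Applying the tower property to $\mathbb{E}[\int_t^T|z_s^0|^2ds\,|\,\mathcal{F}^{0,1}_t]\leq\|z^0\|^2_{\mathbb{H}^2_{\rm BMO}}$ gives the bound at fixed times.

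To pass to general stopping times, I would show that the BMO norm of an $\mathbb{F}^0$-adapted integrand computed over $\mathcal{T}^{0,1}$ coincides with its norm computed over $\mathcal{T}^0$, in the spirit of the footnote to Assumption~\ref{assumption-market}. The argument runs as follows. Define $A_t:=\mathbb{E}[\int_t^T|\overline{\mathbb{E}}[z^0_s]|^2ds\,|\,\mathcal{F}^0_t]$. By independence, $A_t$ also equals the conditional expectation given $\mathcal{F}^{0,1}_t$, and its right-continuous version is a potential (a nonnegative supermartingale plus a continuous increasing part). For the $\mathbb{F}^{0,1}$-optional process $A$, the deterministic-time bound $A_t\leq \|z^0\|^2_{\mathbb{H}^2_{\rm BMO}}$ a.s. for all $t$, together with right-continuity, upgrades to $A_\tau\leq\|z^0\|^2_{\mathbb{H}^2_{\rm BMO}}$ for every $\tau\in\mathcal{T}^{0,1}$. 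Concretely, one approximates $\tau$ from above by dyadic stopping times taking finitely many values, uses the fixed-time bound on each value, and passes to the limit.

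I expect this passage from fixed times to general $\mathbb{F}^{0,1}$-stopping times, including the choice of a right-continuous modification of $A$, to be the only genuinely delicate step. Everything else is \eqref{f-abs} plus Jensen.
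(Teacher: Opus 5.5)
Your proof is correct and uses the same ingredients as the paper's proof: the pointwise bound \eqref{f-abs}, conditional Jensen in $\omega^1$, independence of $\mathcal{F}^1$ from the $\mathbb{F}^0$-adapted integrand, and the tower property to compare with the $\mathbb{F}^{0,1}$-BMO norm. The only difference is how you handle the stopping-time issue.

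The paper asserts in one line, on the grounds that $\overline{\mathbb{E}}[z^0]$ is $\mathbb{F}^0$-adapted and hence independent of $\mathcal{F}^1$, that the supremum over $\tau\in\mathcal{T}^{0,1}$ equals the supremum over $\tau\in\mathcal{T}^0$. It then applies Jensen and the tower property at $\mathbb{F}^0$-stopping times, using $\mathcal{T}^0\subset\mathcal{T}^{0,1}$.

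You never use $\mathbb{F}^0$-stopping times. You work at deterministic times $t$, where $\mathbb{E}[\,\cdot\,|\mathcal{F}^{0,1}_t]=\mathbb{E}[\,\cdot\,|\mathcal{F}^{0}_t]$ for $\mathcal{F}^0_T$-measurable variables is elementary. You then lift the bound to arbitrary $\tau\in\mathcal{T}^{0,1}$ by right-continuity and dyadic approximation. This is essentially what one needs to justify the paper's unproved equality of suprema, so your route is slightly longer but more self-contained.

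When you write up the lifting step, state explicitly what makes it work. Take the right-continuous version of $M_t:=\mathbb{E}[\int_0^T|\overline{\mathbb{E}}[z^0_s]|^2ds\,|\,\mathcal{F}^{0,1}_t]$ and set $A_t:=M_t-\int_0^t|\overline{\mathbb{E}}[z^0_s]|^2ds$. Optional sampling for the closed martingale $M$ then gives $A_\tau=\mathbb{E}[\int_\tau^T|\overline{\mathbb{E}}[z^0_s]|^2ds\,|\,\mathcal{F}^{0,1}_\tau]$ for every $\tau\in\mathcal{T}^{0,1}$. Your dyadic argument applies to $A$ exactly as you describe.

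Your parenthetical description of $A$ is off: $A$ is a uniformly integrable martingale \emph{minus} a continuous increasing process, not a supermartingale plus one. Only its right-continuity and the identification above are used, so this does not affect the argument.
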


\noindent
\textbf{\textit{Proof}}\\
For any input $z^0\in \mathbb{H}^2_{\rm BMO}$, $(\overline{\mathbb{E}}[z^0_t], t\in[0,T])$ is an $\mathbb{F}^0$-adapted process and hence independent of $\mathcal{F}^1$.
This implies, with Jensen's inequality, 
\begin{equation}
\begin{split}
\sup_{\tau\in \mathcal{T}^{0,1}}\Bigl\|\mathbb{E}\Bigl[\int_\tau^T |\overline{\mathbb{E}}[z_s^0]|^2ds|\mathcal{F}_\tau^{0,1}\Bigr]\Bigr\|_{\infty}
&=\sup_{\tau\in\mathcal{T}^0}\Bigl\|\mathbb{E}\Bigl[\int_\tau^T |\overline{\mathbb{E}}[z_s^0]|^2ds|\mathcal{F}_\tau^0\Bigr]\Bigr\|_{\infty}  \leq \sup_{\tau\in\mathcal{T}^0}\Bigl\|\mathbb{E}\Bigl[\int_\tau^T |z_s^0|^2ds|\mathcal{F}_\tau^0\Bigr]\Bigr\|_{\infty}.  \nonumber
\end{split}
\end{equation}
Moreover, for any $\tau\in \mathcal{T}^0 ~(\subset \mathcal{T}^{0,1})$, we have
$
\displaystyle\mathbb{E}\Bigl[\int_\tau^T |z_s^0|^2ds|\mathcal{F}_\tau^0\Bigr]=\mathbb{E}\Bigl[\mathbb{E}\bigl[\int_\tau^T |z_s^0|^2ds|\mathcal{F}_\tau^{0,1}\bigr]|\mathcal{F}_\tau^0\Bigr] \nonumber
$
and hence
\begin{equation}
\Bigl\|\mathbb{E}\Bigl[\int_\tau^T |z_s^0|^2ds|\mathcal{F}_\tau^0\Bigr]\Bigr\|_{\infty}\leq \Bigl\|\mathbb{E}\Bigl[\int_\tau^T |z_s^0|^2ds|\mathcal{F}_\tau^{0,1}\Bigr]
\Bigr\|_{\infty}. \nonumber
\end{equation}
It follows that
\begin{equation}
\sup_{\tau\in \mathcal{T}^{0,1}}\Bigl\|\mathbb{E}\Bigl[\int_\tau^T |\overline{\mathbb{E}}[z_s^0]|^2ds|\mathcal{F}_\tau^{0,1}\Bigr]\Bigr\|_{\infty}
\leq \sup_{\tau\in \mathcal{T}^{0,1}}\Bigl\|\mathbb{E}\Bigl[\int_\tau^T |z_s^0|^2ds|\mathcal{F}_\tau^{0,1}\Bigr]
\Bigr\|_{\infty}.
\label{bmo-comparison}
\end{equation}
Now the conclusion immediately follows from $\eqref{f-abs}$. $\square$\\

We now define the map $\Gamma: \mathbb{H}^2_{\rm BMO}(\mathbb{P}^{0,1},\mathbb{F}^{0,1},\mathbb{R}^{1\times d_0}\times \mathbb{R}^{1\times d})\ni (z^0,z^1)\mapsto \Gamma(z^0,z^1)=(Z^0,Z^1)\in \mathbb{H}^2_{\rm BMO}(\mathbb{P}^{0,1},\mathbb{F}^{0,1},\mathbb{R}^{1\times d_0}\times \mathbb{R}^{1\times d})$
by
\begin{equation}
Y_t=F+\int_t^T f(z^0_s,z^1_s)ds-\int_t^T Z_s^0 dW_s^0-\int_t^T Z_s^1 dW_s^1,
\label{map-Gamma}
\end{equation}
where $\Gamma(z^0,z^1)=(Z^0,Z^1)$ is the stochastic integrands associated to the solution of BSDE $\eqref{map-Gamma}$.

\begin{lem}
Under Assumption~\ref{assumption-agent}, the map $\Gamma$ is well-defined.
\end{lem}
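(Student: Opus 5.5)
Given an input $(z^0,z^1)\in\mathbb{H}^2_{\rm BMO}$, I have to show three things: the BSDE \eqref{map-Gamma} has a solution; its integrands $(Z^0,Z^1)$ are uniquely determined; and they again lie in $\mathbb{H}^2_{\rm BMO}$. Since the driver $f(z^0_s,z^1_s)$ does not depend on the unknowns $(Y,Z^0,Z^1)$, \eqref{map-Gamma} is a linear BSDE with a given integrable driver. So I will construct the solution explicitly by martingale representation instead of invoking any quadratic-growth theory.

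\textbf{Integrability.} By Lemma~\ref{lemma-f-abs}, for every $\tau\in\mathcal{T}^{0,1}$,
$\mathbb{E}[\int_\tau^T|f(z_s^0,z_s^1)|ds\,|\,\mathcal{F}^{0,1}_\tau]\le c_\gamma\|(z^0,z^1)\|^2_{\mathbb{H}^2_{\rm BMO}}$.
This gives $\int_0^T|f|ds\in\mathbb{L}^1$. In fact it is in every $\mathbb{L}^p$: the bound \eqref{f-abs} controls $\int_0^T|f|ds$ by $\int_0^T(|z^0|^2+|z^1|^2+|\overline{\mathbb{E}}[z^0]|^2)ds$, and the energy inequality \eqref{ineq-energy} applies to this quantity, using \eqref{bmo-comparison} for the $\overline{\mathbb{E}}[z^0]$ term.

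\textbf{Construction.} Since $F$ is bounded, I set
$Y_t:=\mathbb{E}[F+\int_t^Tf(z^0_s,z^1_s)ds\,|\,\mathcal{F}^{0,1}_t]$.
Define the square-integrable martingale $M_t:=\mathbb{E}[F+\int_0^Tf\,ds\,|\,\mathcal{F}^{0,1}_t]$. The filtration $\mathbb{F}^{0,1}$ is generated by $(W^0,W^1)$ together with the initial $\sigma$-field $\mathcal{F}^1_0=\sigma(\xi^1,\gamma^1)$, which is independent of the Brownian motions. The martingale representation theorem therefore still holds, giving $M_t=M_0+\int_0^tZ^0_sdW^0_s+\int_0^tZ^1_sdW^1_s$ with $(Z^0,Z^1)\in\mathbb{H}^2$. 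Then $Y_t=M_t-\int_0^tf\,ds$ solves \eqref{map-Gamma}. Uniqueness of $(Z^0,Z^1)$ in $\mathbb{H}^2$ follows because the difference of two solutions is a continuous martingale that vanishes at $T$, so its quadratic variation is zero. Hence $\Gamma$ is single-valued.

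\textbf{BMO property.} This is the main point, and I will check it directly. For $\tau\in\mathcal{T}^{0,1}$, Itô's isometry conditioned on $\mathcal{F}^{0,1}_\tau$ gives
\[
\mathbb{E}\Bigl[\int_\tau^T(|Z^0_s|^2+|Z^1_s|^2)ds\,\Big|\,\mathcal{F}^{0,1}_\tau\Bigr]=\mathbb{E}\Bigl[\Bigl(F+\int_\tau^Tf\,ds-Y_\tau\Bigr)^2\Big|\,\mathcal{F}^{0,1}_\tau\Bigr]
\le 2\cdot 4\|F\|_\infty^2+2\,\mathbb{E}\Bigl[\Bigl(\int_\tau^T|f|ds\Bigr)^2\Big|\,\mathcal{F}^{0,1}_\tau\Bigr],
\]
where I used $|Y_\tau|\le\|F\|_\infty+\mathbb{E}[\int_\tau^T|f|ds\,|\,\mathcal{F}^{0,1}_\tau]$. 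The last conditional second moment is controlled by the standard "conditional energy" argument. Set $A_t:=\int_0^t|f|ds$. Then $(A_T-A_\tau)^2=2\int_\tau^T(A_T-A_s)dA_s$, and conditioning gives the bound $2\bigl(\sup_\sigma\|\mathbb{E}[A_T-A_\sigma|\mathcal{F}^{0,1}_\sigma]\|_\infty\bigr)^2\le 2c_\gamma^2\|(z^0,z^1)\|^4_{\mathbb{H}^2_{\rm BMO}}$ by Lemma~\ref{lemma-f-abs}. This yields an explicit bound
$\|\Gamma(z^0,z^1)\|^2_{\mathbb{H}^2_{\rm BMO}}\le 8\|F\|_\infty^2+4c_\gamma^2\|(z^0,z^1)\|^4_{\mathbb{H}^2_{\rm BMO}}$,
which is also the estimate needed later for the invariance of a ball in the contraction argument.

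The delicate step is the conditional second-moment estimate. It must be carried out with the optional projection identity $\mathbb{E}[\int_\tau^T(A_T-A_s)dA_s|\mathcal{F}_\tau]=\mathbb{E}[\int_\tau^T\mathbb{E}[A_T-A_s|\mathcal{F}_s]dA_s|\mathcal{F}_\tau]$. This identity is valid because $A$ is continuous, adapted and increasing.
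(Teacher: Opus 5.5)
Your proof is correct. The construction step, martingale representation in $\mathbb{F}^{0,1}$ (valid because $\mathcal{F}^1_0$ is independent of $(W^0,W^1)$), simply spells out what the paper calls obvious.

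The BMO estimate follows a different route. The paper first bounds $\|Y\|_{\mathbb{S}^\infty}\le\|F\|_\infty+c_\gamma\|(z^0,z^1)\|^2_{\mathbb{H}^2_{\rm BMO}}$ by conditioning. It then applies It\^{o}'s formula to $|Y_t|^2$, which gives $\mathbb{E}[\int_t^T(|Z^0_s|^2+|Z^1_s|^2)ds\,|\,\mathcal{F}^{0,1}_t]\le\|F\|_\infty^2+2\|Y\|_{\mathbb{S}^\infty}\,\mathbb{E}[\int_t^T|f(z^0_s,z^1_s)|ds\,|\,\mathcal{F}^{0,1}_t]$. Only the first conditional moment of $\int|f|\,ds$ is needed, which is Lemma~\ref{lemma-f-abs} as stated, so no energy inequality or optional-projection argument enters.

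You instead write the conditional $Z$-energy as the conditional variance of $X:=F+\int_\tau^T f\,ds$. You then control the second conditional moment of $A_T-A_\tau$ through the identity $(A_T-A_\tau)^2=2\int_\tau^T(A_T-A_s)dA_s$. In substance the two arguments coincide. The paper's cross term $2\mathbb{E}[\int_t^T Y_sf_s\,ds\,|\,\mathcal{F}^{0,1}_t]$ is essentially what your identity controls after projection, with the sup bound on $Y$ doing the job of your bound on $\mathbb{E}[A_T-A_s\,|\,\mathcal{F}^{0,1}_s]$. The paper's route is shorter. Yours gives a bound in terms of $\|F\|_\infty$ and $\|(z^0,z^1)\|_{\mathbb{H}^2_{\rm BMO}}$ alone, without passing through $Y$.

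One caveat on constants. Your displayed inequality is true: split $X-Y_\tau$ into its two centered parts. It does not, however, follow from the bound on $|Y_\tau|$ you cite, and it is wasteful. Since $Y_\tau=\mathbb{E}[X\,|\,\mathcal{F}^{0,1}_\tau]$, you have directly $\mathbb{E}[(X-Y_\tau)^2\,|\,\mathcal{F}^{0,1}_\tau]\le\mathbb{E}[X^2\,|\,\mathcal{F}^{0,1}_\tau]$. This yields $\|\Gamma(z^0,z^1)\|^2_{\mathbb{H}^2_{\rm BMO}}\le 2\|F\|_\infty^2+4c_\gamma^2\|(z^0,z^1)\|^4_{\mathbb{H}^2_{\rm BMO}}$, which is exactly the estimate used in Proposition~\ref{prop-Gamma-stability}. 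With your $8\|F\|_\infty^2$, the choice $R=2\|F\|_\infty$ there would not close. Your closing remark that this is the estimate needed later is therefore true only with the sharper constant.
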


\noindent
\textbf{\textit{Proof}}\\
For any $(z^0,z^1)\in \mathbb{H}^2_{\rm BMO}$, the existence of the unique solution $(Y,Z^0,Z^1)$ to $\eqref{map-Gamma}$ is obvious.
By taking a conditional expectation, we have, from Lemma~\ref{lemma-f-abs},
\begin{equation}
\|Y\|_{\mathbb{S}^\infty}\leq \|F\|_{\infty}+c_\gamma\|(z^0,z^1)\|^2_{\mathbb{H}^2_{\rm BMO}}<\infty. \nonumber
\end{equation}
Moreover, by Ito formula applied to $|Y_t|^2$, we obtain, for any $t\in[0,T]$,
\begin{equation}
\mathbb{E}\Bigl[\int_t^T(|Z^0_s|^2+|Z_s^1|^2)ds|\mathcal{F}_t^{0,1}\Bigr]\leq \|F\|_{\infty}^2+2\|Y\|_{\mathbb{S}^\infty}
\mathbb{E}\Bigl[\int_t^T |f(z_s^0,z_s^1)|ds|\mathcal{F}_t^{0,1}\Bigr]. \nonumber
\end{equation}
Thus, 
$
\|(Z^0,Z^1)\|_{\mathbb{H}^2_{\rm BMO}}^2\leq \|F\|_{\infty}^2+2c_\gamma \|Y\|_{\mathbb{S}^\infty}\|(z^0,z^1)\|^2_{\mathbb{H}^2_{\rm BMO}}<\infty. \nonumber
$ $\square$

For each positive constant $R$, let $\mathcal{B}_R$ be a subset of $\mathbb{H}^2_{\rm BMO}$ defined by
\begin{equation}
\mathcal{B}_R:=\Bigl\{ (z^0,z^1)\in \mathbb{H}^2_{\rm BMO}(\mathbb{P}^{0,1},\mathbb{F}^{0,1},\mathbb{R}^{1\times d_0}\times \mathbb{R}^{1\times d}); \|(z^0,z^1)\|^2_{\mathbb{H}^2_{\rm BMO}}\leq R^2\Bigr\}. \nonumber
\end{equation}

\begin{prop}
\label{prop-Gamma-stability}
Let Assumption~\ref{assumption-agent} be in force. If $ \|F\|_{\infty}\leq \dfrac{1}{4\sqrt{2}c_\gamma}$,
then with $R:=2\|F\|_{\infty}$, the set $\mathcal{B}_R$ is stable under the map $\Gamma$, i.e. $(z^0,z^1)\in \mathcal{B}_R$ implies 
$\Gamma(z^0,z^1)\in \mathcal{B}_{R}$. Moreover, in this case, the $y$-component of the solution to $\eqref{map-Gamma}$ satisfies $\|Y\|_{\mathbb{S}^\infty} \leq R$.
\end{prop}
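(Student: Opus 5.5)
The plan is to combine the two a priori estimates from the proof that $\Gamma$ is well defined with the smallness condition on $\|F\|_\infty$. Take $(z^0,z^1)\in\mathcal{B}_R$ with $R=2\|F\|_\infty$, and let $(Y,Z^0,Z^1)$ be the solution of \eqref{map-Gamma}. Taking conditional expectations in \eqref{map-Gamma} and using Lemma~\ref{lemma-f-abs} gives
\begin{equation}
\|Y\|_{\mathbb{S}^\infty}\leq \|F\|_\infty+c_\gamma\|(z^0,z^1)\|^2_{\mathbb{H}^2_{\rm BMO}}\leq \|F\|_\infty+c_\gamma R^2=\|F\|_\infty+4c_\gamma\|F\|_\infty^2 .
\end{equation}
The hypothesis $\|F\|_\infty\leq 1/(4\sqrt{2}c_\gamma)$ gives $4c_\gamma\|F\|_\infty\leq 1/\sqrt{2}<1$. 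Hence $\|Y\|_{\mathbb{S}^\infty}\leq (1+1/\sqrt2)\|F\|_\infty\leq 2\|F\|_\infty=R$, which is the second claim.

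For stability of $\mathcal{B}_R$, use the Itô estimate on $|Y_t|^2$ from the well-definedness lemma. It was stated at deterministic $t$, but the same computation works at any stopping time $\tau\in\mathcal{T}^{0,1}$, since $\int Y Z\,dW$ is a true martingale when $Y$ is bounded and $Z\in\mathbb{H}^2$. Taking the supremum over $\tau$ gives
\begin{equation}
\|(Z^0,Z^1)\|^2_{\mathbb{H}^2_{\rm BMO}}\leq \|F\|_\infty^2+2c_\gamma\|Y\|_{\mathbb{S}^\infty}\|(z^0,z^1)\|^2_{\mathbb{H}^2_{\rm BMO}}\leq \|F\|_\infty^2+2c_\gamma R\cdot R^2 .
\end{equation}
It remains to check $\|F\|_\infty^2+2c_\gamma R^3\leq R^2$, i.e. $\|F\|_\infty^2+16c_\gamma\|F\|_\infty^3\leq 4\|F\|_\infty^2$. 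This reduces to $16c_\gamma\|F\|_\infty\leq 3$. The hypothesis gives $16c_\gamma\|F\|_\infty\leq 4/\sqrt2=2\sqrt2<3$, so the inequality holds. For the record, if one keeps the sharper bound $\|Y\|_{\mathbb{S}^\infty}\leq(1+1/\sqrt2)\|F\|_\infty$ instead, the condition becomes even weaker, so either route closes.

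The one point needing care, and the main obstacle, is the constant in the Itô step. Expanding $d|Y_t|^2$ gives
\begin{equation}
|Y_\tau|^2+\mathbb{E}\Bigl[\int_\tau^T(|Z^0_s|^2+|Z^1_s|^2)\,ds\,\Big|\,\mathcal{F}^{0,1}_\tau\Bigr]=\mathbb{E}\Bigl[|F|^2+2\int_\tau^T Y_s f(z^0_s,z^1_s)\,ds\,\Big|\,\mathcal{F}^{0,1}_\tau\Bigr].
\end{equation}
The term $|Y_\tau|^2$ is dropped, and the cross term is bounded by $2\|Y\|_{\mathbb{S}^\infty}$ times the conditional integral of $|f|$, which Lemma~\ref{lemma-f-abs} controls. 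If that constant turned out slightly different, the numerical margin $2\sqrt2<3$ leaves room to absorb it.
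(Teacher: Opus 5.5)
Your argument is correct. It differs from the paper's only in how you handle the factor $\|Y\|_{\mathbb{S}^\infty}$ in the It\^o estimate on $|Y_t|^2$.

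\textbf{The paper's route.} It never invokes the linear bound $\|Y\|_{\mathbb{S}^\infty}\le\|F\|_\infty+c_\gamma\|(z^0,z^1)\|^2_{\mathbb{H}^2_{\rm BMO}}$. Instead it applies Young's inequality,
$2c_\gamma\|Y\|_{\mathbb{S}^\infty}\|(z^0,z^1)\|^2_{\mathbb{H}^2_{\rm BMO}}\le\frac12\|Y\|^2_{\mathbb{S}^\infty}+2c_\gamma^2\|(z^0,z^1)\|^4_{\mathbb{H}^2_{\rm BMO}}$.
It then takes the essential supremum of $|Y_t|^2+\mathbb{E}[\int_t^T(|Z^0_s|^2+|Z^1_s|^2)ds\,|\,\mathcal{F}^{0,1}_t]$ and absorbs the $\frac12\|Y\|^2_{\mathbb{S}^\infty}$ term. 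This bounds both $\|Y\|^2_{\mathbb{S}^\infty}$ and $\|(Z^0,Z^1)\|^2_{\mathbb{H}^2_{\rm BMO}}$ by $2\|F\|_\infty^2+4c_\gamma^2\|(z^0,z^1)\|^4_{\mathbb{H}^2_{\rm BMO}}$. Requiring $2\|F\|_\infty^2+4c_\gamma^2R^4\le R^2$ then gives both claims at once. The hypothesis $\|F\|_\infty\le 1/(4\sqrt2 c_\gamma)$ is exactly the solvability condition of this quadratic in $R^2$, and exactly what makes $R=2\|F\|_\infty$ admissible; that is where the constant comes from.

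\textbf{Your route.} You bound $Y$ first and then substitute, which avoids the absorption step. It also shows the stated constant has slack: with $R=2\|F\|_\infty$ you need only $16c_\gamma\|F\|_\infty\le 3$, i.e.\ $\|F\|_\infty\le 3/(16c_\gamma)$, which also covers the $Y$-bound. The paper's route, by contrast, explains the specific constant and treats $Y$ and $Z$ symmetrically from a single estimate.

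\textbf{Two smaller points.} Working directly at stopping times, as you do, is slightly cleaner than the paper's essential supremum over deterministic $t$, which silently identifies that supremum with the stopping-time BMO norm. The ``obstacle'' you flag is not one: the constant $2c_\gamma\|Y\|_{\mathbb{S}^\infty}$ is exactly what the computation produces, and it is the same one the paper uses.
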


\noindent
\textbf{\textit{Proof}}\\
By Ito formula, we have for any $t\in[0,T]$,
\begin{equation}
|Y_t|^2+\int_t^T (|Z_s^0|^2+|Z_s^1|^2)ds=|F|^2+\int_t^T 2Y_s f(z_s^0,z_s^1)ds-\int_t^T 2Y_s (Z_s^0 dW_s^0+Z_s^1 dW_s^1). \nonumber
\end{equation}
Taking conditional expectations we obtain
\begin{equation}
	\begin{split}
		\label{BR-middle}
	|Y_t|^2+\mathbb{E}\Bigl[\int_t^T (|Z_s^0|^2+|Z_s^1|^2)ds|\mathcal{F}_t^{0,1}\Bigr]\leq \|F\|^2_{\infty}+2\|Y\|_{\mathbb{S}^\infty}\mathbb{E}\Bigl[\int_t^T |f(z_s^0,z_s^1)|ds|\mathcal{F}_t^{0,1}\Bigr].
	\end{split}
\end{equation}
From Lemma~\ref{lemma-f-abs}, taking the essential supremum in the both hands yields,
\begin{equation}
	\begin{split}
		\esssup_{(t,\omega)\in[0,T]\times\Omega}\Bigl(|Y_t|^2+\mathbb{E}\Bigl[\int_t^T (|Z_s^0|^2+|Z_s^1|^2)ds|\mathcal{F}_t^{0,1}\Bigr]\Bigr)\leq \|F\|_{\infty}^2+\frac{1}{2}\|Y\|_{\mathbb{S}^\infty}^2+2c_\gamma^2\|(z^0,z^1)\|^4_{\mathbb{H}^2_{\rm BMO}}. 
\end{split}
\end{equation}
Using the above result and an obvious relation
\begin{equation}
\begin{split}
	\frac{1}{2}\bigl(\|Y\|_{\mathbb{S}^\infty}^2+\|(Z^0,Z^1)\|_{\mathbb{H}^2_{\rm BMO}}^2\bigr)&\leq \max\bigl(\|Y\|_{\mathbb{S}^\infty}^2, \|(Z^0,Z^1)\|_{\mathbb{H}^2_{\rm BMO}}^2\bigr) \\
	&\leq 
	\esssup_{(t,\omega)\in[0,T]\times\Omega}\Bigl(|Y_t|^2+\mathbb{E}\Bigl[\int_t^T (|Z_s^0|^2+|Z_s^1|^2)ds|\mathcal{F}_t^{0,1}\Bigr]\Bigr),
\end{split}
\end{equation}
we obtain
\begin{equation}
	\begin{split}
\begin{split}
\|(Z^0,Z^1)\|_{\mathbb{H}^2_{\rm BMO}}^2\leq 2\|F\|_{\infty}^2+4c_\gamma^2 \|(z^0,z^1)\|^4_{\mathbb{H}^2_{\rm BMO}}.
\end{split}
\end{split}
\end{equation}

We now try to find an $R>0$ such that
\begin{equation}
	2\|F\|_{\infty}^2+4 c_\gamma^2 R^4\leq R^2 \nonumber
\end{equation}
holds. By completing the square, one sees that this is solvable if and only if  $\|F\|_{\infty}\leq \dfrac{1}{4\sqrt{2}c_\gamma}$,
and in this case, we can choose $R=2\|F\|_{\infty}$. This proves the first statement.
Moreover, by rearranging the first inequality in $\eqref{BR-middle}$, we get
\begin{equation}
	|Y_t|^2\leq \|F\|_{\infty}^2+\frac{1}{2}\|Y\|_{\mathbb{S}^\infty}^2+2c_\gamma^2 \|(z^0,z^1)\|^4_{\mathbb{H}^2_{\rm BMO}} \nonumber
\end{equation}
and hence
\begin{equation}
	\|Y\|_{\mathbb{S}^\infty}^2\leq 2\|F\|_{\infty}^2+4c_\gamma^2 \|(z^0,z^1)\|^4_{\mathbb{H}^2_{\rm BMO}}. \nonumber
\end{equation}
This yields $\|Y\|_{\mathbb{S}^\infty}^2\leq R^2$ if $(z^0,z^1)\in \mathcal{B}_R$. $\square$

Now, we provide the first main result of this chapter.
\begin{thm}
\label{th-mfg-existence}
Let Assumption~\ref{assumption-agent} be in force. 
If the terminal function $F$ is small enough in the sense that
\begin{equation}
\|F\|_{\infty}<\frac{1}{48C_\gamma}, 
\label{terminal-constraint}
\end{equation}
where $C_\gamma$ is a constant used in $\eqref{f-spread}$, then there exists a unique solution $(Y,Z^0,Z^1)$ to the mean-field BSDE $\eqref{mfg-BSDE}$
in the domain
\begin{equation}
(Z^0,Z^1)\in \mathcal{B}_R, \quad \|Y\|_{\mathbb{S}^\infty}\leq R \nonumber
\end{equation}
with $R:=2\|F\|_{\infty}$.
\end{thm}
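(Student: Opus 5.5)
The plan is to show that $\Gamma$ is a contraction on $\mathcal{B}_R$ with respect to the $\mathbb{H}^2_{\rm BMO}$-norm, in the spirit of Tevzadze~\cite{tevzadzeSolvabilityBackwardStochastic2008}[Proposition 1]. Since $\mathcal{B}_R$ is a closed subset of the Banach space $\mathbb{H}^2_{\rm BMO}$, Banach's fixed point theorem then gives a unique fixed point. First, observe that $48C_\gamma\ge 4\sqrt{2}c_\gamma$. This can be checked by comparing the definitions: taking $\acute z=0$ in \eqref{f-spread} controls $|f|$ by $C_\gamma$ times a square, so the constants can be chosen consistently, and otherwise one enlarges $C_\gamma$. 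Hence \eqref{terminal-constraint} implies the hypothesis of Proposition~\ref{prop-Gamma-stability}. With $R=2\|F\|_\infty$, $\Gamma$ therefore maps $\mathcal{B}_R$ into itself, and the associated $Y$ satisfies $\|Y\|_{\mathbb{S}^\infty}\le R$.

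For the contraction estimate, take $(z^0,z^1),(\acute z^0,\acute z^1)\in\mathcal{B}_R$ with images $(Z^0,Z^1)$, $(\acute Z^0,\acute Z^1)$ and $y$-components $Y,\acute Y$. Write $\Delta Y$ and so on for the differences. Applying Ito's formula to $|\Delta Y_t|^2$ and using $\Delta Y_T=0$ gives
\begin{equation}
|\Delta Y_t|^2+\mathbb{E}\Bigl[\int_t^T|\Delta Z_s|^2ds\Big|\mathcal{F}^{0,1}_t\Bigr]\le 2\|\Delta Y\|_{\mathbb{S}^\infty}\,\mathbb{E}\Bigl[\int_t^T|f(z_s)-f(\acute z_s)|ds\Big|\mathcal{F}^{0,1}_t\Bigr].
\end{equation}
By \eqref{f-spread} and the conditional Cauchy--Schwarz inequality, the last conditional expectation is bounded by
\begin{equation}
C_\gamma\,\mathbb{E}\Bigl[\int_t^T\bigl(|z|+|\acute z|+|\overline{\mathbb{E}}[z^0]|+|\overline{\mathbb{E}}[\acute z^0]|\bigr)^2ds\Big|\mathcal{F}_t^{0,1}\Bigr]^{1/2}\mathbb{E}\Bigl[\int_t^T\bigl(|\Delta z|+|\overline{\mathbb{E}}[\Delta z^0]|\bigr)^2ds\Big|\mathcal{F}_t^{0,1}\Bigr]^{1/2}.
\end{equation}
The $\overline{\mathbb{E}}$ terms are handled by \eqref{bmo-comparison}. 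Their BMO norms are dominated by those of $z^0,\acute z^0,\Delta z^0$, which is where the independence of $\mathbb{F}^0$ and $\mathbb{F}^1$ is used. The first factor is then at most a constant multiple of $R$; a crude count gives about $4R$, using $(a_1+\dots+a_4)^2\le 4\sum a_i^2$ together with the BMO comparison. Similarly, the second factor is at most about $2\|\Delta z\|_{\mathbb{H}^2_{\rm BMO}}$.

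Taking essential suprema and using $2ab\le \tfrac12 a^2+2b^2$, exactly as in the proof of Proposition~\ref{prop-Gamma-stability}, yields
\begin{equation}
\|\Delta Z\|^2_{\mathbb{H}^2_{\rm BMO}}\le \kappa\,C_\gamma^2R^2\|\Delta z\|^2_{\mathbb{H}^2_{\rm BMO}}
\end{equation}
for an explicit numerical constant $\kappa$. With $R=2\|F\|_\infty<1/(24C_\gamma)$, the contraction factor is $\kappa C_\gamma^2R^2<\kappa/576$. The main obstacle is this bookkeeping of constants: one must verify that the specific factor $48$ makes the factor strictly less than $1$. I expect $\kappa$ of order $128$ or so (roughly $4\cdot 4^2\cdot 2$), which fits. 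If the crude count overshoots, I would sharpen it by splitting into separate Cauchy--Schwarz estimates for the $z$, $\acute z$ and $\overline{\mathbb{E}}$ pieces rather than lumping them together.

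Finally, the fixed point $(Z^0,Z^1)=\Gamma(Z^0,Z^1)$ together with its $Y$ solves the mean-field BSDE \eqref{mfg-BSDE}, and Proposition~\ref{prop-Gamma-stability} gives $\|Y\|_{\mathbb{S}^\infty}\le R$. For uniqueness in the stated domain, note that any solution with $(Z^0,Z^1)\in\mathcal{B}_R$ is a fixed point of $\Gamma$ in $\mathcal{B}_R$, and the contraction property forces it to coincide with the one constructed. The $Y$-component is then determined by the BSDE \eqref{map-Gamma}.
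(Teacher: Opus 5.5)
Your proposal is correct and follows the paper's route: invariance of $\mathcal{B}_R$ from Proposition~\ref{prop-Gamma-stability}, then a Tevzadze-type contraction estimate for $\Gamma$ in $\mathbb{H}^2_{\rm BMO}$. The estimate uses It\^o's formula for $|\Delta Y|^2$, \eqref{f-spread}, conditional Cauchy--Schwarz, \eqref{bmo-comparison}, and $2ab\le\frac12a^2+2b^2$ followed by the essential-supremum step.

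The one correction is to the constant. Since \eqref{f-spread} involves $|z^0|+|z^1|$ rather than $|(z^0,z^1)|$, your factors $4R$ and $2\|\Delta z\|_{\mathbb{H}^2_{\rm BMO}}$ should be $\sqrt{24}\,R$ and $\sqrt{6}\,\|\Delta z\|_{\mathbb{H}^2_{\rm BMO}}$. To get these, keep all six (resp.\ three) terms and use $(\sum_{i=1}^6a_i)^2\le6\sum a_i^2$ and $(\sum_{i=1}^3b_i)^2\le3\sum b_i^2$. This gives $\kappa=576$ rather than about $128$, which is exactly the paper's count. It is also exactly the value for which $\|F\|_\infty<1/(48C_\gamma)$, i.e.\ $R<1/(24C_\gamma)$, yields a strict contraction, so the argument closes but with no slack to spare.
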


\noindent
\textbf{\textit{Proof}}\\
Note that the requirement $\eqref{terminal-constraint}$ is more stringent than the one used in 
Proposition~\ref{prop-Gamma-stability}. Thus it suffices to prove that the map $\Gamma$ is a strict contraction.

To prove it, let us consider two arbitrary inputs $z:=(z^0,z^1),~\acute{z}:=(\acute{z}^0,\acute{z}^1)\in\mathcal{B}_R$.
We set
\begin{equation}
Z:=(Z^0,Z^1):=\Gamma(z), \quad \acute{Z}:=(\acute{Z}^0,\acute{Z}^1):=\Gamma(\acute{z}), \nonumber
\end{equation}
and $Y, \acute{Y}$ as the $y$-component of the solution of $\eqref{map-Gamma}$ with input $z$ and $\acute{z}$, respectively.
For notational simplicity, we put 
\begin{equation}
\Delta z:=z-\acute{z}, \quad \Delta Y:=Y-\acute{Y}, \quad \Delta Z:=Z-\acute{Z}.  \nonumber
\end{equation}
This gives
\begin{equation}
\Delta Y_t=\int_t^T \bigl(f(z_s^0,z_s^1)-f(\acute{z}_s^0,\acute{z}_s^1)\bigr)ds-\int_t^T \Delta Z_s^0 dW_s^0-\int_t^T \Delta Z_s^1 dW_s^1, \nonumber
\end{equation}
and hence by Ito formula, 
\begin{equation}
	\begin{split}
&|\Delta Y_t|^2+\int_t^T(|\Delta Z_s^0|^2+|\Delta Z_s^1|^2)ds\\
&=
\int_t^T 2\Delta Y_s (f(z_s^0,z_s^1)-f(\acute{z}_s^0,\acute{z}_s^1))ds-\int_t^T2\Delta Y_s(\Delta Z_s^0 dW_s^0+\Delta Z_s^1 dW_s^1). \nonumber
	\end{split}
\end{equation}
By taking the conditional expectation, we have, from H\"older inequality,
\begin{equation}
\begin{split}
&|\Delta Y_t|^2+\mathbb{E}\Bigl[\int_t^T (|\Delta Z_s^0|^2+|\Delta Z_s^1|^2)ds|\mathcal{F}_t^{0,1}\Bigr]
\leq 2\|\Delta Y\|_{\mathbb{S}^\infty}\mathbb{E}\Bigl[\int_t^T |f(z_s^0,z_s^1)-f(\acute{z}_s^0,\acute{z}_s^1)|ds|\mathcal{F}_t^{0,1}\Bigr]\\
&~ \leq\frac{1}{2} \|\Delta Y\|_{\mathbb{S}^\infty}^2+2\Bigl(\mathbb{E}\Bigl[\int_t^T |f(z_s^0,z_s^1)-f(\acute{z}_s^0,\acute{z}_s^1)|ds|\mathcal{F}_t^{0,1}\Bigr]\Bigr)^2\\
&~ \leq \frac{1}{2}\|\Delta Y\|_{\mathbb{S}^\infty}^2\\
&~~~+2 C_\gamma^2\Bigl(
\mathbb{E}\Bigl[\int_t^T (|z_s^0|+|\acute{z}_s^0|+|z_s^1|+|\acute{z}_s^1|+|\overline{\mathbb{E}}[z_s^0]|+|\overline{\mathbb{E}}[\acute{z}_s^0]|)
(|\Delta z_s^0|+|\Delta z_s^1|+|\overline{\mathbb{E}}[\Delta z_s^0]|)ds|\mathcal{F}_t^{0,1}\Bigr]\Bigr)^2  \\
&~\leq \frac{1}{2}\|\Delta Y\|_{\mathbb{S}^\infty}^2+ 2 C_\gamma^2
\Bigl(\mathbb{E}\Bigl[\int_t^T (|z_s^0|+|\acute{z}_s^0|+|z_s^1|+|\acute{z}_s^1|+|\overline{\mathbb{E}}[z_s^0]|+|\overline{\mathbb{E}}[\acute{z}_s^0]|)^2ds|\mathcal{F}_t^{0,1}\Bigr]\Bigr)\\
&\hspace{30mm} \times \Bigl(\mathbb{E}\Bigl[\int_t^T (|\Delta z_s^0|+|\Delta z_s^1|+|\overline{\mathbb{E}}[\Delta z_s^0]|)^2ds|\mathcal{F}_t^{0,1}\Bigr]\Bigr) \\
&~\leq \frac{1}{2}\|\Delta Y\|_{\mathbb{S}^\infty}^2+2 C_\gamma^2
\Bigl( 6 \mathbb{E}\Bigl[\int_t^T (|z_s^0|^2+|\acute{z}_s^0|^2+|z_s^1|^2+|\acute{z}_s^1|^2+|\overline{\mathbb{E}}[z_s^0]|^2+|\overline{\mathbb{E}}[\acute{z}_s^0]|^2)ds|\mathcal{F}_t^{0,1}\Bigr]\Bigr)\\
&\hspace{30mm} \times \Bigl(3 \mathbb{E}\Bigl[\int_t^T (|\Delta z_s^0|^2+|\Delta z_s^1|^2+|\overline{\mathbb{E}}[\Delta z_s^0]|^2)ds|\mathcal{F}_t^{0,1}\Bigr]\Bigr).
\end{split}
\end{equation}
By the same technique used in the proof of Proposition~\ref{prop-Gamma-stability} and $\eqref{bmo-comparison}$, we get
\begin{equation}
\begin{split}
\|\Delta Z\|^2_{\mathbb{H}^2_{\rm BMO}}
&\leq 
4 C_\gamma^2\Bigl(12\sup_{\tau\in\mathcal{T}^{0,1}}\Bigl\|\mathbb{E}\Bigl[\int_\tau^T (|z_s|^2+|\acute{z_s}|^2)ds|\mathcal{F}_\tau^{0,1}\Bigr]\Bigr\|_{\infty}\Bigr)\Bigl(6\sup_{\tau\in\mathcal{T}^{0,1}}\Bigl\|\mathbb{E}\Bigl[\int_\tau^T |\Delta z_s|^2 ds|\mathcal{F}_\tau^{0,1}\Bigr]\Bigr\|_{\infty}\Bigr). 
\end{split}
\end{equation}

Since $z,\acute{z}\in \mathcal{B}_R$, we get
\begin{equation}
\|\Delta Z\|^2_{\mathbb{H}^2_{\rm BMO}}\leq 576C_\gamma^2 R^2\|\Delta z\|^2_{\mathbb{H}^2_{\rm BMO}}.
\end{equation}
Hence the map $\Gamma$ on $\mathcal{B}_R$ becomes contraction if $R<\dfrac{1}{24 C_\gamma}$.
Under the choice of $R=2\|F\|_{\infty}$, this is equivalent to
\begin{equation}
\|F\|_{\infty}<\frac{1}{48 C_\gamma}. 
\end{equation}
In this case $\Delta Z \rightarrow 0$ in $\mathbb{H}^2_{\rm BMO}$ under the repeated application of the map $\Gamma$, and it is also clear that
$\Delta Y\rightarrow 0$ in $\mathbb{S}^\infty$.
The unique fixed point $Z\in \mathcal{B}_R$ of the map $\Gamma$ and the associated $Y$ gives a unique solution 
to the mean-field BSDE $\eqref{mfg-BSDE}$ in the domain $Z\in \mathcal{B}_R$. $\square$

\begin{rem}
Recall that there is the invariance of the optimal trading strategy under the transformation given by $\eqref{duality-relation}$.
Therefore, for our purposes to obtain an equilibrium model with exponential utility, 
the constraint on the terminal function $F$ in Theorem~\ref{th-mfg-existence}
is not a direct restriction on the absolute size of liability, but  on the size of deviation from its mean:
\begin{equation}
\bigl|F-\mathbb{E}[F|\mathcal{F}_0^{0,1}]\bigr|=\bigl|F-\mathbb{E}[F|\mathcal{F}_0^1]\bigr|. \nonumber
\end{equation}
\end{rem}

\subsection{Existence Under the Special Situations}
In this section, we provide special examples where the mean-field BSDE $\eqref{mfg-BSDE}$ has a solution $(Y,Z^0,Z^1)\in \mathbb{S}^\infty\times \mathbb{H}^2_{\rm BMO}\times \mathbb{H}^2_{\rm BMO}$ even when the terminal function $F$ does not satisfy the constraint $\eqref{terminal-constraint}$. 
As a result, we shall see that Theorem~\ref{th-mfg-existence} is merely an example of sufficient conditions for the well-posedness of the mean-field BSDE $\eqref{mfg-BSDE}$.
Let us rewrite the mean-field BSDE $\eqref{mfg-BSDE}$ with the rescaled variables:~\footnote{Recall that we are working under the convention $(F,\gamma)=(F^1,\gamma^1)$.}
\begin{equation}
(y,z^0,z^1):=(\gamma Y, \gamma Z^0, \gamma Z^1), \quad G=\gamma F, \nonumber
\end{equation}
which yields
\begin{equation}
	\begin{split}
		y_t=&G+\int_t^T \Bigl(\widehat{\gamma}z_s^{0\|}\overline{\mathbb{E}}\Bigl[\frac{1}{\gamma}z_s^{0\|}\Bigr]^\top-\frac{\widehat{\gamma}^2}{2}\Bigl|\overline{\mathbb{E}}\Bigl[\frac{1}{\gamma}z_s^{0\|}\Bigr]\Bigr|^2+\frac{1}{2}(|z_s^{0\perp}|^2+|z_s^1|^2)\Bigr)ds\\
		&-\int_t^T z_s^0 dW_s^0-\int_t^T z_s^1 dW_s^1,~~~t\in[0,T]. 
\label{mfg-BSDE-norm}
	\end{split}
\end{equation}
For the first example, we put the following assumption.
\begin{asm}
\label{assumption-special}
The rescaled terminal function $G\in\mathbb{L}^\infty(\mathbb{R},\mathcal{F}^{0,1}_T)$ has an additive form:
\begin{equation}
G=G^0+G^1,
\end{equation}
for $G^0\in\mathbb{L}^\infty(\mathbb{R},\mathcal{F}^{0}_T)$ and $G^1\in\mathbb{L}^\infty(\mathbb{R},\mathcal{F}^{1}_T)$.
\end{asm}

\begin{rem}
In terms of the original liability $F~(=F^1)$, the above condition is equivalent to assume that $F$ has
the following structure:
\begin{equation}
F=\frac{1}{\gamma}\widetilde{F}^0+\widetilde{F}^1, \nonumber
\end{equation}
where $\widetilde{F}^0$ (resp. $\widetilde{F}^1$) is a bounded $\mathcal{F}_T^0$ (resp. $\mathcal{F}_T^1$)-measurable random variable.
In the financial market with distribution of agents as specified by Assumption~\ref{assumption-hetero},
this implies that the part of liability dependent on the common noise are distributed as inversely proportional 
to the agents' coefficients of risk aversion $(\gamma^i)_{i\in \{1,\ldots,N\}}$.
\end{rem}

\begin{thm}
\label{th-special-existence}
Under Assumptions~\ref{assumption-agent} and \ref{assumption-special},
there is, at least, one solution $(y,z^0,z^1)\in \mathbb{S}^\infty\times \mathbb{H}^2_{\rm BMO}
\times \mathbb{H}^2_{\rm BMO}$ to $\eqref{mfg-BSDE-norm}$, or equivalently  a solution $(Y,Z^0,Z^1)\in \mathbb{S}^\infty\times \mathbb{H}^2_{\rm BMO}
\times \mathbb{H}^2_{\rm BMO}$ to $\eqref{mfg-BSDE}$.
\end{thm}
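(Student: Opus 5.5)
The plan is to construct a solution explicitly by decoupling the common and idiosyncratic parts. The additive structure of Assumption~\ref{assumption-special} suggests the ansatz
\[
y_t=y^0_t+y^1_t,\qquad z^0_t\ \text{$\mathbb{F}^0$-adapted},\qquad z^1_t\ \text{$\mathbb{F}^1$-adapted},
\]
where $y^0$ depends only on the common noise and $y^1$ only on the idiosyncratic noise. The key observation is what happens to the driver of $\eqref{mfg-BSDE-norm}$ if $z^0$ is $\mathbb{F}^0$-adapted. Then $z^{0\|}=\Pi(z^0)$ is also $\mathbb{F}^0$-progressively measurable, since $\sigma$ is $\mathbb{F}^0$-adapted and the projection is Borel. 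Moreover $\gamma=\gamma^1$ is $\mathcal{F}^1_0$-measurable and hence independent of $\mathcal{F}^0$. Therefore
\[
\overline{\mathbb{E}}\Bigl[\frac{1}{\gamma}z^{0\|}_s\Bigr]=z^{0\|}_s\,\mathbb{E}[1/\gamma]=\frac{1}{\widehat{\gamma}}z^{0\|}_s .
\]
Substituting this, the driver collapses to
\[
|z^{0\|}_s|^2-\tfrac12|z^{0\|}_s|^2+\tfrac12(|z^{0\perp}_s|^2+|z^1_s|^2)=\tfrac12|z^0_s|^2+\tfrac12|z^1_s|^2 .
\]
The mean-field coupling thus disappears entirely and the equation splits into two standard quadratic BSDEs.

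The first step is to solve, on $(\Omega^0,\mathcal{F}^0,\mathbb{P}^0,\mathbb{F}^0)$, the equation
\[
y^0_t=G^0+\int_t^T\tfrac12|z^0_s|^2ds-\int_t^T z^0_sdW^0_s .
\]
This has the explicit bounded solution $y^0_t=\log\mathbb{E}[e^{G^0}|\mathcal{F}^0_t]$, with $z^0$ obtained from the martingale representation of $e^{y^0}$ and $\|y^0\|_{\mathbb{S}^\infty}\le\|G^0\|_\infty$. Equally, one may invoke Kobylanski's result~\cite{Kobylanski2000BackwardSD}. The BMO property of $z^0$ then follows as in Lemma~\ref{lemma-Z-norm}, via the Ito formula for $e^{2y^0}$.

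The second step treats the idiosyncratic part on $(\Omega^1,\mathcal{F}^1,\mathbb{P}^1,\mathbb{F}^1)$, where $\mathbb{F}^1$ is generated by $W^1$ and $\mathcal{F}^1_0=\sigma(\xi^1,\gamma^1)$. There I solve
\[
y^1_t=G^1+\int_t^T\tfrac12|z^1_s|^2ds-\int_t^Tz^1_sdW^1_s ,
\]
again explicitly by $y^1_t=\log\mathbb{E}[e^{G^1}|\mathcal{F}^1_t]$. Martingale representation holds here because the filtration is Brownian given the initial $\sigma$-field. This gives $y^1$ bounded and $z^1\in\mathbb{H}^2_{\rm BMO}(\mathbb{P}^1,\mathbb{F}^1)$.

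The third step lifts both solutions to the product space and adds them. $W^0$ and $W^1$ remain independent Brownian motions for $\mathbb{F}^{0,1}$, and the stochastic integrals agree with those computed on the marginal spaces. Summing the two equations shows that $(y,z^0,z^1)$ satisfies $\eqref{mfg-BSDE-norm}$, using the identity from the first paragraph, which is valid precisely because $z^0$ is $\mathbb{F}^0$-adapted. Rescaling by $\gamma$ then gives the solution of $\eqref{mfg-BSDE}$, since $\gamma$ is bounded above and below.

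I expect the main (though mild) obstacle to be the careful transfer of the BMO norms and the conditional expectation $\overline{\mathbb{E}}$ to the product filtration $\mathbb{F}^{0,1}$. Two facts are needed. First, $\mathbb{F}^0$-adapted processes keep their $\mathbb{H}^2_{\rm BMO}$-norm when viewed in $\mathbb{F}^{0,1}$, as in the footnote to Assumption~\ref{assumption-market}; the analogous statement holds for $\mathbb{F}^1$-adapted processes. Second, the independence of $\gamma$ from $\mathcal{F}^0$ justifies the factorization of $\overline{\mathbb{E}}[z^{0\|}/\gamma]$. Everything else reduces to the classical exponential transform for BSDEs with driver $\tfrac12|z|^2$.
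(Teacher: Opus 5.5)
Your proposal is correct and follows essentially the same route as the paper: solve the two decoupled quadratic BSDEs with driver $\frac12|z|^2$ on the common and idiosyncratic spaces, then use the $\mathbb{F}^0$-adaptedness of $z^0$ and the independence of $\gamma$ from $\mathcal{F}^0$ to get $\overline{\mathbb{E}}[\gamma^{-1}z^{0\|}_s]=\widehat{\gamma}^{-1}z^{0\|}_s$. This makes the mean-field driver collapse, so that $(y^0+y^1,z^0,z^1)$ solves \eqref{mfg-BSDE-norm}. The paper obtains the two decoupled solutions from Kobylanski's result, whereas your explicit Cole--Hopf formula is the one the paper records in the remark after the proof; your extra care in transferring the BMO norms and stochastic integrals to $\mathbb{F}^{0,1}$ is a detail the paper leaves implicit.
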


\noindent
\textbf{\textit{Proof}}\\
Consider the following two BSDEs:
\begin{equation}
\begin{split}
	\label{separated-bsde}
		y_t^0&=G^0+\int_t^T \frac{1}{2}|z_s^0|^2ds-\int_t^T z_s^0 dW_s^0, ~~~ t\in[0,T], \\
		y_t^1&=G^1+\int_t^T \frac{1}{2}|z_s^1|^2 ds-\int_t^T z_s^1 dW_s^1, ~~~ t\in[0,T]. 
\end{split}
\end{equation}
It is clear that there exists a unique solution $(y^i, z^i)\in \mathbb{S}^\infty\times\mathbb{H}^2_{\rm BMO}$ with $i=0,1$
by the standard results on qg-BSDEs in \cite{Kobylanski2000BackwardSD}. 
Since $(y^0,z^0)$ is $\mathbb{F}^0$-adapted and so is $z^{0\|}$, we have
\begin{equation}
\overline{\mathbb{E}}\Bigl[\frac{1}{\gamma}z_s^{0\|}\Bigr]=\overline{\mathbb{E}}\Bigl[\frac{1}{\gamma}\Bigr]z_s^{0\|}=\frac{1}{\widehat{\gamma}}z_s^{0\|}. \nonumber
\end{equation}
Therefore, $(y^0,z^0)$ also solves the BSDE
\begin{equation}
y_t^0=G^0+\int_t^T\Bigl(\widehat{\gamma}z^{0\|}_s\overline{\mathbb{E}}\Bigl[\frac{1}{\gamma}z_s^{0\|}\Bigr]^\top-\frac{\widehat{\gamma}^2}{2}
\Bigl|\overline{\mathbb{E}}\Bigl[\frac{1}{\gamma}z_s^{0\|}\Bigr]\Bigr|^2+\frac{1}{2}|z_s^{0\perp}|^2\Bigr)ds-\int_t^T z_s^0 dW_s^0, ~~~t\in[0,T]. \nonumber
\end{equation}
It is now clear that $(y,z^0,z^1):=(y^0+y^1, z^0,z^1)$ provides a solution to $\eqref{mfg-BSDE-norm}$. $\square$

\begin{rem}~\\
	\textup{(i)} Under the assumptions used in Theorem~\ref{th-special-existence}, we actually have a closed form solution,
				\begin{equation}
					\label{Cole-Hopf}
				y_t^j=\log\bigl(\mathbb{E}\bigl[\exp(G^j)|\mathcal{F}_t^j\bigr]\bigr), ~j=0,1. \nonumber
				\end{equation}
				This result can be easily confirmed by the Cole-Hopf transformation (i.e., considering the process $\exp({y_t^j})$).\\
	\textup{(ii)}  The boundedness of $F^1$ in Assumption \ref{assumption-agent} (iii) can be relaxed; we only need the integrability of $\exp(G^j)$ ($j=0,1$). See \cite{briandOne-dim} [Theorem 3.1].\\
	\textup{(iii)} In this case, the risk premium $\theta^{\mathrm{mfg}}$ is expressed by $\theta^{\mathrm{mfg}}_t = - (z^{0\|}_t)^\top$ for each $t\in[0,T]$, and the trading strategy for agent-$i$ reads:
	\[
		p^{i,*}_t = \frac{z^{0\|}_t}{\gamma^i}+\frac{(\theta^{\mathrm{mfg}}_t)^\top}{\gamma^i}=0,~~~t\in[0,T]
	\]
	for all $i\in\{1,\ldots,N\}$. This is the same result as for the model with a representative agent.\\
	\textup{(iv)} In this special case, the equilibrium risk-premium process $\theta$ no longer depends on $G^1$, which is the liability subject to the idiosyncratic noise.
	Qualitatively speaking, agents have no way to hedge $G^1$ through security trading when the liability is additively separated into the parts subject to the common noise $G^0$ and the idiosyncratic noise $G^1$.
	In other words, the demand to hedge $G^0$ is factored into the risk premium $\theta$, while $G^1$ does not affect $\theta$ as it cannot be hedged anyway.
	Example \ref{eg-exp-affine} below visualizes this property with an explicit form of the solution.
\end{rem}
The following example shows that the solution \eqref{Cole-Hopf} can be written explicitly by considering an explicit formulation of $G^0$.
\begin{eg}
	\label{eg-exp-affine}
	We set
	\[
		G^0 := \int_0^T (ax^2_t + bx_t)dt,
	\]	
	where $a\leq0$, $b\in\mathbb{R}$ and $(x_t)_{t\in[0,T]}\in\mathbb{S}^2(\mathbb{F}^0,\mathbb{R})$ is defined by
	\[
		dx_t = (\alpha x_t + \beta)dt + \delta dW_t^0,~~~t\in[0,T],~~~x_0\in\mathbb{R}
	\]
	for $\alpha,\beta\in\mathbb{R}$ and $\delta\in\mathbb{R}^{1\times d_0}$. $(x_t)_{t\in[0,T]}$ is explicitly expressed as
	\[
		x_t = e^{\alpha t}x_0 + \frac{\beta}{\alpha}(e^{\alpha t}-1) + \int_0^t e^{\alpha(t-s)}\delta dW_s^0,~~~t\in[0,T].
	\]
	In this case, $\exp(G^0)$ is integrable and we have
	\begin{equation}
		\mathbb{E}[\exp(G^0)|\mathcal{F}_t^0] = \exp\Bigl(A(t,T)x_t^2 + B(t,T)x_t + C(t,T) + \int_0^t (ax_s^2 + bx_s)ds\Bigr)
	\end{equation}
	for $t\in[0,T]$, where functions $A(\cdot,T)$, $B(\cdot,T)$ and $C(\cdot,T)$ are solution to the following ODEs:
	\begin{equation}
		\begin{split}
		&\dot{A}(t,T) + 2|\delta|^2A(t,T)^2 + 2\alpha A(t,T) + a = 0,\\
		&\dot{B}(t,T) + (\alpha + 2|\delta|^2A(t,T))B(t,T) + 2\beta A(t,T) + b = 0,\\
		&\dot{C}(t,T) + |\delta|^2A(t,T) + \Bigl(\beta + \frac{1}{2}|\delta|^2B(t,T)\Bigr)B(t,T) = 0,\\
		&A(T,T)=B(T,T)=C(T,T)=0.
		\end{split}
	\end{equation}
	for $t\in[0,T]$. The explicit form of $A(\cdot,T)$, $B(\cdot,T)$ and $C(\cdot,T)$ are given as follows (see \cite{carmonaProbabilisticTheoryMean2018} [Equation (2.50)]); for $t\in[0,T]$,
	\begin{equation}
		\begin{split}
		A(t,T) &= \frac{-a(e^{(\rho^+-\rho^-)(T-t)}-1)}{\rho^-e^{(\rho^+-\rho^-)(T-t)}-\rho^+},\\
		B(t,T) &= \int_t^T \exp\Bigl(\int_t^s (\alpha + 2|\delta|^2A(u,T))du\Bigr)(2\beta A(s,T)+b)ds,\\
		C(t,T) &= \int_t^T \Bigl\{|\delta|^2A(s,T) + \Bigl(\beta + \frac{1}{2}|\delta|^2B(s,T)\Bigr)B(s,T)\Bigr\} ds,
		\end{split}
	\end{equation}
	where $\rho^{\pm}:=\alpha\pm\sqrt{\alpha^2-2a|\delta|^2}$.
	Then, we obtain
	\begin{equation}
		\begin{split}
		y^0_t 
		&=
		\log \mathbb{E}[\exp(G^0)|\mathcal{F}_t^0]\\
		&=
		A(t,T)x_t^2 + B(t,T)x_t + C(t,T) + \int_0^t (ax_s^2 + bx_s)ds,~~~t\in[0,T].
		\end{split}
	\end{equation}
	By applying Ito formula to $y^0$, it is easy to see
	\[
		z^0_t = (2A(t,T)x_t + B(t,T))\delta,~~~t\in[0,T].
	\]
	Then, we have
	\[
		\theta^{\rm mfg}_t = -(z^{0\|}_t)^\top = -(2A(t,T)x_t + B(t,T))\Pi_t(\delta)^\top,~~~t\in[0,T].
	\]
	This clearly shows that the risk premium is determined by $G^0$ as noted in the previous remark. 

	When $a=0$, we have 
	\[
		A(t,T)\equiv 0,~~~B(t,T)=\frac{b}{\alpha}(e^{\alpha(T-t)}-1),
	\]
	and $\theta^{\rm mfg}_t = -B(t,T)\Pi_t(\delta)^\top$ for $t\in[0,T]$. Moreover, notice that
	\begin{equation}
		\begin{split}
		G^0 
		&=
		\int_0^T bx_t dt\\
		&=
		b\int_0^T \{e^{\alpha t}x_0 + \frac{\beta}{\alpha}(e^{\alpha t}-1)\}dt + b\int_0^T\int_0^t e^{\alpha(t-s)}\delta dW_s^0dt,
		\end{split}
	\end{equation}
	and that
	\begin{equation}
		\begin{split}
			b\int_0^T\int_0^t e^{\alpha(t-s)}\delta dW_s^0dt 
			= 
			\int_0^T \frac{b}{\alpha}(e^{\alpha(T-t)}-1)\delta dW_t^0
			= 
			\int_0^T B(t,T)\delta dW_t^0
			=
			\int_0^T z^0_t dW_t^0.
		\end{split}
	\end{equation}
	Here, we used (the stochastic version of) Fubini's theorem (see \cite{Medvegyev} [Theorem 5.25]).
	This implies that the Malliavin derivative of $G^0$ is given by $D_tG^0=z^0_t$ for $t\in[0,T]$.
	
	The excess return of securities $\mu^{\mathrm{mfg}}:=\sigma\theta^{\mathrm{mfg}}$ reads:
	\[
		\mu^{\mathrm{mfg}}_t := \sigma_t\theta^{\mathrm{mfg}}_t = -\sigma_t(z^0_t)^\top,~~~t\in[0,T].
	\]
	Equivalently, if we denote by $\mu^{\mathrm{mfg}(k)}_t$ and $\sigma_t^{(k)}$ the $k$-th element of $\mu^{\mathrm{mfg}}_t$ and the $k$-th row vector of $\sigma_t$, respectively, we have
	\[
		\mu^{\mathrm{mfg}(k)}_t = -\sigma^{(k)}_t(z^0_t)^\top,~~~t\in[0,T].
	\]
	When $\sigma^{(k)}_t(z^0_t)^\top>0$, the excess return of the $k$-th security $\mu^{\mathrm{mfg}(k)}_t$ is negative at each time $t\in[0,T]$.
	Note that $\sigma^{(k)}_t(z^0_t)^\top>0$ implies a positive correlation between $\sigma_t^{(k)}dW_t^0$ and $z^0_t dW_t^0$, meaning that a long position in the $k$-th security helps to hedge $G^0$. 
	Therefore, in this case, agents will buy the $k$-th security even if its return is low.
	Conversely, when $\sigma^{(k)}_t(z^0_t)^\top<0$, the excess return of the $k$-th security $\mu^{\mathrm{mfg}(k)}_t$ is positive at each time $t\in[0,T]$.
	From an economic perspective, this is because holding the $k$-th security has a negative effect on hedging $G^0$ in this case, and the market requires a high return on such a security.
\end{eg}

For the second example, we make the following assumption.
\begin{asm} ~\\
    \label{assumption-special2}
    \textup{(i)} The terminal condition $F\in\mathbb{L}^\infty(\mathbb{R},\mathcal{F}^{0,1}_T)$ satisfies
    \[
        F = \frac{1}{\gamma}F^{0} + F^{10} + F^{1}
    \]
    for $F^{0}\in\mathbb{L}^\infty(\mathbb{R},\mathcal{F}^0_T)$, $F^{10}\in\mathbb{L}^\infty(\mathbb{R},\mathcal{F}^{0,1}_T)$, and $F^{1}\in\mathbb{L}^\infty(\mathbb{R},\mathcal{F}^1_T)$.

    \noindent
    \textup{(ii)} The random variable $F^{10}$ satisfies
    \[
         \|F^{10}\|_\infty\leq \frac{1}{48C_\gamma},
    \]
    where $C_\gamma$ is the parameter specified in \eqref{terminal-constraint}.
\end{asm}
By setting
\[
	G^{10}:=\gamma F^{10},~~~G^1:=\gamma F^1,
\]
we have
\[
	G:=\gamma F = F^{0} + G^{10} + G^1.
\]
\begin{thm}
\label{th-special-existence2}
	Under Assumptions~\ref{assumption-agent} and \ref{assumption-special2}, the BSDE $\eqref{mfg-BSDE}$ has a solution $(Y,Z^0,Z^1)\in \mathbb{S}^\infty\times \mathbb{H}^2_{\rm BMO}\times \mathbb{H}^2_{\rm BMO}$.
\end{thm}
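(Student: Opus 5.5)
The plan is to peel off the two ``explicitly solvable'' parts of the terminal condition, $F^0$ and $F^1$, as in Theorem~\ref{th-special-existence}. The remaining part, driven by $G^{10}$, will then be solved by the contraction argument of Theorem~\ref{th-mfg-existence} after a change of measure. I work with the rescaled equation \eqref{mfg-BSDE-norm}, where $G=F^0+G^{10}+G^1$.

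\emph{Step 1 (explicit parts).} Let $(y^{0},z^{00})$ be the unique bounded solution on $\mathbb{F}^0$ of $y^{0}_t=F^0+\int_t^T\frac12|z^{00}_s|^2ds-\int_t^Tz^{00}_sdW^0_s$, that is, $y^0_t=\log\mathbb{E}[e^{F^0}|\mathcal{F}^0_t]$. Similarly, let $(y^{1},z^{11})$ solve the analogous BSDE on $\mathbb{F}^1$ with terminal value $G^1$. I look for a solution of \eqref{mfg-BSDE-norm} of the form $y=y^0+y^1+\eta$, $z^0=z^{00}+\zeta^0$, $z^1=z^{11}+\zeta^1$. Since $z^{00}$ is $\mathbb{F}^0$-adapted, we have $\overline{\mathbb{E}}[\gamma^{-1}z^{0\|}]=\widehat\gamma^{-1}z^{00\|}+m$ with $m:=\overline{\mathbb{E}}[\gamma^{-1}\zeta^{0\|}]$. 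Expanding the driver exactly as in the proof of Theorem~\ref{th-special-existence}, the terms $\frac12|z^{00}|^2$ and $\frac12|z^{11}|^2$ cancel against the equations for $y^0$ and $y^1$. What is left for $(\eta,\zeta^0,\zeta^1)$ is
\[
\eta_t=G^{10}+\int_t^T\Bigl(\zeta^0_sz^{00\top}_s+\zeta^1_sz^{11\top}_s+\widehat\gamma\zeta^{0\|}_sm_s^\top-\frac{\widehat\gamma^2}{2}|m_s|^2+\frac12(|\zeta^{0\perp}_s|^2+|\zeta^1_s|^2)\Bigr)ds-\int_t^T\zeta^0_sdW^0_s-\int_t^T\zeta^1_sdW^1_s .
\]

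\emph{Step 2 (Girsanov).} Because $z^{00},z^{11}\in\mathbb{H}^2_{\rm BMO}$, I define $d\mathbb{Q}/d\mathbb{P}^{0,1}=\mathcal{E}(\int z^{00}dW^0)_T\,\mathcal{E}(\int z^{11}dW^1)_T$, which is a product of an $\mathcal{F}^0_T$-measurable factor and an $\mathcal{F}^1_T$-measurable factor. Hence $\mathbb{Q}=\mathbb{Q}^0\otimes\mathbb{Q}^1$ is again a product measure with independent Brownian motions $\widetilde W^0,\widetilde W^1$, and the linear terms are absorbed. The resulting equation for $(\eta,\zeta)$ has exactly the structure of the rescaled \eqref{mfg-BSDE}, with terminal value $G^{10}$. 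Undoing the rescaling, $\|F^{10}\|_\infty\le 1/(48C_\gamma)$, so Proposition~\ref{prop-Gamma-stability} and the contraction estimate of Theorem~\ref{th-mfg-existence} would apply verbatim on $(\Omega^{0,1},\mathbb{Q})$. This gives $(\eta,\zeta)\in\mathbb{S}^\infty\times\mathbb{H}^2_{\rm BMO}(\mathbb{Q})$. By Kazamaki's stability of BMO under such measure changes, $\zeta\in\mathbb{H}^2_{\rm BMO}(\mathbb{P}^{0,1})$ as well, and summing the pieces gives the claimed solution.

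\emph{Main obstacle.} The delicate point is that $\overline{\mathbb{E}}$ in $m$ is the $\mathbb{P}^1$-integral over $\omega^1$. Under $\mathbb{Q}$, the natural conditional expectation given $\mathcal{F}^0$ is instead the $\mathbb{Q}^1$-integral. The $\mathbb{Q}^0$ factor is harmless, since it is $\mathcal{F}^0$-measurable. The $\mathbb{Q}^1$ factor is not harmless, because $m=\mathbb{E}^{\mathbb{Q}^1}[\,\mathcal{E}(\int z^{11}dW^1)_T^{-1}\gamma^{-1}\zeta^{0\|}]$. I would first try to redo the contraction using only the two properties actually needed in the proof of Theorem~\ref{th-mfg-existence}: the Lipschitz-type bound \eqref{f-spread}, and the BMO comparison \eqref{bmo-comparison} for the map $\zeta\mapsto m$. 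I would try to prove the latter under $\mathbb{Q}$ via H\"older's inequality together with the reverse H\"older inequality for $\mathcal{E}(\int z^{11}dW^1)^{-1}$.

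If the resulting constant is worse than in \eqref{terminal-constraint}, I would fall back to a different route. In that route I split off only $F^0$ by Girsanov, which preserves $\overline{\mathbb{E}}$ exactly. I then treat $G^{10}+G^1$ by a Tevzadze-type fixed point in which $y^1$ enters as a known bounded $\mathbb{F}^1$-adapted shift. Smallness is then needed only for the component coupled through $m$, which is generated by $G^{10}$.
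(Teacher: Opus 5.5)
Your Steps 1--2 are the paper's proof. The paper also solves the quadratic BSDEs for $F^0$ and $G^1$ (its $\widetilde z^0,\widetilde z^1$ are your $z^{00},z^{11}$) and passes to the product measure $\widetilde{\mathbb{P}}^{0,1}$ (your $\mathbb{Q}$). It then applies the small-terminal result to the $G^{10}$-part and recombines via \eqref{special-driver}; your residual driver agrees with that identity.

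The paper simply invokes Theorem~\ref{th-mfg-existence} under $\widetilde{\mathbb{P}}^{0,1}$, and the ``main obstacle'' you flag is a real point that this step passes over. The measure enters that proof only through \eqref{bmo-comparison}. Its constant $1$ comes from Jensen and the tower property, with $\overline{\mathbb{E}}$ acting as the conditional expectation given $\mathcal{F}^0$ \emph{under the working measure}. Under $\mathbb{Q}=\mathbb{Q}^0\otimes\mathbb{Q}^1$ that conditional expectation is the $\mathbb{Q}^1$-integral, which is a different operator from $\overline{\mathbb{E}}$ once $z^{11}\not\equiv 0$. Read literally, the theorem gives a solution of the residual equation with $\mathbb{E}^{\mathbb{Q}^1}$ in place of $\overline{\mathbb{E}}$, and the recombination then solves a different mean field BSDE. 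Only $m$ is affected: $\widehat\gamma$ and $\overline{\mathbb{E}}[\gamma^{-1}z^{00\|}]=\widehat\gamma^{-1}z^{00\|}$ are unchanged, because the $W^1$-density has conditional mean $1$ given $\mathcal{F}^1_0$.

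The genuine gap is that you do not close this step either, and neither remedy gives the statement as written.

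\textbf{Route~1.} The reverse H\"older inequality for $\mathcal{E}(-\int_0^\cdot z^{11}_s\,d\widetilde W^1_s)$, combined with \eqref{ineq-energy}, turns the constant $1$ in \eqref{bmo-comparison} into some $c>1$. This $c$ depends on $\|z^{11}\|_{\mathbb{H}^2_{\rm BMO}}$, hence on $\|F^1\|_\infty$. The contraction then needs $\|F^{10}\|_\infty$ below an $F^1$-dependent threshold, not $1/(48C_\gamma)$.

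\textbf{Route~2.} This is only a plan, and it meets the same obstruction. Keeping $\mathbb{P}^1$, so that $\overline{\mathbb{E}}$ is exact, leaves the term $\zeta^1_s(z^{11}_s)^\top$ in the driver, with an unbounded BMO coefficient.
\begin{itemize}
\item On the input side of the Tevzadze map, it adds terms of order $\|z^{11}\|_{\mathbb{H}^2_{\rm BMO}}$, which is not small, to the stability and contraction estimates.
\item On the output side, you need either the $W^1$-Girsanov step you wanted to avoid, or a direct It\^o estimate for $|\eta|^2$. The latter produces $\|\eta\|^2_{\mathbb{S}^\infty}\|z^{11}\|^2_{\mathbb{H}^2_{\rm BMO}}$, which cannot be absorbed unless $\|z^{11}\|_{\mathbb{H}^2_{\rm BMO}}$ is small.
\end{itemize}

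So what this decomposition delivers, once made rigorous, is existence under a smallness condition on $F^{10}$ whose threshold depends on $F^1$. The stated threshold is recovered only when $z^{11}=0$, i.e.\ when $F^1$ is $\mathcal{F}^1_0$-measurable. The theorem as stated would need an additional idea, for instance an estimate for $\zeta^0\mapsto m$ that does not pay for the change of measure in $W^1$.

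A minor point: Theorem~\ref{th-mfg-existence} requires the strict inequality \eqref{terminal-constraint}. At $\|F^{10}\|_\infty=1/(48C_\gamma)$ the contraction factor $576C_\gamma^2R^2$ equals $1$.
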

\noindent
\textbf{\textit{Proof}}\\
We show that the BSDE \eqref{mfg-BSDE-norm} has a solution $(y,z^0,z^1)\in \mathbb{S}^\infty\times \mathbb{H}^2_{\rm BMO}\times \mathbb{H}^2_{\rm BMO}$.
Let $(\widetilde{y}^0,\widetilde{z}^0)\in\mathbb{S}^\infty(\mathbb{P}^{0},\mathbb{F}^0)\times\mathbb{H}^2_{\mathrm{BMO}}(\mathbb{P}^{0},\mathbb{F}^0)$ and $(\widetilde{y}^1,\widetilde{z}^1)\in\mathbb{S}^\infty(\mathbb{P}^{1},\mathbb{F}^1)\times\mathbb{H}^2_{\mathrm{BMO}}(\mathbb{P}^{1},\mathbb{F}^1)$ be processes that solve the BSDEs 
\begin{equation}
    \begin{split}
    \label{MF-BSDE_case1-2-0}
        \widetilde{y}^0_t &= F^0 + \frac{1}{2}\int_t^T |\widetilde{z}^{0}_s|^2ds - \int_t^T \widetilde{z}^{0}_s dW^0_s,~~~t\in[0,T],\\
        \widetilde{y}^1_t &= G^1 + \frac{1}{2}\int_t^T |\widetilde{z}^{1}_s|^2ds - \int_t^T \widetilde{z}^{1}_s dW^1_s,~~~t\in[0,T].
\end{split}
\end{equation}
Note that processes $(\widetilde{y}^0,\widetilde{z}^0)$ and $(\widetilde{y}^1,\widetilde{z}^1)$ are well-defined due to \cite{Kobylanski2000BackwardSD}.
It is easy to see that $(\widetilde{y}^0,\widetilde{z}^0)$ satisfies
\begin{equation}
    \begin{split}
		\label{BSDE-tilde-y0}
        \widetilde{y}^0_t &= F^0 + \int_t^T \Bigl(\widehat\gamma \widetilde{z}^{0\|}_s\overline{\mathbb{E}}[\gamma^{-1}\widetilde{z}^{0\|}_s]^{\top} - \frac{\widehat\gamma^2}{2}|\overline{\mathbb{E}}[\gamma^{-1}\widetilde{z}^{0\|}_s]|^2 + \frac{1}{2}|\widetilde{z}^{0\perp}_s|^2 \Bigr)ds - \int_t^T \widetilde{z}^{0}_s dW^0_s~~~
\end{split}
\end{equation}
for $t\in[0,T]$, since $\widetilde{z}^{0}$ is an $\mathbb{F}^0$-progressively measurable process.

Define a new measure $\widetilde{\mathbb{P}}^{0,1}$ by
\begin{equation}
    \frac{d\widetilde{\mathbb{P}}^{0,1}}{d\mathbb{P}^{0,1}}\Bigr|_{\mathcal{F}^{0,1}_t} := \mathcal{E}\Bigl(\int_0^\cdot (\widetilde{z}^{0}_s dW^0_s + \widetilde{z}^{1}_sdW^1_s) \Bigr)_t.
\end{equation}
By Girsanov's theorem, the processes
\[
    \widetilde{W}_t^0:=W^0_t - \int_0^t (\widetilde{z}^{0}_s)^\top ds,~~~\widetilde{W}_t^1:=W^1_t - \int_0^t (\widetilde{z}^{1}_s)^\top ds,~~~t\in[0,T]
\]
are $(\mathbb{F}^{0,1},\widetilde{\mathbb{P}}^{0,1})$-standard Brownian motions. Consider a new BSDE:
\begin{equation}
    \begin{split}
    \label{MF-BSDE_case1-3}
        \mathcal{Y}_t =& G^{10} + \int_t^T \Bigl(\widehat\gamma \mathcal{Z}^{0\|}_s\overline{\mathbb{E}}[\gamma^{-1}\mathcal{Z}^{0\|}_s]^{\top} - \frac{\widehat\gamma^2}{2}|\overline{\mathbb{E}}[\gamma^{-1}\mathcal{Z}^{0\|}_s]|^2 + \frac{1}{2}|\mathcal{Z}^{0\perp}_s|^2 + \frac{1}{2}|\mathcal{Z}^1_s|^2 \Bigr)ds\\
         &- \int_t^T \mathcal{Z}^0_s d\widetilde{W}^0_s - \int_t^T \mathcal{Z}^1_s d\widetilde{W}^1_s.
\end{split}
\end{equation}
The BSDE \eqref{MF-BSDE_case1-3} has a solution $(\mathcal{Y},\mathcal{Z}^0,\mathcal{Z}^1)\in\mathbb{S}^\infty(\widetilde{\mathbb{P}}^{0,1},\mathbb{F}^{0,1})\times\mathbb{H}^2_{\mathrm{BMO}}(\widetilde{\mathbb{P}}^{0,1},\mathbb{F}^{0,1})\times\mathbb{H}^2_{\mathrm{BMO}}(\widetilde{\mathbb{P}}^{0,1},\mathbb{F}^{0,1})$. 
Indeed, by letting $(\mathscr{Y},\mathscr{Z}^0,\mathscr{Z}^1):=(\gamma^{-1}\mathcal{Y},\gamma^{-1}\mathcal{Z}^0,\gamma^{-1}\mathcal{Z}^1)$, we have
\begin{equation}
    \begin{split}
		\label{BSDE-scr}
        \mathscr{Y}_t =& F^{10} + \int_t^T \Bigl(\widehat\gamma \mathscr{Z}^{0\|}_s\overline{\mathbb{E}}[\mathscr{Z}^{0\|}_s]^{\top} - \frac{\widehat\gamma^2}{2\gamma}|\overline{\mathbb{E}}[\mathscr{Z}^{0\|}_s]|^2 + \frac{\gamma}{2}|\mathscr{Z}^{0\perp}_s|^2 + \frac{\gamma}{2}|\mathscr{Z}^1_s|^2 \Bigr)ds\\
		&- \int_t^T \mathscr{Z}^0_s d\widetilde{W}^0_s - \int_t^T \mathscr{Z}^1_s d\widetilde{W}^1_s
\end{split}
\end{equation}
for $t\in[0,T]$. By Theorem \ref{th-mfg-existence}, the BSDE \eqref{BSDE-scr} has a solution when $F^{10}$ is small enough in the sense that
\[
    \|F^{10}\|_\infty\leq \frac{1}{48C_\gamma}.
\]
Since $\mathbb{P}^{0,1}\sim \widetilde{\mathbb{P}}^{0,1}$, we have $\mathcal{Y}\in\mathbb{S}^\infty(\mathbb{P}^{0,1},\mathbb{F}^{0,1})$.
Moreover, by the definition of $(\widetilde{W}^0,\widetilde{W}^1)$, the process $(\mathcal{Y},\mathcal{Z}^0,\mathcal{Z}^1)$ satisfies
\begin{equation}
    \begin{split}
        \mathcal{Y}_t 
        =&
        G^{10} + \int_t^T f(s,\mathcal{Z}^0_s, \mathcal{Z}^1_s)ds- \int_t^T \mathcal{Z}^0_s dW^0_s - \int_t^T \mathcal{Z}^1_s dW^1_s,~~~t\in[0,T],
\end{split}
\end{equation}
where $f$ is defined by
\[
	f(s,\mathcal{Z}^0_s, \mathcal{Z}^1_s) := \widehat\gamma \mathcal{Z}^{0\|}_s\overline{\mathbb{E}}[\gamma^{-1}\mathcal{Z}^{0\|}_s]^{\top} - \frac{\widehat\gamma^2}{2}|\overline{\mathbb{E}}[\gamma^{-1}\mathcal{Z}^{0\|}_s]|^2 + \frac{1}{2}|\mathcal{Z}^{0\perp}_s|^2 + \frac{1}{2}|\mathcal{Z}^1_s|^2 + \widetilde{z}^{0}_s(\mathcal{Z}^{0}_s)^\top + \widetilde{z}^{1}_s(\mathcal{Z}^{1}_s)^\top
\]
for $s\in[0,T]$. By Young's inequality, we have
\begin{equation}
    \begin{split}
        f(s,\mathcal{Z}^0_s, \mathcal{Z}^1_s) 
        &=
        -\frac{1}{2}|\overline{\mathbb{E}}[\gamma^{-1}\mathcal{Z}^{0\|}_s] - \mathcal{Z}^{0\|}_s|^2 + \frac{1}{2}|\mathcal{Z}^{0}_s|^2 + \frac{1}{2}|\mathcal{Z}^1_s|^2 + \widetilde{z}^{0}_s(\mathcal{Z}^{0}_s)^\top + \widetilde{z}^{1}_s(\mathcal{Z}^{1}_s)^\top\\
        &\leq
        \frac{3}{4}|\mathcal{Z}^{0}_s|^2 + \frac{3}{4}|\mathcal{Z}^1_s|^2 + |\widetilde{z}^{0}_s|^2 + |\widetilde{z}^{1}_s|^2
\end{split}
\end{equation}
for $s\in[0,T]$. By Ito formula, we have
\begin{equation}
    \begin{split}
        e^{2G^{10}} - e^{2\mathcal{Y}_t}
        &=
        \int_t^T 2e^{2\mathcal{Y}_s}\{-f(s,\mathcal{Z}^0_s, \mathcal{Z}^1_s) +|\mathcal{Z}^0_s|^2 + |\mathcal{Z}^1_s|^2\} ds +  \int_t^T 2e^{2\mathcal{Y}_s}(\mathcal{Z}^0_s dW^0_s + \mathcal{Z}^1_s dW^1_s) \\
        &\geq
        \int_t^T 2e^{2\mathcal{Y}_s}\Bigl\{\frac{1}{4}(|\mathcal{Z}^0_s|^2 + |\mathcal{Z}^1_s|^2) - |\widetilde{z}^{0}_s|^2 - |\widetilde{z}^{1}_s|^2\Bigr\} ds + \int_t^T 2e^{2\mathcal{Y}_s}(\mathcal{Z}^0_s dW^0_s + \mathcal{Z}^1_s dW^1_s) \\
\end{split}
\end{equation}
for $t\in[0,T]$. This implies that
\[
    \mathbb{E}\Bigl[\int_t^T (|\mathcal{Z}^0_s|^2 + |\mathcal{Z}^1_s|^2) ds|\mathcal{F}^{0,1}_t\Bigr] \leq Ce^{4\|\mathcal{Y}\|_{\mathbb{S}^\infty}}(1 + \|\widetilde{z}^{0}\|_{\mathbb{H}^2_{\mathrm{BMO}}} + \|\widetilde{z}^{1}\|_{\mathbb{H}^2_{\mathrm{BMO}}}) <\infty,~~~t\in[0,T],
\]
i.e., $(\mathcal{Z}^0,\mathcal{Z}^1)\in\mathbb{H}^2_{\mathrm{BMO}}(\mathbb{P}^{0,1},\mathbb{F}^{0,1})\times\mathbb{H}^2_{\mathrm{BMO}}(\mathbb{P}^{0,1},\mathbb{F}^{0,1})$.

Finally, we show that the process $(y,z^0,z^1)\in\mathbb{S}^\infty(\mathbb{P}^{0,1},\mathbb{F}^{0,1})\times\mathbb{H}^2_{\mathrm{BMO}}(\mathbb{P}^{0,1},\mathbb{F}^{0,1})\times\mathbb{H}^2_{\mathrm{BMO}}(\mathbb{P}^{0,1},\mathbb{F}^{0,1})$, defined by
\begin{equation}
	\label{special2-sol}
	y_t := \mathcal{Y}_t + \widetilde{y}^0_t+ \widetilde{y}^1_t,~~~z^0_t:=\mathcal{Z}^0_t+\widetilde{z}^{0}_t,~~~z^1_t:=\mathcal{Z}^1_t+\widetilde{z}^{1}_t,~~~t\in[0,T],
\end{equation}
solves the BSDE \eqref{mfg-BSDE-norm}.
Indeed, we have
\begin{equation}
    \begin{split}
		\label{special-driver}
        &f(s,\mathcal{Z}^0_s, \mathcal{Z}^1_s)\\
        &=
        \widehat\gamma \mathcal{Z}^{0\|}_s\overline{\mathbb{E}}[\gamma^{-1}\mathcal{Z}^{0\|}_s]^{\top} - \frac{\widehat\gamma^2}{2}|\overline{\mathbb{E}}[\gamma^{-1}\mathcal{Z}^{0\|}_s]|^2 + \frac{1}{2}|\mathcal{Z}^{0\perp}_s|^2 + \frac{1}{2}|\mathcal{Z}^1_s|^2\\
        &~~~+\widehat\gamma \widetilde{z}^{0\|}_s\overline{\mathbb{E}}[\gamma^{-1}\mathcal{Z}^{0\|}_s]^{\top} - \widehat\gamma \widetilde{z}^{0\|}_s\overline{\mathbb{E}}[\gamma^{-1}\mathcal{Z}^{0\|}_s]^{\top}+ \widetilde{z}^{0\|}_s(\mathcal{Z}^{0\|}_s)^\top + \widetilde{z}^{0\perp}_s(\mathcal{Z}^{0\perp}_s)^\top + \widetilde{z}^{1}_s(\mathcal{Z}^{1}_s)^\top\\
        &=
        \widehat\gamma \mathcal{Z}^{0\|}_s\overline{\mathbb{E}}[\gamma^{-1}\mathcal{Z}^{0\|}_s]^{\top} - \frac{\widehat\gamma^2}{2}|\overline{\mathbb{E}}[\gamma^{-1}\mathcal{Z}^{0\|}_s]|^2 + \frac{1}{2}|\mathcal{Z}^{0\perp}_s|^2 + \frac{1}{2}|\mathcal{Z}^1_s|^2\\
        &~~~+\widehat\gamma \widetilde{z}^{0\|}_s\overline{\mathbb{E}}[\gamma^{-1}\mathcal{Z}^{0\|}_s]^{\top} - \frac{\widehat\gamma^2}{2} \overline{\mathbb{E}}[2\gamma^{-1}\widetilde{z}^{0\|}_s]\overline{\mathbb{E}}[\gamma^{-1}\mathcal{Z}^{0\|}_s]^{\top}+ \widetilde{z}^{0\|}_s(\mathcal{Z}^{0\|}_s)^\top + \widetilde{z}^{0\perp}_s(\mathcal{Z}^{0\perp}_s)^\top + \widetilde{z}^{1}_s(\mathcal{Z}^{1}_s)^\top\\
        &=
        \widehat\gamma (\mathcal{Z}^{0\|}_s + \widetilde{z}^{0\|}_s)\overline{\mathbb{E}}[\gamma^{-1}\mathcal{Z}^{0\|}_s]^{\top} - \frac{\widehat\gamma^2}{2}\overline{\mathbb{E}}[\gamma^{-1}(\mathcal{Z}^{0\|}_s+2\widetilde{z}^{0\|}_s)]\overline{\mathbb{E}}[\gamma^{-1}\mathcal{Z}^{0\|}_s]^\top\\
        &~~~+\widetilde{z}^{0\|}_s(\mathcal{Z}^{0\|}_s)^\top + \frac{1}{2}(\mathcal{Z}^{0\perp}_s + 2\widetilde{z}^{0\perp}_s)(\mathcal{Z}^{0\perp}_s)^\top + \frac{1}{2}(\mathcal{Z}^{1}_s + 2\widetilde{z}^{1}_s)(\mathcal{Z}^{1}_s)^\top\\
        &=
        \widehat\gamma z^{0\|}_s\overline{\mathbb{E}}[\gamma^{-1}(z^{0\|}_s-\widetilde{z}^{0\|}_s)]^{\top} + \widetilde{z}^{0\|}_s(z^{0\|}_s-\widetilde{z}^{0\|}_s)^\top - \frac{\widehat\gamma^2}{2}\overline{\mathbb{E}}[\gamma^{-1}(z^{0\|}_s+\widetilde{z}^{0\|}_s)]\overline{\mathbb{E}}[\gamma^{-1}(z^{0\|}_s-\widetilde{z}^{0\|}_s)]^\top  \\
        &~~~ + \frac{1}{2}(z^{0\perp}_s + \widetilde{z}^{0\perp}_s)(z^{0\|}_s-\widetilde{z}^{0\perp}_s)^\top + \frac{1}{2}(z^{1}_s + \widetilde{z}^{1}_s)(z^{1}_s-\widetilde{z}^{1}_s)^\top\\
        &=
        \widehat\gamma z^{0\|}_s\overline{\mathbb{E}}[\gamma^{-1}(z^{0\|}_s-\widetilde{z}^{0\|}_s)]^{\top} + \widehat\gamma\overline{\mathbb{E}}[\gamma^{-1}\widetilde{z}^{0\|}_s](z^{0\|}_s-\widetilde{z}^{0\|}_s)^\top \\
        &~~~- \frac{\widehat\gamma^2}{2}(|\overline{\mathbb{E}}[\gamma^{-1}z^{0\|}_s]|^2 - |\overline{\mathbb{E}}[\gamma^{-1}\widetilde{z}^{0\|}_s]|^2) + \frac{1}{2}(|z^{0\perp}_s|^2 - |\widetilde{z}^{0\perp}_s|^2) + \frac{1}{2}(|z^{1}_s|^2 - |\widetilde{z}^{1}_s|^2)\\
        &=
        \widehat\gamma z^{0\|}_s\overline{\mathbb{E}}[\gamma^{-1}z^{0\|}_s]^{\top} - \widehat\gamma\widetilde{z}^{0\|}_s\overline{\mathbb{E}}[\gamma^{-1}\widetilde{z}^{0\|}_s]^\top \\
        &~~~- \frac{\widehat\gamma^2}{2}(|\overline{\mathbb{E}}[\gamma^{-1}z^{0\|}_s]|^2 - |\overline{\mathbb{E}}[\gamma^{-1}\widetilde{z}^{0\|}_s]|^2) + \frac{1}{2}(|z^{0\perp}_s|^2 - |\widetilde{z}^{0\perp}_s|^2) + \frac{1}{2}(|z^{1}_s|^2 - |\widetilde{z}^{1}_s|^2)
\end{split}
\end{equation}
for $s\in[0,T]$. Then, by \eqref{BSDE-tilde-y0} and \eqref{special-driver}, we deduce that $(y,z^0,z^1)$ satisfies
\begin{equation}
    \begin{split}
        y_t 
        &:=
        \widetilde{y}^0_t+ \mathcal{Y}_t +  \widetilde{y}^1_t\\
        &= 
        F^0 + G^{10} + G^1 \\
		&~~~+ \int_t^T \Bigl(f(s,\mathcal{Z}^0_s, \mathcal{Z}^1_s) + \widehat\gamma \widetilde{z}^{0\|}_s\overline{\mathbb{E}}[\gamma^{-1}\widetilde{z}^{0\|}_s]^{\top} - \frac{\widehat\gamma^2}{2}|\overline{\mathbb{E}}[\gamma^{-1}\widetilde{z}^{0\|}_s]|^2 + \frac{1}{2}|\widetilde{z}^{0\perp}_s|^2 + \frac{1}{2}|\widetilde{z}^{1}_s|^2 \Bigr)ds\\
        &~~~- \int_t^T (\mathcal{Z}^0_s +\widetilde{z}^{0}_s) dW^0_s - \int_t^T (\mathcal{Z}^1_s +\widetilde{z}^{1}_s) dW^1_s\\
        &=
        G + \int_t^T \Bigl(\widehat\gamma z^{0\|}_s\overline{\mathbb{E}}[\gamma^{-1}z^{0\|}_s]^{\top} - \frac{\widehat\gamma^2}{2}|\overline{\mathbb{E}}[\gamma^{-1}z^{0\|}_s]|^2 + \frac{1}{2}|z^{0\perp}_s|^2 + \frac{1}{2}|z^{1}_s|^2 \Bigr)ds\\
        &~~~- \int_t^T z^0_s dW^0_s - \int_t^T z^1_s dW^1_s
\end{split}
\end{equation}
for $t\in[0,T]$, which is exactly the BSDE \eqref{mfg-BSDE-norm}. $\square$\\



\section{Market Clearing in the Large Population Limit}
\label{sec-market-clearing}
Finally, in this section, we shall show that the process $(\theta^{\rm mfg}_t,t\in[0,T])\in \mathbb{H}^2_{\rm BMO}$ defined by $\eqref{theta-mfg}$ in terms of the solution to the mean-field BSDE is actually a good approximation of the risk-premium process in the market-clearing equilibrium.

In order to treat the large population limit, we first enlarge the complete probability space $(\Omega,\mathcal{F},\mathbb{P})$ 
with filtration $\mathbb{F}:=(\mathcal{F}_t)_{t\geq 0}$ as follows:
$\Omega:=\Omega^0\times \prod_{i=1}^\infty \Omega^i$
and $(\mathcal{F},\mathbb{P})$ is the completion of $\bigl(\mathcal{F}^0\otimes \bigotimes_{i=1}^\infty \mathcal{F}^i ,
\mathbb{P}^0\otimes \bigotimes_{i=1}^\infty \mathbb{P}^i\bigr)$. $\mathbb{F}$ denotes the complete and the right-continuous augmentation of $(\mathcal{F}^0_t\otimes \bigotimes_{i=1}^\infty \mathcal{F}_t^i)_{t\geq 0}$.
$\mathbb{E}[\cdot]$ denotes the expectation with respect to $\mathbb{P}$\footnote{
For general results on the construction of a (countable or even uncountable) product probability space, 
see, for example,  Klenke~\cite{klenke}[Section~14.3].}.

Suppose that the financial market is defined as in Assumption~\ref{assumption-market} with the process $\mu$ given by 
$(\mu_t:=\sigma_t\theta_t^{\rm mfg}, t\in[0,T])$.
Notice that the process $\theta^{\rm mfg}$ (and hence also $\mu$) is $\mathbb{F}^0$-adapted and consistent with our assumption on the information structure. In particular, this means that each agent (agent-$i$) can implement his/her strategy based on the common 
and his/her own idiosyncratic information $\mathbb{F}^0\otimes \mathbb{F}^i$ without taking care of the idiosyncratic noise of the other agents.
Therefore, for each $i\in \mathbb{N}$, the optimal trading strategy of the agent-$i$ is provided by, as in $\eqref{p-i-optimal}$, 
\begin{equation}
p_t^{i,*}~(=(\pi_t^{i,*})^\top\sigma_t)=Z_t^{i,0\|}-\frac{\widehat{\gamma}}{\gamma^i}\overline{\mathbb{E}}[\mathcal{Z}_t^{0\|}], \quad t\in[0,T].  \nonumber
\end{equation}
Here,  $Z^{i,0}$ is associated with the solution of the BSDE (see, $\eqref{BSDE-agent-i}$)
\begin{equation}
\begin{split}
Y_t^i=&F^i+\int_t^T \Bigl(\widehat{\gamma}Z_s^{i,0\|}\overline{\mathbb{E}}[\mathcal{Z}_s^{0\|}]^\top-\frac{\widehat{\gamma}^2}{2\gamma^i}|\overline{\mathbb{E}}[\mathcal{Z}_s^{0\|}]|^2+\frac{\gamma^i}{2}(|Z_s^{i,0\perp}|^2+|Z_s^{i}|^2)\Bigr)ds \\
& -\int_t^T Z_s^{i,0}dW_s^0-\int_t^T Z_s^i dW_s^i, ~ t\in[0,T], 
\end{split}
\label{mfg-agent-i}
\end{equation}
and $\mathcal{Z}^0$  is associated to the solution of
the mean-field BSDE $\eqref{mfg-BSDE-org}$~\footnote{For a clear distinction, we have changed the symbols.}
\begin{equation}
	\begin{split}
	\mathcal{Y}_t^1
	=&
	F^1+\int_t^T \Bigl(\widehat{\gamma}\mathcal{Z}_s^{0\|}\overline{\mathbb{E}}[\mathcal{Z}_s^{0\|}]^\top-\frac{\widehat{\gamma}^2}{2\gamma^1}|\overline{\mathbb{E}}[\mathcal{Z}_s^{0\|}]|^2+\frac{\gamma^1}{2}(|\mathcal{Z}_s^{0\perp}|^2+|\mathcal{Z}_s^1|^2)\Bigr)ds\\
	&-\int_t^T \mathcal{Z}_s^0 dW_s^0-\int_t^T \mathcal{Z}_s^1 dW_s^1,~ t\in[0,T]. \nonumber
\end{split}
\end{equation}
From Theorem~\ref{th-mfg-existence} and Theorem~\ref{th-special-existence}, we already know that there 
exists a bounded solution (possibly not unique) $(\mathcal{Y}^1,\mathcal{Z}^0,\mathcal{Z}^1)\in \mathbb{S}^\infty\times \mathbb{H}^2_{\rm BMO}
\times \mathbb{H}^2_{\rm BMO}$ to the mean-field BSDE $\eqref{mfg-BSDE-org}$ under certain conditions.

Here is the last main result of this chapter.
\begin{thm}
	\label{th-mfg-clearing}
Let Assumptions~\ref{assumption-market} and \ref{assumption-hetero} be in force.
Assume in addition that there is a bounded solution $(\mathcal{Y}^1,\mathcal{Z}^0,\mathcal{Z}^1)\in \mathbb{S}^\infty\times \mathbb{H}^2_{\rm BMO}
\times \mathbb{H}^2_{\rm BMO}$  to the mean-field BSDE $\eqref{mfg-BSDE-org}$ and that
we select an arbitrary but fixed solution from it to define the risk-premium process $(\theta_t^{\rm mfg}:=-\widehat{\gamma}\overline{\mathbb{E}}[\mathcal{Z}_t^{0\|}]^\top,~t\in[0,T])$.
Then,  $\theta^{\rm mfg}$ clears the market in the large population limit in the sense that, 
the agents' optimal trading strategies $(\pi^{i,*})_{i\in \mathbb{N}}$ satisfy the estimate
\begin{equation}
\mathbb{E}\int_0^T\Bigl|\frac{1}{N}\sum_{i=1}^N \pi_t^{i,*}\Bigr|^2 dt\leq \frac{C}{N} 
\end{equation}
with some constant $C>0$ uniformly in $N\in \mathbb{N}$.
\end{thm}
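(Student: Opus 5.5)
The core of the argument is to identify each agent's BSDE \eqref{mfg-agent-i} with an independent copy of the mean-field BSDE, so that the optimal strategies become conditionally i.i.d. given $\mathcal{F}^0$ with conditional mean zero.

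\textbf{Step 1: representing the agents' solutions.} Fix the chosen solution $(\mathcal{Y}^1,\mathcal{Z}^0,\mathcal{Z}^1)$ and put $\theta^{\rm mfg}=-\widehat\gamma\,\overline{\mathbb{E}}[\mathcal{Z}^{0\|}]^\top$, which lies in $\mathbb{H}^2_{\rm BMO}(\mathbb{F}^0)$ by \eqref{bmo-comparison}. By Corollary~\ref{corollary-existence}, agent-$i$'s BSDE \eqref{mfg-agent-i} has a unique bounded solution.

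Since $(\Omega^i,\mathcal{F}^i,\mathbb{P}^i)$ is a copy of $(\Omega^1,\ldots)$, the construction of $(\mathcal{Y}^1,\mathcal{Z}^0,\mathcal{Z}^1)$ can be transported to agent $i$. We write the solution as measurable functionals $\Phi(\omega^0,\omega^1)$ of the data $(W^0,W^1,\xi^1,\gamma^1,F^1)$, and we use that the $F^i$ are $\mathcal{F}^0$-conditionally i.i.d.

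This yields processes $(\mathcal{Y}^i,\mathcal{Z}^{i,0},\mathcal{Z}^{i})$ on $\Omega^0\times\Omega^i$. They solve \eqref{mfg-agent-i} with the \emph{same} coefficient $\overline{\mathbb{E}}[\mathcal{Z}^{0\|}]$, because that conditional expectation depends only on $\omega^0$ and on the law of the idiosyncratic part, which is identical across $i$.

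By uniqueness in Theorem~\ref{th-uniqueness} (applied after the rescaling), $Z^{i,0}=\mathcal{Z}^{i,0}$. Hence $(p^{i,*})_{i\ge1}$ are $\mathcal{F}^0$-conditionally i.i.d. with common conditional law that of $p^{1,*}$.

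\textbf{Step 2: conditional mean zero.} Compute
\[
\overline{\mathbb{E}}[p^{1,*}_t]=\overline{\mathbb{E}}[\mathcal{Z}^{0\|}_t]-\widehat\gamma\,\mathbb{E}[1/\gamma^1]\,\overline{\mathbb{E}}[\mathcal{Z}^{0\|}_t]=0.
\]
This uses that $\gamma^1$ is $\mathcal{F}_0^1$-measurable and independent of $\mathcal{F}^0$, and that $\overline{\mathbb{E}}[\mathcal{Z}^{0\|}]$ is $\mathcal{F}^0$-measurable.

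\textbf{Step 3: the variance estimate.} Because $\pi^{i,*}_t=(\sigma_t\sigma_t^\top)^{-1}\sigma_t (p^{i,*}_t)^\top$ and $\sigma\sigma^\top\ge\underline\lambda I_n$, we have $|\frac1N\sum_i\pi^{i,*}_t|\le C|\frac1N\sum_i p^{i,*}_t|$.

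Conditionally on $\mathcal{F}^0$, the $p^{i,*}_t$ are independent with mean zero, so the cross terms vanish:
\[
\mathbb{E}\Bigl|\tfrac1N\textstyle\sum_i p^{i,*}_t\Bigr|^2=\tfrac1N\,\mathbb{E}|p^{1,*}_t|^2 .
\]
The cross terms are integrable by Cauchy–Schwarz. Integrating in $t$ gives the bound
\[
\frac{C}{N}\,\mathbb{E}\int_0^T|p^{1,*}_t|^2dt\le \frac{C}{N}\Bigl(\|\mathcal{Z}^0\|^2_{\mathbb{H}^2_{\rm BMO}}+\frac{\widehat\gamma^2}{\underline\gamma^2}\|\overline{\mathbb{E}}[\mathcal{Z}^{0}]\|^2_{\mathbb{H}^2}\Bigr)<\infty ,
\]
which is finite because the BMO norm dominates the $\mathbb{H}^2$ norm, together with Jensen's inequality.

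\textbf{Main obstacle.} The delicate step is Step 1. We must rigorously justify that the agents' solutions are conditionally i.i.d. copies, which requires measurable selection of the solution as a functional of the noise and a careful treatment of the product-space filtrations.

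I would avoid measurability issues by using uniqueness directly. Define $\mathcal{Z}^{i,0}$ by transporting $\omega^1\mapsto\omega^i$ through the canonical isomorphism of the copies. The conditional law of $\mathcal{Z}^{i,0}$ given $\mathcal{F}^0$ then equals that of $\mathcal{Z}^{0}$, because the $F^i$ are conditionally i.i.d. Independence across $i$ follows from the product structure, since each $\mathcal{Z}^{i,0}$ is $\mathcal{F}^0\otimes\mathcal{F}^i$-measurable.
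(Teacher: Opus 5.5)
Your proof is correct and is essentially the paper's argument. Uniqueness of the bounded solution of \eqref{mfg-agent-i} identifies agent~1's solution with the chosen mean-field solution and makes the $p^{i,*}$ conditionally i.i.d.\ given $\mathcal{F}^0$; you obtain this by transporting along the isomorphism $\Omega^0\times\Omega^i\cong\Omega^0\times\Omega^1$, where the paper invokes Yamada--Watanabe. After that, the $O(1/N)$ bound is a conditional-variance computation, which you do in one step via $\overline{\mathbb{E}}[p^{1,*}_t]=0$ instead of the paper's two-term split.

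As in the paper, identifying agent~$i$'s solution with the transported one tacitly requires that $(W^0,W^i,\xi^i,\gamma^i,F^i)$ have the same joint law for every $i$, equivalently that $F^i$ is the image of $F^1$ under that isomorphism. That is more than the marginal conditional i.i.d.\ property of the $F^i$ that you cite as the justification.
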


\noindent
\textbf{\textit{Proof}}\\
For a given $\theta^{\rm mfg}~(=-\widehat{\gamma}\overline{\mathbb{E}}[\mathcal{Z}^{0\|}]^\top)$, which is in $\mathbb{H}^2_{\rm BMO}$ by $\eqref{bmo-comparison}$, 
it follows from Corollary~\ref{corollary-existence} that the BSDE $\eqref{mfg-agent-i}$ has a unique bounded solution 
$(Y^i,Z^{i,0},Z^i)\in \mathbb{S}^\infty\times \mathbb{H}^2_{\rm BMO}\times \mathbb{H}^2_{\rm BMO}$
for every $i\in \mathbb{N}$.
In particular, the uniqueness of the solution
implies  $(Y^1,Z^{1,0},Z^1)=(\mathcal{Y}^1,\mathcal{Z}^0,\mathcal{Z}^1)$, the latter of which is the one used to define the process $\theta^{\rm mfg}$.
Thus we have $\overline{\mathbb{E}}[\mathcal{Z}^{0\|}]=\overline{\mathbb{E}}[Z^{1,0\|}]$.

The uniqueness of the solution of $\eqref{mfg-agent-i}$ also implies, by Yamada-Watanabe Theorem~(see, for example, \cite{carmonaProbabilisticTheoryMean2018a}[Theorem 1.33]),  that there exists a measurable function $\Phi$ such that
\begin{equation}
(Y^i, Z^{i,0},Z^i)=\Phi\Bigl(W^0, \xi^i,\gamma^i,W^i,(\theta^{\rm mfg},F^i)\Bigr), \nonumber
\end{equation}
for all $i\in\mathbb{N}$, where $\Phi$ depends only on the joint distribution $\mathcal{L}\bigl(W^0, \xi^i,\gamma^i,W^i,(\theta^{\rm mfg},F^i)\bigr)$.
Since $\theta^{\rm mfg}$ is $\mathbb{F}^0$-adapted and $F^i$ is $\mathcal{F}^0$-conditionally i.i.d, 
this expression implies that the solutions $\{(Y^i, Z^{i,0},Z^i),~i\in\mathbb{N}\}$ are $\mathcal{F}^0$-conditionally i.i.d.

Since $\pi^{i,*}_t=(\sigma_t\sigma_t^\top)^{-1}\sigma_t (p^{i,*}_t)^\top$, we have,  from Assumption~\ref{assumption-market} (ii),
\begin{equation}
\mathbb{E}\int_0^T\Bigl|\frac{1}{N}\sum_{i=1}^N \pi_t^{i,*}\Bigr|^2 dt\leq C\mathbb{E}\int_0^T\Bigl|\frac{1}{N}\sum_{i=1}^N p_t^{i,*}\Bigr|^2 dt \nonumber
\end{equation}
with some constant $C$. Since $\mathcal{Z}^0=Z^{1,0}$, we have
\begin{equation}
\begin{split}
\frac{1}{N}\sum_{i=1}^N p_t^{i,*}=\frac{1}{N}\sum_{i=1}^N(Z_t^{i,0\|}-\overline{\mathbb{E}}[Z_t^{1,0\|}])+\frac{1}{N}\sum_{i=1}^N
\Bigl(1-\frac{\widehat{\gamma}}{\gamma^i}\Bigr)\overline{\mathbb{E}}[Z_t^{1,0\|}], \nonumber
\end{split}
\end{equation}
which then yields 
\begin{equation}
\begin{split}
&\mathbb{E}\int_0^T \Bigl|\frac{1}{N}\sum_{i=1}^N p_t^{i,*}\Bigr|^2 dt\\
&\leq 
2\mathbb{E}\int_0^T \Bigl| \frac{1}{N}\sum_{i=1}^N(Z_t^{i,0\|}-\overline{\mathbb{E}}[Z_t^{1,0\|}])\Bigr|^2dt+
2\widehat{\gamma}^2 \mathbb{E}\int_0^T\Bigl|\frac{1}{N}\sum_{i=1}^N\Bigl(\frac{1}{\widehat{\gamma}}-\frac{1}{\gamma^i}\Bigr)
\overline{\mathbb{E}}[Z_t^{1,0\|}]\Bigr|^2 dt \\
&=
2\mathbb{E}\int_0^T \Bigl| \frac{1}{N}\sum_{i=1}^N(Z_t^{i,0\|}-\overline{\mathbb{E}}[Z_t^{1,0\|}])\Bigr|^2dt+
2\widehat{\gamma}^2 \mathbb{E}\Bigl[\Bigl|\frac{1}{N}\sum_{i=1}^N\Bigl(\frac{1}{\widehat{\gamma}}-\frac{1}{\gamma^i}\Bigr)\Bigr|^2\Bigr]
\mathbb{E}\int_0^T |\overline{\mathbb{E}}[Z_t^{1,0\|}]|^2 dt, \nonumber
\end{split}
\end{equation}
where we used the independence of $(\gamma^i)_{i\in \mathbb{N}}$ and $\mathcal{F}^0$ in the second line.
Since $(1/\gamma^i)_{i=1}^N$ are i.i.d. and $(Z_t^{i,0\|})_{i=1}^N$ are $\mathcal{F}^0$-conditionally i.i.d., 
the cross terms of the two expectations both vanish. Hence, we obtain
\begin{equation}
\begin{split}
\mathbb{E}\int_0^T \Bigl|\frac{1}{N}\sum_{i=1}^N p_t^{i,*}\Bigr|^2 dt &\leq \frac{2}{N^2}\sum_{i=1}^N\mathbb{E}\int_0^T |Z_t^{0,i}-\overline{\mathbb{E}}[Z_t^{0,1}]|^2dt+\frac{2\widehat{\gamma}^2}{N^2}
\sum_{i=1}^N\mathbb{E}\Bigl[\Bigl|\frac{1}{\widehat{\gamma}}-\frac{1}{\gamma^i}\Bigr|^2\Bigr]\mathbb{E}\int_0^T |Z_t^{1,0}|^2 dt\\
&\leq \frac{4}{N}\Bigl(1+\frac{\widehat{\gamma}^2}{\underline{\gamma}^2}\Bigr)\|Z^{0,1}\|^2_{\mathbb{H}^2_{\rm BMO}}, \nonumber
\end{split}
\end{equation}
where we have used Jensen's inequality and $\eqref{ineq-energy}$ with $n=1$. This gives the desired result. $\square$

Combined with Theorems~\ref{th-mfg-existence} and \ref{th-special-existence}, the result of Theorem~\ref{th-mfg-clearing} implies that
$\theta^{\rm mfg}$ given by $\eqref{theta-mfg}$ achieves the market-clearing condition in the large-$N$ limit in the sense that
\begin{equation}
\lim_{N\rightarrow \infty} \frac{1}{N}\sum_{i=1}^N \pi_t^{i,*}=0, ~~~dt\otimes\mathbb{P}\text{-}\mathrm{a.e.}
\end{equation}
at least if $|F^i-\mathbb{E}[F^i|\mathcal{F}^i_0]|$ is small enough or $F^i$ has an additive form for every $i\in\mathbb{N}$.
From an economic perspective, it means that the excessive demand/supply per capita converges to zero $dt \otimes\mathbb{P}$-a.e. in the large population limit.
To fully understand the general conditions for the existence of mean-field equilibrium, we need further study 
on the mean-field BSDE $\eqref{mfg-BSDE}$. 

\begin{rem}
Since the contraction mapping approach by Tevzadze~\cite{tevzadzeSolvabilityBackwardStochastic2008}[Proposition 1] used in our proof of Theorem~\ref{th-mfg-existence}
is applicable to multi-dimensional setups, it may be used to prove the existence of  the  equilibrium 
among the finite number of  (say, $N$) agents, by directly solving the coupled system of qg-BSDEs $\eqref{qg-BSDE-system}$ (with appropriate 
modifications of equations as well as the admissible spaces). 
However, this approach requires the assumption of  perfect information as well as the smallness of the {\bf{total}} size of liabilities among the agents; 
$\|\sqrt{\sum_{i=1}^N |F^i|^2}\|_{\infty}$. This means that the constraint on the liabilities becomes much more stringent as the population grows.
\end{rem}

\begin{rem}
Unfortunately, uniqueness of the market equilibrium is not proved in this chapter. We expect that 
we may have uniqueness when the conditions for Theorem 4.1 are satisfied, i.e. small 
enough $\|F\|_{\infty}$ for which the mean-field BSDE $\eqref{mfg-BSDE}$ has a unique solution in a specified domain. 
However, in this chapter, we have just tried a ``guess-and-check" strategy to find a candidate of equilibrium.  Therefore, although it seems unlikely, we cannot exclude the existence of a market-clearing equilibrium that is not characterized by our BSDE  $\eqref{mfg-BSDE}$. 
\end{rem}

\section{Conclusion and Discussions}
\label{sec-conclusion}
In this chapter, we studied a problem of equilibrium price formation among a large number of investors with exponential utility.
We allowed the agents to be heterogeneous in their initial wealth, parameter of risk aversion, as well as stochastic liability at the terminal time.
We showed that the equilibrium risk-premium process of risky stocks is characterized by the solution to  a novel mean-field BSDE, whose driver has quadratic growth both in the stochastic integrands 
and in their conditional expectations. We proved the existence of a solution to the mean-field BSDE under several conditions
and showed that the resultant risk-premium process actually clears the market in the large population limit.

Let us point out several directions for further research. Extending the proposed technique to 
the price formation with consumptions or different interactions among the agents is an important research topic.
Applications to macroeconomic models in the presence of production may also be possible. 
Finally, the novel mean-field BSDE $\eqref{mfg-BSDE-org}$ deserves further investigations to relax its
well-posedness conditions. The same type of BSDEs are expected to appear in various applications of 
MFGs and the martingale method~\cite{huUtilityMaximizationIncomplete2005a}.

%% file: part2.tex
\section{Preliminary} \label{Section 1}
Asset pricing theory plays a crucial role in financial economics as it investigates how asset prices are determined through market interactions. 
The fundamental objective of the theory is to establish the equilibrium price at which the supply of assets matches the demand. 
See, for example, Back \cite{backAssetPricingPortfolio2017} and Munk \cite{Munk} for details. Karatzas \& Shreve \cite{karatzas_methods_1998} also offers comprehensive descriptions of the equilibrium asset pricing in complete markets. 
The continuous-time stochastic equilibrium pricing problems in incomplete markets are being actively researched as there are still many open issues. In recent years, numerous research efforts have been devoted to showing the existence of equilibrium solutions in incomplete markets under various conditions. 
See, for example, Christensen \& Larsen \cite{ChristensenLarsen}, Cuoco \& He \cite{CuocoHe} and \v{Z}itkovi\'{c} \cite{Zitkovic} and references therein. 
Furthermore, we refer to Jarrow \cite{jarrow2018continuous} [Part III] for a well-integrated review on this subject.

    The mean field game theory, first introduced by Lasry \& Lions \cite{lasryMeanFieldGames2007} and Huang, Malhame \& Caines \cite{huangLargePopulationStochastic2006}, has emerged as a powerful framework for studying multi-agent games. Traditional approaches to such games usually result in intractable problems due to complex interactions among agents. 
The mean field game theory overcomes this challenge by replacing such problems with a stochastic control problem of a single representative agent and a fixed point problem. Lasry \& Lions \cite{lasryMeanFieldGames2007} and Huang, Malhame \& Caines \cite{huangLargePopulationStochastic2006} presented an analytic approach, in which they showed that the problem can be framed as two highly coupled nonlinear partial differential equations. 
Meanwhile, Carmona \& Delarue \cite{carmonaProbabilisticAnalysisMeanField2013, carmonaForwardBackwardStochastic2015} introduced the probabilistic approach to the mean field problem employing forward-backward stochastic differential equations (FBSDEs) of McKean-Vlasov type in lieu of a system of partial differential equations.
The solution of these mean field equations is known to provide an $\varepsilon$-Nash equilibrium of the original game with a large but finite number of agents. The probabilistic approach is extensively covered in two volumes of monographs Carmona \& Delarue \cite{carmonaProbabilisticTheoryMean2018,carmonaProbabilisticTheoryMean2018a}, offering thorough details and applications.
Furthermore, the mean field game theory has been applied to various studies in the field of financial economics. For instance, Fu, Su \& Zhou \cite{Fu2020MeanFE}, Fu \& Zhou \cite{fu_Mean_field_portfolio_games} and Fu \cite{fu_Mean_field_portfolio_games_consumption} propose stochastic games among multiple agents with exponential or power utility competing in a relative performance criterion. 
Bo, Wang \& Yu \cite{mfg-ext-habit} studies the equilibrium consumption problem under external habit formation with many agents.
These examples illustrate the relevance and usefulness of mean field game theory in tackling complex interactive problems, providing valuable insights in financial economics and related fields.

    In recent years, there have been an increasing number of studies on asset pricing theory adopting the mean field game approach. They aim to determine the equilibrium price process based on the optimal behavior of the market participants under the market-clearing condition. One notable area of interest has been the investigation of price formation in electricity markets.
Shrivats, Firoozi \& Jaimungal \cite{SREC} employs FBSDEs of McKean-Vlasov type to study a pricing model in Solar Renewable Energy Certificate (SREC) markets and Firoozi, Shrivats \& Jaimungal \cite{Firoozi_et_al} studies principal agent mean field games in REC markets. Gomes \& S\'{a}ude \cite{Gomes_MFG} develops a deterministic price formation model in which agents can both store and trade electricity. Gomes, Gutierrez \& Ribeiro \cite{Gomes_rand_MFG} extends this model by considering the randomness on the supply side and \cite{Gomes_MFG_noise} deals with a price formation of commodities with stochastic production.
In the realm of financial economics, Evangelista, Saporito \& Thamsten \cite{evangelista2022price} develops a mean field game theoretic model of asset pricing with consideration of liquidity issues. Fujii \& Takahashi \cite{fujiiMeanFieldGame2022, Fujii-Takahashi_strong} present a mean field pricing model for securities under stochastic order flows and \cite{fujii2022equilibrium} provide its extension with a major player.  Fujii \cite{Fujii-equilibrium-pricing}  develops a price formation model in which the market participants consist of two groups: cooperative and non-cooperative ones.
Moreover, Fujii \& Sekine \cite{fujiiMeanFieldEquilibriumPrice2023a} studies a mean field equilibrium pricing model in an incomplete market participated by heterogeneous agents with exponential utility, but without considering agents' consumption. 

    The main contribution of this chapter is an extension of the aforementioned work \cite{fujiiMeanFieldEquilibriumPrice2023a}. This chapter aims to further explore the equilibrium pricing model in an incomplete market with heterogeneous agents, taking the agents' consumption behavior and habit formation into account. The research on consumption habit formation has been a fundamental and classical subject in financial economics. 
The existence of habit formation relaxes the assumption of time-separable utility functions by making the utility dependent not only on the current level of consumption but also on the agent's accumulated stock of past consumption. 
Early studies include, for instance,  \cite{constantinides_habit_1990,detemple_asset_1991,Pollak1970HabitFA,RyderHeal}. Our model specifically incorporates heterogeneity among agents in various aspects, including their initial wealth, initial consumption habits, liabilities and coefficients of risk aversion.
In this chapter, we start from the utility maximization problem of a single agent, which draws inspiration from the work Hu, Imkeller \& M\"{u}ller \cite{huUtilityMaximizationIncomplete2005a}, and derive the relevant BSDE. After proving its well-posedness, we construct the market risk-premium process endogenously under the market clearing condition by introducing the mean field BSDE. As we have done in \cite{fujiiMeanFieldEquilibriumPrice2023a}, we prove its well-posedness using the method proposed by Tevzadze \cite{tevzadzeSolvabilityBackwardStochastic2008} with additional assumptions on the size of the parameters. We then verify that the risk-premium process, expressed by its solution, indeed clears the market in the large population limit. Another contribution of this chapter is to offer an exponential quadratic Gaussian (EQG) formulation of the model, in which a solution to the mean field BSDE can be characterized by a system of ordinary differential equations.
Since the EQG model provides a semi-analytic solution, it would allow detailed numerical studies in future work.

    This chapter consists of five sections. In Section \ref{Section 1}, after providing the introduction, we provide the notations for frequently used sets and spaces. 
In Section \ref{Section 2}, we offer a mathematical formulation of the financial market and solve the optimal consumption-investment problem for a single agent. 
In Section \ref{Section 3}, we derive a mean field BSDE whose driver has a quadratic growth in both stochastic integrand and its conditional expectation and prove that it has a bounded solution under additional assumptions. We also verify that its solution does characterize the financial market in equilibrium in the large population limit. 
Furthermore, in Section \ref{Section 4}, we introduce the EQG framework and prove each result corresponding to Section \ref{Section 3}.
We conclude the chapter with a brief summary and suggestions for possible extensions in Section \ref{Section 5}. 

\section{Optimal Consumption-Investment Problem for a Single Agent} \label{Section 2}
In this section, we investigate the optimal consumption-investment problem for a single agent (whom we shall call ``agent-1'' hereafter). 
We basically follow the same line of arguments as in Fujii \& Sekine \cite{fujiiMeanFieldEquilibriumPrice2023a} and adopt the technique developed by Hu, Imkeller \& M\"{u}ller \cite{huUtilityMaximizationIncomplete2005a}. 
In this work, however, we take an agent's consumption and habit formation into consideration. 
As we shall see, this extension requires a clever choice of supermartingale processes that are needed to verify the optimality.

\subsection{The Market and the Utility Function}

To formulate the optimization problem for agent-1, let us first introduce the relevant probability spaces.\\

\noindent
(1) We denote by $(\Omega^0,\mathcal{F}^0,\mathbb{P}^0)$ a complete probability space with complete and right-continuous filtration $\mathbb{F}^0:=(\mathcal{F}^0_t)_{t\in[0,T]}$ generated by a $d_0$-dimensional standard Brownian motion $W^0:=(W^0_t)_{t\in[0,T]}$ with $\mathcal{F}^0 := \mathcal{F}^0_T$. $(\Omega^0,\mathcal{F}^0,\mathbb{P}^0)$ is used to describe the randomness of the financial market. 
Moreover, we denote by $(\Omega^1,\mathcal{F}^1,\mathbb{P}^1)$ a complete probability space with complete and right-continuous filtration $\mathbb{F}^1:=(\mathcal{F}^1_t)_{t\in[0,T]}$ generated by a $d$-dimensional standard Brownian motion $W^1:=(W^1_t)_{t\in[0,T]}$ and a $\sigma$-algebra $\sigma(\xi^1,\gamma^1,\beta^1,X^1_0,F^1_0)$, where the completion of the latter gives $\mathcal{F}^1_0$. We set $\mathcal{F}^1 := \mathcal{F}^1_T$.
Here, $\xi^1$, $X^1_0$ and $F^1_0$ are $\mathbb{R}$-valued, bounded random variables and $\gamma^1$ and $\beta^1$ are $\mathbb{R}_{++}$-valued bounded random variables. $(\Omega^1,\mathcal{F}^1,\mathbb{P}^1)$ is used to describe the idiosyncratic environment for agent-1.\\
 
\noindent
(2) We denote by $(\Omega^{0,1},\mathcal{F}^{0,1},\mathbb{P}^{0,1})$ a complete probability space over $\Omega^{0,1} := \Omega^0 \times \Omega^1$. 
Here, $(\mathcal{F}^{0,1},\mathbb{P}^{0,1})$ is the completion of $(\mathcal{F}^0 \otimes \mathcal{F}^1,\mathbb{P}^{0}\otimes \mathbb{P}^{1})$ and $\mathbb{F}^{0,1}:=(\mathcal{F}^{0,1}_t)_{t\in[0,T]}$ denotes the complete and right continuous augmentation of $(\mathcal{F}_t^0 \otimes \mathcal{F}_t^1)_{t\in[0,T]}$.\\ 

We set $\mathcal{T}^{0,1}:=\mathcal{T}(\mathbb{F}^{0,1})$ and $\mathcal{T}^{0}:=\mathcal{T}(\mathbb{F}^{0})$ for notational simplicity. The market dynamics and the idiosyncratic environment of agent-1 are modelled on the filtered probability space $(\Omega^{0,1},\mathcal{F}^{0,1},\mathbb{P}^{0,1},\mathbb{F}^{0,1})$. 
Whenever we introduce random variables on a marginal probability space, we identify them with their natural extension to the product space. For example, we use the same symbol $X$ for a random variable $X(\omega^0)$ defined on $(\Omega^0,\mathcal{F}^0,\mathbb{P}^0)$ and its natural extension $X(\omega^0,\omega^1):=X(\omega^0)$ defined on $(\Omega^{0,1},\mathcal{F}^{0,1},\mathbb{P}^{0,1})$.
In this section, we write $\mathbb{E}[\cdot]$ instead of $\mathbb{E}^{\mathbb{P}^{0,1}}[\cdot]$ unless otherwise stated. \par
We now introduce the market dynamics and its properties in the following assumption.
\begin{asm}~\\
    \label{asm1-habit}
    \textup{(i)} The risk-free interest rate is zero.\\
    \textup{(ii)} There are $n\in\mathbb{N}$ non-dividend paying risky stocks whose price dynamics, represented by an $n$-dimensional vector, is given by
    \begin{equation}
        \begin{split}
            \label{stock price}
            S_t&= S_0 + \int_0^t \mathrm{diag}(S_r)(\mu_rdr + \sigma_r dW^0_r),~~~t\in[0,T],
        \end{split}
    \end{equation}
    where $S_0\in\mathbb{R}^n_{++}$, $\mu := (\mu_t)_{t\in[0,T]}\in\mathbb{H}^2_{\mathrm{BMO}}(\mathbb{P}^{0},\mathbb{F}^0,\mathbb{R}^n)$ and $\sigma :=(\sigma_t)_{t\in[0,T]}\in\mathbb{L}^\infty(\mathbb{P}^0,\mathbb{F}^0,\mathbb{R}^{n\times d_0})$. $S_0$ is an $n$-dimensional vector representing the initial stock prices. Moreover, we assume that the process $\sigma$ is of full rank and satisfies
    \[
        \underline{\lambda}I_n\leq \sigma_t\sigma_t^\top \leq\overline{\lambda}I_n,~~~~dt\otimes \mathbb{P}^0\text{-}\mathrm{a.e.}
    \]
    for some positive constants $0<\underline{\lambda}<\overline{\lambda}$ and an identity matrix of size $n$, denoted by $I_n$. We set $n\leq d_0$ so that the financial market is incomplete in general.
\end{asm}

Under this assumption, the process $(\sigma_t\sigma_t^\top)_{t\in[0,T]}$ is regular and the risk-premium process $\theta:=(\theta_t)_{t\in[0,T]}$ is defined by $\theta_t = \sigma_t^\top(\sigma_t\sigma_t^\top)^{-1}\mu_t\in\mathbb{H}^2_{\mathrm{BMO}}(\mathbb{P}^{0},\mathbb{F}^0,\mathbb{R}^{d_0})$. Note that $\theta_t\in\mathrm{Range}(\sigma_t^\top)=\mathrm{Ker}(\sigma_t)^\perp$. It is worth mentioning that by having $\theta\in\mathbb{H}^2_{\mathrm{BMO}}$, we can change the probability measure $\mathbb{P}^0$ to the risk-neutral measure $\mathbb{Q}$, which is defined by 
\begin{equation}
    \label{risk-neutral-habit}
    \Bigl.\frac{d\mathbb{Q}}{d\mathbb{P}^{0,1}}\Bigr|_{\mathcal{F}_t} = \mathcal{E}\Bigl(-\int_0^\cdot \theta_s^\top dW^0_s \Bigr)_t,~~t\in[0,T].
\end{equation}
This ensures the well-posedness of the stock price process \eqref{stock price}, even though $\mu$ is unbounded (See Kazamaki \cite{kazamaki_sufficient_1979}).

\begin{dfn}
    \label{subspace-L}
    For each $s\in[0,T]$, let us denote by $L_s := \{u^\top\sigma_s ; u\in\mathbb{R}^n\}$ the linear subspace of $\mathbb{R}^{1\times d_0}$ spanned by the $n$ row vectors of $\sigma_s$. Furthermore, we define a map $\Pi_s:\mathbb{R}^{1\times d_0}\to L_s$ as an orthogonal projection onto $L_s$.
\end{dfn}

By its construction, we have $\theta_s^\top\in L_s$ for every $s\in[0,T]$.
\begin{rem}
    For notational convenience, we shall write
    \[
        Z^\|_s := \Pi_s(Z_s) ,~~~Z^\perp_s := Z_s-\Pi_s(Z_s),~~~s\in[0,T]
    \]
    for an $\mathbb{R}^{1\times d_0}$-valued progressively measurable process $Z$. Note that the process $(Z_s^\|)_{s\in[0,T]}$ is also progressively measurable by Karatzas \& Shreve \cite{karatzas_methods_1998} [Lemma 4.4].
\end{rem}

We shall model the idiosyncratic environment of agent-1 through a 5-tuple $(\xi^1,\gamma^1,\beta^1,X^1_0,F^1)$.
\begin{asm}~\\
    \label{asm2-habit}
    \textup{(i)} $\xi^1$ is an $\mathbb{R}$-valued, bounded, and $\mathcal{F}^1_0$-measurable random variable representing the initial wealth of agent-1.\\
    \textup{(ii)} $\gamma^1$ is an $\mathbb{R}$-valued, bounded, and $\mathcal{F}^1_0$-measurable random variable satisfying $\underline{\gamma}\leq\gamma^1\leq\overline{\gamma}$ with some positive constants $0<\underline{\gamma}\leq\overline{\gamma}$. $\gamma^1$ is the coefficient of absolute risk aversion of agent-1 with respect to his/her net wealth. \\
    \textup{(iii)} $\beta^1$ is an $\mathbb{R}$-valued, bounded, and $\mathcal{F}^1_0$-measurable random variable satisfying $\underline{\beta}\leq\beta^1\leq\overline{\beta}$ with some positive constants $0<\underline{\beta}\leq\overline{\beta}$. $\beta^1$ is the coefficient of absolute risk aversion of agent-1 with respect to his/her consumption level. \\
    \textup{(iv)} $X^1_0$ is an $\mathbb{R}$-valued, bounded, and $\mathcal{F}^1_0$-measurable random variable representing agent-1's initial stock of habits. \\
    \textup{(v)} $F^1:=(F^1_t)_{t\in[0,T]}$ is an $\mathbb{R}$-valued, bounded, and $\mathbb{F}^{0,1}$-progressively measurable process. For each $t\in[0,T]$, $F^1_t$ represents the amount of the liability at time $t$ of agent-1.\\
    \textup{(vi)} $\rho:=(\rho_t)_{t\in[0,T]}$ is an $\mathbb{R}$-valued, bounded, and $\mathbb{F}^{0}$-progressively measurable process. The process $\rho$ represents the habit trend influenced by the market shocks. \\
    \textup{(vii)} Agent-1 is a price taker; agent-1 must accept the prevailing prices as he/she lacks the market share to impact the market price.
\end{asm}

The trading and consumption strategies of agent-1 are denoted by $(\pi,c)$, where $\pi:=(\pi_t)_{t\in[0,T]}$ is an $\mathbb{R}^n$-valued, $\mathbb{F}^{0,1}$-progressively measurable process representing the amount of money invested in $n$ stocks and $c:=(c_t)_{t\in[0,T]}$ is an $\mathbb{R}$-valued, 
$\mathbb{F}^{0,1}$-progressively measurable process representing agent-1's (nominal) consumption\footnote{We do not forbid the process $c$ having negative values in order to make the analysis simple. The negative $c$ may be interpreted as, for example, ``net consumption'', i.e. consumption minus labour income. Moreover, this model does not consider the production side or the goods market equilibrium, unlike typical macroeconomic models.} process. 
The wealth process of agent-1 with strategy $(\pi,c)$ is then given by
\[
    \mathcal{W}^{1,(\pi,c)}_t = \xi^1 + \sum_{j=1}^n \int_0^t \frac{\pi_{j,s}}{S^j_s}dS^j_s - \int_0^t c_sds =\xi^1 + \int_0^t (\pi_s^\top\sigma_s\theta_s - c_s)ds + \int_0^t \pi_s^\top\sigma_s dW_s^0.
\]

We now formulate the utility maximization problem of agent-1 as follows: agent-1 solves
\[
    \sup_{(\pi,c)\in\mathbb{A}^1} U^1(\pi,c)
\]
subject to
\[
    \mathcal{W}^{1,(\pi,c)}_t =\xi^1 + \int_0^t (\pi_s^\top\sigma_s\theta_s - c_s)ds + \int_0^t \pi_s^\top\sigma_s dW_s^0,~~~t\in[0,T],
\]
where $\mathbb{A}^1$ is a set of admissible strategies for agent-1, whose definition is to be given and $U^1:\mathbb{A}^1\to\mathbb{R}$ is the utility function defined by
\begin{equation}
    \begin{split}
        \label{utility}
        &U^1(\pi,c) \\
        &:= 
        \mathbb{E}\Bigl[-\exp\Bigl(-\delta T-\gamma^1(\mathcal{W}^{1,(\pi,c)}_T-F^1_T)\Bigr) -a \int_0^T \exp\Bigl(-\delta t-\gamma^1(\mathcal{W}^{1,(\pi,c)}_t-F^1_t)-\beta^1(c_t-X_t^{1,c})\Bigr)dt\Bigr]
    \end{split}
\end{equation}
for some constants $a,\delta>0$ representing the weight of the running utility with respect to the terminal utility and the discount rate, respectively. Here, $X^{1,c}$ represents the agent-1's consumption habits defined by a mean-reverting process
\begin{equation}
    \begin{split}
        \label{habit}   
        X_t^{1,c} = X^1_0 + \int_0^t \{-\kappa(X^{1,c}_s-\rho_s) + b(c_s-\rho_s)\}ds,~~~t\in[0,T]
    \end{split}
\end{equation}
for some constants $b, \kappa>0$. By a simple calculation, we can write it in an explicit form as
\[
    X_t^{1,c} = e^{-\kappa t}X^1_0 + \int_0^t e^{-\kappa(t-s)} \{b c_s + (\kappa-b)\rho_s\}ds,~~~t\in[0,T].
\]

\begin{rem}
    The economic interpretation of the habit process $X^{1,c}$ and the utility function $U^1$ is as follows.\\
    \textup{(i)} The consumption habit $X^{1,c}_t$ is determined by the accumulation of past private consumption $(c_s)_{s\in[0,t]}$ and the given consumption trend $(\rho_s)_{s\in[0,t]}$ in the market. In particular, a higher level of past consumption increases the agent's current habit by having $b>0$. The size of the parameter $\kappa>0$ determines the rate at which the consumption habit decays. \\
    \textup{(ii)} The term $c-X^{1,c}$ in the running utility means that the agent evaluates the current consumption level relative to his/her habit. Importantly, the agent's preference is no longer time-separable as the past consumption level has an effect on the current consumption choice. \\
    \textup{(iii)}  The amount of net asset $\mathcal{W}^{1,(\pi,c)}-F^1$ enters both in the agent-1's running and terminal preferences. Note that the low portfolio performance $\mathcal{W}^{1,(\pi,c)}_t-F^1_t<0$ is heavily punished whereas the high performance $\mathcal{W}^{1,(\pi,c)}_t-F^1_t>0$ is only weakly valued in this type of utility function.
\end{rem}

The admissible strategy for agent-1 is defined as follows.
\begin{dfn} (Admissible space for agent-1)\\
    \label{Admissible space for agent-1}
        The admissible space $\mathbb{A}^1$ is the set of trading and consumption strategies $(\pi,c)\in \mathbb{H}^2(\mathbb{P}^{0,1},\mathbb{F}^{0,1},\mathbb{R}^{n})\times \mathbb{H}^2(\mathbb{P}^{0,1},\mathbb{F}^{0,1},\mathbb{R})$ such that a family
        \[
            \Bigl\{\exp\Bigl(-\gamma^1\mathcal{W}^{1,(\pi,c)}_\tau + \beta^1 |c_\tau| + K^1 |X_\tau^{1,c}|  \Bigr) ; \tau\in\mathcal{T}^{0,1}\Bigr\}
        \]
        is uniformly integrable for some $K^1>\gamma^1(A_1+\sqrt{A_1^2+B_1})^{-1} \lor \beta^1 $ where
        \[
            A_1:=\frac{1}{2}\Bigl(\kappa-b+\frac{\gamma^1}{\beta^1}\Bigr),~~~B_1:=\frac{\gamma^1 b}{\beta^1}.
        \]
        Moreover, we define $\mathcal{A}^1:=\{(p,c)=(\pi^\top\sigma,c) ; (\pi,c)\in\mathbb{A}^1\}$.
\end{dfn}
By writing $p_s:=\pi_s^\top\sigma_s$ for each $s\in[0,T]$, the utility maximization problem can be equivalently written as
\begin{equation}
    \begin{split}
        \label{p-utility}
        \sup_{(p,c)\in\mathcal{A}^1} \widetilde{U}^1(p,c)
    \end{split}
\end{equation}
subject to
\[
    \mathcal{W}^{1,(p,c)}_t =\xi^1 + \int_0^t (p_s\theta_s - c_s)ds + \int_0^t p_s dW_s^0,
\]
where $\widetilde{U}^1:\mathcal{A}^1\to\mathbb{R}$ is defined by
\begin{equation}
    \begin{split}
    &\widetilde{U}^1(p,c)\\
    &:=\mathbb{E}\Bigl[-\exp\Bigl(-\delta T-\gamma^1(\mathcal{W}^{1,(p,c)}_T-F^1_T)\Bigr) -a \int_0^T \exp\Bigl(-\delta t-\gamma^1(\mathcal{W}^{1,(p,c)}_t-F^1_t)-\beta^1(c_t-X_t^{1,c})\Bigr)dt\Bigr].
    \end{split}
\end{equation}
Note further that for each $s\in[0,T]$ and $(p,c)\in\mathcal{A}^1$, we have $p_s\in L_s$. 

\begin{rem}
    If $(p,c)\in\mathcal{A}^1$, we have $\widetilde{U}^1(p,c)>-\infty$. Indeed, for any $(p,c)\in\mathcal{A}^1$, the uniform integrability implies
    \[
        \sup_{t\in[0,T]}\mathbb{E}\Bigl[\exp\Bigl(-\gamma^1\mathcal{W}^{1,(p,c)}_t - \beta^1 (c_t -  X_t^{1,c}) \Bigr)\Bigr] < \infty.
    \]
    We can also see that the family
    \[
        \Bigl\{\int_0^\tau \exp \Bigl(-\gamma^1\mathcal{W}^{1,(p,c)}_s - \beta^1 (c_s -  X_s^{1,c})\Bigr)ds;\tau\in\mathcal{T}^{0,1} \Bigr\}
    \]
    is uniformly integrable.
\end{rem}

\subsection{Optimization}
Based on Hu, Imkeller \& M\"{u}ller \cite{huUtilityMaximizationIncomplete2005a}, we derive a BSDE which characterizes the optimality. To begin with, we consider a family of stochastic processes satisfying the following conditions.  
\begin{dfn} (Condition-R)\\
    \label{condition-R}
    A family of stochastic processes $\Bigl\{R^{1,(p,c)}:=(R^{1,(p,c)}_t)_{t\in[0,T]}; (p,c)\in\mathcal{A}^1\Bigr\}\subset\mathbb{L}^0(\mathbb{F}^{0,1},\mathbb{R})$ is said to satisfy the condition-R if the following properties are met.\\
    \textup{(i)} For all $(p,c)\in\mathcal{A}^1$, $R^{1,(p,c)}$ satisfies
    \begin{equation}
    \begin{split}
        R^{1,(p,c)}_T=& -\exp\Bigl(-\delta T-\gamma^1(\mathcal{W}^{1,(p,c)}_T-F^1_T)\Bigr) \\
        &-a \int_0^T \exp\Bigl(-\delta t-\gamma^1(\mathcal{W}^{1,(p,c)}_t-F^1_t)-\beta^1(c_t-X_t^{1,c})\Bigr)dt,~~\mathbb{P}^{0,1}\text{-}\mathrm{a.s.}
        \end{split}
    \end{equation}
    \textup{(ii)} There exists some $\mathcal{F}^{0,1}_0$-measurable random variable $R^1_0$ such that the equality $R^{1,(p,c)}_0 = R^1_0$ holds $\mathbb{P}^{0,1}$-almost surely for all $(p,c)\in\mathcal{A}^1$.\\
    \textup{(iii)} $R^{1,(p,c)}$ is an $(\mathbb{F}^{0,1},\mathbb{P}^{0,1})$-supermartingale for all $(p,c)\in\mathcal{A}^1$ and there exists some $(p^*,c^*)\in\mathcal{A}^1$ such that $R^{1,(p^*,c^*)}$ is an $(\mathbb{F}^{0,1},\mathbb{P}^{0,1})$-martingale.
\end{dfn}
Once such a family is identified, we have, for all $(p,c)\in\mathcal{A}^1$,
\[
    \widetilde{U}^1(p,c) = \mathbb{E}[R^{1,(p,c)}_T]\leq\mathbb{E}[R^1_0] = \mathbb{E}[R^{1,(p^*,c^*)}_T] =\widetilde{U}^1(p^*,c^*),
\]
which indicates that $(p^*,c^*)$ is an optimal strategy for agent-1. To find an appropriate family of processes $\{R^{1,(p,c)}\}$, we suppose that each $R^{1,(p,c)}$ has the following form: for $t\in[0,T]$,
\begin{equation}
    \begin{split}
        \label{ansatz}
        R^{1,(p,c)}_t 
        = &-\exp\Bigl(-\delta t-\gamma^1(\mathcal{W}^{1,(p,c)}_t-Y^1_t-\zeta^1_tX_t^{1,c})\Bigr)\\
        & -a \int_0^t \exp\Bigl(-\delta s-\gamma^1(\mathcal{W}^{1,(p,c)}_s-F^1_s)-\beta^1(c_s-X_s^{1,c})\Bigr)ds.~~~
    \end{split}
\end{equation}
Here, $\zeta^1$ is an $\mathcal{F}^{1}_0$-measurable and continuously differentiable process with $\zeta^1_T=0$ satisfying the ordinary differential equation (ODE) specified later. $Y^1$ is a solution to the following BSDE whose driver $f^1$ is to be determined:
\begin{equation}
    \begin{split}
       Y^1_t = F^1_T + \int_t^T f^1(s,Y^1_s,Z^{1,0}_s,Z^1_s) ds -\int_t^T Z^{1,0}_s dW^0_s  -\int_t^T Z^1_s dW^1_s ,~~~t\in[0,T].
    \end{split}
\end{equation}

For notational simplicity, we may suppress the superscript ``1" when obvious. By Ito formula,
\begin{equation*}
    \begin{split}
        dR^{(p,c)}_t 
        &=
        -\exp\Bigl(-\delta t-\gamma(\mathcal{W}^{(p,c)}_t-Y_t-\zeta_tX^c_t)\Bigr)\Bigl\{-\delta dt -\gamma d(\mathcal{W}^{(p,c)}_t-Y_t) + \frac{\gamma^2}{2}d\langle\mathcal{W}^{(p,c)}-Y\rangle_t\\
        &~~~~~~+\gamma \dot\zeta_t X^c_tdt +\gamma\zeta_tdX^c_t + a \exp\Bigl(-\gamma(Y_t-F_t+\zeta_tX^c_t)-\beta(c_t-X^c_t)\Bigr)dt \Bigr\}\\
        &=
        -\exp\Bigl(-\delta t-\gamma(\mathcal{W}^{(p,c)}_t-Y_t-\zeta_tX^c_t)\Bigr)\Bigl\{-\delta  -\gamma (p_t\theta_t-c_t+f(t,Y_t,Z^0_t,Z^1_t))\\
        &~~~+ \frac{\gamma^2}{2}(|p_t-Z^0_t|^2+|Z^1_t|^2) +\gamma (\dot\zeta_t-\kappa\zeta_t) X^c_t +\gamma\zeta_tb c_t +\gamma\zeta_t\rho_t(\kappa-b)\\
        &~~~+ a \exp\Bigl(-\gamma(Y_t-F_t+\zeta_tX^c_t)-\beta(c_t-X^c_t)\Bigr)\Bigr\}dt\\
        &~~~+\gamma \exp\Bigl(-\delta t-\gamma(\mathcal{W}^{(p,c)}_t-Y_t-\zeta_tX^c_t)\Bigr)\Bigl((p_t-Z^0_t)dW^0_t-Z^1_t dW^1_t\Bigr),
    \end{split}
\end{equation*}
where $\dot\zeta_t:=\dfrac{d}{dt}\zeta_t$. In order to make $R^{(p,c)}$ a supermartingale for all $(p,c)\in\mathcal{A}^1$, we need
\begin{equation*}
    \begin{split}
    f(t,Y_t,Z^0_t,Z^1_t)
    \leq 
    &-\frac{\delta}{\gamma}  -(p_t\theta_t-c_t) + \frac{\gamma}{2}(|p_t-Z^0_t|^2+|Z^1_t|^2) + (\dot\zeta_t-\kappa\zeta_t) X^c_t\\
    &~~~+\zeta_tb c_t +\zeta_t\rho_t(\kappa-b) + \frac{a}{\gamma} \exp\Bigl(-\gamma(Y_t-F_t+\zeta_tX^c_t)-\beta(c_t-X^c_t)\Bigr).
    \end{split}
\end{equation*}
Moreover, $R^{(p,c)}$ is a true martingale for some $(p^*,c^*)$ only if
\begin{equation*}
    \begin{split}
    f(t,Y_t,Z^0_t,Z^1_t)
    =
    &-\frac{\delta}{\gamma}  -(p^*_t\theta_t-c^*_t) + \frac{\gamma}{2}(|p^*_t-Z^0_t|^2+|Z^1_t|^2) + (\dot\zeta_t-\kappa\zeta_t) X^{c^*}_t\\
    &~~~ +\zeta_tb c^*_t +\zeta_t\rho_t(\kappa-b) + \frac{a}{\gamma} \exp\Bigl(-\gamma(Y_t-F_t+\zeta_tX^{c^*}_t)-\beta(c^*_t-X^{c^*}_t)\Bigr).
    \end{split}
\end{equation*}
Combining these observations, we deduce that
\begin{equation}
    \begin{split}
        \label{driver-f}
    &f(t,Y_t,Z^0_t,Z^1_t)\\
    &=
    -\frac{\delta}{\gamma} + (\dot\zeta_t-\kappa\zeta_t) X^c_t  +\zeta_t\rho_t(\kappa-b) + \inf_{p\in L_t}\Bigl\{ -p\theta_t + \frac{\gamma}{2}(|p-Z^0_t|^2+|Z^1_t|^2)\Bigr\}\\
    &~~~+ \inf_{c\in\mathbb{R}} \Bigl\{ (1+\zeta_tb) c  + \frac{a}{\gamma} \exp\Bigl(-\gamma(Y_t-F_t+\zeta_tX^c_t)-\beta(c-X^c_t)\Bigr)\Bigr\}.
    \end{split}
\end{equation}
Assuming $1+b\zeta_t>0$ for all $t\in[0,T]$ temporarily, the candidate for the optimal strategy reads: for $t\in[0,T]$,
\begin{equation}
    \begin{split}
        \label{optimal}
        p^*_t &= Z^{0\|}_t + \frac{\theta^\top_t}{\gamma},\\
        c^*_t &= X^{c^*}_t + \frac{1}{\beta}\Bigl\{\log\Bigl(\frac{a\beta}{\gamma(1+b\zeta_t)}\Bigr)-\gamma(Y_t-F_t+\zeta_tX^{c^*}_t)\Bigr\},
    \end{split}
  \end{equation}
whose admissibility, namely $(p^*,c^*)\in\mathcal{A}^1$, needs to be verified later. Now we obtain
\begin{equation*}
    \begin{split}
    f(t,Y_t,Z^0_t,Z^1_t)
    =&
    -Z^{0\|}_t\theta_t - \frac{|\theta_t|^2}{2\gamma} + \frac{\gamma}{2}(|Z^{0\perp}_t|^2 + |Z^1_t|^2) - \frac{\delta}{\gamma} + (\kappa-b)\zeta_t\rho_t \\
    &+ \frac{1 + b\zeta_t}{\beta}\Bigl\{1 + \log\Bigl(\frac{a\beta}{\gamma(1 + b\zeta_t)}\Bigr)+\gamma(F_t-Y_t)\Bigr\}\\
    &+X^c_t\Bigl\{\frac{1}{\beta}(1+b\zeta_t)(\beta-\gamma\zeta_t)+(\dot\zeta_t-\kappa\zeta_t)\Bigr\}.
    \end{split}
\end{equation*}

In order to make $R^{(p,c)}$ satisfy (ii) of the condition-R, we need
\[
    \frac{1}{\beta}(1+b\zeta_t)(\beta-\gamma\zeta_t)+(\dot\zeta_t-\kappa\zeta_t)=0
\]
for every $t\in[0,T]$ so that the process $Y$ is independent of $c$. To be specific, it is necessary to solve the following ordinary differential equation of Riccati type:
\begin{equation}
    \begin{split}
        \label{zeta-ODE}
        &\dot\zeta_t = \Bigl(\kappa-b+\frac{\gamma}{\beta}\Bigr)\zeta_t + \frac{\gamma b}{\beta}\zeta_t^2-1, ~~t\in[0,T],\\
        &\zeta_T = 0.
    \end{split}
\end{equation}
This is actually explicitly solvable (See, for example, Carmona \& Delarue \cite{carmonaProbabilisticTheoryMean2018} [Equation (2.50)]) as
\begin{equation}
    \label{zeta}
        \zeta_t = \frac{e^{(\delta^+-\delta^-)(T-t)}-1}{\delta^+-\delta^-e^{(\delta^+-\delta^-)(T-t)}},~~~t\in[0,T],
\end{equation}
where
\[
    \delta^{\pm}:=-A\pm\sqrt{A^2+B},~~~A:=\frac{1}{2}\Bigl(\kappa-b+\frac{\gamma}{\beta}\Bigr),~~~B:=\frac{\gamma b}{\beta}.
\]
Note that $\zeta$ satisfies 
\[
    0\leq \zeta_t\leq \frac{1}{\delta^+} e^{(\delta^+-\delta^-)T}\land \frac{1}{|\delta^-|},
\]
and in particular, $1+b\zeta_t>0$ for all $t\in[0,T]$. 

Consequently, we have derived a BSDE for the optimality:
\begin{equation}
    \begin{split}
        \label{qgBSDE}
        Y^1_t&= F^1_T + \int_t^T f^1(s,Y^1_s,Z^{1,0}_s,Z^1_s)ds - \int_t^T Z^{1,0}_s dW^0_s - \int_t^T Z^1_s dW^1_s ,~~~t\in[0,T]
    \end{split}
  \end{equation}
with
\begin{equation*}
    \begin{split}
        f^1(s,Y^1_s,Z^{1,0}_s,Z^1_s) = -Z^{1,0\|}_s\theta_s - \frac{|\theta_s|^2}{2\gamma^1} + \frac{\gamma^1}{2}(|Z^{1,0\perp}_s|^2 + |Z^1_s|^2) -\frac{\gamma^1(1+b\zeta^1_s)}{\beta^1}Y^1_s + g^1_s,
    \end{split}
  \end{equation*}
  where
  \[
      g^1_s := - \frac{\delta}{\gamma^1} + (\kappa-b)\zeta^1_s\rho_s + \frac{1 + b\zeta^1_s}{\beta^1}\Bigl\{1 + \log\Bigl(\frac{a\beta^1}{\gamma^1(1 + b\zeta^1_s)}\Bigr)+\gamma^1 F^1_s\Bigr\}.
  \]

\subsection{Well-Posedness and Verification}
We now study the well-posedness of \eqref{qgBSDE}. Let us begin with the \textit{a priori} estimation.
\begin{lem}
    \label{a-priori1}
    Let Assumptions \ref{asm1-habit} and \ref{asm2-habit} be in force. If the BSDE \eqref{qgBSDE} has a bounded solution $(Y,Z^{1,0},Z^1)\in\mathbb{S}^\infty(\mathbb{P}^{0,1},\mathbb{F}^{0,1},\mathbb{R})\times\mathbb{H}^2(\mathbb{P}^{0,1},\mathbb{F}^{0,1},\mathbb{R}^{1\times d_0})\times\mathbb{H}^2(\mathbb{P}^{0,1},\mathbb{F}^{0,1},\mathbb{R}^{1\times d})$, then $(Z^{1,0},Z^1)\in\mathbb{H}^2_{\mathrm{BMO}}(\mathbb{P}^{0,1},\mathbb{F}^{0,1},\mathbb{R}^{1\times d_0})\times\mathbb{H}^2_{\mathrm{BMO}}(\mathbb{P}^{0,1},\mathbb{F}^{0,1},\mathbb{R}^{1\times d})$ and such a solution is unique.
\end{lem}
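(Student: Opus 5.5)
The proof has two parts: first the BMO property of $(Z^{1,0},Z^1)$, then uniqueness. Both follow Lemma~\ref{lemma-Z-norm} and Theorem~\ref{th-uniqueness} of the previous chapter. The only new ingredients are the linear term $-\frac{\gamma(1+b\zeta_s)}{\beta}Y_s$ and the additive process $g_s$. Both are harmless: $\zeta$ is deterministic given $\mathcal{F}^1_0$ and bounded with $0\le\zeta\le C$; $\gamma,\beta$ are bounded away from $0$ and $\infty$; $\rho$ and $F$ are bounded. Together with $Y\in\mathbb{S}^\infty$ this gives $|-\frac{\gamma(1+b\zeta)}{\beta}Y+g|\le C_0$ for a constant $C_0$, since the logarithm in $g$ is bounded because $1+b\zeta$ lies in a compact subset of $(0,\infty)$.

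For the BMO estimate I will apply Ito's formula to $e^{2\gamma Y_t}$, exactly as in Lemma~\ref{lemma-Z-norm}. The drift is
\[
e^{2\gamma Y_t}\bigl(-2\gamma f+2\gamma^2(|Z^0_t|^2+|Z^1_t|^2)\bigr).
\]
Using $2\gamma Z^{0\|}\theta\ge -\gamma^2|Z^{0\|}|^2-|\theta|^2$, the term $+|\theta|^2$ coming from $-2\gamma(-|\theta|^2/(2\gamma))$ cancels the $-|\theta|^2$. The quadratic terms then give at least $\gamma^2(|Z^0|^2+|Z^1|^2)$, and what remains is $-2\gamma C_0$. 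Integrating from $\tau$ to $T$ and taking $\mathcal{F}^{0,1}_\tau$-conditional expectations gives
\[
\mathbb{E}\Bigl[\int_\tau^T(|Z^0_s|^2+|Z^1_s|^2)ds\big|\mathcal{F}^{0,1}_\tau\Bigr]\le \underline{\gamma}^{-2}e^{4\overline{\gamma}\|Y\|_{\mathbb{S}^\infty}}\bigl(1+2\overline{\gamma}C_0T\bigr).
\]
The stochastic integral must be localized first. This is standard: take $\tau_n$ and use the bound for $Y$ together with monotone convergence. The result is $Z\in\mathbb{H}^2_{\rm BMO}$.

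For uniqueness, take two bounded solutions and write $\Delta Y=Y-\acute Y$, $\Delta Z^0=Z^0-\acute Z^0$, $\Delta Z^1=Z^1-\acute Z^1$. The quadratic differences are linearized as in Theorem~\ref{th-uniqueness}, using orthogonality of the $\|$ and $\perp$ components and $Z^{0\perp}\theta=0$. This gives
\[
d\Delta Y_t=\frac{\gamma(1+b\zeta_t)}{\beta}\Delta Y_tdt+\Delta Z^0_td\widetilde W^0_t+\Delta Z^1_td\widetilde W^1_t
\]
under the measure $\widetilde{\mathbb{P}}$ with density $\mathcal{E}\bigl(-\int(\theta^\top-\frac{\gamma}{2}(Z^{0\perp}+\acute Z^{0\perp}))dW^0+\int\frac{\gamma}{2}(Z^1+\acute Z^1)dW^1\bigr)$. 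This is a uniformly integrable martingale by the first part and Kazamaki's criterion. Next multiply by the bounded integrating factor $e^{-\int_0^t\gamma(1+b\zeta_s)/\beta\,ds}$. The product $e^{-\int_0^t\cdots}\Delta Y_t$ is a $\widetilde{\mathbb P}$-local martingale with terminal value $0$. It is bounded because $\Delta Y\in\mathbb S^\infty$, hence a true martingale, so $\Delta Y\equiv0$. Then $\Delta Z=0$ follows from the quadratic variation, or from the reverse Hölder plus energy inequality argument of Theorem~\ref{th-uniqueness}.

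The main obstacle, though mild, is the bookkeeping in the BMO estimate: the cross term $Z^{0\|}\theta$ must be absorbed into the quadratic terms without producing an uncontrolled $|\theta|^2$. As computed above, this works exactly because of the $-|\theta|^2/(2\gamma)$ term, just as in the previous chapter. Everything else follows from the boundedness of $\zeta$, $g$ and $Y$.
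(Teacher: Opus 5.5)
Your proposal is correct and follows the paper's proof. The BMO bound comes from It\^o's formula for $e^{2\gamma Y}$, with the cross term $Z^{0\|}\theta$ absorbed against $-|\theta|^2/(2\gamma)$, and uniqueness comes from the same Girsanov change of measure, which reduces the difference of two solutions to a linear BSDE. The only variation is the last step: the paper cites uniqueness for Lipschitz BSDEs under $\widetilde{\mathbb{P}}$, whereas you use the integrating factor $e^{-\int_0^t\gamma(1+b\zeta_s)/\beta\,ds}$ and the fact that a bounded local martingale with zero terminal value vanishes. Your version is slightly more self-contained, since it never requires checking $\Delta Z\in\mathbb{H}^2(\widetilde{\mathbb{P}})$.
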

\noindent
\textbf{\textit{Proof}}\\
In the proof, we may omit the superscript ``1'' when it is obvious for notational simplicity. First of all, we have
\begin{equation*}
    \begin{split}
     f(s,Y_s,Z^0_s,Z^1_s) 
     &= 
     -Z^{0\|}_s\theta_s - \frac{|\theta_s|^2}{2\gamma} + \frac{\gamma}{2}(|Z^{0\perp}_s|^2 + |Z^1_s|^2) -\frac{\gamma(1+b\zeta_s)}{\beta}Y_s + g_s \\
     &\leq
     \frac{\gamma}{2}(|Z^{0}_s|^2 + |Z^1_s|^2) + C(\|Y\|_{\mathbb{S}^\infty} + \|g\|_{\mathbb{L}^\infty}).
    \end{split}
\end{equation*}
Then, by Ito formula,
\begin{equation*}
    \begin{split}
     de^{2\gamma Y_t}
     &=
     2\gamma e^{2\gamma Y_t}( dY_t +\gamma d\langle Y \rangle_t)\\
     &=
     2\gamma e^{2\gamma Y_t}\Bigl\{(- f(t,Y_t,Z^0_t,Z^1_t) + \gamma|Z^{0}_t|^2 + \gamma|Z^1_t|^2)dt + Z^0_t dW^0_t + Z^1_t dW^1_t\Bigr\}.
    \end{split}
\end{equation*}
Hence,
\begin{equation*}
    \begin{split}
     &e^{2\gamma Y_T} - e^{2\gamma Y_t} \\
     &=
     2\gamma \int_t^T e^{2\gamma Y_s}\Bigl\{(- f(s,Y_s,Z^0_s,Z^1_s) + \gamma|Z^{0}_s|^2 + \gamma|Z^1_s|^2)ds + Z^0_s dW^0_s + Z^1_s dW^1_s\Bigr\}\\
     &\geq
     2\gamma \int_t^T e^{2\gamma Y_s}\Bigl\{- \frac{\gamma}{2}(|Z^{0}_s|^2 + |Z^1_s|^2) - C(\|Y\|_{\mathbb{S}^\infty} + \|g\|_{\mathbb{L}^\infty}) + \gamma|Z^{0}_s|^2 + \gamma|Z^1_s|^2\Bigr\}ds\\
     &~~~~ +  2\gamma \int_t^T e^{2\gamma Y_s}\Bigl\{Z^0_s dW^0_s + Z^1_s dW^1_s\Bigr\}\\
     &=
     2\gamma \int_t^T e^{2\gamma Y_s}\Bigl\{\frac{\gamma}{2}(|Z^{0}_s|^2 + |Z^1_s|^2) - C(\|Y\|_{\mathbb{S}^\infty} + \|g\|_{\mathbb{L}^\infty}) \Bigr\} ds  + 2\gamma \int_t^T e^{2\gamma Y_s}\Bigl\{Z^0_s dW^0_s + Z^1_s dW^1_s\Bigr\}.
    \end{split}
\end{equation*}
Thus, for any $t\in[0,T]$
\begin{equation*}
    \begin{split}
     \mathbb{E}\Bigl[ \int_t^T (|Z^{0}_s|^2 + |Z^1_s|^2)ds |\mathcal{F}^{0,1}_t\Bigr]
     \leq
     Ce^{4\overline{\gamma} \|Y\|_{\mathbb{S}^\infty}}(1 + \|Y\|_{\mathbb{S}^\infty} + \|g\|_{\mathbb{L}^\infty})
     <\infty.
    \end{split}
\end{equation*}
Clearly, $(Z^0,Z^1)\in\mathbb{H}^2_{\mathrm{BMO}}\times\mathbb{H}^2_{\mathrm{BMO}}$. \par
Next, suppose that there exists two solutions $(Y,Z^0,Z^1)$ and $(\acute Y, \acute Z^0, \acute Z^1)$ both of which are in $\mathbb{S}^\infty\times\mathbb{H}^2_{\mathrm{BMO}}\times\mathbb{H}^2_{\mathrm{BMO}}$. Let us write $\Delta Y = Y-\acute Y,~~~\Delta Z^i = Z^i -\acute Z^i~~~(i=0,1)$. Then we have
\begin{equation*}
    \begin{split}
    &f(s,Y_s,Z^0_s,Z^1_s) - f(s,\acute Y_s,\acute Z^0_s,\acute Z^1_s)\\
    &=
    -\Delta Z^{0\|}_s\theta_s + \frac{\gamma}{2}\Delta Z^{0\perp}_s (Z^{0\perp}_s + \acute Z^{0\perp}_s)^{\top} + \frac{\gamma}{2}\Delta Z^{1}_s (Z^{1}_s + \acute Z^{1}_s)^{\top} - \frac{\gamma(1 + b\zeta_s)}{\beta}\Delta Y_s\\
    &=
    -\Delta Z^{0}_s\theta_s + \frac{\gamma}{2}\Delta Z^{0}_s (Z^{0\perp}_s + \acute Z^{0\perp}_s)^{\top} + \frac{\gamma}{2}\Delta Z^{1}_s (Z^{1}_s + \acute Z^{1}_s)^{\top} - \frac{\gamma(1 + b\zeta_s)}{\beta}\Delta Y_s.
    \end{split}
\end{equation*}
Now, we define a new probability measure $\widetilde{\mathbb{P}}$($\sim \mathbb{P}^{0,1}$) by
\[
    \Bigl.\frac{d\widetilde{\mathbb{P}}}{d\mathbb{P}^{0,1}}\Bigr|_{\mathcal{F}^{0,1}_t} = \mathcal{E}\Bigl(\int_0^\cdot \Bigl\{-\theta_s^\top + \frac{\gamma}{2}(Z^{0\perp}_s + \acute Z^{0\perp}_s) \Bigr\}dW^0_s + \int_0^\cdot \frac{\gamma}{2}(Z^{1}_s + \acute Z^{1}_s)dW^1_s \Bigr)_t,~~t\in[0,T].
\]
By Kazamaki \cite{kazamaki_sufficient_1979} and Kazamaki \cite{kazamakiContinuousExponentialMartingales1994} [Remark 3.1], the right hand side is a martingale of class $\mathcal{D}$ and hence the new probability measure $\widetilde{\mathbb{P}}$ is well-defined. Then, the Girsanov's theorem implies that the processes
\begin{equation*}
    \begin{split}
        \widetilde W^0_t := W^0_t + \int_0^t \{\theta_s - \frac{\gamma}{2}(Z^{0\perp}_s + \acute Z^{0\perp}_s)^\top\} ds,~~~\widetilde W^1_t := W^1_t  - \int_0^t \frac{\gamma}{2}(Z^{1}_s + \acute Z^{1}_s)^\top ds,~~~t\in[0,T]
    \end{split}
\end{equation*}
are the standard $(\mathbb{F}^{0,1},\widetilde{\mathbb{P}})$-Brownian motions. Now we have:
\begin{equation}
    \begin{split}
        \label{tilde_W_BSDE}
     \Delta Y_t 
     =&
     \int_t^T \Bigl\{-\Delta Z^{0}_s\theta_s + \frac{\gamma}{2}\Delta Z^{0}_s (Z^{0\perp}_s + \acute Z^{0\perp}_s)^{\top} + \frac{\gamma}{2}\Delta Z^{1}_s (Z^{1}_s + \acute Z^{1}_s)^{\top} - \frac{\gamma(1 + b\zeta_s)}{\beta}\Delta Y_s\Bigr\}ds \\
     &- \int_t^T \Delta Z^0_s dW^0_s - \int_t^T \Delta Z^1_s dW^1_s\\
     =&
     -\int_t^T  \frac{\gamma(1 + b\zeta_s)}{\beta}\Delta Y_s ds - \int_t^T \Delta Z^0_s d\widetilde W^0_s - \int_t^T \Delta Z^1_s d\widetilde W^1_s,~~t\in[0,T].
    \end{split}
\end{equation}
Then, it follows that $\Delta Y=0,\Delta Z^0=0$ and $\Delta Z^1 = 0$ for $\widetilde{\mathbb{P}}$ (and thus $\mathbb{P}^{0,1}$)-almost surely since they obviously satisfy \eqref{tilde_W_BSDE} and the solution of \eqref{tilde_W_BSDE} is unique due to the standard result for Lipschitz BSDEs (See, e.g. Zhang \cite{ZhangBSDE} [Chapter 4]). $\square$\par

For the risk neutral measure $\mathbb{Q}(\sim\mathbb{P}^{0,1})$ defined by \eqref{risk-neutral-habit}, the Girsanov's theorem implies that the processes
\begin{equation}
    \begin{split}
    W^{0,\mathbb{Q}}_t := W^{0,\mathbb{P}}_t + \int_0^t \theta_s ds,~~~W^{1,\mathbb{Q}}_t := W^{1,\mathbb{P}}_t,~~t\in[0,T]
\end{split}
\end{equation}
form the standard $(\mathbb{F}^0,\mathbb{Q})$ and $(\mathbb{F}^1,\mathbb{Q})$-Brownian motions, respectively. Under this measure, the BSDE \eqref{qgBSDE} becomes
\begin{equation}
    \begin{split}
        \label{qgBSDE-Q}
        Y^1_t=& F^1_T + \int_t^T \Bigl\{- \frac{|\theta_s|^2}{2\gamma^1} + \frac{\gamma^1}{2}(|Z^{1,0\perp}_s|^2 + |Z^1_s|^2) -\frac{\gamma^1(1+b\zeta^1_s)}{\beta^1}Y^1_s + g^{1}_s\Bigr\}ds\\ &- \int_t^T Z^{1,0}_s dW^{0,\mathbb{Q}}_s - \int_t^T Z^1_s dW^{1,\mathbb{Q}}_s
    \end{split}
\end{equation}
for $t\in[0,T]$. Moreover, by Kazamaki \cite{kazamakiContinuousExponentialMartingales1994} [Theorem 3.3], we have $\theta\in \mathbb{H}^2_{\mathrm{BMO}}(\mathbb{Q},\mathbb{F}^{0})$. Since $\theta$ is unbounded in general, the standard technique cannot be applied directly to prove the well-posedness of the equation \eqref{qgBSDE-Q}. We adopt the same regularization used in Fujii \& Sekine \cite{fujiiMeanFieldEquilibriumPrice2023a}.

\begin{thm}
    \label{sec2-well-posed}
    Let Assumptions \ref{asm1-habit} and \ref{asm2-habit} be in force. Then, the BSDE \eqref{qgBSDE} has a unique solution $(Y,Z^{1,0},Z^1)\in\mathbb{S}^\infty(\mathbb{P}^{0,1},\mathbb{F}^{0,1},\mathbb{R})\times\mathbb{H}^2_{\mathrm{BMO}}(\mathbb{P}^{0,1},\mathbb{F}^{0,1},\mathbb{R}^{1\times d_0})\times\mathbb{H}^2_{\mathrm{BMO}}(\mathbb{P}^{0,1},\mathbb{F}^{0,1},\mathbb{R}^{1\times d})$.
\end{thm}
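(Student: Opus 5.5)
Uniqueness is already given by Lemma~\ref{a-priori1}, so it remains to produce a bounded solution. I will work under $\mathbb{Q}$ with the equivalent equation \eqref{qgBSDE-Q} and follow the truncation scheme of the proof in Chapter~4, adapted to the new linear term $-\frac{\gamma(1+b\zeta_s)}{\beta}Y_s$. Three facts make this work: $0\le\zeta\le C$, $g^1$ is bounded because $F^1,\rho,\zeta$ are bounded and $\gamma,\beta$ lie in compact subsets of $(0,\infty)$, and the coefficient $\lambda_s:=\gamma(1+b\zeta_s)/\beta$ is bounded and nonnegative.

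For each $n\in\mathbb{N}$, I replace $|\theta_s|^2$ by $|\theta_s|^2\wedge n$. The truncated driver then has bounded $s$-dependent terms, is Lipschitz (monotone) in $y$, and grows quadratically in $z$. By Kobylanski \cite{Kobylanski2000BackwardSD} it has a unique solution $(Y^n,Z^{n,0},Z^{n,1})\in\mathbb{S}^\infty\times\mathbb{H}^2_{\rm BMO}\times\mathbb{H}^2_{\rm BMO}$. The comparison principle gives $Y^{n+1}\le Y^n$; as in the earlier footnote, the difference of the $|z^\perp|^2$ terms is absorbed into a Girsanov drift, and the linear $y$-term is harmless.

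Next I need uniform bounds.
\begin{itemize}
\item \emph{Upper bound.} Comparison with the BSDE whose driver drops the $-|\theta|^2/(2\gamma)$ term gives a bounded upper solution with $\|\overline{Y}\|_{\mathbb{S}^\infty}\le C(\|F\|_\infty+\|g\|_{\mathbb{L}^\infty})$. This bound follows from the usual exponential or linearization argument, using $\lambda\ge0$.
\item \emph{Lower bound.} Comparison with the linear BSDE $\underline{Y}_t=F_T+\int_t^T(-\frac{|\theta_s|^2}{2\gamma}-\lambda_s\underline{Y}_s+g_s)ds-\int\cdots$ gives the explicit formula
\[
\underline{Y}_t=\mathbb{E}^{\mathbb{Q}}\Bigl[e^{-\int_t^T\lambda}F_T+\int_t^T e^{-\int_t^s\lambda}\Bigl(g_s-\frac{|\theta_s|^2}{2\gamma}\Bigr)ds\Bigm|\mathcal{F}_t\Bigr].
\]
Since $0<e^{-\int\lambda}\le1$, this yields $\underline{Y}\ge-(\|F\|_\infty+T\|g\|_\infty+\frac{1}{2\underline{\gamma}}\|\theta\|^2_{\mathbb{H}^2_{\rm BMO}(\mathbb{Q})})$.
\end{itemize}
Thus $Y^n$ is uniformly bounded. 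Repeating the $e^{2\gamma Y}$ computation of Lemma~\ref{a-priori1} gives a uniform $\mathbb{H}^2_{\rm BMO}$ bound on $(Z^{n,0},Z^{n,1})$. Here I must keep $|\theta|^2\wedge n\le|\theta|^2$ on the right, which is controlled by the BMO norm of $\theta$.

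The limiting step is the main obstacle. I set $Y=\lim_n Y^n$ (a monotone limit) and extract weak $\mathbb{H}^2$ limits $Z^{n,j}\rightharpoonup Z^j$. To upgrade to strong convergence I rerun the $\phi(y)=\frac18(e^{4y}-4y-1)$ argument of Chapter~4 on $\Delta Y^{n,m}=Y^n-Y^m\ge0$ (after normalizing by $\gamma$). The new linear term contributes $-\phi'(\Delta Y^{n,m})\lambda\Delta Y^{n,m}\le0$, which has the favorable sign and can be dropped. The $g$-terms cancel in the difference. The rest of the estimate is unchanged, giving
\[
\mathbb{E}\int_0^T[\phi'(Y^n-Y)+1]|Z^n-Z|^2\,ds\le\mathbb{E}\int_0^T\phi'(Y^n-Y)\bigl(|\theta|^2+|Z|^2\bigr)ds\to0 .
\]
Strong convergence follows, and BDG together with dominated convergence for the $\lambda Y^n$ term show that $(Y,Z^{1,0},Z^1)$ solves \eqref{qgBSDE-Q}.

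Finally I transfer back to $\mathbb{P}^{0,1}$. Boundedness of $Y$ is unchanged by the equivalent measure change, and Lemma~\ref{a-priori1} then gives BMO integrands together with uniqueness.
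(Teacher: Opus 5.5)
Your proposal is correct and follows the paper's proof essentially step by step: truncation of $|\theta|^2$ under $\mathbb{Q}$, monotonicity of $Y^n$ via comparison, uniform $\mathbb{S}^\infty$ and BMO bounds, and the convex-$\phi$ argument that upgrades weak to strong $\mathbb{H}^2$ convergence, with the linear $Y$-term dropped because of its sign. The differences are cosmetic. You bound $\underline{Y}$ through the explicit representation of the linear BSDE with the exact coefficient $\lambda$, and you reuse Chapter 4's $\phi$ after normalizing by $\gamma$; the paper instead uses worst-case constants with $|\overline{Y}|$, $|\underline{Y}|$ (so $\overline{Z}=0$ and backward Gronwall applies) and a $\phi$ with constant $C_0$. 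You should also cite, as the paper does, the martingale representation property of $\mathbb{F}^{0,1}$ under $\mathbb{Q}$, since that filtration need not be generated by the $\mathbb{Q}$-Brownian motions.
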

\noindent
\textbf{\textit{Proof}}\\
Obviously, $W^{0,\mathbb{Q}}, W^{1,\mathbb{Q}}$ are adapted to $\mathbb{F}^{0,1}$, but they do not necessarily generate $\mathbb{F}^{0,1}$. However, due to the equivalence of $\mathbb{Q}$ and $\mathbb{P}^{0,1}$, Jeanblanc, Yor \& Chesney \cite{jeanblanc_mathematical_2009} [Proposition 1.7.7.1] shows that every $(\mathbb{F}^{0,1},\mathbb{Q})$-local martingale has a representation through a stochastic integral with respect to $(W^{0,\mathbb{Q}},W^{1,\mathbb{Q}})$. 
This fact allows us to use the standard approach for BSDEs to deal with the equation \eqref{qgBSDE-Q}. 
In addition, if there exists a bounded solution $(Y,Z^0,Z^1)\in\mathbb{S}^\infty(\mathbb{Q},\mathbb{F}^{0,1})\times\mathbb{H}^2_{\mathrm{BMO}}(\mathbb{Q},\mathbb{F}^{0,1})\times\mathbb{H}^2_{\mathrm{BMO}}(\mathbb{Q},\mathbb{F}^{0,1})$ to the equation \eqref{qgBSDE-Q}, it obviously solves the BSDE \eqref{qgBSDE} under the original measure $\mathbb{P}^{0,1}$. The uniqueness follows from Lemma \ref{a-priori1}.
Thus, it suffices to find a bounded solution of the BSDE \eqref{qgBSDE-Q}. \par
For the remainder of the proof, we may omit the superscript ``1'' if obvious. We consider the next truncated BSDE:
\begin{equation}
    \begin{split}
        \label{qgBSDE-Q-trunc}
        Y^n_t =& F_T + \int_t^T \Bigl\{- \frac{|\theta_s|^2 \land n}{2\gamma} + \frac{\gamma}{2}(|Z^{n,0\perp}_s|^2 + |Z^{n,1}_s|^2) -\frac{\gamma(1+b\zeta_s)}{\beta}Y^n_s + g_s\Bigr\}ds \\&- \int_t^T Z^{n,0}_s dW^{0,\mathbb{Q}}_s - \int_t^T Z^{n,1}_s dW^{1,\mathbb{Q}}_s~~~~~
    \end{split}
\end{equation}
for $t\in[0,T]$. By the standard result of Kobylanski \cite{Kobylanski2000BackwardSD}, we deduce that the truncated BSDE \eqref{qgBSDE-Q-trunc} has a unique solution $(Y^n,Z^{n,0},Z^{n,1})\in\mathbb{S}^\infty\times\mathbb{H}^2_{\mathrm{BMO}}\times\mathbb{H}^2_{\mathrm{BMO}}$ for all $n\in\mathbb{N}$. In addition, the comparison principle presented in the same work shows that $Y^{n+1}\leq Y^n$ holds for all $n\in\mathbb{N}$.
In particular, this principle gives an estimate $\sup_{n\in\mathbb{N}}\|Y^n\|_{\mathbb{S}^\infty} <\infty$ by considering the following two BSDEs. For $t\in[0,T]$,
\begin{equation}
    \begin{split}
    \overline{Y}_t &= \|F\|_{\mathbb{L}^\infty} +\int_t^T \Bigl\{\frac{\overline{\gamma}}{2}(|\overline{Z}^{0\perp}_s|^2 + |\overline{Z}^1_s|^2)+ \frac{\overline{\gamma}(1+b\|\zeta\|_{\mathbb{L}^\infty})}{\underline{\beta}}|\overline{Y}_s| + \|g\|_{\mathbb{L}^\infty}\Bigr\} ds - \int_t^T \overline{Z}^0_s dW^{0,\mathbb{Q}}_s - \int_t^T \overline{Z}^1_s dW^{1,\mathbb{Q}}_s, \\
    \underline{Y}_t &= -\|F\|_{\mathbb{L}^\infty} -\int_t^T \Bigl\{\frac{|\theta_s|^2}{2\underline{\gamma}}+ \frac{\overline{\gamma}(1+b\|\zeta\|_{\mathbb{L}^\infty})}{\underline{\beta}}|\underline{Y}_s| + \|g\|_{\mathbb{L}^\infty}\Bigr\} ds - \int_t^T \underline{Z}^0_s dW^{0,\mathbb{Q}}_s - \int_t^T \underline{Z}^1_s dW^{1,\mathbb{Q}}_s.
\end{split}
\end{equation}
Then, $\underline{Y}_t\leq Y^n_t\leq \overline{Y}_t, ~~\mathbb{Q}$-a.s. for all $t\in[0,T]$ by the comparison principle, and it is also easy to see $\overline{Z}^0=0$ and $\overline{Z}^1=0$. The backward Gronwall's inequality (See, for example, Pardoux \& R{\u {a}}{\c s}canu \cite{pardouxStochasticDifferentialEquations2014} [Corollary 6.61]) yields $\overline{Y}_t \leq C(\|F\|_{\mathbb{L}^\infty}+\|g\|_{\mathbb{L}^\infty})$ for all $t\in[0,T]$. 
For $\underline{Y}$, it is obvious that $\underline{Y}\leq 0~~\mathbb{Q}$-a.s. and thus $|\underline{Y}_t|=-\underline{Y}_t$. Then, it is straightforward to see
\begin{equation*}
    \begin{split}
    \underline{Y}_t 
    =& 
    -\exp\Bigl(\frac{\overline{\gamma}(1+b\|\zeta\|_{\mathbb{L}^\infty})}{\underline{\beta}}(T-t)\Bigr)\|F\|_{\mathbb{L}^\infty}\\
    & - \mathbb{E}\Bigl[\int_t^T \exp\Bigl(\frac{\overline{\gamma}(1+b\|\zeta\|_{\mathbb{L}^\infty})}{\underline{\beta}}(s-t)\Bigr)\Bigl(\frac{|\theta_s|^2}{2\underline{\gamma}}+\|g\|_{\mathbb{L}^\infty}\Bigr)ds|\mathcal{F}^{0,1}_t\Bigr]\\
    \geq&
    -C(\|F\|_{\mathbb{L}^\infty}+\|g\|_{\mathbb{L}^\infty}+\|\theta\|^2_{\mathbb{H}^2_{\mathrm{BMO}}}).
    \end{split}
\end{equation*}
Therefore, $(Y^n)_{n\in\mathbb{N}}\subset\mathbb{S}^\infty$ is a bounded and monotonically decreasing sequence. 

We then define a bounded process $Y$ by $Y_t(\omega) := \lim_{n\to\infty} Y^n_t(\omega)$ for almost all $(t,\omega)\in[0,T] \times\Omega$. (for $(t,\omega)$ in $dt\otimes \mathbb{Q}$-negligible sets, we may put $Y_t(\omega) = 0$.) In addition, by following the same argument as in Lemma \ref{a-priori1}, we deduce that
\[
    \sup_{n\in\mathbb{N}}\|(Z^{n,0},Z^{n,1})\|^2_{\mathbb{H}^2_{\mathrm{BMO}}}\leq C(1 + \|F\|_{\mathbb{L}^\infty}+\|\theta\|^2_{\mathbb{H}^2_{\mathrm{BMO}}}+\|g\|_{\mathbb{L}^\infty})\exp(C(\|F\|_{\mathbb{L}^\infty}+\|\theta\|^2_{\mathbb{H}^2_{\mathrm{BMO}}}+\|g\|_{\mathbb{L}^\infty}))<\infty,
\]
which means $(Z^{n,0},Z^{n,1})_{n\in\mathbb{N}}$ is weakly relatively compact in $\mathbb{H}^2$. Choosing a subsequence if necessary, there exists $(Z^{0},Z^{1})\in\mathbb{H}^2\times\mathbb{H}^2$ such that
\[
    Z^{n,0}\rightharpoonup Z^0,~~~~~Z^{n,1}\rightharpoonup Z^1~~~(n\to\infty)
\]
in the sense of weak convergence in $\mathbb{H}^2$. Finally, we shall prove that $(Y,Z^0,Z^1)\in\mathbb{S}^\infty\times\mathbb{H}^2_{\mathrm{BMO}}\times\mathbb{H}^2_{\mathrm{BMO}}$ and that it actually solves the BSDE \eqref{qgBSDE}. 
We introduce a smooth convex function $\phi:\mathbb{R}\to\mathbb{R}_+$ satisfying $\phi(0)=\phi'(0)=0$, whose concrete form is to be determined. Let us denote $\Delta Y^{n,m}:=Y^n-Y^m$ and $\Delta Z^{n,m,i}:=Z^{n,i}-Z^{m,i}$ for $i=0,1$. Notice that $\Delta Y^{n,m}\geq 0 $ when $m\geq n$. 

Using Ito formula, we have
\begin{equation*}
    \begin{split}
        &\mathbb{E}^{\mathbb{Q}}[\phi(\Delta Y^{n,m}_0)] + \frac{1}{2}\mathbb{E}^{\mathbb{Q}}\Bigl[\int_0^T \phi''(\Delta Y^{n,m}_s)(|\Delta Z^{n,m,0}_s|^2 + |\Delta Z^{n,m,1}_s|^2) ds  \Bigr] \\
        &=
        \mathbb{E}^{\mathbb{Q}}\Bigl[\int_0^T \phi'(\Delta Y^{n,m}_s)\Bigl\{- \frac{|\theta_s|^2\land n}{2\gamma} + \frac{|\theta_s|^2\land m}{2\gamma} + \frac{\gamma}{2}(|Z^{n,0\perp}_s|^2 + |Z^{n,1}_s|^2)\\
        &~~~~ - \frac{\gamma}{2}(|Z^{m,0\perp}_s|^2 + |Z^{m,1}_s|^2) -\frac{\gamma(1+b\zeta_s)}{\beta}\Delta Y^{n,m}_s \Bigr\} ds  \Bigr]\\
        &\leq
        \mathbb{E}^{\mathbb{Q}}\Bigl[\int_0^T \phi'(\Delta Y^{n,m}_s)\Bigl\{\frac{|\theta_s|^2}{2\gamma} + \frac{\gamma}{2}(|Z^{n,0}_s|^2 + |Z^{n,1}_s|^2)\Bigr\} ds  \Bigr]\\
        &\leq
        \mathbb{E}^{\mathbb{Q}}\Bigl[\int_0^T C_0\phi'(\Delta Y^{n,m}_s)\Bigl\{|\theta_s|^2 + | Z^{n,0}_s - Z^{0}_s|^2 + |Z^{n,1}_s-Z^{1}_s|^2 + |Z^{0}_s|^2 + |Z^{1}_s|^2\Bigr\} ds  \Bigr],
    \end{split}
  \end{equation*}
where $C_0$ is a positive constant satisfying $C_0\geq \dfrac{1}{2}(\underline{\gamma}^{-1}+2\overline{\gamma})$. Then,
  \begin{equation*}
    \begin{split}
        &\frac{1}{2}\mathbb{E}^{\mathbb{Q}}\Bigl[\int_0^T \phi''(\Delta Y^{n,m}_s)(|\Delta Z^{n,m,0}_s|^2 + |\Delta Z^{n,m,1}_s|^2) ds  \Bigr]\\
        &\leq
        \mathbb{E}^{\mathbb{Q}}\Bigl[\int_0^T C_0\phi'(\Delta Y^{n,m}_s)\Bigl\{|\theta_s|^2 + | Z^{n,0}_s - Z^{0}_s|^2 + |Z^{n,1}_s-Z^{1}_s|^2 + |Z^{0}_s|^2 + |Z^{1}_s|^2\Bigr\} ds  \Bigr].
    \end{split}
  \end{equation*}
Now set $\phi$ as
  \begin{equation}
    \begin{split}
      \label{phi}
      \phi(y):=\frac{1}{2C_0^2}(e^{2C_0 y}-2C_0 y-1),
        \end{split}
  \end{equation}
  then $\phi(0)=\phi'(0)=0$ and
  \begin{equation*}
    \begin{split}
      \phi'(y)=\frac{1}{C_0}(e^{2C_0 y}-1),~~~\phi''(y)=2e^{2C_0 y}.
      \end{split}
  \end{equation*}
In particular, $\phi''(y)=2C_0\phi'(y) + 2$. With these relations, we get
  \begin{equation*}
    \begin{split}
        &\mathbb{E}^{\mathbb{Q}}\Bigl[\int_0^T (C_0\phi'(\Delta Y^{n,m}_s) + 1)(|\Delta Z^{n,m,0}_s|^2+|\Delta Z^{n,m,1}_s|^2) ds  \Bigr]\\
        &\leq
        \mathbb{E}^{\mathbb{Q}}\Bigl[\int_0^T C_0\phi'(\Delta Y^{n,m}_s)\Bigl\{|\theta_s|^2 + | Z^{n,0}_s - Z^{0}_s|^2 + |Z^{n,1}_s-Z^{1}_s|^2 + |Z^{0}_s|^2 + |Z^{1}_s|^2\Bigr\}  ds  \Bigr].
    \end{split}
\end{equation*}
Since $\sqrt{\phi'(\Delta Y^{n,m})+1}\Delta Z^{n,m,i}$ ($i=0,1$) is weakly convergent to $\sqrt{\phi'(Y^{n}-Y)+1}\Delta Z^{n,i}_s$ in $\mathbb{H}^2$ as $m\to\infty$, we obtain
  \begin{equation*}
    \begin{split}
        &\mathbb{E}^{\mathbb{Q}}\Bigl[\int_0^T (C_0\phi'( Y^{n}_s - Y_s) + 1)(|Z^{n,0}_s-Z^0_s|^2 + |Z^{n,1}_s-Z^1_s|^2) ds  \Bigr] \\
        &\leq
        \liminf_{m\to\infty}\mathbb{E}^{\mathbb{Q}}\Bigl[\int_0^T (C_0\phi'(\Delta Y^{n,m}_s) + 1) (|\Delta Z^{n,m,0}_s|^2 +|\Delta Z^{n,m,1}_s|^2) ds  \Bigr]\\
        &\leq
        \mathbb{E}^{\mathbb{Q}}\Bigl[\int_0^T C_0\phi'(Y^{n}_s - Y_s)\Bigl\{|\theta_s|^2 + | Z^{n,0}_s - Z^{0}_s|^2 + |Z^{n,1}_s-Z^{1}_s|^2 + |Z^{0}_s|^2 + |Z^{1}_s|^2\Bigr\}  ds  \Bigr].
    \end{split}
  \end{equation*}
This implies
\begin{equation*}
    \begin{split}
        &\mathbb{E}^{\mathbb{Q}}\Bigl[\int_0^T | Z^{n,0}_s-Z^0_s|^2 + | Z^{n,1}_s-Z^1_s|^2 ds  \Bigr]\\
        &\leq
        \mathbb{E}^{\mathbb{Q}}\Bigl[\int_0^T C_0\phi'(Y^{n}_s - Y_s)\Bigl\{|\theta_s|^2 + |Z^{0}_s|^2 + |Z^{1}_s|^2\Bigr\}  ds  \Bigr]\\
        &\to
        0~~~(n\to\infty)
    \end{split}
\end{equation*}
by the dominated convergence theorem. Thus,
\[
    Z^{n,0}\to Z^0,~~~~~Z^{n,1}\to Z^1,~~~~(n\to\infty)
\]
strongly in $\mathbb{H}^2$. Taking a subsequence if necessary, it is now straightforward to see, for $\mathbb{Q}$-a.s.,
\begin{equation*}
    \begin{split}
        \sup_{t\in[0,T]}|Y_t-Y^n_t| \to 0,~~~\sup_{t\in[0,T]}\Bigl|\int_0^t (Z^0_s-Z^{n,0}_s)dW^{0,\mathbb{Q}}_s\Bigr| + \sup_{t\in[0,T]}\Bigl|\int_0^t (Z^1_s-Z^{n,1}_s)dW^{1,\mathbb{Q}}_s\Bigr| \to 0
    \end{split}
\end{equation*}
as $n\to\infty$. Hence $(Y,Z^0,Z^1)\in\mathbb{S}^\infty\times\mathbb{H}^2_{\mathrm{BMO}}\times\mathbb{H}^2_{\mathrm{BMO}}$ is a bounded solution to the BSDE \eqref{qgBSDE-Q}. $\square$ \par

We now verify the admissibility of \eqref{optimal} and the condition-R.
\begin{thm} (Verification)\\
    \label{verification}
    Let Assumptions \ref{asm1-habit} and \ref{asm2-habit} be in force. Moreover, let $(Y,Z^{1,0},Z^1)\in\mathbb{S}^\infty(\mathbb{P}^{0,1},\mathbb{F}^{0,1},\mathbb{R})\times\mathbb{H}^2_{\mathrm{BMO}}(\mathbb{P}^{0,1},\mathbb{F}^{0,1},\mathbb{R}^{1\times d_0})\times\mathbb{H}^2_{\mathrm{BMO}}(\mathbb{P}^{0,1},\mathbb{F}^{0,1},\mathbb{R}^{1\times d})$ be the solution to the BSDE \eqref{qgBSDE}. Then, the process $(p^{1,*},c^{1,*})$ defined by \eqref{optimal}, that is,
    \begin{equation*}
        \begin{split}
        p^{1,*}_t &:= Z^{1,0\|}_t + \frac{\theta^\top_t}{\gamma^1},~~~t\in[0,T],\\
        c^{1,*}_t &:= X^{1,c^{1,*}}_t + \frac{1}{\beta^1}\Bigl\{\log\Bigl(\frac{a\beta^1}{\gamma^1(1+b\zeta^1_t)}\Bigr)-\gamma^1(Y^1_t-F^1_t+\zeta^1_tX^{1,c^{1,*}}_t)\Bigr\},~~~t\in[0,T]
        \end{split}
    \end{equation*}
    is a unique optimal strategy for agent-1.
\end{thm}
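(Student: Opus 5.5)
Following Chapter 4 (Theorem~\ref{verification-part1}), I will check the three properties of Condition-R for the family $\{R^{1,(p,c)}\}$ defined by \eqref{ansatz}, with $Y$ the bounded solution from Theorem~\ref{sec2-well-posed} and $\zeta$ given by \eqref{zeta}. Property (i) holds because $Y_T=F_T$ and $\zeta_T=0$. Property (ii) holds because $R^{1,(p,c)}_0=-\exp(-\gamma(\xi-Y_0-\zeta_0X_0))$, which is $\mathcal{F}^{0,1}_0$-measurable and does not depend on $(p,c)$.

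For (iii), I use the Ito computation preceding \eqref{driver-f}. Since $\zeta$ solves the Riccati ODE \eqref{zeta-ODE}, the coefficient of $X^c_t$ in the drift vanishes. The remaining drift is
\[
-e^{-\delta t-\gamma(\mathcal{W}_t-Y_t-\zeta_tX^c_t)}\,\gamma\,H_t(p_t,c_t),
\]
where $H_t(p,c)$ is the difference between the expression inside the two infima in \eqref{driver-f} and their infimum value. Hence $H_t\ge 0$, and $H_t=0$ exactly when $p=p^*_t$ and $c=c^*_t$. The minimizer is unique because the expression is strictly convex in $p\in L_t$, and strictly convex in $c$ when $1+b\zeta_t>0$, which was established after \eqref{zeta}. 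So $R^{1,(p,c)}$ is a local supermartingale for every admissible pair.

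To upgrade this to a true supermartingale I will show that $R^{1,(p,c)}$ is of class $\mathcal{D}$. The integral term is a nonpositive decreasing process whose terminal value is integrable, by the remark after Definition~\ref{Admissible space for agent-1}. The exponential term is bounded by
\[
C\exp\bigl(-\gamma\mathcal{W}_\tau+\gamma\|\zeta\|_\infty|X^c_\tau|\bigr),
\]
using the boundedness of $Y$. Since $\zeta\le 1/\delta^+=(A+\sqrt{A^2+B})^{-1}$, the choice $K^1>\gamma(A+\sqrt{A^2+B})^{-1}$ in the admissibility condition dominates this bound uniformly over stopping times. A local supermartingale of class $\mathcal{D}$ that is bounded below is then a supermartingale. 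Uniqueness of the optimizer follows as in Chapter 4: if $(p,c)\neq(p^*,c^*)$ on a set of positive $dt\otimes\mathbb{P}$ measure, the drift is strictly negative there, so $\mathbb{E}[R_T]<R_0$ and the pair is strictly suboptimal.

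The main obstacle is showing that $(p^*,c^*)$ is admissible and that $R^{(p^*,c^*)}$ is a true martingale. The first step is to note that $c^*$ is defined implicitly through $X^{c^*}$. Substituting $c^*$ into the habit dynamics \eqref{habit} gives a linear ODE for $X^{c^*}$ with bounded coefficients and bounded forcing, since $Y$, $F$, $\rho$ and $\zeta$ are all bounded. So $X^{c^*}$ is pathwise well defined and bounded, and $c^*$ is bounded as well. Next, with drift zero,
\[
R^{(p^*,c^*)}_t+a\int_0^t\cdots ds=R_0\,\mathcal{E}\Bigl(-\int_0^\cdot(\theta^\top-\gamma Z^{0\perp})dW^0+\int_0^\cdot\gamma Z^1dW^1\Bigr)_t .
\]
This requires checking that the stochastic integrand equals $\gamma(p^*-Z^0)=\theta^\top-\gamma Z^{0\perp}$. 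Because the integrand lies in $\mathbb{H}^2_{\rm BMO}$, Kazamaki's theorem makes the stochastic exponential a uniformly integrable martingale. This gives uniform integrability of $\exp(-\gamma\mathcal{W}^{*}_\tau)$, and multiplying by the bounded factors $e^{\beta|c^*|+K|X^{c^*}|}$ preserves it. I expect the delicate point to be the integral term. To handle it, I will use the first-order condition to write
\[
a\,e^{-\gamma(\cdots)-\beta(\cdots)}=\frac{\gamma(1+b\zeta)}{\beta}\,e^{-\gamma(\mathcal{W}-Y-\zeta X)},
\]
so that the integrand is a bounded multiple of the exponential part. This makes $R^{(p^*,c^*)}$ a true martingale and yields $\mathbb{E}\int_0^T|c^*_s|^2ds<\infty$ and $\mathbb{E}\int_0^T|p^*_s|^2ds<\infty$, the latter from the BMO property of $Z^{0}$ and $\theta$.
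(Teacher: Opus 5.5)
Your plan is essentially the paper's proof. It verifies Condition-R, uses the Riccati equation for $\zeta$ to cancel the $X^c$-coefficient in the drift, gets class $\mathcal{D}$ from admissibility with $\gamma\zeta_t<K^1$, bounds $X^{c^*}$ and $c^*$ (your linear-ODE argument is equivalent to the paper's Gronwall step), obtains admissibility of $(p^*,c^*)$ from a BMO stochastic exponential, and gets uniqueness from strict convexity.

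One displayed identity is false, however. The driftless process is $R^{(p^*,c^*)}$, not $-\psi_t:=R^{(p^*,c^*)}_t+a\int_0^t\cdots ds$, where $\psi_t=\exp(-\delta t-\gamma(\mathcal{W}^{(p^*,c^*)}_t-Y_t-\zeta_tX^{c^*}_t))$. In fact $\psi$ has drift $-a\ell_t\psi_t$, with $\ell_t:=\exp(-\gamma(Y_t-F_t+\zeta_tX^{c^*}_t)-\beta(c^*_t-X^{c^*}_t))$. The correct representation, which is the one the paper uses, is
\[
\psi_t=\psi_0\exp\Bigl(-a\int_0^t\ell_s\,ds\Bigr)\,\mathcal{E}\Bigl(-\int_0^\cdot(\theta_s^\top-\gamma Z^{0\perp}_s)\,dW^0_s+\int_0^\cdot\gamma Z^1_s\,dW^1_s\Bigr)_t .
\]
As you wrote it, $-\psi$ would be a local martingale, which contradicts the zero drift of $R^{(p^*,c^*)}=-\psi-a\int_0^\cdot\ell_s\psi_s\,ds$.

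By your own first-order identity, $a\ell_t=\gamma(1+b\zeta_t)/\beta$, so the missing factor is $\exp(-\int_0^t\gamma(1+b\zeta_s)/\beta\,ds)\in(0,1]$. Hence $\psi\le\psi_0\,\mathcal{E}(\cdots)$. Once you replace the identity by this inequality, the uniform integrability of $\exp(-\gamma\mathcal{W}^{(p^*,c^*)})$ and your class-$\mathcal{D}$ argument for $R^{(p^*,c^*)}$ go through unchanged.

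Two smaller slips:
\begin{itemize}
\item The number $(A+\sqrt{A^2+B})^{-1}$ is $1/|\delta^-|$, not $1/\delta^+$. The bound $\zeta_t<1/|\delta^-|$ is the one that matches the admissibility threshold.
\item $R^{1,(p,c)}\le 0$ is bounded above, not below. Class $\mathcal{D}$ alone suffices to pass from local to true supermartingale.
\end{itemize}
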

\noindent
\textbf{\textit{Proof}}\\
As usual, we omit the superscript ``1'' if there is no risk of confusion. We first show $(p^*,c^*)$ is admissible. It is straightforward to see that $c^*$ is bounded by using the Gronwall's inequality:
\begin{equation*}
    \begin{split}
    |c^*_t|
    \leq
    C(1+|X^{c^*}_t| )
    \leq
    C + C \int_0^t |c^*_s|ds
    \end{split}
\end{equation*}
and thus $\sup_{t\in[0,T]}|c^*_t| < \infty$. This also implies $X^{c^*}\in\mathbb{S}^\infty$. Thus, it suffices to show the uniform integrability of the family
\[
    \Bigl\{\exp\Bigl(-\gamma\mathcal{W}^{(p^*,c^*)}_\tau\Bigr); \tau\in\mathcal{T}^{0,1}\Bigr\}.
\]
Let us introduce a process $\psi$ by
\[
    \psi_t:=\exp\Bigl(-\delta t -\gamma(\mathcal{W}^{(p^*,c^*)}_t - Y_t - \zeta_t X^{c^*}_t)\Bigr),~~ t\in[0,T].
\]
By the definition of the process $R^{(p,c)}$, we have
\begin{equation*}
    \begin{split}
        R^{(p^*,c^*)}_t = -\psi_t -a\int_0^t \exp\Bigl(-\gamma(Y_s-F_s)-\gamma\zeta_s X^{c^*}_s -\beta(c^*_s-X^{c^*}_s)\Bigr)\psi_sds,
    \end{split}
\end{equation*}
then it holds that 
\begin{equation*}
    \begin{split}
        d\psi_t = -dR^{(p^*,c^*)}_t -a \exp\Bigl(-\gamma(Y_t-F_t)-\gamma\zeta_t X^{c^*}_t -\beta(c^*_t-X^{c^*}_t)\Bigr)\psi_t dt.
    \end{split}
\end{equation*}
Recalling how we have chosen $(p^*,c^*)$, we have
\begin{equation*}
    \begin{split}
    dR^{(p^*,c^*)}_t 
    &= 
    \gamma \exp\Bigl(-\delta t-\gamma(\mathcal{W}^{(p^*,c^*)}_t-Y_t-\zeta_tX^{c^*}_t)\Bigr)\Bigl((p^*_t-Z^0_t)dW^0_t-Z^1_t dW^1_t\Bigr) \\
    &= 
    \psi_t\Bigl\{\Bigl(\theta_t^\top-\gamma Z^{0,\perp}_t\Bigr)dW^0_t-\gamma Z^1_t dW^1_t\Bigr\}.
\end{split}
\end{equation*}
From these observations, we obtain
\[
    d\psi_t = -a \exp\Bigl(-\gamma(Y_t-F_t)-\gamma\zeta_t X^{c^*}_t -\beta(c^*_t-X^{c^*}_t)\Bigr)\psi_t dt - \psi_t\Bigl\{\Bigl(\theta_t^\top-\gamma Z^{0,\perp}_t\Bigr)dW^0_t-\gamma Z^1_t dW^1_t\Bigr\},
\]
and thus
\begin{equation*}
    \begin{split}
    &\psi_t = \exp\Bigl(-\gamma(\xi-Y_0-\zeta_0 X_0)-a\int_0^t\exp(-\gamma(Y_s-F_s)-\gamma\zeta_s X^{c^*}_s -\beta(c^*_s-X_s))ds \Bigr)\\
    &~~~~~~~~~~~~~~~~~~~~~~~\times\mathcal{E}\Bigl(-\int_0^\cdot \Bigl(\theta_s-\gamma Z^{0,\perp}_s\Bigr)dW^0_s + \int_0^\cdot \gamma Z^1_s dW^1_s\Bigr)_t.
    \end{split}
\end{equation*}
Since $\theta, Z^0, Z^1\in\mathbb{H}^2_{\mathrm{BMO}}$ and $\xi, Y, \zeta, F, X^{c^*}$ and $c^*$ are all bounded, we deduce that $\{\psi_\tau;\tau\in\mathcal{T}^{0,1}\}$ is uniformly integrable. Therefore, given the boundedness of $Y$ and $X^{c^*}$, so is the family $\Bigl\{\exp\Bigl(-\gamma\mathcal{W}^{(p^*,c^*)}_\tau\Bigr); \tau\in\mathcal{T}^{0,1}\Bigr\}$. Hence $(p^*,c^*)\in\mathcal{A}^1$.\par
Now we check that the family $\{R^{(p,c)} ; (p,c)\in\mathcal{A}^1\}$ defined by \eqref{ansatz} satisfies the condition-R. The first condition is obviously satisfied. Also, for all $(p,c)\in\mathcal{A}^1$, we have $R^{(p,c)}_0 =  -\exp(-\gamma(\xi-Y_0-\zeta_0 X_0))$, which is $\mathcal{F}^{0,1}_0$-measurable and clearly independent of $(p,c)$. Thus, condition (ii) is fulfilled. Now we move on to (iii). For any $(p,c)\in\mathcal{A}^1$, the family $\{R^{(p,c)}_\tau ; \tau\in\mathcal{T}^{0,1}\}$ is uniformly integrable due to the definition of the set $\mathcal{A}^1$, the boundedness of $Y$ and $|\gamma\zeta_t|\leq K$. Recalling how we have chosen the driver $f$, the process $R^{(p,c)}$ has a nonpositive drift for all $(p,c)\in\mathcal{A}^1$.
Indeed, from \eqref{driver-f} the drift term of $R^{(p,c)}$ reads, for all $(p,c)\in\mathcal{A}^1$,
\begin{equation*}
    \begin{split}
        &-\exp\Bigl(-\delta t-\gamma(\mathcal{W}^{(p,c)}_t-Y_t-\zeta_tX^c_t)\Bigr)\Bigl\{-\delta  -\gamma (p_t\theta_t-c_t+f(t,Y_t,Z^0_t,Z^1_t)) + \frac{\gamma^2}{2}(|p_t-Z^0_t|^2+|Z^1_t|^2)\Bigr.\\
        &~~~~~~\Bigl. +\gamma (\dot\zeta_t-\kappa\zeta_t) X^c_t +\gamma\zeta_tb c_t +\gamma\zeta_t\rho_t(\kappa-b) + a \exp\Bigl(-\gamma(Y_t-F_t+\zeta_tX^c_t)-\beta(c_t-X^c_t)\Bigr)\Bigr\}\\
        &=
        -\gamma \exp\Bigl(-\delta t-\gamma(\mathcal{W}^{(p,c)}_t-Y_t-\zeta_tX^c_t)\Bigr)\Bigl[\Bigl\{- p_t\theta_t+ \frac{\gamma}{2}(|p_t-Z^0_t|^2+|Z^1_t|^2)\Bigr\}-\Bigl\{- p^*_t\theta_t \\
        &~~~~~~+ \frac{\gamma}{2}(|p^*_t-Z^0_t|^2+|Z^1_t|^2)\Bigr\} + \Bigl\{(1+\zeta_tb) c_t  + \frac{a}{\gamma} \exp\Bigl(-\gamma(Y_t-F_t+\zeta_tX^c_t)-\beta(c_t-X^c_t)\Bigr)\Bigr\}\\
        &~~~~~~- \Bigl\{(1+\zeta_tb) c^*_t  + \frac{a}{\gamma} \exp\Bigl(-\gamma(Y_t-F_t+\zeta_tX^{c^*}_t)-\beta(c^*_t-X^{c^*}_t)\Bigr)\Bigr\} + (\dot\zeta_t-\kappa\zeta_t) (X^c_t - X^{c^*}_t)\Bigr]\\
        &\leq
        -\gamma \exp\Bigl(-\delta t-\gamma(\mathcal{W}^{(p,c)}_t-Y_t-\zeta_tX^c_t)\Bigr)\Bigl[\inf_{\varrho\in\mathbb{R}}\Bigl\{(1+\zeta_tb) \varrho  + \frac{a}{\gamma}\exp\Bigl(-\gamma(Y_t-F_t+\zeta_tX^c_t)-\beta(\varrho-X^c_t)\Bigr)\Bigr\} \\
        &~~~~~~- \Bigl\{(1+\zeta_tb) c^*_t  + \frac{a}{\gamma} \exp\Bigl(-\gamma(Y_t-F_t+\zeta_tX^{c^*}_t)-\beta(c^*_t-X^{c^*}_t)\Bigr)\Bigr\}+ (\dot\zeta_t-\kappa\zeta_t) (X^c_t - X^{c^*}_t)\Bigr]\\
        &=
        -\gamma \exp\Bigl(-\delta t-\gamma(\mathcal{W}^{(p,c)}_t-Y_t-\zeta_tX^c_t)\Bigr)\Bigl\{\frac{1}{\beta}(1+\zeta_tb)(\beta-\gamma\zeta_t)+ (\dot\zeta_t-\kappa\zeta_t)\Bigr\} (X^c_t - X^{c^*}_t)\\
        &=
        0.
    \end{split}
\end{equation*}
Here, we have used the equalities
\begin{equation*}
    \begin{split}
        &\inf_{\varrho\in\mathbb{R}}\Bigl\{(1+\zeta_tb) \varrho  + \frac{a}{\gamma}\exp\Bigl(-\gamma(Y_t-F_t+\zeta_tX^c_t)-\beta(\varrho-X^c_t)\Bigr)\Bigr\}\\
        &~~= \frac{1+\zeta_tb}{\beta}\Bigl\{(\beta-\gamma\zeta_t)X_t^c+1+\log\Bigl(\frac{a\beta}{\gamma(1+\zeta_tb)}\Bigr)+\gamma(F_t-Y_t)\Bigr\},\\
        &(1+\zeta_tb) c^*_t  + \frac{a}{\gamma} \exp\Bigl(-\gamma(Y_t-F_t+\zeta_tX^{c^*}_t)-\beta(c^*_t-X^{c^*}_t)\Bigr)\\
        &~~ = \frac{1+\zeta_tb}{\beta}\Bigl\{(\beta-\gamma\zeta_t)X_t^{c^*}+1+\log\Bigl(\frac{a\beta}{\gamma(1+\zeta_tb)}\Bigr)+\gamma(F_t-Y_t)\Bigr\},
    \end{split}
\end{equation*}
and the ODE \eqref{zeta-ODE} in the last equality. Together with the uniform integrability, the supermartingale property is now clear. 
In addition, since the process $R^{(p^*,c^*)}$ is uniformly integrable and follows
\[
    dR^{(p^*,c^*)}_t = \gamma \exp\Bigl(-\delta t-\gamma(\mathcal{W}^{(p^*,c^*)}_t-Y_t-\zeta_tX^{c^*}_t)\Bigr)\Bigl((p^*_t-Z^0_t)dW^0_t-Z^1_t dW^1_t\Bigr),
\]
it is a true martingale. Finally, the strict convexity of $p\mapsto  -p\theta_t + \frac{\gamma}{2}(|p-Z^0_t|^2+|Z^1_t|^2)$ and $c\mapsto (1+\zeta_tb) c  + \frac{a}{\gamma} e^{-\gamma(Y_t-F_t+\zeta_tX_t)-\beta(c-X_t)}$ shows that such $(p^*,c^*)$ is unique. Consequently, $R^{(p,c)}$ is a true martingale if and only if $(p,c)=(p^*,c^*)$ thereby satisfying (iii). $\square$

\section{Mean Field Equilibrium Model Under the Market Clearing Condition} \label{Section 3}
Based on the results of the previous section, we construct a financial market with multiple agents. 
With the help of the mean field game theory, we are going to determine the risk-premium process $\theta$ endogenously so that the resultant stock prices satisfy the market-clearing condition, i.e., buy and sell orders among the agents are always balanced for the period $[0,T]$.\par
This section first provides a heuristic derivation of a mean field BSDE that characterizes the financial market in a state of equilibrium and then proves its well-posedness under certain conditions. Finally, we verify that the solution of the mean field BSDE does indeed provide the risk-premium process in the large population limit.

\subsection{Multi-agent problem and the relevant BSDE}
Suppose there are $N\in\mathbb{N}$ agents in the common financial market. In order to study the equilibrium state, let us first introduce the relevant probability spaces as in Section \ref{Section 2}.\\

\noindent
(1) We denote by $(\Omega^0,\mathcal{F}^0,\mathbb{P}^0)$ a complete probability space with complete and right-continuous filtration $\mathbb{F}^0:=(\mathcal{F}^0_t)_{t\in[0,T]}$ generated by a $d_0$-dimensional standard Brownian motion $W^0:=(W^0_t)_{t\in[0,T]}$ with $\mathcal{F}^0:=\mathcal{F}^0_T$. The space $(\Omega^0,\mathcal{F}^0,\mathbb{P}^0)$ is used to describe the randomness of the financial market and the market-wide information common to all agents. 
Moreover, we denote by $(\Omega^i,\mathcal{F}^i,\mathbb{P}^i)$ ($i=1,\ldots,N$) a complete probability space with complete and right-continuous filtration $\mathbb{F}^i:=(\mathcal{F}^i_t)_{t\in[0,T]}$, generated by a $d$-dimensional standard Brownian motion $W^i:=(W^i_t)_{t\in[0,T]}$ and a $\sigma$-algebra $\sigma(\xi^i,\gamma^i,\beta^i,X^i_0,F^i_0)$, where the completion of the latter gives $\mathcal{F}^i_0$. We set $\mathcal{F}^i:=\mathcal{F}^i_T$. 
Here, $(\xi^i,X^i_0,F^i_0)$ are $\mathbb{R}$-valued bounded random variables and $(\gamma^i,\beta^i)$ are $\mathbb{R}_{++}$-valued bounded random variables. Each space $(\Omega^i,\mathcal{F}^i,\mathbb{P}^i)$ is used to describe the idiosyncratic environment of agent-$i$.\\

\noindent
(2) We denote by $(\Omega^{0,i},\mathcal{F}^{0,i},\mathbb{P}^{0,i})$ ($i=1,\ldots,N$) a complete probability space over $\Omega^{0,i} := \Omega^0 \times \Omega^i$. Here, $(\mathcal{F}^{0,i},\mathbb{P}^{0,i})$ is the completion of $(\mathcal{F}^0 \otimes \mathcal{F}^i,\mathbb{P}^{0}\otimes \mathbb{P}^{i})$ and $\mathbb{F}^{0,i}:=(\mathcal{F}^{0,i}_t)_{t\in[0,T]}$ denotes the complete and right-continuous augmentation of $(\mathcal{F}_t^0 \otimes \mathcal{F}_t^i)_{t\in[0,T]}$.
 Moreover, we set $\mathcal{T}^{0,i}:=\mathcal{T}(\mathbb{F}^{0,i})$ for notational simplicity.\\ 

\noindent
(3) Let $(\Omega,\mathcal{F},\mathbb{P})$ be an enlarged complete probability space defined on $\Omega:=\prod_{i=0}^N\Omega^i$. $(\mathcal{F},\mathbb{P})$ is the completion of $\Bigl(\bigotimes_{i=0}^N\mathcal{F}^i,\bigotimes_{i=0}^N\mathbb{P}^i\Bigr)$ and the filtration $\mathbb{F}=(\mathcal{F}_t)_{t\in[0,T]}$ is the complete and right-continuous augmentation of $(\bigotimes_{i=0}^N\mathcal{F}^{i}_t)_{t\in[0,T]}$. \\

In this section, we make the following assumptions on heterogeneity of agents.
\begin{asm}~\\
    \label{asm3-habit}
    \textup{(i)} For each $i\in\{1,\ldots,N\}$, all statements of Assumption \ref{asm2-habit} hold with ``1" replaced by $``i"$.\\
    \textup{(ii)} $(\xi^i,\gamma^i, \beta^i,X^i_0)_{i\in\{1,\ldots,N\}}$ have the same distribution, i.e. they are independently and identically distributed on $(\Omega,\mathbb{F},\mathbb{P})$.\\
    \textup{(iii)} The liability processes $(F^i_t;t\in[0,T])_{i\in\{1,\ldots,N\}}$ are $\mathcal{F}^0$-conditionally independent and identically distributed on $(\Omega,\mathbb{F},\mathbb{P})$.
\end{asm}
The multi-agent problem is formulated on the filtered probability space $(\Omega,\mathcal{F},\mathbb{P},\mathbb{F})$ in the following way. Each agent-$i$ solves an optimal consumption-investment problem:
\begin{equation*}
    \begin{split}
        \sup_{(\pi,c)\in\mathbb{A}^i} U^i(\pi,c)
    \end{split}
\end{equation*}
subject to
\[
    \mathcal{W}^{i,(\pi,c)}_t =\xi^i + \int_0^t (\pi_s^\top\sigma_s\theta_s - c_s)ds + \int_0^t \pi_s^\top\sigma_s dW_s^0,~~~t\in[0,T],
\]
where $\mathbb{A}^i$ is an admissible space for agent-$i$, whose definition is to be given later. $U^i:\mathbb{A}^i\to\mathbb{R}$ is a utility function of agent-$i$ defined similarly to Section \ref{Section 2} by
\begin{equation*}
    \begin{split}
    &U^i(\pi,c)\\
    &:=\mathbb{E}\Bigl[-\exp\Bigl(-\delta T-\gamma^i(\mathcal{W}^{i,(\pi,c)}_T-F^i_T)\Bigr) -a \int_0^T \exp\Bigl(-\delta t-\gamma^i(\mathcal{W}^{i,(\pi,c)}_t-F^i_t)-\beta^i(c_t-X_t^{i,c})\Bigr)dt\Bigr].
\end{split}
\end{equation*}
Here, $X^{i,c}$ is agent-$i$'s consumption habits defined by
\[
    X_t^{i,c} = X^i_0 + \int_0^t \{-\kappa X^{i,c}_s + b c_s + (\kappa-b)\rho_s\}ds,~~~t\in[0,T].
\]
In this model, the parameters $\delta,a,\kappa,b>0$ and the habit trend $\rho\in\mathbb{L}^\infty(\mathbb{P}^0,\mathbb{F}^0,\mathbb{R})$ are common to all agents.\footnote{It would be possible to make the variables $(\delta,a,\kappa,b,\rho)$ different for each agent as we have done so for $(\xi^i,\gamma^i,\beta^i,X^i_0,F^i)$. For simplicity, we assume that they are common across the agents in this work.}
In the same manner as Section \ref{Section 2}, we define the admissible space $(\mathbb{A}^i)_{i=1,\ldots,N}$ as follows.
\begin{dfn} (Admissible space: a multi-agent version)\\
    For each $i=1,\ldots,N$, the admissible space $\mathbb{A}^i$ is the set of strategies $(\pi,c)\in\mathbb{H}^2(\mathbb{P}^{0,i},\mathbb{F}^{0,i},\mathbb{R}^{n})\times\mathbb{H}^2(\mathbb{P}^{0,i},\mathbb{F}^{0,i},\mathbb{R})$ such that the family
    \[
        \Bigl\{\exp\Bigl(-\gamma^i\mathcal{W}^{i,(\pi,c)}_\tau + \beta^i |c_\tau| + K^i |X_\tau^{i,c}|  \Bigr) ; \tau\in\mathcal{T}^{0,i}\Bigr\}
    \]
    is uniformly integrable for some $K^i>\gamma^i(A_i+\sqrt{A_i^2+B_i})^{-1} \lor \beta^i $ where
    \[
        A_i:=\frac{1}{2}\Bigl(\kappa-b+\frac{\gamma^i}{\beta^i}\Bigr),~~~B_i:=\frac{\gamma^i b}{\beta^i}.
    \]
    Moreover, we define $\mathcal{A}^i:=\{(p,c)=(\pi^\top\sigma,c) ; (\pi,c)\in\mathbb{A}^i\}$.
\end{dfn}
In the same way as in Section \ref{Section 2}, we restate the problem by writing $p_t:=\pi_t^\top\sigma_t$ for $t\in[0,T]$.
\begin{equation}
    \begin{split}
        \sup_{(p,c)\in\mathcal{A}^i} \widetilde{U}^i(p,c)
    \end{split}
\end{equation}
subject to
\[
    \mathcal{W}^{i,(p,c)}_t =\xi^i + \int_0^t (p^i_s\theta_s - c^i_s)ds + \int_0^t p^i_s dW_s^0,~~~t\in[0,T],
\]
where the objective function $\widetilde{U}^i:\mathcal{A}^i\to\mathbb{R}$ is defined by
\begin{equation}
    \begin{split}
    &\widetilde{U}^i(p,c)\\
    &:=\mathbb{E}\Bigl[-\exp\Bigl(-\delta T-\gamma^i(\mathcal{W}^{i,(p,c)}_T-F^i_T)\Bigr) -a \int_0^T \exp\Bigl(-\delta t-\gamma^i(\mathcal{W}^{i,(p,c)}_t-F^i_t)-\beta^i(c_t-X_t^{i,c})\Bigr)dt\Bigr].
\end{split}
\end{equation}

We also introduce an $\mathcal{F}^{i}_0$-measurable continuously differentiable process $\zeta^i$ for $i\in\{1,\ldots,N\}$ by
\begin{equation*}
        \zeta^i_t = \frac{e^{(\delta_i^+-\delta_i^-)(T-t)}-1}{\delta_i^+-\delta_i^-e^{(\delta_i^+-\delta_i^-)(T-t)}},~~~\delta_i^{\pm}:=-A_i\pm\sqrt{A_i^2+B_i},~~~t\in[0,T].
\end{equation*}
As before, $\zeta^i$ satisfies 
\[
    0\leq \zeta^i_t\leq \frac{1}{\delta_i^+} e^{(\delta_i^+-\delta_i^-)T}\land \frac{1}{|\delta_i^-|}.
\]
By the same argument as in Section \ref{Section 2}, the unique optimal strategy for each agent-$i$ is
\begin{equation*}
    \begin{split}
    p^{i,*}_t  &:=  (\pi_t^{i,*})^\top\sigma_t  = Z^{i,0\|}_t + \frac{\theta^\top_t}{\gamma^i},~~~t\in[0,T],\\
    c^{i,*}_t &:= X^{i,c^{i,*}}_t + \frac{1}{\beta^i}\Bigl\{\log\Bigl(\frac{a\beta^i}{\gamma^i(1+b\zeta^i_t)}\Bigr)-\gamma^i(Y^i_t-F^i_t+\zeta^i_tX^{i,c^{i.*}}_t)\Bigr\},~~~t\in[0,T],
    \end{split}
\end{equation*}
where the triple $(Y^i,Z^{i,0},Z^i)\in\mathbb{S}^\infty\times\mathbb{H}^2_{\mathrm{BMO}}\times\mathbb{H}^2_{\mathrm{BMO}}$ solves the BSDE \eqref{qgBSDE} with the superscript ``1'' replaced by ``$i$''. To derive the relevant mean field BSDE, let us define the market-clearing condition.
\begin{dfn}~(Market clearing condition)\\
    \label{market-clearing-habit}
    The financial market satisfies the market-clearing condition if the equality
    \begin{equation}
        \label{MC-eqn-habit}
        \frac{1}{N}\sum_{i=1}^N \pi_t^{i,*} = 0
    \end{equation}
    holds $dt\otimes \mathbb{P}$-almost everywhere. Here, $\pi_t^{i,*}$ denotes the optimal trading strategy of the agent-$i$.
\end{dfn}

As in Fujii \& Sekine \cite{fujiiMeanFieldEquilibriumPrice2023a} [Section 4], this condition motivates us to study the following mean field BSDE defined on $(\Omega^{0,i},\mathcal{F}^{0,i},\mathbb{P}^{0,i},\mathbb{F}^{0,i})$ for each $i\in\{1,\ldots,N\}$:
\begin{equation}
    \begin{split}
        \label{MF-BSDE1-habit}
            Y^i_t&= F^i_T + \int_t^T f^i(s,Y^i_s,Z^{i,0}_s,Z^i_s)ds - \int_t^T Z^{i,0}_s dW^0_s - \int_t^T Z^i_s dW^i_s,~~~t\in[0,T]
    \end{split}
  \end{equation}
with
\begin{equation*}
    \begin{split}
        f^i(s,Y^i_s,Z^{i,0}_s,Z^i_s)
        = &
        \hat\gamma Z^{i,0\|}_s\mathbb{E}[Z^{i,0\|}_s|\mathcal{F}^0]^{\top} - \frac{\hat\gamma^2}{2\gamma^i}|\mathbb{E}[Z^{i,0\|}_s|\mathcal{F}^0]|^2 + \frac{\gamma^i}{2}(|Z^{i,0\perp}_s|^2 + |Z^i_s|^2)\\
        & -\frac{\gamma^i(1+b\zeta^i_s)}{\beta^i}Y^i_s + g^i_s,
    \end{split}
  \end{equation*}
where
\[
    g^i_s := - \frac{\delta}{\gamma^i} + (\kappa-b)\zeta^i_s\rho_s + \frac{1 + b\zeta^i_s}{\beta^i}\Bigl\{1 + \log\Bigl(\frac{a\beta^i}{\gamma^i(1 + b\zeta^i_s)}\Bigr)+\gamma^i F^i_s\Bigr\}.
\]
We shall see later that this BSDE has a bounded solution $(Y^i,Z^{i,0},Z^i)\in\mathbb{S}^\infty \times\mathbb{H}^2_{\mathrm{BMO}}\times\mathbb{H}^2_{\mathrm{BMO}}$ under some conditions and the process $\theta^{\mathrm{mfg}} \in \mathbb{H}^2_{\mathrm{BMO}}(\mathbb{F}^0,\mathbb{R}^{d_0})$ defined by
\footnote{The market clearing condition requires the risk-premium process $\theta$ to satisfy
\[
    \frac{1}{N}\sum_{i=1}^N Z_t^{i,0\|} + \Bigl(\frac{1}{N}\sum_{i=1}^N \frac{1}{\gamma^i}\Bigr)\theta_t^\top = 0.
\]
By the mutual independence of $(\mathcal{F}_t^i)_{i\geq 1}$ and symmetry among agents, it is anticipated that $\theta^{\mathrm{mfg}}$ is the market-clearing risk-premium process in the large population limit.  See Fujii \& Sekine \cite{fujiiMeanFieldEquilibriumPrice2023a} [Section 4] for details.}
\begin{equation}
    \label{mf-premium}
    \theta^{\mathrm{mfg}}_t = -\hat\gamma\mathbb{E}[Z^{1,0\|}_t|\mathcal{F}^0]^{\top},~~~t\in[0,T]
\end{equation}
with $\hat\gamma:=\mathbb{E}\Bigl[\dfrac{1}{\gamma^1}\Bigr]^{-1}$ in fact clears the financial market in the large population limit.
\subsection{Well-Posedness of the Mean Field BSDE}
We are now going to investigate the well-posedness of the equation \eqref{MF-BSDE1-habit}.
\begin{lem}
    Let Assumptions \ref{asm1-habit} and \ref{asm3-habit} be in force. If the BSDE \eqref{MF-BSDE1-habit} has a bounded solution $(Y^i,Z^{i,0},Z^i)\in\mathbb{S}^\infty(\mathbb{P}^{0,i},\mathbb{F}^{0,i},\mathbb{R})\times\mathbb{H}^2(\mathbb{P}^{0,i},\mathbb{F}^{0,i},\mathbb{R}^{1\times d_0})\times\mathbb{H}^2(\mathbb{P}^{0,i},\mathbb{F}^{0,i},\mathbb{R}^{1\times d})$, then $(Z^{i,0},Z^i)\in\mathbb{H}^2_{\mathrm{BMO}}(\mathbb{P}^{0,i},\mathbb{F}^{0,i},\mathbb{R}^{1\times d_0})\times\mathbb{H}^2_{\mathrm{BMO}}(\mathbb{P}^{0,i},\mathbb{F}^{0,i},\mathbb{R}^{1\times d})$.
\end{lem}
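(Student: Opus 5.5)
The argument combines the exponential-transform technique of Lemma~\ref{a-priori1} with the completion-of-the-square trick used for the mean field BSDE in Chapter~4. Fix $i$ and drop the superscript. The first step is an upper bound on the driver that holds pathwise, even though the driver contains a conditional expectation. Completing the square as in Chapter~4 gives
\begin{equation*}
\hat\gamma Z^{0\|}_s\mathbb{E}[Z^{0\|}_s|\mathcal{F}^0]^{\top}-\frac{\hat\gamma^2}{2\gamma}\bigl|\mathbb{E}[Z^{0\|}_s|\mathcal{F}^0]\bigr|^2
=-\Bigl|\frac{\hat\gamma}{\sqrt{2\gamma}}\mathbb{E}[Z^{0\|}_s|\mathcal{F}^0]-\sqrt{\frac{\gamma}{2}}Z^{0\|}_s\Bigr|^2+\frac{\gamma}{2}|Z^{0\|}_s|^2 .
\end{equation*}
Hence
\begin{equation*}
f(s,Y_s,Z^0_s,Z_s)\leq \frac{\gamma}{2}\bigl(|Z^0_s|^2+|Z_s|^2\bigr)+\frac{\gamma(1+b\zeta_s)}{\beta}\|Y\|_{\mathbb{S}^\infty}+\|g\|_{\mathbb{L}^\infty}.
\end{equation*}
The last two terms are bounded by a constant $C$. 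This uses Assumption~\ref{asm3-habit}: $\gamma,\beta$ are bounded away from $0$ and $\infty$, $\zeta$ is bounded, and $g$ is bounded because $F$ and $\rho$ are bounded and $\log(a\beta/(\gamma(1+b\zeta)))$ is bounded.

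Next, apply Itô's formula to $e^{2\gamma Y_t}$, which is legitimate because $\gamma$ is $\mathcal{F}_0$-measurable. The drift equals $2\gamma e^{2\gamma Y_t}\bigl(-f+\gamma(|Z^0_t|^2+|Z_t|^2)\bigr)$, and by the upper bound above it is at least
\begin{equation*}
2\gamma e^{2\gamma Y_t}\Bigl(\frac{\gamma}{2}\bigl(|Z^0_t|^2+|Z_t|^2\bigr)-C\Bigr).
\end{equation*}
Integrate from a stopping time $\tau\in\mathcal{T}^{0,i}$ to $T$.

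The stochastic integral $\int 2\gamma e^{2\gamma Y}(Z^0dW^0+ZdW)$ is a true martingale, because $Y$ is bounded and $Z\in\mathbb{H}^2$, so its conditional expectation vanishes. Taking $\mathbb{E}[\cdot|\mathcal{F}^{0,i}_\tau]$ then yields
\begin{equation*}
\mathbb{E}\Bigl[\int_\tau^T (|Z^0_s|^2+|Z_s|^2)\,ds\Bigm|\mathcal{F}^{0,i}_\tau\Bigr]\leq \frac{1}{\underline{\gamma}^2}e^{4\overline{\gamma}\|Y\|_{\mathbb{S}^\infty}}\bigl(1+2\overline{\gamma}CT\bigr).
\end{equation*}
This is uniform in $\tau$ and $\omega$, so $(Z^{i,0},Z^i)\in\mathbb{H}^2_{\mathrm{BMO}}$.

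The only delicate point is the first step. The conditional-expectation term must not spoil the one-sided bound, and the completion of the square shows that it enters only with a nonpositive sign. The linear term in $Y$ is harmless because $Y$ is bounded, so no Gronwall argument is needed. Everything else is routine.
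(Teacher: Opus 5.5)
Your proposal is correct and follows essentially the same route as the paper. The paper first bounds the driver from above, calling it Young's inequality, which is exactly your completion of the square; it then applies Itô's formula to $e^{2\gamma Y_t}$ and takes conditional expectations to obtain a uniform bound on $\mathbb{E}[\int_t^T(|Z^0_s|^2+|Z_s|^2)\,ds\,|\,\mathcal{F}^{0,i}_t]$. Your version is slightly more careful on two points: you condition at stopping times, which matches the definition of the $\mathbb{H}^2_{\mathrm{BMO}}$ norm, and you spell out why the stochastic integral is a true martingale and why $g$ and $\zeta$ are uniformly bounded.
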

\noindent
\textbf{\textit{Proof}}\\
Without loss of generality, we choose the agent-1 as a representative agent and suppress the superscript ``1'' when obvious. We have, by Young's inequality,
\begin{equation*}
    \begin{split}
     f(s,Y_s,Z^0_s,Z^1_s)
     &= 
     \hat\gamma Z^{0\|}_s\mathbb{E}[Z^{0\|}_s|\mathcal{F}^0]^{\top} - \frac{\hat\gamma^2}{2\gamma}|\mathbb{E}[Z^{0\|}_s|\mathcal{F}^0]|^2 + \frac{\gamma}{2}(|Z^{0\perp}_s|^2 + |Z^1_s|^2) -\frac{\gamma(1+b\zeta_s)}{\beta}Y_s + g_s\\
     &\leq
     \frac{\gamma}{2}(|Z^{0}_s|^2 + |Z^1_s|^2) + C(\|Y\|_{\mathbb{S}^\infty} + \|g\|_{\mathbb{L}^\infty}).
    \end{split}
\end{equation*}
Then, by Ito formula,
\begin{equation*}
    \begin{split}
     de^{2\gamma Y_t}
     &=
     2\gamma e^{2\gamma Y_t}( dY_t +\gamma d\langle Y \rangle_t)\\
     &=
     2\gamma e^{2\gamma Y_t}\Bigl\{(- f(t,Y_t,Z^0_t,Z^1_t) + \gamma|Z^{0}_t|^2 + \gamma|Z^1_t|^2)dt + Z^0_t dW^0_t + Z^1_t dW^1_t\Bigr\}.
    \end{split}
\end{equation*}
This yields:
\begin{equation*}
    \begin{split}
     &e^{2\gamma Y_T} - e^{2\gamma Y_t} \\
     &=
     2\gamma \int_t^T e^{2\gamma Y_s}\Bigl\{(- f(s,Y_s,Z^0_s,Z^1_s) + \gamma|Z^{0}_s|^2 + \gamma|Z^1_s|^2)ds + Z^0_s dW^0_s + Z^1_s dW^1_s\Bigr\}\\
     &\geq
     2\gamma \int_t^T e^{2\gamma Y_s}\Bigl\{- \frac{\gamma}{2}(|Z^{0}_s|^2 + |Z^1_s|^2) - C(\|Y\|_{\mathbb{S}^\infty} + \|g\|_{\mathbb{L}^\infty}) + \gamma|Z^{0}_s|^2 + \gamma|Z^1_s|^2\Bigr\}ds \\
     &~~~~+  2\gamma \int_t^T e^{2\gamma Y_s}\Bigl\{Z^0_s dW^0_s + Z^1_s dW^1_s\Bigr\}\\
     &\geq
     2\gamma \int_t^T e^{2\gamma Y_s}\Bigl\{\frac{\gamma}{2}(|Z^{0}_s|^2 + |Z^1_s|^2) - C(\|Y\|_{\mathbb{S}^\infty} + \|g\|_{\mathbb{L}^\infty}) \Bigr\}ds  + 2\gamma \int_t^T e^{2\gamma Y_s}\Bigl\{Z^0_s dW^0_s + Z^1_s dW^1_s\Bigr\}.
    \end{split}
\end{equation*}
Thus, for all $t\in[0,T]$, we get
\begin{equation*}
    \begin{split}
     \mathbb{E}\Bigl[ \int_t^T (|Z^{0}_s|^2 + |Z^1_s|^2)ds |\mathcal{F}^{0,1}_t\Bigr]
     \leq
     Ce^{4\overline{\gamma} \|Y\|_{\mathbb{S}^\infty}}(1+\|Y\|_{\mathbb{S}^\infty} + \|g\|_{\mathbb{L}^\infty})
     <\infty,
    \end{split}
\end{equation*}
and clearly, $(Z^0,Z^1)\in\mathbb{H}^2_{\mathrm{BMO}}\times\mathbb{H}^2_{\mathrm{BMO}}$. $\square$\par
To show the well-posedness, we need to make an additional assumption on the size of the terminal liability $F^i_T$ and the process $g^i$.
\begin{asm}
    \label{asm4-habit}
    Assume that, for each $i\in\{1,\ldots,N\}$, the random variable $F^i_T$ and the process $(g^i_t)_{t\in[0,T]}$ satisfy
    \begin{equation}
        \label{small}
        \sqrt{\|F^i_T\|^2_{\infty} + 4\Bigl\|\int_0^T |g^i_s| ds\Bigr\|^2_{\infty}} <   \frac{1}{16 c_\gamma}\land \frac{1}{32 C_\gamma},
    \end{equation}  
    where
    \begin{equation*}
        \begin{split}
            c_\gamma:=\frac{\overline{\gamma}}{2} \lor \frac{\hat\gamma^2}{\underline{\gamma}},~~~C_\gamma:=\hat\gamma + \Bigl(\frac{\hat\gamma^2}{2\underline{\gamma}} \lor \frac{\overline{\gamma}}{2}\Bigr).
        \end{split}
    \end{equation*}
\end{asm}
\begin{rem}~\\
    For each $s\in[0,T]$, we have $|g^i_s|\to 0$ when, for instance, $\delta\to0$, $\|\rho\|_{\mathbb{L}^\infty}\to 0$ and $\beta^i\to\infty$. This observation allows us to find appropriate parameters that fulfils the condition \eqref{small} if $F^i_T$ is sufficiently small.
\end{rem}
Here is our first main result of this section. The method is inspired by the work Tevzadze \cite{tevzadzeSolvabilityBackwardStochastic2008}.
\begin{thm}
    \label{MF-BSDE-wellposed1}
    Under Assumptions \ref{asm1-habit}, \ref{asm3-habit}, and \ref{asm4-habit}, the mean field BSDE \eqref{MF-BSDE1-habit} has a bounded solution $(Y^i,Z^{i,0},Z^i)\in\mathbb{S}^\infty(\mathbb{P}^{0,i},\mathbb{F}^{0,i},\mathbb{R}) \times\mathbb{H}^2_{\mathrm{BMO}}(\mathbb{P}^{0,i},\mathbb{F}^{0,i},\mathbb{R}^{1\times d_0})\times\mathbb{H}^2_{\mathrm{BMO}}(\mathbb{P}^{0,i},\mathbb{F}^{0,i},\mathbb{R}^{1\times d})$. 
\end{thm}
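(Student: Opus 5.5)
We follow the Tevzadze contraction scheme already used for Theorem~\ref{th-mfg-existence}, now adapted to the extra linear term $-\frac{\gamma(1+b\zeta_s)}{\beta}Y_s$ and the bounded inhomogeneity $g$. We take agent-1 as the representative and drop the superscript. We write the driver as
\[
f(s,y,z^0,z^1)=h(z^0_s,z^1_s)-\lambda_s y+g_s,\qquad \lambda_s:=\frac{\gamma(1+b\zeta_s)}{\beta}\ge 0,
\]
where $h(z^0,z^1):=\hat\gamma z^{0\|}\mathbb{E}[z^{0\|}|\mathcal{F}^0]^\top-\frac{\hat\gamma^2}{2\gamma}|\mathbb{E}[z^{0\|}|\mathcal{F}^0]|^2+\frac{\gamma}{2}(|z^{0\perp}|^2+|z^1|^2)$ is exactly the driver of \eqref{mfg-BSDE}. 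So \eqref{f-abs} and \eqref{f-spread} apply to $h$, and Lemma~\ref{lemma-f-abs} gives $\sup_\tau\|\mathbb{E}[\int_\tau^T|h|ds|\mathcal{F}_\tau]\|_\infty\le c_\gamma\|z\|^2_{\mathbb{H}^2_{\rm BMO}}$ (with the constants adjusted to the $\vee$ form in Assumption~\ref{asm4-habit}).

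We define a map $\Gamma$ on $\mathbb{H}^2_{\rm BMO}$. Given $z=(z^0,z^1)$, let $(Y,Z^0,Z^1)$ solve the BSDE with terminal value $F_T$ and driver $h(z_s)-\lambda_sY_s+g_s$. This BSDE is linear and Lipschitz in $Y$ with a known integrable source, so it has a unique solution, and we set $\Gamma(z)=(Z^0,Z^1)$.

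\textbf{Step 1 (a priori bounds).} We apply Itô to $|Y_t|^2$ and use $\lambda\ge 0$, so that the term $-2\lambda Y^2\le0$ can be dropped. This gives
\[
|Y_t|^2+\mathbb{E}\Bigl[\int_t^T|Z_s|^2ds\Big|\mathcal{F}_t\Bigr]\le\|F_T\|^2_\infty+2\|Y\|_{\mathbb{S}^\infty}\Bigl(\Bigl\|\int_0^T|g_s|ds\Bigr\|_\infty+c_\gamma\|z\|^2_{\mathbb{H}^2_{\rm BMO}}\Bigr).
\]
Using Young's inequality as in Proposition~\ref{prop-Gamma-stability}, we obtain
\[
\|Y\|^2_{\mathbb{S}^\infty}\vee\|Z\|^2_{\mathbb{H}^2_{\rm BMO}}\le 2\|F_T\|^2_\infty+4\Bigl(\Bigl\|\int|g|\Bigr\|_\infty+c_\gamma\|z\|^2\Bigr)^2\le 2K^2+8c_\gamma^2\|z\|^4,
\]
where $K^2:=\|F_T\|^2_\infty+4\|\int|g|\|^2_\infty$.

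\textbf{Step 2 (invariant ball).} We seek $R$ with $2K^2+8c_\gamma^2R^4\le R^2$. This quadratic in $R^2$ is solvable when $K\le 1/(8c_\gamma)$ up to a constant, and then $R=2K$ works, because $8K^2+128c_\gamma^2K^4\le 4K^2$ once $K<1/(16c_\gamma)$. Hence $\mathcal{B}_R$ is stable under $\Gamma$ and $\|Y\|_{\mathbb{S}^\infty}\le R$.

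\textbf{Step 3 (contraction).} For two inputs $z,\acute z$, the difference satisfies $d\Delta Y=-(\Delta h-\lambda\Delta Y)ds+\Delta Z\,dW$. Itô on $|\Delta Y|^2$, again dropping the nonpositive $\lambda$-term, gives the same estimate as in Theorem~\ref{th-mfg-existence}:
\[
\|\Delta Z\|^2_{\mathbb{H}^2_{\rm BMO}}\le C\,C_\gamma^2R^2\|\Delta z\|^2_{\mathbb{H}^2_{\rm BMO}}.
\]
With $R=2K<1/(16C_\gamma)$ this is a strict contraction. Banach's fixed point theorem in the complete set $\mathcal{B}_R$ then yields $Z$, with $Y$ bounded, solving \eqref{MF-BSDE1-habit}.

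The main obstacle is the constant bookkeeping in Step 3. The $\lambda Y$ term does not enter through $\Delta h$, so it is harmless, but we must check that the numerical factors (the analogue of $576$) match the threshold $1/(32C_\gamma)$ with the redefined $C_\gamma$. If they do not, we would sharpen the Cauchy–Schwarz splitting, bounding $|\Delta h|$ by the sum of products rather than the product of sums, to reduce the constant.
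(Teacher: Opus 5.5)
Your scheme is the paper's: the same map $\Gamma$ with driver $h(z)-\lambda_sY_s+g_s$ (Lipschitz and linear in the output $Y$), the terms $-\lambda|Y|^2$ and $-\lambda|\Delta Y|^2$ dropped by sign, the invariant ball $\mathcal{B}_R$ with $R=2K$, and Banach's theorem. Steps~1--2 close under Assumption~\ref{asm4-habit}, with one slip: for $R=2K$ the requirement reads $2K^2+128c_\gamma^2K^4\le 4K^2$, not $8K^2+\cdots$. With the $\vee$-form $c_\gamma$ of the assumption, $|h|\le c_\gamma(|\mathbb{E}[z^{0\|}|\mathcal{F}^0]|^2+|z^0|^2+|z^1|^2)$. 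This gives the BMO bound $2c_\gamma\|z\|^2_{\mathbb{H}^2_{\rm BMO}}$, hence $2K^2+32c_\gamma^2\|z\|^4_{\mathbb{H}^2_{\rm BMO}}$, and exactly the threshold $K\le 1/(16c_\gamma)$, as in the paper.

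The gap is Step~3, and it is the crux of the proof rather than bookkeeping. Since $R=2K$ and $K<1/(32C_\gamma)$ only give $C_\gamma^2R^2<1/256$, you must prove $\|\Delta Z\|^2_{\mathbb{H}^2_{\rm BMO}}\le c\,C_\gamma^2R^2\|\Delta z\|^2_{\mathbb{H}^2_{\rm BMO}}$ with $c\le 256$, for the explicit $C_\gamma=\hat\gamma+(\hat\gamma^2/(2\underline\gamma)\vee\overline\gamma/2)$. Importing the estimate of Theorem~\ref{th-mfg-existence} gives $c=576$, and for a different, unspecified $C_\gamma$, so it does not contract.

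The paper reaches $c=256$ as follows.
\begin{enumerate}
\item It bounds $\Delta Y_s(f-\acute f)$ by $|\Delta Y_s|$ times the sum of two blocks. The first is $\bigl(c_1|\mathbb{E}[z^{0\|}_s|\mathcal{F}^0]|+c_2|\mathbb{E}[\acute z^{0\|}_s|\mathcal{F}^0]|+c_3|\acute z^{0\|}_s|\bigr)|\mathbb{E}[\Delta z^{0\|}_s|\mathcal{F}^0]|$, with $c_1=\hat\gamma+\hat\gamma^2/(2\underline\gamma)$, $c_2=\hat\gamma^2/(2\underline\gamma)$, $c_3=\hat\gamma$. The second is $(\hat\gamma+\overline\gamma/2)\bigl(|\mathbb{E}[z^{0\|}_s|\mathcal{F}^0]|+|z^0_s|+|\acute z^0_s|+|z^1_s|+|\acute z^1_s|\bigr)(|\Delta z^0_s|+|\Delta z^1_s|)$.
\item It applies Cauchy--Schwarz with factors $\sqrt3$ and $\sqrt{10}$.
\item It uses $c_2^2+c_3^2\le c_1^2\le C_\gamma^2$, so the first block has conditional mass at most $2C_\gamma^2R^2$ rather than $3C_\gamma^2R^2$.
\end{enumerate}
This gives $2(\sqrt6+\sqrt{30})C_\gamma R\|\Delta Y\|_{\mathbb{S}^\infty}\|\Delta z\|_{\mathbb{H}^2_{\rm BMO}}\le\frac12\|\Delta Y\|^2_{\mathbb{S}^\infty}+128C_\gamma^2R^2\|\Delta z\|^2_{\mathbb{H}^2_{\rm BMO}}$, hence $c=256$.

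The margin is thin. Bounding the first block crudely by $3C_\gamma^2R^2$ already gives $c=4(3+\sqrt{30})^2\approx 287>256$. Your fallback of a sum of products instead of a product of sums is the right direction. Until it is carried out to a constant at most $256$, contraction under the hypothesis of the theorem is not established.
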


\noindent
\textbf{\textit{Proof}}\\
Again, we choose the agent-1 as a representative agent and omit the superscript ``1'' for simplicity.\\
(Step I)\\
By completing the square, we have
\begin{equation*}
    \begin{split}
        &\hat\gamma Z^{0\|}_s\mathbb{E}[Z^{0\|}_s|\mathcal{F}^0]^{\top} - \frac{\hat\gamma^2}{2\gamma}|\mathbb{E}[Z^{0\|}_s|\mathcal{F}^0]|^2 + \frac{\gamma}{2}(|Z^{0\perp}_s|^2 + |Z^1_s|^2) \\
        &=
        -\Bigl|\frac{\hat\gamma}{\sqrt{2\gamma}}\mathbb{E}[Z^{0\|}_s|\mathcal{F}^0] - \frac{\sqrt{\gamma}}{\sqrt{2}}Z^{0\|}_s\Bigr|^2 +\frac{\gamma}{2}|Z^{0\|}_s|^2+\frac{\gamma}{2}(|Z^{0\perp}_s|^2 + |Z^1_s|^2) \\
        &=
        -\Bigl|\frac{\hat\gamma}{\sqrt{2\gamma}}\mathbb{E}[Z^{0\|}_s|\mathcal{F}^0] - \frac{\sqrt{\gamma}}{\sqrt{2}}Z^{0\|}_s\Bigr|^2 +\frac{\gamma}{2}(|Z^{0}_s|^2 + |Z^1_s|^2).
    \end{split}
\end{equation*}
It then follows that
\begin{equation*}
    \begin{split}
        \hat\gamma Z^{0\|}_s\mathbb{E}[Z^{0\|}_s|\mathcal{F}^0]^{\top} - \frac{\hat\gamma^2}{2\gamma}|\mathbb{E}[Z^{0\|}_s|\mathcal{F}^0]|^2 + \frac{\gamma}{2}(|Z^{0\perp}_s|^2 + |Z^1_s|^2) 
        \leq
        \frac{\gamma}{2}(|Z^{0}_s|^2 + |Z^1_s|^2)
    \end{split}
\end{equation*}
and
\begin{equation*}
    \begin{split}
        \hat\gamma Z^{0\|}_s\mathbb{E}[Z^{0\|}_s|\mathcal{F}^0]^{\top} - \frac{\hat\gamma^2}{2\gamma}|\mathbb{E}[Z^{0\|}_s|\mathcal{F}^0]|^2 + \frac{\gamma}{2}(|Z^{0\perp}_s|^2 + |Z^1_s|^2) 
        &\geq
        -\Bigl|\frac{\hat\gamma}{\sqrt{2\gamma}}\mathbb{E}[Z^{0\|}_s|\mathcal{F}^0] - \frac{\sqrt{\gamma}}{\sqrt{2}}Z^{0\|}_s\Bigr|^2 +\frac{\gamma}{2}|Z^{0}_s|^2 \\
        &\geq
        -\frac{\hat\gamma^2}{\gamma}|\mathbb{E}[Z^{0\|}_s|\mathcal{F}^0]|^2 - \gamma |Z^{0\|}_s|^2 +\frac{\gamma}{2}|Z^{0}_s|^2 \\
        &\geq
        -\frac{\hat\gamma^2}{\gamma}|\mathbb{E}[Z^{0\|}_s|\mathcal{F}^0]|^2 - \frac{\gamma}{2}|Z^{0}_s|^2.
    \end{split}
\end{equation*}
Putting these together, we obtain
\begin{equation*}
    \begin{split}
        \Bigl|\hat\gamma Z^{0\|}_s\mathbb{E}[Z^{0\|}_s|\mathcal{F}^0]^{\top} - \frac{\hat\gamma^2}{2\gamma}|\mathbb{E}[Z^{0\|}_s|\mathcal{F}^0]|^2 + \frac{\gamma}{2}(|Z^{0\perp}_s|^2 + |Z^1_s|^2) \Bigr|
        &\leq
        \frac{\hat\gamma^2}{\gamma}|\mathbb{E}[Z^{0\|}_s|\mathcal{F}^0]|^2 + \frac{\gamma}{2}(|Z^{0}_s|^2 + |Z^1_s|^2)\\
        &\leq
        c_\gamma (|\mathbb{E}[Z^{0\|}_s|\mathcal{F}^0]|^2 + |Z^{0}_s|^2 + |Z^1_s|^2)
    \end{split}
\end{equation*}
for
\[
    c_\gamma=\frac{\overline{\gamma}}{2} \lor \frac{\hat\gamma^2}{\underline{\gamma}}.
\]
This yields
\begin{equation*}
    \begin{split}
        &Y_s f(s,Y_s,Z^0_s,Z^1_s)\\
        &\leq
        |Y_s|\Bigl|\hat\gamma Z^{0\|}_s\mathbb{E}[Z^{0\|}_s|\mathcal{F}^0]^{\top} - \frac{\hat\gamma^2}{2\gamma}|\mathbb{E}[Z^{0\|}_s|\mathcal{F}^0]|^2 + \frac{\gamma}{2}(|Z^{0\perp}_s|^2 + |Z^1_s|^2)\Bigr|- \frac{\gamma(1+b\zeta_s)}{\beta}|Y_s|^2 + |Y_s||g_s|\\
        &\leq
        c_\gamma |Y_s|(|\mathbb{E}[Z^{0\|}_s|\mathcal{F}^0]|^2 + |Z^{0}_s|^2 + |Z^1_s|^2) + |Y_s||g_s|.
    \end{split}
\end{equation*}

Let us now consider a BSDE
\begin{equation*}
    \begin{split}
        Y_t&= F_T + \int_t^T f(s,Y_s,z^0_s,z^1_s)ds - \int_t^T Z^0_s dW^0_s - \int_t^T Z^1_s dW^1_s,~~~t\in[0,T]
    \end{split}
\end{equation*}
for an arbitrary $(z^0,z^1)\in\mathbb{H}^2_{\mathrm{BMO}}\times\mathbb{H}^2_{\mathrm{BMO}}$ as an input. From the standard result for Lipschitz BSDEs, there exists a unique solution $(Y,Z^0,Z^1)\in\mathbb{S}^\infty\times\mathbb{H}^2_{\mathrm{BMO}}\times\mathbb{H}^2_{\mathrm{BMO}}$ for every given $(z^0,z^1)\in\mathbb{H}^2_{\mathrm{BMO}}\times\mathbb{H}^2_{\mathrm{BMO}}$. 
In this manner, we define a map $\Gamma:\mathbb{H}^2_{\mathrm{BMO}}(\mathbb{P}^{0,1},\mathbb{F}^{0,1},\mathbb{R}^{1\times d_0}\times \mathbb{R}^{1\times d})\to \mathbb{H}^2_{\mathrm{BMO}}(\mathbb{P}^{0,1},\mathbb{F}^{0,1},\mathbb{R}^{1\times d_0}\times \mathbb{R}^{1\times d})$ by $\Gamma(z^0,z^1)=(Z^0,Z^1)$.
By Ito formula,
\begin{equation*}
    \begin{split}
        &|Y_t|^2 + \mathbb{E}\Bigl[\int_t^T (|Z_s^0|^2 + |Z_s^1|^2 )ds | \mathcal{F}^{0,1}_t\Bigr]\\
        &=
        \mathbb{E}\Bigl[|F_T|^2 + 2\int_t^T Y_sf(s,Y_s,z^0_s,z^1_s)ds | \mathcal{F}^{0,1}_t\Bigr]\\
        &\leq
        \|F_T\|^2_{\infty} + 2c_\gamma\mathbb{E}\Bigl[ \int_t^T |Y_s|(|\mathbb{E}[z^{0\|}_s|\mathcal{F}^0]|^2 + |z^{0}_s|^2 + |z^1_s|^2 )ds | \mathcal{F}^{0,1}_t\Bigr] + 2\mathbb{E}\Bigl[ \int_t^T |Y_s||g_s|ds | \mathcal{F}^{0,1}_t\Bigr]\\
        &\leq
        \|F_T\|^2_{\infty} + 2c_\gamma\|Y\|_{\mathbb{S}^\infty}\mathbb{E}\Bigl[ \int_t^T (|\mathbb{E}[z^{0\|}_s|\mathcal{F}^0]|^2 + |z^{0}_s|^2 + |z^1_s|^2)ds | \mathcal{F}^{0,1}_t\Bigr]+ 2\|Y\|_{\mathbb{S}^\infty}\mathbb{E}\Bigl[ \int_t^T|g_s|ds | \mathcal{F}^{0,1}_t\Bigr]\\
        &\leq
        \|F_T\|^2_{\infty} + 4c_\gamma\|Y\|_{\mathbb{S}^\infty}\|(z^0,z^1)\|^2_{\mathbb{H}^2_{\mathrm{BMO}}} + 2\|Y\|_{\mathbb{S}^\infty}\mathbb{E}\Bigl[ \int_t^T|g_s|ds | \mathcal{F}^{0,1}_t\Bigr] \\
        &\leq
        \|F_T\|^2_{\infty} + \frac{1}{2}\|Y\|^2_{\mathbb{S}^\infty} + 16c_\gamma^2\|(z^0,z^1)\|^4_{\mathbb{H}^2_{\mathrm{BMO}}} + 4 \Bigl\|\int_0^T |g_s| ds\Bigr\|^2_{\infty}.
    \end{split}
\end{equation*}
Here, we have used the fact that
\begin{equation*}
    \begin{split}
        \sup_{\tau\in\mathcal{T}^{0,1}}\Bigl\|\mathbb{E}\Bigl[ \int_\tau^T |\mathbb{E}[z^{0\|}_s|\mathcal{F}^0]|^2 ds | \mathcal{F}^{0,1}_\tau\Bigr]\Bigr\|_\infty
        &\leq
        \|z^0\|^2_{\mathbb{H}^2_{\mathrm{BMO}}},
    \end{split}
\end{equation*}
which is shown in Fujii \& Sekine \cite{fujiiMeanFieldEquilibriumPrice2023a} [Lemma 4.2]. Taking the essential supremum on both sides, we get:
\begin{equation*}
    \begin{split}
        &\esssup_{(t,\omega)\in[0,T]\times\Omega}\Bigl(|Y_t|^2 + \mathbb{E}\Bigl[\int_t^T (|Z_s^0|^2 + |Z_s^1|^2 )ds | \mathcal{F}^{0,1}_t\Bigr]\Bigr) \\
        &\leq
        \|F_T\|^2_{\infty} + \frac{1}{2}\|Y\|^2_{\mathbb{S}^\infty} + 16c_\gamma^2\|(z^0,z^1)\|^4_{\mathbb{H}^2_{\mathrm{BMO}}} + 4 \Bigl\|\int_0^T |g_s| ds\Bigr\|^2_{\infty}.
    \end{split}
\end{equation*}
Using the fact that
\begin{equation*}
    \begin{split}
        \frac{1}{2}(\|Y\|^2_{\mathbb{S}^\infty} + \|(Z^0,Z^1)\|^2_{\mathbb{H}^2_{\mathrm{BMO}}})
        &\leq
        \max\{\|Y\|^2_{\mathbb{S}^\infty} , \|(Z^0,Z^1)\|^2_{\mathbb{H}^2_{\mathrm{BMO}}} \}\\
        &\leq
        \esssup_{(t,\omega)\in[0,T]\times\Omega}\Bigl(|Y_t|^2 + \mathbb{E}\Bigl[\int_t^T (|Z_s^0|^2 + |Z_s^1|^2 )ds | \mathcal{F}^{0,1}_t\Bigr]\Bigr),
    \end{split}
\end{equation*}
we obtain
\begin{equation*}
    \begin{split}
       \|(Z^0,Z^1)\|^2_{\mathbb{H}^2_{\mathrm{BMO}}}
        &\leq
        2\|F_T\|^2_{\infty} + 8\Bigl\|\int_0^T |g_s| ds\Bigr\|^2_{\infty} + 32c_\gamma^2 \|(z^0,z^1)\|^4_{\mathbb{H}^2_{\mathrm{BMO}}}.
    \end{split}
\end{equation*}
Since we have assumed that 
\begin{equation*}
    \begin{split}
        \|F_T\|^2_{\infty} + 4\Bigl\|\int_0^T |g_s| ds\Bigr\|^2_{\infty} \leq \frac{1}{256c_\gamma^2},
    \end{split}
\end{equation*}
there exists $R>0$ such that the inequality
\begin{equation*}
    \begin{split}
        2\|F_T\|^2_{\infty} + 8\Bigl\|\int_0^T |g_s| ds\Bigr\|^2_{\infty} + 32c_\gamma^2 R^4\leq R^2
    \end{split}
\end{equation*}
holds true. We can choose, for instance,
\begin{equation}
    \label{R-choice}
    R = 2\sqrt{\|F_T\|^2_{\infty} + 4\Bigl\|\int_0^T |g_s| ds\Bigr\|^2_{\infty}}\leq \frac{1}{8c_\gamma}.
\end{equation}
(Step II)\\
From the results of (Step I), we have $\Gamma(\mathcal{B}_R)\subset \mathcal{B}_R$, where
\[
    \mathcal{B}_R:=\{(z^0,z^1)\in \mathbb{H}^2_{\mathrm{BMO}}(\mathbb{F}^{0,1},\mathbb{R}^{1\times d_0}\times \mathbb{R}^{1\times d}) ;  \|(z^0,z^1)\|_{\mathbb{H}^2_{\mathrm{BMO}}}\leq R\}.
\]
Our objective is now to prove that $\Gamma|_{\mathcal{B}_R}:\mathcal{B}_R\to \mathcal{B}_R$ is a strict contraction. For $(z^0,z^1), (\acute{z}^0,\acute{z}^1)\in \mathcal{B}_R$, we set $(Z^0,Z^1):=\Gamma(z^0,z^1)$ and $(\acute{Z}^0,\acute{Z}^1):=\Gamma(\acute{z}^0,\acute{z}^1)$. Also, let $Y$ and $\acute{Y}$ be corresponding solutions and set $\Delta Y := Y-\acute{Y}$ and $\Delta Z^i := Z^i-\acute{Z}^i$. Notice that
\begin{equation*}
    \begin{split}
        &\Delta Y_s\{f(s,Y_s,z^0_s,z^1_s)-f(s,\acute{Y}_s,\acute{z}^0_s,\acute{z}^1_s)\}\\
        &\leq
        |\Delta Y_s|\Bigl\{\hat\gamma(|\mathbb{E}[z^{0\|}_s |\mathcal{F}^0]| + |\acute{z}^{0\|}_s|)(|\Delta z^{0\|}_s| + |\mathbb{E}[\Delta z^{0\|}_s |\mathcal{F}^0]|)\\
        &~~~~~~~~~~~~~+ \frac{\hat\gamma^2}{2\underline{\gamma}}(|\mathbb{E}[z^{0\|}_s |\mathcal{F}^0]| + |\mathbb{E}[\acute{z}^{0\|}_s |\mathcal{F}^0]|)|\mathbb{E}[\Delta z^{0\|}_s |\mathcal{F}^0]|\Bigr.\\
        &~~~~~~~~~~~~~+\frac{\overline{\gamma}}{2}(|z^{0\perp}_s| + |\acute{z}^{0\perp}_s| + |z^1_s| + |\acute{z}^1_s| )(|\Delta z^{0\perp}_s| + |\Delta z^1_s|)\Bigr\} -\frac{\gamma(1+b\zeta_s)}{\beta}|\Delta Y_s|^2 \\
        &\leq
        |\Delta Y_s|\Bigl\{\Bigl(\Bigl(\hat\gamma + \frac{\hat\gamma^2}{2\underline{\gamma}}\Bigr)|\mathbb{E}[z^{0\|}_s |\mathcal{F}^0]| + \frac{\hat\gamma^2}{2\underline{\gamma}}|\mathbb{E}[\acute{z}^{0\|}_s |\mathcal{F}^0]| + \hat\gamma|\acute{z}^{0\|}_s |\Bigr) |\mathbb{E}[\Delta z^{0\|}_s |\mathcal{F}^0]|  \Bigr.\\
        &~~~~~~~~~~~~~\Bigl. +\Bigl(\hat\gamma + \frac{\overline{\gamma}}{2}\Bigr)(|\mathbb{E}[z^{0\|}_s |\mathcal{F}^0]| + |\acute{z}^{0}_s| +|z^{0}_s| + |z^1_s| + |\acute{z}^1_s| )(|\Delta z^{0}_s| + |\Delta z^1_s|)\Bigr\}.
    \end{split}
\end{equation*}
Applying Ito formula to $|\Delta Y_t|^2$ , we have
\begin{equation*}
    \begin{split}
       &|\Delta Y_t|^2 + \mathbb{E}\Bigl[\int_t^T (|\Delta Z_s^0|^2 + |\Delta Z_s^1|^2) ds | \mathcal{F}^{0,1}_t\Bigr]\\
       &=
       2\mathbb{E}\Bigl[\int_t^T \Delta Y_s\{f(s,Y_s,z^0_s,z^1_s)-f(s,\acute{Y}_s,\acute{z}^0_s,\acute{z}^1_s)\} ds | \mathcal{F}^{0,1}_t\Bigr]\\
       &\leq
       2  \|\Delta Y\|_{\mathbb{S}^\infty}\mathbb{E}\Bigl[\int_t^T \Bigl(\Bigl(\hat\gamma + \frac{\hat\gamma^2}{2\underline{\gamma}}\Bigr)|\mathbb{E}[z^{0\|}_s |\mathcal{F}^0]| + \frac{\hat\gamma^2}{2\underline{\gamma}}|\mathbb{E}[\acute{z}^{0\|}_s |\mathcal{F}^0]| + \hat\gamma|\acute{z}^{0\|}_s|\Bigr) |\mathbb{E}[\Delta z^{0\|}_s |\mathcal{F}^0]| ds | \mathcal{F}^{0,1}_t\Bigr]\\
       &~~~+2\Bigl(\hat\gamma + \frac{\overline{\gamma}}{2}\Bigr) \|\Delta Y\|_{\mathbb{S}^\infty}\mathbb{E}\Bigl[\int_t^T (|\mathbb{E}[z^{0\|}_s |\mathcal{F}^0]| + |z^{0}_s| + |\acute{z}^{0}_s| + |z^1_s| + |\acute{z}^1_s| )(|\Delta z^{0}_s| + |\Delta z^1_s|) ds | \mathcal{F}^{0,1}_t\Bigr]\\
       &\leq
       2\|\Delta Y\|_{\mathbb{S}^\infty}\mathbb{E}\Bigl[\int_t^T  \Bigl(\Bigl(\hat\gamma + \frac{\hat\gamma^2}{2\underline{\gamma}}\Bigr)|\mathbb{E}[z^{0\|}_s |\mathcal{F}^0]| + \frac{\hat\gamma^2}{2\underline{\gamma}}|\mathbb{E}[\acute{z}^{0\|}_s |\mathcal{F}^0]| + \hat\gamma|\acute{z}^{0\|}_s|\Bigr)^2ds | \mathcal{F}^{0,1}_t\Bigr]^{\frac{1}{2}}\\
       &~~~~~\times\mathbb{E}\Bigl[\int_t^T|\mathbb{E}[\Delta z^{0\|}_s |\mathcal{F}^0]|^2 ds | \mathcal{F}^{0,1}_t\Bigr]^{\frac{1}{2}}\\
       &~~~+ 2\Bigl(\hat\gamma + \frac{\overline{\gamma}}{2}\Bigr)\|\Delta Y\|_{\mathbb{S}^\infty}\mathbb{E}\Bigl[\int_t^T (|\mathbb{E}[z^{0\|}_s |\mathcal{F}^0]| + |z^{0}_s| + |\acute{z}^{0}_s| + |z^1_s| + |\acute{z}^1_s| )^2ds | \mathcal{F}^{0,1}_t\Bigr]^{\frac{1}{2}}\\
       &~~~~~\times\mathbb{E}\Bigl[\int_t^T(|\Delta z^{0}_s| + |\Delta z^1_s|)^2 ds | \mathcal{F}^{0,1}_t\Bigr]^{\frac{1}{2}}\\
       &\leq
       2\sqrt{3}\|\Delta Y\|_{\mathbb{S}^\infty}\mathbb{E}\Bigl[\int_t^T  \Bigl(\Bigl(\hat\gamma + \frac{\hat\gamma^2}{2\underline{\gamma}}\Bigr)^2|\mathbb{E}[z^{0\|}_s |\mathcal{F}^0]|^2 + \frac{\hat\gamma^4}{4\underline{\gamma}^2}|\mathbb{E}[\acute{z}^{0\|}_s |\mathcal{F}^0]|^2 + \hat\gamma^2|\acute{z}^{0\|}_s|^2\Bigr)ds | \mathcal{F}^{0,1}_t\Bigr]^{\frac{1}{2}}\\
       &~~~~~\times\mathbb{E}\Bigl[\int_t^T|\mathbb{E}[\Delta z^{0\|}_s |\mathcal{F}^0]|^2 ds | \mathcal{F}^{0,1}_t\Bigr]^{\frac{1}{2}}\\
       &~~~+ 2\sqrt{10}\Bigl(\hat\gamma + \frac{\overline{\gamma}}{2}\Bigr)\|\Delta Y\|_{\mathbb{S}^\infty}\mathbb{E}\Bigl[\int_t^T (|\mathbb{E}[z^{0\|}_s |\mathcal{F}^0]|^2  +|z^{0}_s|^2 + |\acute{z}^{0}_s|^2 + |z^1_s|^2 + |\acute{z}^1_s|^2 )ds | \mathcal{F}^{0,1}_t\Bigr]^{\frac{1}{2}}\\
       &~~~~~\times\mathbb{E}\Bigl[\int_t^T(|\Delta z^{0}_s|^2 + |\Delta z^1_s|^2) ds | \mathcal{F}^{0,1}_t\Bigr]^{\frac{1}{2}}
    \end{split}
\end{equation*}
\begin{equation*}
    \begin{split}
       &\leq
       2(\sqrt{6} + \sqrt{30})C_\gamma\|\Delta Y\|_{\mathbb{S}^\infty} R\|(\Delta z^0, \Delta z^1)\|_{\mathbb{H}^2_{\mathrm{BMO}}} \\
       &\leq 
       \frac{1}{2}\|\Delta Y\|_{\mathbb{S}^\infty}^2 + 2(\sqrt{6} + \sqrt{30})^2C_\gamma^2 R^2 \|(\Delta z^0, \Delta z^1)\|^2_{\mathbb{H}^2_{\mathrm{BMO}}} \\
       &\leq
       \frac{1}{2}\|\Delta Y\|_{\mathbb{S}^\infty}^2 + 128C_\gamma^2 R^2 \|(\Delta z^0, \Delta z^1)\|^2_{\mathbb{H}^2_{\mathrm{BMO}}},
    \end{split}
\end{equation*}
where
\[
    C_\gamma = \hat\gamma + \Bigl(\frac{\hat\gamma^2}{2\underline{\gamma}} \lor \frac{\overline{\gamma}}{2}\Bigr).
\]
Taking the essential supremum, we get
\begin{equation*}
    \begin{split}
       &\esssup_{(t,\omega)\in[0,T]\times\Omega}\Bigl(|\Delta Y_t|^2 + \mathbb{E}\Bigl[\int_t^T (|\Delta Z_s^0|^2 + |\Delta Z_s^1|^2) ds | \mathcal{F}^{0,1}_t\Bigr]\Bigr) \\
       &~~~\leq
       \frac{1}{2}\|\Delta Y\|_{\mathbb{S}^\infty}^2 + 128C_\gamma^2 R^2 \|(\Delta z^0, \Delta z^1)\|^2_{\mathbb{H}^2_{\mathrm{BMO}}}.
    \end{split}
\end{equation*}
Using the fact that
\begin{equation*}
    \begin{split}
        &\frac{1}{2}(\|\Delta Y\|^2_{\mathbb{S}^\infty} + \|(\Delta Z^0,\Delta Z^1)\|^2_{\mathbb{H}^2_{\mathrm{BMO}}})\\
        &~~\leq
        \max\{\|\Delta Y\|^2_{\mathbb{S}^\infty} , \|(\Delta Z^0,\Delta Z^1)\|^2_{\mathbb{H}^2_{\mathrm{BMO}}} \}\\
        &~~\leq
        \esssup_{(t,\omega)\in[0,T]\times\Omega}\Bigl(|\Delta Y_t|^2 + \mathbb{E}\Bigl[\int_t^T (|\Delta Z_s^0|^2 + |\Delta Z_s^1|^2 )ds | \mathcal{F}^{0,1}_t\Bigr]\Bigr),
    \end{split}
\end{equation*}
we obtain
\begin{equation*}
    \begin{split}
        \|(\Delta Z^0, \Delta Z^1)\|^2_{\mathbb{H}^2_{\mathrm{BMO}}}
       &\leq
       256C_\gamma^2R^2\|(\Delta z^0, \Delta z^1)\|^2_{\mathbb{H}^2_{\mathrm{BMO}}}.
    \end{split}
\end{equation*}
Under \eqref{small}, $\Gamma|_{\mathcal{B}_R}$ becomes a strict contraction. Indeed, having chosen $R$ by \eqref{R-choice}, we clearly have $256 C_\gamma^2R^2 < 1$. This yields that there exists a unique fixed point of $\Gamma|_{\mathcal{B}_R}$, which represents a bounded solution of the BSDE \eqref{MF-BSDE1-habit}. $\square$

\begin{rem}
    Due to the uniqueness of the fixed point, the mean field BSDE \eqref{MF-BSDE1-habit} has a unique solution if we restrict the domain to $\mathbb{S}^\infty\times\mathcal{B}_R$.
\end{rem}

\subsection{Asymptotic Equilibrium in the Large Population Limit}
We now prove that the optimal trading strategy in the market with risk-premium process $\theta^{\mathrm{mfg}}$ defined by \eqref{mf-premium} satisfies the market clearing condition \eqref{MC-eqn-habit} in the large population limit. 
To deal with the large population limit, we need to enlarge our probability space in the following way. Let $(\overline{\Omega},\overline{\mathcal{F}},\overline{\mathbb{P}})$ be an complete probability space defined on $\overline{\Omega}:=\prod_{i=0}^\infty\Omega^i$. Here, $(\overline{\mathcal{F}},\overline{\mathbb{P}})$ is the completion of $\Bigl(\bigotimes_{i=0}^\infty\mathcal{F}^i,\bigotimes_{i=0}^\infty\mathbb{P}^i\Bigr)$ and the filtration $\overline{\mathbb{F}}:=(\overline{\mathcal{F}}_t)_{t\in[0,T]}$ is the complete and right-continuous augmentation of $(\bigotimes_{i=0}^\infty\mathcal{F}^{i}_t)_{t\in[0,T]}$. In the remainder of this section, $\mathbb{E}[\cdot]$ denotes the expectation with respect to $\overline{\mathbb{P}}$.

Let us arbitrarily choose one bounded solution $(\mathcal{Y}^1,\mathcal{Z}^0,\mathcal{Z}^1)\in\mathbb{S}^\infty\times\mathbb{H}^2_{\mathrm{BMO}}\times\mathbb{H}^2_{\mathrm{BMO}}$ of the mean field BSDE
\begin{equation*}
    \begin{split}
            \mathcal{Y}^1_t
            = 
            F^1_T &+ \int_t^T \Bigl\{\hat\gamma \mathcal{Z}^{0\|}_s\mathbb{E}[\mathcal{Z}^{0\|}_s|\mathcal{F}^0]^{\top} - \frac{\hat\gamma^2}{2\gamma^1}|\mathbb{E}[\mathcal{Z}^{0\|}_s|\mathcal{F}^0]|^2 + \frac{\gamma^1}{2}(|\mathcal{Z}^{0\perp}_s|^2 + |\mathcal{Z}^1_s|^2) -\frac{\gamma^1(1+b\zeta^1_s)}{\beta^1}\mathcal{Y}^1_s + g^1_s\Bigr\}ds\\
            &- \int_t^T \mathcal{Z}^{0}_s dW^0_s - \int_t^T \mathcal{Z}^1_s dW^1_s,~~~t\in[0,T],
    \end{split}
  \end{equation*}
where
\[
    g^1_s := - \frac{\delta}{\gamma^1} + (\kappa-b)\zeta^1_s\rho_s + \frac{1 + b\zeta^1_s}{\beta^1}\Bigl\{1 + \log\Bigl(\frac{a\beta^1}{\gamma^1(1 + b\zeta^1_s)}\Bigr)+\gamma^1 F^1_s\Bigr\},~~~s\in[0,T],
\]
and fix it. Theorem \ref{MF-BSDE-wellposed1} provides one such example under the appropriate conditions. Using this solution, we define the process $\theta^{\mathrm{mfg}}\in\mathbb{H}^2_{\mathrm{BMO}}(\mathbb{F}^0,\mathbb{R}^{d_0})$ by $\theta^{\mathrm{mfg}}_t := -\hat\gamma\mathbb{E}[\mathcal{Z}^{0\|}_t | \mathcal{F}^0]^\top$ for $t\in[0,T]$ as in \eqref{mf-premium}.\footnote{Notice that the process $\theta^{\mathrm{mfg}}$ is consistent with Assumption \ref{asm1-habit} as a risk-premium process.} \par
Recalling Section 5.2.2 and 5.3.1, if the market risk-premium process is $\theta^{\mathrm{mfg}}$, the optimal trading strategy for agent-$i$ is given by
\[
    p^{i,*}_t := (\pi^{i,*}_t)^\top\sigma_t = Z^{i,0\|}_t + \frac{(\theta^{\mathrm{mfg}}_t)^\top}{\gamma^i} = Z^{i,0\|}_t - \frac{\hat\gamma}{\gamma^i}\mathbb{E}[\mathcal{Z}^{0\|}_t | \mathcal{F}^0],~~t\in[0,T].
\]
Here, $Z^{i,0}$ is a solution of the following \textit{non}-mean field BSDE:
\begin{equation}
    \begin{split}
        \label{asymp-non-MF-BSDE}
        Y^i_t
        = 
        F^i_T &+ \int_t^T \Bigl\{-Z^{i,0\|}_s\theta^{\mathrm{mfg}}_s - \frac{|\theta^{\mathrm{mfg}}_s|^2}{2\gamma^i} + \frac{\gamma^i}{2}(|Z^{i,0\perp}_s|^2 + |Z^i_s|^2) -\frac{\gamma^i(1+b\zeta^i_s)}{\beta^i}Y^i_s + g^i_s\Bigr\}ds\\
        & - \int_t^T Z^{i,0}_s dW^0_s - \int_t^T Z^i_s dW^i_s \\
        =
        F^i_T &+ \int_t^T \Bigl\{\hat\gamma Z^{i,0\|}_s\mathbb{E}[\mathcal{Z}^{0\|}_s|\mathcal{F}^0]^{\top} - \frac{\hat\gamma^2}{2\gamma^i}|\mathbb{E}[\mathcal{Z}^{0\|}_s|\mathcal{F}^0]|^2 + \frac{\gamma^i}{2}(|Z^{i,0\perp}_s|^2 + |Z^i_s|^2) -\frac{\gamma^i(1+b\zeta^i_s)}{\beta^i}Y^i_s + g^i_s\Bigr\}ds\\
            & - \int_t^T Z^{i,0}_s dW^0_s - \int_t^T Z^i_s dW^i_s,~~t\in[0,T]
    \end{split}
  \end{equation}
with
\[
    g^i_s := - \frac{\delta}{\gamma^i} + (\kappa-b)\zeta^i_s\rho_s + \frac{1 + b\zeta^i_s}{\beta^i}\Bigl\{1 + \log\Bigl(\frac{a\beta^i}{\gamma^i(1 + b\zeta^i_s)}\Bigr)+\gamma^i F^i_s\Bigr\},~~~s\in[0,T].
\]
This equation has a unique bounded solution $(Y^i,Z^{i,0},Z^i)\in\mathbb{S}^\infty\times\mathbb{H}^2_{\mathrm{BMO}}\times\mathbb{H}^2_{\mathrm{BMO}}$ by Theorem \ref{sec2-well-posed}. With the above setup, we have the main result of this section. 

\begin{thm}~(Asymptotic equilibrium)\\
    \label{Asymp}
    Let Assumptions \ref{asm1-habit} and \ref{asm3-habit} be in force. Suppose that the mean field BSDE \eqref{MF-BSDE1-habit} has a bounded solution, and that we arbitrarily choose and fix one such solution $(\mathcal{Y}^1,\mathcal{Z}^0,\mathcal{Z}^1)\in\mathbb{S}^\infty(\mathbb{P}^{0,1},\mathbb{F}^{0,1},\mathbb{R})\times \mathbb{H}^2_{\mathrm{BMO}}(\mathbb{P}^{0,1},\mathbb{F}^{0,1},\mathbb{R}^{1\times d_0})\times \mathbb{H}^2_{\mathrm{BMO}}(\mathbb{P}^{0,1},\mathbb{F}^{0,1},\mathbb{R}^{1\times d})$.
    Then, the process $\theta^{\mathrm{mfg}}$, defined by $\theta^{\mathrm{mfg}}_t := -\hat\gamma\mathbb{E}[\mathcal{Z}^{0\|}_t | \mathcal{F}^0]^\top$ for $t\in[0,T]$, clears the financial market in the large population limit in the sense that
    \begin{equation}
        \lim_{N\to\infty}\mathbb{E}\int_0^T \Bigl|\frac{1}{N}\sum_{i=1}^N \pi^{i,*}_t\Bigr|^2dt = 0,
    \end{equation}
    where $(\pi^{i,*}_t;t\in[0,T])_{i\in\mathbb{N}}$ are the agents' optimal trading strategies.
\end{thm}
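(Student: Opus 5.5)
The proof follows the same route as Theorem~\ref{th-mfg-clearing} in the previous chapter, and in fact yields the stronger rate $C/N$. The plan has three steps: identify the representative agent's solution with the fixed mean field solution, establish conditional i.i.d.\ structure across agents, and then do a variance computation.

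\textbf{Step 1 (identification).} The fixed $(\mathcal{Y}^1,\mathcal{Z}^0,\mathcal{Z}^1)$ solves \eqref{asymp-non-MF-BSDE} with $i=1$. Indeed, the driver of \eqref{asymp-non-MF-BSDE} with $i=1$, evaluated at $\mathcal{Z}$, coincides with the mean field driver, because $\theta^{\mathrm{mfg}}$ is built from the same $\mathcal{Z}^0$. Lemma~\ref{a-priori1} gives uniqueness of bounded solutions to \eqref{asymp-non-MF-BSDE} for the given $\theta^{\mathrm{mfg}}\in\mathbb{H}^2_{\mathrm{BMO}}$; here $\theta^{\mathrm{mfg}}\in\mathbb{H}^2_{\mathrm{BMO}}$ by the BMO comparison \eqref{bmo-comparison}. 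Hence $(Y^1,Z^{1,0},Z^1)=(\mathcal{Y}^1,\mathcal{Z}^0,\mathcal{Z}^1)$, and so $\mathbb{E}[\mathcal{Z}^{0\|}_t|\mathcal{F}^0]=\mathbb{E}[Z^{1,0\|}_t|\mathcal{F}^0]$.

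\textbf{Step 2 (conditional i.i.d.\ structure).} Uniqueness of \eqref{asymp-non-MF-BSDE} together with a Yamada--Watanabe argument gives a single measurable functional $\Phi$ such that
\[
(Y^i,Z^{i,0},Z^i)=\Phi\bigl(W^0,W^i,\xi^i,\gamma^i,\beta^i,X^i_0,F^i,\theta^{\mathrm{mfg}},\rho\bigr)
\]
for every $i$. This uses that $\zeta^i$ and $g^i$ are deterministic functions of $(\gamma^i,\beta^i,F^i,\rho)$. Since $\theta^{\mathrm{mfg}}$ and $\rho$ are $\mathbb{F}^0$-adapted, Assumption~\ref{asm3-habit} then implies that $(Z^{i,0})_{i\in\mathbb{N}}$ are $\mathcal{F}^0$-conditionally i.i.d., with $\mathbb{E}[Z^{i,0\|}_t|\mathcal{F}^0]=\mathbb{E}[Z^{1,0\|}_t|\mathcal{F}^0]$. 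This measurability and conditional-independence step is the one I expect to need the most care, since it is where the product-space structure enters. I would handle it exactly as in the proof of Theorem~\ref{th-mfg-clearing}, noting that the running liability $F^i$ is a process rather than a random variable.

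\textbf{Step 3 (variance bound).} Bound $|\pi^{i,*}|\le C|p^{i,*}|$ using $\underline{\lambda}I_n\le\sigma\sigma^\top$, and decompose
\[
\frac1N\sum_i p^{i,*}_t=\frac1N\sum_i\bigl(Z^{i,0\|}_t-\mathbb{E}[Z^{1,0\|}_t|\mathcal{F}^0]\bigr)+\frac{\hat\gamma}{N}\sum_i\Bigl(\frac1{\hat\gamma}-\frac1{\gamma^i}\Bigr)\mathbb{E}[Z^{1,0\|}_t|\mathcal{F}^0].
\]
Square and take expectations. In the first term the cross terms vanish by conditional independence and conditional zero mean. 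For the second term, the independence of $(\gamma^i)$ from $\mathcal{F}^0$ and the centering $\mathbb{E}[1/\gamma^i]=1/\hat\gamma$ make its cross terms vanish too. This yields the bound $\frac{C}{N}(1+\hat\gamma^2/\underline{\gamma}^2)\|Z^{1,0}\|^2_{\mathbb{H}^2_{\mathrm{BMO}}}$, using Jensen's inequality and the energy inequality \eqref{ineq-energy}. The bound is finite because $Z^{1,0}\in\mathbb{H}^2_{\mathrm{BMO}}$, so it tends to $0$ as $N\to\infty$.
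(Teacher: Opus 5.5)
Your proposal is correct and follows the paper's proof essentially step for step. It identifies $(Y^1,Z^{1,0},Z^1)$ with the fixed mean field solution via uniqueness of bounded solutions to \eqref{asymp-non-MF-BSDE}, uses Yamada--Watanabe to obtain the $\mathcal{F}^0$-conditional i.i.d.\ structure, and applies the same two-term decomposition of $\frac{1}{N}\sum_i p^{i,*}_t$ to get a bound of order $1/N$ (listing $\rho$ among the arguments of $\Phi$ is a harmless refinement the paper leaves implicit).
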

\noindent
\textbf{\textit{Proof}}\\
(Step I)\\
As mentioned above, the BSDE \eqref{asymp-non-MF-BSDE} with $i=1$ has a unique solution $(Y^1,Z^{1,0},Z^1)\in\mathbb{S}^\infty\times\mathbb{H}^2_{\mathrm{BMO}}\times\mathbb{H}^2_{\mathrm{BMO}}$ by Theorem \ref{sec2-well-posed}. Since $(\mathcal{Y}^1,\mathcal{Z}^0,\mathcal{Z}^1)$ obviously solves the same equation, the uniqueness implies $(Y^1,Z^{1,0},Z^1)=(\mathcal{Y}^1,\mathcal{Z}^0,\mathcal{Z}^1)$. In particular, we have $\theta^{\mathrm{mfg}}_t = -\hat\gamma\mathbb{E}[Z^{1,0\|}_t | \mathcal{F}^0]^\top$ for $t\in[0,T]$.

Moreover, the (strong) uniqueness implies that, for each $i\in\mathbb{N}$, there exists a measurable function $\Phi$ such that 
\[
    (Y_t^i,Z_t^{i,0},Z_t^i)_{t\in[0,T]}=\Phi (W^0,W^i,\xi^i,\gamma^i,\beta^i,X^i_0,F^i,\theta^{\mathrm{mfg}}),~~~ \mathbb{P}^{0,1}\text{-}\mathrm{a.s.},
\] 
by Yamada-Watanabe's theorem (See, for example, Carmona \& Delarue \cite{carmonaProbabilisticTheoryMean2018a} [Theorem 1.33]). It then follows that $\{(Y_t^i,Z_t^{i,0},Z_t^i);t\in[0,T]\}_{i\in\mathbb{N}}$ are $\mathcal{F}^0$-conditionally independently and identically distributed.\\

\noindent
(Step II)\\
Since $\pi^{i,*}_t=(\sigma_t\sigma_t^\top)^{-1}\sigma_t (p^{i,*}_t)^\top$ for $t\in[0,T]$ and $|(\sigma_t\sigma_t^\top)^{-1}\sigma_t|\leq C$ uniformly in $t$ by Assumption \ref{asm1-habit}, we have
\begin{equation}
    \begin{split}
        \label{pi-and-p}
        \mathbb{E}\int_0^T \Bigl|\frac{1}{N}\sum_{i=1}^N \pi^{i,*}_t\Bigr|^2dt
        &\leq
        C\mathbb{E}\int_0^T \Bigl|\frac{1}{N}\sum_{i=1}^N p^{i,*}_t\Bigr|^2dt.
    \end{split}
\end{equation}
for all $N\in\mathbb{N}$. Moreover, it is clear that
\begin{equation*}
    \begin{split}
       \frac{1}{N}\sum_{i=1}^N p^{i,*}_t
       &=
       \frac{1}{N}\sum_{i=1}^N \Bigl(Z^{i,0\|}_t- \mathbb{E}[Z^{1,0\|}_t | \mathcal{F}^0]\Bigr) + \frac{1}{N}\sum_{i=1}^N \Bigl(1-\frac{\hat\gamma}{\gamma^i}\Bigr)\mathbb{E}[Z^{1,0\|}_t | \mathcal{F}^0].
    \end{split}
\end{equation*}
Then, we have the following estimate:
\begin{equation*}
    \begin{split}
       &\mathbb{E}\int_0^T \Bigl|\frac{1}{N}\sum_{i=1}^N p^{i,*}_t\Bigr|^2dt \\
       &\leq
       2\mathbb{E}\int_0^T \Bigl|\frac{1}{N}\sum_{i=1}^N \Bigl(Z^{i,0\|}_t- \mathbb{E}[Z^{1,0\|}_t | \mathcal{F}^0]\Bigr)\Bigr|^2 dt + 2\mathbb{E}\int_0^T \Bigl|\frac{1}{N}\sum_{i=1}^N \Bigl(1-\frac{\hat\gamma}{\gamma^i}\Bigr)\mathbb{E}[Z^{1,0\|}_t | \mathcal{F}^0]\Bigr|^2 dt\\
       &=
       2\mathbb{E}\int_0^T \Bigl|\frac{1}{N}\sum_{i=1}^N \Bigl(Z^{i,0\|}_t- \mathbb{E}[Z^{1,0\|}_t | \mathcal{F}^0]\Bigr)\Bigr|^2 dt + 2\mathbb{E}\Bigl[\Bigl|\frac{1}{N}\sum_{i=1}^N \Bigl(1-\frac{\hat\gamma}{\gamma^i}\Bigr)\Bigr|^2\Bigr]\mathbb{E}\int_0^T\Bigl|\mathbb{E}[Z^{1,0\|}_t | \mathcal{F}^0]\Bigr|^2 dt\\
       &\leq
       \frac{2}{N^2}\sum_{i=1}^N\mathbb{E}\int_0^T \Bigl|Z^{i,0\|}_t- \mathbb{E}[Z^{1,0\|}_t | \mathcal{F}^0]\Bigr|^2 dt + \frac{2}{N^2}\sum_{i=1}^N \mathbb{E}\Bigl[\Bigl|1-\frac{\hat\gamma}{\gamma^i}\Bigr|^2\Bigr]\mathbb{E}\int_0^T|Z^{1,0\|}_t |^2 dt\\
       &\leq
       \frac{4}{N}\Bigl(1+\frac{\hat\gamma^2}{\underline{\gamma}^2}\Bigr)\|Z^{1,0\|}\|^2_{\mathbb{H}^2}\\
       &\to
       0~~~~(N\to\infty).
    \end{split}
\end{equation*}
Here, we used the fact that $(\gamma^i)_{i\in\mathbb{N}}$ are i.i.d. random variables and that $(Z_t^{i,0})_{i\in\mathbb{N}}$ are $\mathcal{F}^0$-conditionally i.i.d. Together with \eqref{pi-and-p}, we get the desired result. $\square$

\section{Special Solution for the Exponential Quadratic Gaussian Model} \label{Section 4}
In this section, we reformulate the equilibrium model via the exponential quadratic Gaussian (EQG) framework. In the previous section, we made several strong assumptions to prove the existence of bounded solutions to the mean field BSDE \eqref{MF-BSDE1-habit}. 
The EQG framework, on the other hand, provides a good example in which unbounded solutions can be obtained under certain conditions. Since this framework allows us to have a semi-explicit representation of the solutions, it will help us to carry out detailed numerical analysis in future work.
\subsection{Reformulation of the Equilibrium Model}
Suppose there are infinitely many agents in the common financial market. In this section, we assume that the coefficients of absolute risk aversion $(\gamma^i,\beta^i)_{i\in\mathbb{N}}$ are common to all agents and we hereafter denote their common values by $(\gamma,\beta)\in\mathbb{R}_{++}\times\mathbb{R}_{++}$. 
Since they are no longer random variables, we need a slight modification to the definition of the relevant probability spaces.\\

\noindent
(1) We denote by $(\Omega^0,\mathcal{F}^0,\mathbb{P}^0)$ a complete probability space with complete and right-continuous filtration $\mathbb{F}^0:=(\mathcal{F}^0_t)_{t\in[0,T]}$ generated by a $d_0$-dimensional standard Brownian motion $W^0:=(W^0_t)_{t\in[0,T]}$ with $\mathcal{F}^0:=\mathcal{F}^0_T$.
Also, we denote by $(\Omega^i,\mathcal{F}^i,\mathbb{P}^i)$ ($i\in\mathbb{N}$) a complete probability space with complete and right-continuous filtration $\mathbb{F}^i:=(\mathcal{F}^i_t)_{t\in[0,T]}$, generated by a $d$-dimensional standard Brownian motion $W^i:=(W^i_t)_{t\in[0,T]}$ and a $\sigma$-algebra $\sigma(\xi^i,X^i_0,x^i_0)$, where the completion of the latter gives $\mathcal{F}^i_0$. We set $\mathcal{F}^i:=\mathcal{F}^i_T$. 
Here, $(\xi^i,X^i_0)$ are $\mathbb{R}$-valued random variables and $x^i_0$ is an $\mathbb{R}^d$-valued random variable. \\

\noindent
(2) We denote by $(\Omega^{0,i},\mathcal{F}^{0,i},\mathbb{P}^{0,i})$ ($i\in\mathbb{N}$) a complete probability space with $\Omega^{0,i} := \Omega^0 \times \Omega^i$ and with $(\mathcal{F}^{0,i},\mathbb{P}^{0,i})$, the completion of $(\mathcal{F}^0 \otimes \mathcal{F}^i,\mathbb{P}^{0}\otimes \mathbb{P}^{i})$. 
We denote by $\mathbb{F}^{0,i}:=(\mathcal{F}^{0,i}_t)_{t\in[0,T]}$ the complete and right-continuous augmentation of $(\mathcal{F}_t^0 \otimes \mathcal{F}_t^i)_{t\in[0,T]}$.\\

\noindent
(3) Let $(\Omega,\mathcal{F},\mathbb{P})$ be a complete probability space defined by $\Omega:=\prod_{i=0}^\infty\Omega^i$ and $(\mathcal{F},\mathbb{P})$, the completion of $\Bigl(\bigotimes_{i=0}^\infty\mathcal{F}^i,\bigotimes_{i=0}^\infty\mathbb{P}^i\Bigr)$. The filtration $\mathbb{F}:=(\mathcal{F}_t)_{t\in[0,T]}$ is the complete and right-continuous augmentation of $(\bigotimes_{i=0}^\infty\mathcal{F}^{i}_t)_{t\in[0,T]}$.\\

Let us first give a new assumption on the market as follows.
\begin{asm} 
    \label{asmEQG-market}~\\
    \textup{(i)} The risk-free interest rate is zero.\\
    \textup{(ii)} There are $n\in\mathbb{N}$ non-dividend paying risky stocks whose price dynamics, represented by an $n$-dimensional vector, are given by
    \begin{equation}
        \begin{split}
            \label{stock price-EQG-habit}
            S_t&= S_0 + \int_0^t \mathrm{diag}(S_r)(\mu_rdr + \sigma_r dW^0_r),~~t\in[0,T],
        \end{split}
    \end{equation}
    for $S_0\in\mathbb{R}^n_{++}$, $\mu := (\mu_t)_{t\in[0,T]}\in\mathbb{H}^2(\mathbb{P}^{0},\mathbb{F}^0,\mathbb{R}^n)$ and $\sigma :=(\sigma_t)_{t\in[0,T]}\in\mathbb{L}^\infty(\mathbb{P}^{0},\mathbb{F}^0,\mathbb{R}^{n\times d_0})$. We also assume $n\leq d_0$.\\
    \textup{(iii)} The process $(\sigma_t)_{t\in[0,T]}$ is of the form $\sigma_t = (\hat{\sigma}_t, \check{\sigma}_t)$ for each $t\in[0,T]$, where $(\hat{\sigma}_t)_{t\in[0,T]}\in\mathbb{L}^\infty(\mathbb{P}^{0},\mathbb{F}^0,\mathbb{R}^{n\times n})$ is a process such that $\hat{\sigma}_t$ is invertible for all $t\in[0,T]$ and $(\check{\sigma}_t)_{t\in[0,T]}\in\mathbb{L}^\infty(\mathbb{P}^{0},\mathbb{F}^0,\mathbb{R}^{n\times (d_0-n)})$. Moreover, $(\sigma_t)_{t\in[0,T]}$ satisfies
    \[
        \underline{\lambda}I_n\leq \sigma_t\sigma_t^\top\leq\overline{\lambda}I_n,~~~~dt\otimes \mathbb{P}^0\text{-}\mathrm{a.e.}
    \]
    for some positive constants $0<\underline{\lambda}<\overline{\lambda}$ and $I_n$, an identity matrix of size $n$.\\
    \textup{(iv)} The risk-premium process $\theta\in\mathbb{L}^0(\mathbb{F}^0,\mathbb{R}^{d_0})$, defined by $\theta_t = \sigma_t^\top(\sigma_t\sigma_t^\top)^{-1}\mu_t$ for $t\in[0,T]$, is a process such that the Dol\'{e}ans-Dade exponential $\displaystyle\Bigl\{\mathcal{E}\Bigl(-\int_0^\cdot \theta_s^\top dW^0_s\Bigr)_t; t\in[0,T]\Bigr\}$ is a martingale of class $\mathcal{D}$.
\end{asm}

\begin{rem}~\\
    \textup{(i)} Under Assumption \ref{asmEQG-market} (iii), the linear subspace $L_t$ defined in Definition \ref{subspace-L} is spanned by the first $n$-standard basis vectors of $\mathbb{R}^{1\times d_0}$ for all $t\in[0,T]$. We use the symbol $L$ instead of $L_t$ in this section. In addition, we denote by $\Pi$ the orthogonal projection of $\mathbb{R}^{1\times d_0}$ onto $L$.\\
    \textup{(ii)} Unlike Assumption \ref{asm1-habit}, the process $\mu$ is no longer in $\mathbb{H}^2_{\mathrm{BMO}}$ and thus so is $\theta$. Despite this, the well-posedness of the stock price process $(S_t)_{t\in[0,T]}$ can still be shown by changing the original measure $\mathbb{P}^0$ to $\mathbb{Q}$, the risk neutral measure defined by \eqref{risk-neutral-habit}, which is possible thanks to Assumption \ref{asmEQG-market} (iv).
\end{rem}

\begin{asm} ~\\
    \label{asm6-habit}
    \textup{(i)} For each $i\in\mathbb{N}$, $\xi^i$ and $X^i_0$ are $\mathbb{R}$-valued, $\mathcal{F}^i_0$-measurable, and normally-distributed random variables representing agent-$i$'s initial wealth and initial consumption habit, respectively. $x^i_0$ is an $\mathbb{R}^d$-valued, $\mathcal{F}^i_0$-measurable, and normally-distributed random variable.
    
    \noindent
    \textup{(ii)} The random variables $\xi^i,X^i_0$ and $x_0^i$ are mutually independent for each $i\in\mathbb{N}$ and $(\xi^i,X^i_0,x_0^i)_{i\in\mathbb{N}}$ have the same distribution, i.e. they are independently and identically distributed on $(\Omega,\mathcal{F},\mathbb{P})$.
    
    \noindent
    \textup{(iii)} $(\gamma, \beta)\in\mathbb{R}_{++}\times\mathbb{R}_{++}$ are the coefficients of absolute risk aversion for agents' net wealth and consumption, respectively. In particular, they are common to all agents. 

    \noindent
    \textup{(iv)} The habit trend $\rho:[0,T]\to\mathbb{R}$ is a continuous function of time.

    \noindent
    \textup{(v)} For each $i\in\mathbb{N}$, the liability process $(F^i_t;t\in[0,T])_{i\in\mathbb{N}}$ is $\mathbb{R}$-valued and $\mathbb{F}^{0,i}$-progressively measurable, which is given by a quadratic form\footnote{The symbol $\langle \cdot,\cdot\rangle$ denotes the Euclidean inner product, i.e., $\langle x, y\rangle:=x^\top y$ for $x,y\in\mathbb{R}^n$.}
    \begin{equation}
        \begin{split}
        \label{EQG-F-habit}
            F^i_t 
            :=& 
            \frac{1}{2} \langle A^{F}_{00}(t)x^0_t,x^0_t\rangle + \frac{1}{2} \langle A^{F}_{11}(t)x^i_t,x^i_t\rangle + \langle A^{F}_{10}(t)x^0_t,x^i_t\rangle\\
            & + \langle B^{F}_0(t),x^0_t\rangle + \langle B_1^{F}(t),x^i_t\rangle + C^{F}(t),~~~t\in[0,T],
        \end{split}
    \end{equation}
    for $(A^{F}_{00}, A^{F}_{11}, A^{F}_{10},B^F_0,B^F_1,C^F)\in\mathcal{C}([0,T];\mathbb{M}_{d_0})\times\mathcal{C}([0,T];\mathbb{M}_{d})\times\mathcal{C}([0,T];\mathbb{R}^{d\times d_0})\times\mathcal{C}([0,T];\mathbb{R}^{d_0})\times\mathcal{C}([0,T];\mathbb{R}^{d})\times\mathcal{C}([0,T];\mathbb{R})$ and the Gaussian factor processes $(x^0,x^i)\in\mathbb{L}^0(\mathbb{F}^0,\mathbb{R}^{d_0})\times\mathbb{L}^0(\mathbb{F}^i,\mathbb{R}^{d})$ defined by
    \begin{equation*}
        \begin{split}
            x_t^0 &= x^0_0 -\int_0^t K_0(x^0_s - m_0)ds + \Sigma_0 W_t^0,~~~t\in[0,T],\\
            x_t^i &= x_0^i -\int_0^t K(x^i_s - m)ds + \Sigma W_t^i,~~~t\in[0,T]
        \end{split}
    \end{equation*}
    with\footnote{This method is still available with time-dependent deterministic and continuous coefficients $(m_0(t),m(t),K_0(t),K(t),\Sigma_0(t),\Sigma(t))$. For simplicity, however, we only consider the constant case in this chapter.} $x^0_0\in\mathbb{R}^{d_0}$, $(K_0,K)\in\mathbb{R}_{++}\times \mathbb{R}_{++}$, $(m_0,m)\in\mathbb{R}^{d_0}\times \mathbb{R}^{d}$, and $(\Sigma_0,\Sigma)\in\mathbb{R}^{d_0\times d_0}\times \mathbb{R}^{d\times d}$.

    \noindent
    \textup{(vi)} Each agent is a price taker; agent-$i$ must accept the prevailing prices as he/she lacks the market share to impact the market price.
\end{asm}

\begin{rem}
    In this model, the agents are heterogeneous in the idiosyncratic noises $(W^i)_{i\in\mathbb{N}}$, initial wealths $(\xi^i)_{i\in\mathbb{N}}$, initial habits $(X^i_0)_{i\in\mathbb{N}}$, and initial conditions $(x^i_0)_{i\in\mathbb{N}}$ for the factor processes that affect the liabilities $(F^i)_{i\in\mathbb{N}}$.
\end{rem}

The agents' problems are modelled on the probability space $(\Omega,\mathcal{F},\mathbb{P},\mathbb{F})$. For each $i\in\mathbb{N}$, agent-$i$ solves the following utility maximization problem:
\begin{equation}
    \begin{split}
        \sup_{(\pi,c)\in\mathbb{A}^i_{\mathrm{EQG}}} {U}^i(\pi,c)
    \end{split}
\end{equation}
subject to
\[
    \mathcal{W}^{i,(\pi,c)}_t =\xi^i + \int_0^t (\pi_s^\top\sigma_s\theta_s - c_s)ds + \int_0^t \pi_s^\top\sigma_s dW_s^0,~~t\in[0,T],
\]
where $\mathbb{A}^i_{\mathrm{EQG}}$ is the admissible set for agent-$i$, whose definition is to be given. The utility function ${U}^i:\mathbb{A}^i_{\mathrm{EQG}}\to\mathbb{R}$ is defined by
\begin{equation}
    \begin{split} 
    &U^i(\pi,c)\\
    &:=\mathbb{E}\Bigl[-\exp\Bigl(-\delta T-\gamma(\mathcal{W}^{i,(\pi,c)}_T-F^i_T)\Bigr) -a \int_0^T \exp\Bigl(-\delta t-\gamma(\mathcal{W}^{i,(\pi,c)}_t-F^i_t)-\beta(c_t-X_t^{i,c})\Bigr)dt\Bigr],
\end{split}
\end{equation}  
with some common parameters $a,\delta>0$. The process $X^{i,c}$ represents the agent-$i$'s consumption habits and is defined by
\begin{equation}
    \begin{split} 
        \label{EQG-habit}
        X_t^{i,c} = X^i_0 + \int_0^t \{-\kappa(X^{i,c}_s-\rho_s) + b(c_s-\rho_s)\}ds, ~~t\in[0,T]
    \end{split}
\end{equation}
for some constants $\kappa,b>0$, which are also common to all agents. 
As usual, by setting $(p_t)_{t\in[0,T]}:=(\pi^\top_t\sigma_t)_{t\in[0,T]}$, the utility maximization problem can be equivalently written as
\begin{equation*}
    \begin{split}
        \sup_{(p,c)\in\mathcal{A}^i_{\mathrm{EQG}}} \widetilde{U}^i(p,c)
    \end{split}
\end{equation*}
subject to
\begin{equation}
    \begin{split} 
    \mathcal{W}^{i,(p,c)}_t =\xi^i + \int_0^t (p_s\theta_s - c_s)ds + \int_0^t p_s dW_s^0,~~t\in[0,T],
    \end{split}
\end{equation}
where the set $\mathcal{A}^i_{\mathrm{EQG}}$ is defined by $\mathcal{A}_{\mathrm{EQG}}^i:=\{(p,c)=(\pi^\top\sigma,c);(\pi,c)\in\mathbb{A}^i_{\mathrm{EQG}}\}$ and the objective function $\widetilde{U}^i:\mathcal{A}^i_{\mathrm{EQG}}\to\mathbb{R}$ is defined by
\begin{equation}
    \begin{split} 
    \label{tilde-U}
    &\widetilde{U}^i(p,c)\\
    &:=\mathbb{E}\Bigl[-\exp\Bigl(-\delta T-\gamma(\mathcal{W}^{i,(p,c)}_T-F^i_T)\Bigr) -a \int_0^T \exp\Bigl(-\delta t-\gamma(\mathcal{W}^{i,(p,c)}_t-F^i_t)-\beta(c_t-X_t^{i,c})\Bigr)dt\Bigr].
    \end{split}
\end{equation}

Under these assumptions, we define the process $R^{i,(p,c)}$ in analogy with Section 5.2.2 in the following way: for each $i\in\mathbb{N}$, we set
\begin{equation*}
    \begin{split}
        R^{i,(p,c)}_t := &-\exp\Bigl(-\delta t-\gamma(\mathcal{W}^{i,(p,c)}_t-y^i_t-\zeta_tX_t^{i,c})\Bigr)\\& -a \int_0^t \exp\Bigl(-\delta s-\gamma(\mathcal{W}^{i,(p,c)}_s-F^i_s)-\beta(c_s-X_s^{i,c})\Bigr)ds,~~t\in[0,T],
    \end{split}
\end{equation*}
where the process $(y^i_t)_{t\in[0,T]}$ is a solution to the BSDE \footnote{The solution is denoted by lower case letters in order to avoid confusion with the solution of the mean field BSDE \eqref{EQG-mfBSDE}, which is denoted by $(Y,Z^{i,0},Z^i)$.}:
\begin{equation}
    \begin{split}
        \label{std BSDE}
        y^i_t =& F^i_T + \int_t^T \Bigl\{-z^{i,0\|}_s\theta_s - \frac{|\theta_s|^2}{2\gamma} + \frac{\gamma}{2}(|z^{i,0\perp}_s|^2 + |z^i_s|^2) -\frac{\gamma(1+b\zeta_s)}{\beta}y^i_s + g^i_s\Bigr\}ds\\
        & - \int_t^T z^{i,0}_s dW^0_s - \int_t^T z^i_s dW^i_s
    \end{split}
  \end{equation}
with
\[
    g^i_s := - \frac{\delta}{\gamma} + (\kappa-b)\zeta_s\rho_s + \frac{1 + b\zeta_s}{\beta}\Bigl\{1 + \log\Bigl(\frac{a\beta}{\gamma(1 + b\zeta_s)}\Bigr)+\gamma F^i_s\Bigr\},
\]
and
\begin{equation*}
    \zeta_t := \frac{e^{(\delta^+-\delta^-)(T-t)}-1}{\delta^+-\delta^-e^{(\delta^+-\delta^-)(T-t)}},
\end{equation*}
where
\begin{equation*}
    \delta^{\pm}:=-A\pm\sqrt{A^2+B},~~~A:=\frac{1}{2}\Bigl(\kappa-b+\frac{\gamma}{\beta}\Bigr),~~~B:=\frac{\gamma b}{\beta},
\end{equation*}
for $t\in[0,T]$. Moreover, we say that the process $R^{i,(p,c)}$ satisfies the condition-R if all conditions in Definition \ref{condition-R} with ``1'' replaced by ``$i$'' and $(\gamma^i,\beta^i)$ replaced by $(\gamma,\beta)$ hold. In order to work within this framework, we further need to modify the notion of admissibility.
\begin{dfn} (Admissible space for an EQG model)\\
    For each $i\in\mathbb{N}$, the admissible space $\mathbb{A}^i_{\mathrm{EQG}}$ is the set of $\mathbb{F}^{0,i}$-progressively measurable strategies $(\pi,c)\in\mathbb{H}^2(\mathbb{P}^{0,i},\mathbb{F}^{0,i},\mathbb{R}^{n})\times\mathbb{H}^2(\mathbb{P}^{0,i},\mathbb{F}^{0,i},\mathbb{R})$ which make the utility function finite (namely $U^i(\pi,c)>-\infty$) and the set $\{R^{i,(p,c)}_\tau;\tau\in\mathcal{T}^{0,i}\}$ uniformly integrable. 
\end{dfn}

We shall write the admissible space by $\mathcal{A}^i_{\mathrm{EQG}}(\theta)$ when we want to emphasize its dependence on the risk-premium process $\theta$. In a similar way as in Section \ref{Section 3}, the market clearing condition motivates us to study the following mean field BSDE defined on the filtered probability space $(\Omega^{0,i},\mathcal{F}^{0,i},\mathbb{P}^{0,i},\mathbb{F}^{0,i})$ for each $i\in\mathbb{N}$:
\begin{equation}
    \begin{split}
        \label{EQG-mfBSDE}
            Y^i_t&= F^i_T + \int_t^T f^i(s,Y^i_s,Z^{i,0}_s,Z^i_s)ds - \int_t^T Z^{i,0}_s dW^0_s - \int_t^T Z^i_s dW^i_s,~~~t\in[0,T],
    \end{split}
  \end{equation}
where (note that we have $\hat\gamma = \gamma$ by Assumption \ref{asm6-habit} (iii))
\begin{equation*}
    \begin{split}
        &f^i(s,Y^i_s,Z^{i,0}_s,Z^i_s) \\
        &= 
        \gamma Z^{i,0\|}_s\mathbb{E}[Z^{i,0\|}_s|\mathcal{F}^0]^{\top} - \frac{\gamma}{2}|\mathbb{E}[Z^{i,0\|}_s|\mathcal{F}^0]|^2 + \frac{\gamma}{2}(|Z^{i,0\perp}_s|^2 + |Z^i_s|^2) -\frac{\gamma(1+b\zeta_s)}{\beta}Y^i_s + g^i_s.
    \end{split}
  \end{equation*}
By completing the square, the driver $f^i$ can be written as
\begin{equation*}
    \begin{split}
        &f^i(s,Y^i_s,Z^{i,0}_s,Z^i_s) \\
        &=
        -\frac{\gamma}{2}\Bigl|\mathbb{E}[Z^{i,0\|}_s|\mathcal{F}^0] - Z^{i,0\|}_s\Bigr|^2 + \frac{\gamma}{2}(|Z^{i,0}_s|^2 + |Z^i_s|^2) -\frac{\gamma(1+b\zeta_s)}{\beta}Y^i_s + {g}^i_s.
    \end{split}
  \end{equation*}

\subsection{Mean Field BSDE and the System of ODEs}
We now derive a system of ordinary differential equations (ODEs) which provides a solution of the mean field BSDE through the EQG modelling. 
Our approach basically follows Fujii \& Takahashi \cite{fujiiMakingMeanvarianceHedging2014} [Section 5], which proposes the method of associating the solution of the quadratic growth BSDE with the Riccati matrix equation. 
As a heuristic argument, if $Y^i$ is a quadratic form of $(x^0,x^i)$, its drift term is expected to be a quadratic form of $(x^0,x^i)$, and its diffusion terms are expected to be affine in $(x^0,x^i)$ by applying Ito formula. On the other hand, as the driver $f^i$ of the BSDE \eqref{EQG-mfBSDE} is quadratic in $(Z^{i,0},Z^i)$ and is linear in $Y^i$, it is anticipated that $f^i$ is a quadratic form of $(x^0,x^i)$ as well.
These observations imply that such an ansatz for $Y^i$ seems to be consistent, and we thus search for a solution of the form:
\begin{equation}
    \begin{split}
    \label{Y-ansatz}
    Y^i_t =& \frac{1}{2} \langle A^i_{00}(t)x^0_t,x^0_t\rangle + \frac{1}{2} \langle A^i_{11}(t)x^i_t,x^i_t\rangle + \langle A^i_{10}(t)x^0_t,x^i_t\rangle\\
    & + \langle B^i_0(t),x^0_t\rangle + \langle B^i_1(t),x^i_t\rangle + C^i(t),~~~t\in[0,T]
    \end{split}
\end{equation}
for some processes $(A^i_{00},A^i_{11},A^i_{10},B^i_0,B^i_1,C^i):[0,T]\times\Omega\to\mathbb{M}_{d_0}\times\mathbb{M}_{d}\times\mathbb{R}^{d\times d_0}\times\mathbb{R}^{d_0}\times\mathbb{R}^{d}\times\mathbb{R}$, all of which are to be determined. At this moment, let us temporarily assume that $(A^i_{00},A^i_{11},A^i_{10},B^i_0,B^i_1,C^i)$ are once continuously time-differentiable and independent of $(\xi^i,X_0^i,x^i_0,W^0,W^i)$, i.e. they are deterministic functions of time common to all agents. After deriving the relevant ODEs, we shall verify this property. Since we search for functions common to all agents, we simply write $(A_{00},A_{11},A_{10},B_0,B_1,C)$ instead of $(A^i_{00},A^i_{11},A^i_{10},B^i_0,B^i_1,C^i)$ from now on. \par
As usual, we choose agent-1 as a representative agent and omit the superscript ``1" when there is no confusion. By applying Ito formula to \eqref{Y-ansatz}, we have
\begin{equation}
    \begin{split}
        \label{Y-ito-habit}
        dY_t 
        &=
        \Bigl\{\Bigl\langle \Bigl(\frac{1}{2} \dot{A}_{00}(t) - K_0 A_{00}(t)\Bigr)x^0_t,x^0_t\Bigr\rangle + \Bigl\langle \Bigl(\frac{1}{2} \dot{A}_{11}(t) - K A_{11}(t)\Bigr)x^1_t,x^1_t\Bigr\rangle \\
        &~~~+ \Bigl\langle \Bigl( \dot{A}_{10}(t) - (K_0+K)A_{10}(t)\Bigr)x^0_t,x^1_t\Bigr\rangle + \langle \dot{B}_0(t)-K_0B_0(t) + K_0A_{00}(t)m_0 + K A_{10}(t)^\top m,x^0_t\rangle \\
        &~~~+ \langle \dot{B}_1(t)-K B_1(t) + K A_{11}(t)m + K_0A_{10}(t)m_0,x^1_t\rangle \\
        &~~~+ \Bigl.\dot{C}(t) + \langle K_0 B_0(t) ,m_0 \rangle + \langle K B_1(t) ,m \rangle +\frac{1}{2}\mathrm{tr}[A_{00}(t)\Sigma_0\Sigma_0^\top] + \frac{1}{2}\mathrm{tr}[A_{11}(t)\Sigma\Sigma^\top] \Bigr\}dt \\
        &~~~+   \langle\Sigma_0^\top (A_{00}(t)x^0_t + A_{10}(t)^\top x^1_t + B_0(t)) ,dW^0_t \rangle + \langle \Sigma^\top (A_{10}(t)x^0_t + A_{11}(t)x^1_t + B_1(t)) , dW^1_t \rangle.
    \end{split}
\end{equation}
In order for $Y$ given in \eqref{Y-ansatz} to be the solution to the mean field BSDE \eqref{EQG-mfBSDE}, we must have
\begin{equation*}
    \begin{split}
        &Z_t^0 = \Bigl\{\Sigma_0^\top (A_{00}(t)x^0_t + A_{10}(t)^\top x^1_t + B_0(t))\Bigr\}^\top,~~~t\in[0,T],\\
        &Z_t^1 = \Bigl\{\Sigma^\top (A_{10}(t)x^0_t + A_{11}(t)x^1_t + B_1(t))\Bigr\}^\top,~~~t\in[0,T].
    \end{split}
\end{equation*}
To deal with the process $Z^{0\|}$, let us write $\Sigma_0 = \hat{\Sigma}_0 + \check{\Sigma}_0$, where $\hat{\Sigma}_0, \check{\Sigma}_0 \in \mathbb{R}^{d_0\times d_0}$ are of the forms:
\[
    \hat{\Sigma}_0 = (\hat{\Sigma}^n_0~~ 0),~~\check{\Sigma}_0 = (0~~ \check{\Sigma}^{d_0-n}_0),
\]
for $\hat{\Sigma}^n_0\in\mathbb{R}^{d_0\times n}$ and $\check{\Sigma}^{d_0-n}_t\in\mathbb{R}^{d_0\times (d_0-n)}$, so that we have $\Pi(u^\top\Sigma_0)=u^\top \hat{\Sigma}_0$ for any $u\in\mathbb{R}^{d_0}$. In addition, it is easy to see
\begin{equation*}
    \begin{split}
        &\mathbb{E}[x_t^0 |\mathcal{F}^0] = x_t^0,~~~t\in[0,T],\\
        &\mu^1_t := \mathbb{E}[x_t^1 |\mathcal{F}^0] = \mathbb{E}[x_t^1] = \mathbb{E}[x_0^1]e^{-K t}+m(1-e^{-K t}),~~~t\in[0,T].
    \end{split}
\end{equation*}
Then we obtain:
\begin{equation*}
    \begin{split}
        &Z_t^{0\|} = \Bigl\{\hat{\Sigma}_0^\top (A_{00}(t)x^0_t + A_{10}(t)^\top x^1_t + B_0(t))\Bigr\}^\top,~~~t\in[0,T],\\
        &\mathbb{E}[Z_t^{0\|} |\mathcal{F}^0] = \Bigl\{\hat{\Sigma}_0^\top (A_{00}(t)x^0_t + A_{10}(t)^\top \mu^1_t + B_0(t)) \Bigr\}^\top,~~~t\in[0,T].
    \end{split}
\end{equation*}
Plugging these results into the driver $f$, we have: for $t\in[0,T]$,
\begin{equation}
    \begin{split}
        \label{f-EQG}
    &f(t,Y_t,Z^0_t,Z^1_t) \\
    &= 
    -\frac{\gamma}{2} \Bigl|\mathbb{E}[Z^{0\|}_t|\mathcal{F}^0] - Z^{0\|}_t\Bigr|^2 + \frac{\gamma}{2}(|Z^{0}_t|^2 + |Z^1_t|^2) -\frac{\gamma(1+b\zeta_t)}{\beta}(Y_t-F_t) + \widetilde{g}_t\\
    &=
    \Bigl\langle \Bigl\{\frac{\gamma}{2}\Bigl(A_{00}(t)\Sigma_0\Sigma_0^\top A_{00}(t) + A_{10}(t)^\top \Sigma\Sigma^\top A_{10}(t) \Bigr) -\frac{\gamma(1+b\zeta_t)}{2\beta}(A_{00}(t)-A^F_{00}(t))\Bigr\}x_t^0,x_t^0 \Bigr\rangle\\
    &~~~+\Bigl\langle \Bigl\{\frac{\gamma}{2}( A_{10}(t) \check{\Sigma}_0\check{\Sigma}_0^\top A_{10}(t)^\top + A_{11}(t) \Sigma\Sigma^\top A_{11}(t)) -\frac{\gamma(1+b\zeta_t)}{2\beta}(A_{11}(t)-A^F_{11}(t))\Bigr\} x_t^1 ,x_t^1 \Bigr\rangle \\
    &~~~+\Bigl\langle \Bigl\{\gamma( A_{10}(t)\Sigma_0\Sigma_0^\top A_{00}(t) + A_{11}(t) \Sigma\Sigma^\top A_{10}(t) )-\frac{\gamma(1+b\zeta_t)}{\beta}(A_{10}(t)-A^F_{10}(t)) \Bigr\}x_t^0,x_t^1 \Bigr\rangle \\
    &~~~+\Bigl\langle \gamma (A_{00}(t)\Sigma_0\Sigma_0^\top B_{0}(t) + A_{10}(t)^\top\Sigma\Sigma^\top B_{1}(t)) -\frac{\gamma(1+b\zeta_t)}{\beta}(B_{0}(t)-B^F_{0}(t)) ,x_t^0 \Bigr\rangle \\
    &~~~+\Bigl\langle \gamma (A_{10}(t)\hat{\Sigma}_0\hat{\Sigma}_0^\top A_{10}(t)^\top \mu_t^1 + A_{10}(t)\Sigma_0\Sigma_0^\top B_0(t) + A_{11}(t)\Sigma\Sigma^\top B_1(t)) - \frac{\gamma(1+b\zeta_t)}{\beta}(B_{1}(t)-B^F_{1}(t)),x_t^1 \Bigr\rangle\\
    &~~~- \frac{\gamma}{2}\langle A_{10}(t)\hat{\Sigma}_0\hat{\Sigma}_0^\top A_{10}(t)^\top \mu_t^1,\mu_t^1 \rangle + \frac{\gamma}{2}\langle \Sigma_0^\top B_0(t),\Sigma_0^\top B_0(t) \rangle + \frac{\gamma}{2}\langle \Sigma^\top B_1(t),\Sigma^\top B_1(t) \rangle \\
    &~~~-\frac{\gamma(1+b\zeta_t)}{\beta}(C(t)-C^F(t)) + \widetilde{g}_t,
    \end{split}
\end{equation}
where $\widetilde{g}$ is a deterministic and continuous function defined by:
\[
    \widetilde{g}_t := g_t - \frac{\gamma(1+b\zeta_t)}{\beta} F_t = - \frac{\delta}{\gamma} + (\kappa-b)\zeta_t\rho_t + \frac{1 + b\zeta_t}{\beta}\Bigl\{1 + \log\Bigl(\frac{a\beta}{\gamma(1 + b\zeta_t)}\Bigr)\Bigr\},~~~t\in[0,T].
\]

By matching \eqref{f-EQG} and the drift term of \eqref{Y-ito-habit} with respect to the quadratic or linear coefficients of $(x^0,x^1)$ as well as the remaining constant terms, we obtain: for $t\in[0,T]$,
\begin{equation}
    \begin{split}
        \label{Riccati eqn-habit}
        &\dot{A}_{00}(t) = -\gamma A_{00}(t)\Sigma_0\Sigma_0^\top A_{00}(t)  - \gamma A_{10}(t)^\top \Sigma\Sigma^\top A_{10}(t) + \Bigl(2K_0 + \frac{\gamma(1+b\zeta_t)}{\beta}\Bigr) A_{00}(t) - \frac{\gamma(1+b\zeta_t)}{\beta} A^F_{00}(t), \\
        &\dot{A}_{11}(t) = -\gamma A_{11}(t) \Sigma\Sigma^\top A_{11}(t)  - \gamma A_{10}(t) \check{\Sigma}_0\check{\Sigma}_0^\top A_{10}(t)^\top + \Bigl(2K + \frac{\gamma(1+b\zeta_t)}{\beta}\Bigr) A_{11}(t) - \frac{\gamma(1+b\zeta_t)}{\beta} A^F_{11}(t), \\
        &\dot{A}_{10}(t)  = -\gamma A_{10}(t)\Sigma_0\Sigma_0^\top A_{00}(t) - \gamma A_{11}(t) \Sigma\Sigma^\top A_{10}(t) + \Bigl((K_0+K)+\frac{\gamma(1+b\zeta_t)}{\beta}\Bigr)A_{10}(t) - \frac{\gamma(1+b\zeta_t)}{\beta} A^F_{10}(t),\\
        &\dot{B}_0(t)=\Bigl(- \gamma A_{00}(t)\Sigma_0\Sigma_0^\top  + \frac{\gamma(1+b\zeta_t)}{\beta} + K_0\Bigr)B_{0}(t) - \gamma A_{10}(t)^\top\Sigma\Sigma^\top B_{1}(t)\\
        &~~~~~~~~~~~~~- \frac{\gamma(1+b\zeta_t)}{\beta}B^F_{0}(t) - K_0A_{00}(t)m_0 - KA_{10}(t)^\top m,\\
        &\dot{B}_1(t)= \Bigl(-\gamma A_{11}(t)\Sigma\Sigma^\top + \frac{\gamma(1+b\zeta_t)}{\beta} + K\Bigr)B_1(t) - \gamma \Bigl(A_{10}(t)\hat{\Sigma}_0\hat{\Sigma}_0^\top A_{10}(t)^\top \mu_t^1 + A_{10}(t)\Sigma_0\Sigma_0^\top B_0(t)\Bigr) \\
        &~~~~~~~~~~~~~- \frac{\gamma(1+b\zeta_t)}{\beta}B^F_{1}(t) - KA_{11}(t)m - K_0A_{10}(t)m_0, \\
        &\dot{C}(t)= \frac{\gamma(1+b\zeta_t)}{\beta}C(t) - \frac{\gamma(1+b\zeta_t)}{\beta}C^F(t) - \frac{\gamma}{2}\langle \Sigma_0^\top B_0(t),\Sigma_0^\top B_0(t) \rangle - \frac{\gamma}{2} \langle \Sigma^\top B_1(t),\Sigma^\top B_1(t) \rangle\\
        &~~~~~~~~~~~~~- \langle K_0 B_0(t) ,m_0 \rangle - \langle K B_1(t) ,m \rangle + \frac{\gamma}{2}\langle A_{10}(t)\hat{\Sigma}_0\hat{\Sigma}_0^\top A_{10}(t)^\top \mu_t^1,\mu_t^1 \rangle \\
        &~~~~~~~~~~~~~- \frac{1}{2}\mathrm{tr}[A_{00}(t)\Sigma_0\Sigma_0^\top] - \frac{1}{2}\mathrm{tr}[A_{11}(t)\Sigma\Sigma^\top] - \widetilde{g}_t,\\
        &A_{00}(T)=A_{00}^F(T),~~ A_{11}(T)=A_{11}^F(T), ~~A_{10}(T)=A_{10}^F(T),\\
        &B_0(T)=B_0^F(T), ~~B_1(T)=B_1^F(T), ~~C(T)=C^F(T).
    \end{split}
\end{equation}
Here, the terminal conditions for $(A_{00},A_{11},A_{10},B_0,B_1,C)$ are set to satisfy $Y_T=F_T$.

\begin{rem}~\\
    \textup{(i)}  The equations for $(A_{00},A_{11},A_{10})$ are of Riccati type. In this chapter, however, we do not delve into the general well-posedness result due to its complexity.\\
    \textup{(ii)} Since the coefficients appearing in \eqref{Riccati eqn-habit} are all deterministic and in particular, independent of $(\xi^i,X_0^i,x^i_0,W^0,W^i)$, we deduce that $(A_{00},A_{11},A_{10},B_0,B_1,C)$ are deterministic functions of time and common to all agents if they exist.\\
    \textup{(iii)} By the local Lipschitz condition, the equation \eqref{Riccati eqn-habit} has a locally unique solution. Furthermore, by making $|\Sigma_0|$ and $|\Sigma|$ sufficiently small, we expect to have also a global solution since the Riccati equation for $(A_{00},A_{11},A_{10})$ becomes approximately linear.\\
    \textup{(iv)} We may possibly allow heterogeneity among the coefficients of risk aversion $(\gamma^i,\beta^i)_{i\in\mathbb{N}}$ as in Section \ref{Section 3}. However, in this case, the system of equations \eqref{Riccati eqn-habit} becomes mean field type and checking its well-posedness would be much harder.
\end{rem}

These observations result in the following theorem. 
\begin{thm}
    \label{EQG solution-habit}
    Let Assumption \ref{asmEQG-market} and \ref{asm6-habit} be in force. In addition, assume that the equation \eqref{Riccati eqn-habit} has a global solution $(A_{00},A_{11},A_{10},B_0,B_1,C)\in\mathcal{C}^1([0,T];\mathbb{M}_{d_0})\times\mathcal{C}^1([0,T];\mathbb{M}_{d})\times\mathcal{C}^1([0,T];\mathbb{R}^{d\times d_0})\times\mathcal{C}^1([0,T];\mathbb{R}^{d_0})\times\mathcal{C}^1([0,T];\mathbb{R}^{d})\times\mathcal{C}^1([0,T];\mathbb{R})$. 
    Then, for each $i\in\mathbb{N}$, the process $(Y^i,Z^{i,0},Z^i)\in\mathbb{S}^2(\mathbb{P}^{0,i},\mathbb{F}^{0,i},\mathbb{R}) \times \mathbb{S}^2(\mathbb{P}^{0,i},\mathbb{F}^{0,i},\mathbb{R}^{1\times d_0}) \times \mathbb{S}^2(\mathbb{P}^{0,i},\mathbb{F}^{0,i},\mathbb{R}^{1\times d})$, defined by
    \begin{equation}
        \begin{split}
            \label{EQG solution YZ-habit}
            &Y^i_t  = \frac{1}{2} \langle A_{00}(t)x^0_t,x^0_t\rangle + \frac{1}{2} \langle A_{11}(t)x^i_t,x^i_t\rangle + \langle A_{10}(t)x^0_t,x^i_t\rangle + \langle B_0(t),x^0_t\rangle + \langle B_1(t),x^i_t\rangle + C(t), \\
            &Z_t^{i,0} = \Bigl\{\Sigma_0^\top (A_{00}(t)x^0_t + A_{10}(t)^\top x^i_t + B_0(t))\Bigr\}^\top,~~~Z_t^i = \Bigl\{\Sigma^\top (A_{10}(t)x^0_t + A_{11}(t)x^i_t + B_1(t))\Bigr\}^\top,
        \end{split}
    \end{equation}
    for $t\in[0,T]$ solves the mean field BSDE \eqref{EQG-mfBSDE}. The solution is unique among those with the quadratic Gaussian form.
\end{thm}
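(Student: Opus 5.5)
The proof is a direct verification, essentially reversing the heuristic derivation preceding the statement. Given the global $\mathcal{C}^1$ solution of \eqref{Riccati eqn-habit}, I define $(Y^i,Z^{i,0},Z^i)$ by \eqref{EQG solution YZ-habit}. I then apply Ito's formula to $Y^i$, using the linear SDEs for $x^0$ and $x^i$. This reproduces \eqref{Y-ito-habit} with superscript $1$ replaced by $i$. In particular, the martingale part of $dY^i_t$ is exactly $Z^{i,0}_t dW^0_t + Z^i_t dW^i_t$ with the integrands given in \eqref{EQG solution YZ-habit}. The terminal conditions $A_{00}(T)=A^F_{00}(T)$, etc., give $Y^i_T=F^i_T$ by comparison with \eqref{EQG-F-habit}.

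The second step identifies the drift. I compute $\mathbb{E}[Z^{i,0\|}_t|\mathcal{F}^0]$ using $\Pi(u^\top\Sigma_0)=u^\top\hat\Sigma_0$, $\mathbb{E}[x^0_t|\mathcal{F}^0]=x^0_t$, and $\mathbb{E}[x^i_t|\mathcal{F}^0]=\mathbb{E}[x^i_t]=\mu^i_t$. The last identity holds because $x^i$ is driven by $W^i$ and by $x^i_0$, both independent of $\mathcal{F}^0$; the i.i.d. assumption in Assumption \ref{asm6-habit} (ii) makes $\mu^i_t=\mu^1_t$. This gives $\mathbb{E}[Z^{i,0\|}_t|\mathcal{F}^0]-Z^{i,0\|}_t=\{\hat\Sigma_0^\top A_{10}(t)^\top(\mu^1_t-x^i_t)\}^\top$. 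Substituting into the completed-square form of $f^i$ and writing $g^i_t=\widetilde g_t+\frac{\gamma(1+b\zeta_t)}{\beta}F^i_t$, I expand $f^i$ as a quadratic form in $(x^0_t,x^i_t)$; this is \eqref{f-EQG}. The ODE system \eqref{Riccati eqn-habit} is precisely the statement that the coefficients of $x^0x^0$, $x^ix^i$, $x^0x^i$, $x^0$, $x^i$ and the constant term of $-f^i$ coincide with those of the drift in \eqref{Y-ito-habit}. So the drift of $Y^i$ equals $-f^i(t,Y^i_t,Z^{i,0}_t,Z^i_t)\,dt$, and integrating from $t$ to $T$ yields \eqref{EQG-mfBSDE}.

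The main obstacle is the bookkeeping in the quadratic expansion. The $x^ix^i$ coefficient must combine $\frac{\gamma}{2}A_{10}\Sigma_0\Sigma_0^\top A_{10}^\top$ from $|Z^{i,0}|^2$ with $-\frac{\gamma}{2}A_{10}\hat\Sigma_0\hat\Sigma_0^\top A_{10}^\top$ from the square, leaving $\check\Sigma_0\check\Sigma_0^\top$. This uses $\Sigma_0\Sigma_0^\top=\hat\Sigma_0\hat\Sigma_0^\top+\check\Sigma_0\check\Sigma_0^\top$, which holds because the cross terms vanish by the block structure. The cross terms with $\mu^1_t$ land in the $B_1$ equation and in the constant term, and symmetric parts must be taken for the matrix coefficients. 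For integrability, $x^0$ and $x^i$ are Gaussian OU processes with Gaussian initial data and continuous coefficients, hence in $\mathbb{S}^p$ for all $p$. Therefore $Y^i$, $Z^{i,0}$ and $Z^i$, being polynomial of degree at most two in them, lie in $\mathbb{S}^2$, and the stochastic integrals are true martingales.

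Finally, uniqueness within the quadratic Gaussian class goes as follows. Suppose another solution has the form \eqref{Y-ansatz} with deterministic $\mathcal{C}^1$ coefficients. The same computation forces those coefficients to satisfy \eqref{Riccati eqn-habit}, because the law of $(x^0_t,x^i_t)$ has full support and so coefficients of a polynomial identity in them must match for a.e. $t$, and then for all $t$ by continuity. The right-hand side of \eqref{Riccati eqn-habit} is locally Lipschitz, so the solution of this terminal-value problem is unique, and the coefficients, hence the solution, coincide.
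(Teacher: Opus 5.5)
Your proof follows the paper's route: the paper justifies this theorem by the derivation just before it, namely Ito's formula on the quadratic ansatz, reading off $Z^{i,0}$ and $Z^i$ from the martingale part, and matching the drift against $-f^i$ term by term using $\mathbb{E}[x^i_t|\mathcal{F}^0]=\mu^1_t$ and $\Sigma_0\Sigma_0^\top=\hat\Sigma_0\hat\Sigma_0^\top+\check\Sigma_0\check\Sigma_0^\top$. The terminal conditions of \eqref{Riccati eqn-habit} give $Y^i_T=F^i_T$, and uniqueness within the quadratic Gaussian class rests on local Lipschitz uniqueness for \eqref{Riccati eqn-habit}. The one imprecision is the full-support claim in your uniqueness step: the law of $(x^0_t,x^i_t)$ is nondegenerate only if $\Sigma_0$ is invertible and $\Sigma$ or $\mathrm{Var}(x^i_0)$ is nondegenerate, which Assumption \ref{asm6-habit} does not impose. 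In degenerate cases the coefficients are determined only on the support, so you should prove uniqueness of the processes through the reduced quadratic form rather than forcing the full ODE on the coefficients.
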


\subsection{Optimality, Verification and Asymptotic Equilibrium}
Let $(Y_t^i,Z_t^{i,0},Z_t^i;t\in[0,T])_{i\in\mathbb{N}}$ be processes defined by \eqref{EQG solution YZ-habit} and suppose that they are well-defined. 
Then the process $\vartheta$, defined by\footnote{Since $(Z^{i,0}_t;t\in[0,T])_{i\in\mathbb{N}}$ have the same distribution, we can, without loss of generality, choose $Z^{1,0}$ to define $\vartheta$.}
\begin{equation}
    \begin{split}
    \label{EQG risk premium}
    \vartheta_t 
    &:=
    -\gamma \mathbb{E}[Z_t^{1,0\|} |\mathcal{F}^0]^\top,~~~t\in[0,T],
    \end{split}
\end{equation}
is expected to be the market-clearing risk-premium process in the large population limit in analogy with Section \ref{Section 3}3. However, we have
\begin{equation*}
    \begin{split}
    \vartheta_t 
    &=
    -\gamma \hat{\Sigma}_0^\top \Bigl(A_{00}(t)x^0_t + A_{10}(t)^\top \mu^1_t + B_0(t) \Bigr)\\
    &=
    -\gamma \hat{\Sigma}_0^\top \Bigl(A_{00}(t)\mathbb{E}[x^0_t] + A_{10}(t)^\top\mu^1_t + B_0(t)\Bigr)-\gamma \hat{\Sigma}_0^\top A_{00}(t)\Sigma_0\int_0^t e^{-K_0(t-s)}dW^0_s
    \end{split}
\end{equation*}
for $t\in[0,T]$, which implies that $\vartheta$ is a Gaussian process and thus $\vartheta\notin\mathbb{H}^2_{\mathrm{BMO}}$. Furthermore, since $Y^i$ and $F^i$ are given by quadratic forms of $x^0$ and $x^i$, they are unbounded processes. 
Therefore, this EQG model does not fulfil the assumptions of Section \ref{Section 3}. Despite this, if $|\mathrm{Var}(x^i_0)|$, $|\Sigma_0|$ and $|\Sigma|$ are small enough, we shall see that we can still obtain the well-posedness. Here, $\mathrm{Var}(x^i_0)$ is a covariance matrix of $x^i_0$, defined by $\mathrm{Var}(x^i_0):=\mathbb{E}[(x^i_0-\mathbb{E}[x^i_0])(x^i_0-\mathbb{E}[x^i_0])^\top]$.
The following result is well known.
\begin{lem}
    \label{crisan}
    Let $(\Omega,\mathcal{F},\mathbb{P},\mathbb{F}~(:=(\mathcal{F}_t)_{t\in[0,T]}))$ be a filtered probability space with usual conditions and $W:=(W_t)_{t\in[0,T]}$ be a standard $k$-dimensional $(\mathbb{F},\mathbb{P})$-Brownian motion. Also, let $\mathscr{X}$ be an $m$-dimensional $\mathbb{F}$-adapted process defined by
    \[
        \mathscr{X}_t = \mathscr{X}_0 + \int_0^t B(\mathscr{X}_s)ds + \int_0^t \Xi (\mathscr{X}_s)dW_s,~~t\in[0,T],
    \]
    where $B:\mathbb{R}^{m}\to\mathbb{R}^{m}$ and $\Xi :\mathbb{R}^{m}\to\mathbb{R}^{m\times k}$ are Lipschitz continuous functions and $\mathscr{X}_0\in\mathbb{L}^2(\mathbb{P},\mathcal{F}_0,\mathbb{R}^m)$. Moreover, let $h:\mathbb{R}^{m}\to\mathbb{R}^{k}$ be a Borel-measurable function satisfying $|h(x)|^2 \leq C(1+|x|^2)$ for all $x\in\mathbb{R}^{m}$ and some constant $C>0$. Then, the Dol\'{e}ans-Dade exponential $\displaystyle\Bigl\{\mathcal{E}\Bigl(\int_0^\cdot h(\mathscr{X}_s)^\top dW_s\Bigr)_t;t\in[0,T]\Bigr\}$ is a martingale of class $\mathcal{D}$.
\end{lem}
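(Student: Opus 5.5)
The plan is to reduce the claim to the fact that $M:=\mathcal{E}(\int_0^\cdot h(\mathscr{X}_s)^\top dW_s)$ is a true martingale on $[0,T]$. A nonnegative local martingale is a supermartingale, so $\mathbb{E}[M_T]\le 1$, and it is a martingale exactly when $\mathbb{E}[M_T]=1$. Once $M$ is a martingale on the compact interval $[0,T]$, it is automatically of class $\mathcal{D}$: for every $\tau\in\mathcal{T}(\mathbb{F})$, optional sampling gives $M_\tau=\mathbb{E}[M_T|\mathcal{F}_\tau]$, and the family of conditional expectations of a single integrable variable $M_T$ is uniformly integrable. So everything rests on proving $\mathbb{E}[M_T]=1$ under only the linear growth bound on $h$.

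Novikov's criterion fails globally, because $|h(\mathscr{X})|^2$ grows like $|\mathscr{X}|^2$ and $\mathscr{X}$ may be Gaussian-like. I would therefore use the standard localization on a fine time grid, in the spirit of Beneš's condition (Karatzas \& Shreve, Corollary 3.5.16). First I will record an exponential moment for $\mathscr{X}$. Since $B$ and $\Xi$ are Lipschitz, they have linear growth, and Gronwall together with BDG and exponential martingale inequalities yields a $\delta>0$ with $\sup_{t\le T}\mathbb{E}[\exp(\delta|\mathscr{X}_t-\mathscr{X}_0|^2)\,|\,\mathcal{F}_0]$ controlled on short time spans. Then I will split $[0,T]$ into intervals $[t_{k-1},t_k]$ of length $\epsilon$ and aim to show $\mathbb{E}[M_{t_k}/M_{t_{k-1}}\,|\,\mathcal{F}_{t_{k-1}}]=1$ for each $k$. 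Chaining these by the tower property gives $\mathbb{E}[M_T]=1$.

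The main obstacle is the single-interval step, because $\mathscr{X}_0$ is only square integrable and not exponentially integrable, so a plain Novikov bound is unavailable. I would handle this by working conditionally on $\mathcal{F}_{t_{k-1}}$ and localizing on the starting point: on $\{|\mathscr{X}_{t_{k-1}}|\le R\}$, Novikov's condition on the short interval follows from the Gaussian-tail estimate once $\epsilon C<\delta/2$, with $\epsilon$ independent of $R$. I would then let $R\to\infty$ by monotone convergence.

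A cleaner alternative, which I would try first, is the Girsanov argument of Beneš. Let $\tau_n:=\inf\{t:|\mathscr{X}_t|\ge n\}\wedge T$. Stopped at $\tau_n$, $M$ is a true martingale because $h$ is bounded there, so it defines a measure $\mathbb{Q}^n$. Under $\mathbb{Q}^n$, $\mathscr{X}$ solves an SDE whose drift $B+\Xi h$ still has linear growth up to $\tau_n$. The standard moment bound then gives $\mathbb{Q}^n(\tau_n<T)\le C(1+\mathbb{E}|\mathscr{X}_0|^2)/n^2\to0$. Since $\mathbb{E}[M_T]\ge\mathbb{E}[M_{\tau_n}\mathbf{1}_{\{\tau_n=T\}}]=\mathbb{Q}^n(\tau_n=T)\to1$, we get $\mathbb{E}[M_T]=1$. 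This route needs only $\mathscr{X}_0\in\mathbb{L}^2$, so I would adopt it and then conclude class $\mathcal{D}$ as described in the first paragraph.
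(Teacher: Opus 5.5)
Your reduction to $\mathbb{E}[M_T]=1$, the passage from martingale to class $\mathcal{D}$ on $[0,T]$, and the localization inequality $\mathbb{E}[M_T]\ge\mathbb{Q}^n(\tau_n=T)$ are all correct. The step that fails is the claim that under $\mathbb{Q}^n$ the drift $B+\Xi h$ still has linear growth. Since $\Xi$ is only assumed Lipschitz, you have $|\Xi(x)|\le C(1+|x|)$ and hence $|\Xi(x)h(x)|\le C(1+|x|)^2$. So there is no bound on $\mathbb{E}^{\mathbb{Q}^n}[\sup_{t\le T}|\mathscr{X}_{t\wedge\tau_n}|^2]$ that is uniform in $n$.

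This cannot be patched, because the lemma is false as stated. Take $m=k=1$, $B\equiv 0$, $\Xi(x)=x$, $h(x)=x$ and $\mathscr{X}_0=1$, so that $\mathscr{X}_t=e^{W_t-t/2}$. Since $\mathbb{P}(\tau_n<T)\to0$, your own localization gives $\mathbb{E}[M_T]=\lim_n\mathbb{Q}^n(\tau_n=T)$. This limit is the probability that the solution of $dY_t=Y_t^2\,dt+Y_t\,dB_t$, $Y_0=1$, does not explode on $[0,T]$. By Feller's test this solution explodes with positive probability: the scale density is $e^{-2y}$ and $\int_1^\infty e^{-2z}\int_1^z 2e^{2u}u^{-2}\,du\,dz<\infty$. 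It in fact explodes before any fixed $T>0$ with positive probability. Hence $\mathbb{E}[M_T]<1$ and $M$ is a strict local martingale. Your first route (Novikov on a time grid) breaks for the same reason: the log-normal variable $\mathscr{X}_t$ has $\mathbb{E}[e^{\delta|\mathscr{X}_t|^2}]=\infty$ for every $\delta>0$.

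What is missing is an extra hypothesis, for instance $\Xi$ bounded, or more generally $|\Xi(x)h(x)|\le C(1+|x|)$. With it, your Girsanov argument is complete:
\begin{itemize}
\item $M^{\tau_n}$ is a true martingale by Novikov.
\item $\mathscr{X}_0$ has the same law under $\mathbb{Q}^n$ as under $\mathbb{P}$, because $\mathbb{E}[M_{\tau_n}|\mathcal{F}_0]=1$.
\item Doob's inequality and Gronwall give $\mathbb{E}^{\mathbb{Q}^n}[\sup_{t\le T}|\mathscr{X}_{t\wedge\tau_n}|^2]\le C(1+\mathbb{E}|\mathscr{X}_0|^2)$ with $C$ independent of $n$.
\item Chebyshev then yields $\mathbb{Q}^n(\tau_n<T)\to0$.
\end{itemize}
This hypothesis holds wherever the paper actually uses the lemma, since there $\mathscr{X}$ is the Ornstein--Uhlenbeck factor $(x^0,x^i)$ with a constant diffusion matrix.

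The paper's own proof is only a pointer to Bain \& Crisan. That route uses the criterion $\mathbb{E}\int_0^T|h(\mathscr{X}_s)|^2ds<\infty$ and $\mathbb{E}\int_0^T M_s|h(\mathscr{X}_s)|^2ds<\infty$, checked through a Gronwall bound on $\mathbb{E}[M_t|\mathscr{X}_t|^2]$. This is the same second-moment estimate as yours, written under $\mathbb{P}$ instead of $\mathbb{Q}^n$. Here a single Brownian motion drives both $\mathscr{X}$ and the exponential, so It\^o's formula for $M_t|\mathscr{X}_t|^2$ contains the cross term $2M_t\mathscr{X}_t^\top\Xi(\mathscr{X}_t)h(\mathscr{X}_t)$. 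The cited route therefore needs the same boundedness of $\Xi$.
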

\noindent
\textbf{\textit{Proof}}\\
See Bain \& Crisan \cite{bain_fundamentals_2009} [Exercise 3.11]. $\square$

\begin{prop}
    Let Assumption \ref{asmEQG-market} and \ref{asm6-habit} be in force. In addition, assume that the equation \eqref{Riccati eqn-habit} has a global solution $(A_{00},A_{11},A_{10},B_0,B_1,C)\in\mathcal{C}^1([0,T];\mathbb{M}_{d_0})\times\mathcal{C}^1([0,T];\mathbb{M}_{d})\times\mathcal{C}^1([0,T];\mathbb{R}^{d\times d_0})\times\mathcal{C}^1([0,T];\mathbb{R}^{d_0})\times\mathcal{C}^1([0,T];\mathbb{R}^{d})\times\mathcal{C}^1([0,T];\mathbb{R})$. 
    Then, the Dol\'{e}ans-Dade exponential $\displaystyle\Bigl\{\mathcal{E}\Bigl(-\int_0^\cdot \vartheta_s^\top dW_s^0\Bigr)_t;t\in[0,T]\Bigr\}$ is a martingale of class $\mathcal{D}$, where the process $\vartheta\in\mathbb{L}^0(\mathbb{F}^0,\mathbb{R}^{d_0})$ is defined by \eqref{EQG risk premium}.
\end{prop}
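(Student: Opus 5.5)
The plan is to reduce the statement to Lemma~\ref{crisan} by exhibiting $\vartheta$ as an affine function of a Lipschitz (in fact linear) SDE driven by $W^0$. From the computation preceding the proposition,
\[
\vartheta_t=-\gamma\hat{\Sigma}_0^\top\bigl(A_{00}(t)x^0_t+A_{10}(t)^\top\mu^1_t+B_0(t)\bigr),\qquad t\in[0,T],
\]
where $A_{00},A_{10},B_0\in\mathcal{C}^1$ are deterministic and $\mu^1_t=\mathbb{E}[x^1_0]e^{-Kt}+m(1-e^{-Kt})$ is a deterministic continuous function. The only difficulty in applying Lemma~\ref{crisan} directly is that $h$ there depends only on the state, not on time. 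I will remove this by augmenting the state with time: set $\mathscr{X}_t:=(t,x^0_t)\in\mathbb{R}^{1+d_0}$, working on $(\Omega^0,\mathcal{F}^0,\mathbb{P}^0,\mathbb{F}^0)$ with $k=d_0$ and $W=W^0$. Then
\[
d\mathscr{X}_t=\bigl(1,\,-K_0(x^0_t-m_0)\bigr)dt+(0,\Sigma_0)^\top dW^0_t,
\]
whose drift is affine in the state and whose diffusion coefficient is constant, hence both are globally Lipschitz; moreover $\mathscr{X}_0=(0,x^0_0)$ is deterministic, so it lies in $\mathbb{L}^2$.

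Next I define $h:\mathbb{R}^{1+d_0}\to\mathbb{R}^{d_0}$ by $h(s,x):=\gamma\hat{\Sigma}_0^\top\bigl(A_{00}(\bar s)x+A_{10}(\bar s)^\top\mu^1_{\bar s}+B_0(\bar s)\bigr)$, where $\bar s:=(s\vee0)\wedge T$. This $h$ is continuous, hence Borel, and since continuous functions on $[0,T]$ are bounded, we get $|h(s,x)|\le C(1+|x|)\le C(1+|(s,x)|)$. Consequently $|h|^2\le C(1+|(s,x)|^2)$, which is the growth condition in Lemma~\ref{crisan}. Because $h(\mathscr{X}_t)=-\vartheta_t$ for $t\in[0,T]$, the lemma gives that $\mathcal{E}\bigl(\int_0^\cdot h(\mathscr{X}_s)^\top dW^0_s\bigr)=\mathcal{E}\bigl(-\int_0^\cdot\vartheta_s^\top dW^0_s\bigr)$ is a martingale of class $\mathcal{D}$ on the $\mathbb{F}^0$-space.

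Finally, I transfer the conclusion to the product spaces. Since $\vartheta$ is $\mathbb{F}^0$-adapted and $\mathbb{F}^i$ is independent of $\mathbb{F}^0$, $W^0$ remains an $\mathbb{F}^{0,i}$ (or $\mathbb{F}$) Brownian motion. The martingale property then extends by independence, and uniform integrability is unaffected because the process itself does not change. The main obstacle is only the time-dependence issue; the augmentation with the time component and the clamping $\bar s$ handle it, so no step is expected to be genuinely hard.
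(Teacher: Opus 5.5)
Your proposal is correct and follows the same route as the paper, whose proof is just a direct appeal to Lemma~\ref{crisan}. Your time augmentation $\mathscr{X}_t=(t,x^0_t)$, together with the clamped variable $\bar s$, only spells out a step the paper leaves implicit: putting the time-dependent affine coefficient of $\vartheta$ into the time-homogeneous form that the lemma requires.
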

\noindent
\textbf{\textit{Proof}}\\
This is a direct result of Lemma \ref{crisan}. $\square$

This proposition particularly shows that the process $\vartheta$ is consistent with Assumption \ref{asmEQG-market} as a risk-premium process. With these preparations, we can recover the corresponding result of Section \ref{Section 2}.
\begin{thm} (Optimality and verification)\\
    \label{EQG-verification-habit}
    Let Assumption \ref{asmEQG-market} and \ref{asm6-habit} be in force. Assume further that the equation \eqref{Riccati eqn-habit} has a global solution $(A_{00},A_{11},A_{10},B_0,B_1,C)\in\mathcal{C}^1([0,T];\mathbb{M}_{d_0})\times\mathcal{C}^1([0,T];\mathbb{M}_{d})\times\mathcal{C}^1([0,T];\mathbb{R}^{d\times d_0})\times\mathcal{C}^1([0,T];\mathbb{R}^{d_0})\times\mathcal{C}^1([0,T];\mathbb{R}^{d})\times\mathcal{C}^1([0,T];\mathbb{R})$. 
    Then, there exists a constant $\varsigma > 0$ such that, as long as $|\Sigma_0|^2\lor |\Sigma|^2 \lor |\mathrm{Var}(x^1_0)| < \varsigma$, the process $(p^{i,*},c^{i,*})$, defined by
    \begin{equation}
        \begin{split}
            \label{EQG-optimal-habit}
            p^{i,*}_t &:= (\pi^{i,*}_t)^\top\sigma_t := Z^{i,0\|}_t + \frac{\vartheta_t^\top}{\gamma},~~~t\in[0,T], \\
            c^{i,*}_t &= X^{i,c^*}_t + \frac{1}{\beta}\Bigl\{\log\Bigl(\frac{a\beta}{\gamma(1+b\zeta_t)}\Bigr)-\gamma(Y^i_t-F^i_t+\zeta_tX^{i,c^*}_t)\Bigr\},~~~t\in[0,T],
        \end{split}
    \end{equation}
    belongs to $\mathcal{A}^i_{\mathrm{EQG}}(\vartheta)$ and is an optimal strategy for agent-$i$ for each $i\in\mathbb{N}$. 
    Here, the process $X^{i,c}$ represents the agent's consumption habit \eqref{EQG-habit}, the process $(Y^i,Z^{i,0},Z^i)\in\mathbb{S}^2(\mathbb{P}^{0,i},\mathbb{F}^{0,i},\mathbb{R}) \times \mathbb{S}^2(\mathbb{P}^{0,i},\mathbb{F}^{0,i},\mathbb{R}^{1\times d_0}) \times \mathbb{S}^2(\mathbb{P}^{0,i},\mathbb{F}^{0,i},\mathbb{R}^{1\times d})$ is given by \eqref{EQG solution YZ-habit} and the market risk-premium process $\vartheta$ is defined by \eqref{EQG risk premium}.
\end{thm}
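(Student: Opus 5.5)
The plan mirrors the verification in Theorem~\ref{verification}, but replaces every boundedness/BMO argument with Gaussian exponential-moment estimates that hold for small $|\Sigma_0|,|\Sigma|,|\mathrm{Var}(x^1_0)|$.

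\textbf{Step 1: drift of $R^{i,(p,c)}$.} With $\theta=\vartheta$, the triple $(Y^i,Z^{i,0},Z^i)$ of Theorem~\ref{EQG solution-habit} solves the agent BSDE \eqref{std BSDE}. Indeed, $-Z^{i,0\|}\vartheta-|\vartheta|^2/(2\gamma)$ coincides with the mean field driver, since $\vartheta^\top=-\gamma\mathbb{E}[Z^{i,0\|}|\mathcal{F}^0]$ and $Z^{i,0}$ has the same conditional law as $Z^{1,0}$. So we may take $y^i=Y^i$ in the definition of $R^{i,(p,c)}$. Applying Ito's formula exactly as in Section~\ref{Section 2}, and using the ODE \eqref{zeta-ODE} for $\zeta$, the drift of $R^{i,(p,c)}$ is nonpositive for every $(p,c)$. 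It vanishes at $(p^{i,*},c^{i,*})$, where
\[
dR^{i,(p^*,c^*)}_t=\psi_t\bigl\{(\vartheta_t^\top-\gamma Z^{i,0\perp}_t)dW^0_t-\gamma Z^i_tdW^i_t\bigr\}.
\]
These identities are pointwise algebraic and need no integrability. Condition-R (i) and (ii) are immediate.

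\textbf{Step 2: admissibility of $(p^{i,*},c^{i,*})$.} First, $c^{i,*}$ is affine in $X^{i,c^*}$ and $Y^i-F^i$, which are quadratic in $(x^0,x^i)$. Hence $X^{i,c^*}$ solves a linear ODE with Gaussian-quadratic forcing, and $c^{i,*}$ and $X^{i,c^*}$ have finite moments of all orders. Next write, as in Theorem~\ref{verification},
\[
\psi_t=\exp\Bigl(-\gamma(\xi^i-Y^i_0-\zeta_0X^i_0)-a\int_0^t e^{(\cdots)}ds\Bigr)\,\mathcal{E}_t ,
\]
where $\mathcal{E}_t$ is the Doléans-Dade exponential with integrands $-(\vartheta-\gamma Z^{i,0\perp\top})$ and $\gamma Z^{i\top}$. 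These integrands are affine in the Lipschitz Gaussian diffusion $(x^0,x^i)$, so Lemma~\ref{crisan} shows that $\mathcal{E}$ is a martingale of class $\mathcal{D}$. The integral term is at most $1$.

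The prefactor $e^{\gamma Y^i_0}$, however, involves $\exp(\tfrac{\gamma}{2}\langle A_{11}(0)x^i_0,x^i_0\rangle)$, which depends on $x^i_0$. I would handle it through independence: $\mathcal{E}$ conditioned on $\mathcal{F}^i_0$ remains a martingale. The prefactor is integrable when $|\mathrm{Var}(x^i_0)|$ is small relative to $|A_{11}(0)|$, and $\xi^i,X^i_0$ are Gaussian, so their exponential-linear terms are integrable. This yields that $\psi$ is uniformly integrable.

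\textbf{Step 3: the main obstacle.} We need that $\{R^{i,(p,c)}_\tau\}$ is uniformly integrable and that $U^i>-\infty$ for the optimizer. For this we must control
\[
\exp\bigl(-\gamma\mathcal{W}^{i,*}_\tau\bigr)=\psi_\tau\,\exp\bigl(\delta\tau-\gamma Y^i_\tau-\gamma\zeta_\tau X^{i,c^*}_\tau\bigr),
\]
and the running term $\exp(\gamma F^i-\beta(c^*-X^{c^*}))$. These are products of a class-$\mathcal{D}$ exponential martingale with exponentials of quadratic Gaussian functionals. First I would apply Hölder with exponent $r>1$ close to $1$. The reverse-Hölder/$L^r$ bound for $\mathcal{E}$ comes from Lemma~\ref{crisan} applied to $r\cdot$integrand after a Girsanov shift. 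It then remains to bound $\sup_t\mathbb{E}\exp(q\,|\text{quadratic}(x^0_t,x^i_t)|)$ with $q=r/(r-1)$.

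Such exponential-quadratic moments of an OU process are finite exactly when $q\,|A|\,\mathrm{Var}(x_t)<1/2$, where $\mathrm{Var}(x_t)\le \mathrm{Var}(x_0)+|\Sigma|^2T$. Choosing $\varsigma$ small (depending on the sup norms of the ODE solution and on $A^F$) secures this for all $t$. I would obtain uniform integrability over stopping times by using Doob's maximal inequality for the martingale factor together with a $\sup_t$ Gaussian estimate, e.g.\ via Fernique's theorem for the continuous process $(x^0,x^i)$. I expect this step to be the delicate one: the required $q$, $r$ and $\varsigma$ must be chosen consistently, and the $\sup_t$ exponential bound needs care.

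Once uniform integrability holds, $R^{i,(p,c)}$ is a supermartingale for every admissible $(p,c)$ (admissibility gives class $\mathcal{D}$ directly), and $R^{i,(p^*,c^*)}$ is a true martingale. Condition-R then yields optimality as in Theorem~\ref{verification}.
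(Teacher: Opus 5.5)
Your proposal is correct, but your Step~2 is what actually carries the proof. It is a genuinely different and cheaper route than the paper's; your Step~3 is essentially the paper's argument and is redundant.

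\textbf{The paper's route.} The paper never factors $\psi$. It bounds $|F^i_t|,|Y^i_t|,|c^{i,*}_t|,|X^{i,c^*}_t|$ by Gaussian quadratics via Gronwall. It bounds $\mathcal{W}^{(p^*,c^*)}_t$ from below after adding and subtracting $\gamma(1+\eta)\int_0^t|p^*_s|^2ds$. It then uses H\"older to split the resulting exponential into chi-square moment generating functions and the exponential $\mathcal{E}(-2\gamma(1+\eta)\int p^*dW^0)$, which is controlled by Lemma~\ref{crisan}. This is where the smallness of $|\Sigma_0|,|\Sigma|,|\mathrm{Var}(x^1_0)|$ enters, and class $\mathcal{D}$ is then extracted from a maximal inequality.

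\textbf{Your route.} The definition of $c^{i,*}$ gives $a\exp\bigl(-\gamma(Y^i_t-F^i_t+\zeta_tX^{i,c^*}_t)-\beta(c^{i,*}_t-X^{i,c^*}_t)\bigr)=\gamma(1+b\zeta_t)/\beta$. So the integral term in $\psi$ is not just at most $1$ but deterministic, and
\[
R^{i,(p^*,c^*)}_t=-\psi_t-\int_0^t\frac{\gamma(1+b\zeta_s)}{\beta}\psi_s\,ds,\qquad \psi_t=e^{-\gamma\xi^i+\gamma Y^i_0+\gamma\zeta_0X^i_0}\,e^{-\int_0^t\gamma(1+b\zeta_s)/\beta\,ds}\,\mathcal{E}_t .
\]
Since $\mathcal{E}$ is a true martingale with $\mathbb{E}[\mathcal{E}_t|\mathcal{F}^{0,i}_0]=1$, the prefactor times $\mathcal{E}$ is a martingale on $[0,T]$ as soon as the prefactor is integrable. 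It is therefore of class $\mathcal{D}$. Then $R^{i,(p^*,c^*)}$ is dominated by it plus the integrable variable $C\int_0^T\psi_s\,ds$. This gives both $U^i>-\infty$ and the uniform integrability that $\mathcal{A}^i_{\mathrm{EQG}}(\vartheta)$ requires.

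\textbf{Why Step~3 is unnecessary.} Your claim in Step~3 that one must separately control $e^{-\gamma\mathcal{W}^{i,*}_\tau}$ and the running term is a misidentification. The first term of $R^{i,(p,c)}$ is $\psi$, not $e^{-\gamma\mathcal{W}}$, and the optimizer's running term is a deterministic multiple of $\psi$. So the $L^r$ bound for $\mathcal{E}$, the Fernique estimate, and the consistent choice of $q,r,\varsigma$ are not needed.

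\textbf{What each route buys.} Your route needs a weaker hypothesis: only integrability of $e^{\gamma Y^i_0}$, i.e.\ positive definiteness of $\mathrm{Var}(x^1_0)^{-1}-\gamma A_{11}(0)$, with no restriction on $|\Sigma_0|,|\Sigma|$. It also avoids the delicate passage from $\sup_t$ moment bounds to class $\mathcal{D}$. It is exactly the argument the paper itself uses for Theorem~\ref{EQG-verification} in Chapter~6. The paper's route yields a stronger exponential-moment bound on the optimal wealth jointly with $|F^i|,|Y^i|,|c^{i,*}|,|X^{i,c^*}|$ than admissibility requires.
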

\begin{rem}
    Note that the strategy $(p^{i,*},c^{i,*})$ given above may not be the unique optimal strategy for agent-$i$ under the risk-premium process $\vartheta$. This is because the BSDE \eqref{std BSDE} may have a solution outside of the quadratic Gaussian form.
\end{rem}
\noindent
\textbf{\textit{Proof}}\\
In this proof, we denote the general nonnegative constant by $\widetilde{C}$ to avoid confusion with the function $C$, which is a part of the solution to the ODE \eqref{Riccati eqn-habit}. By the definition of the process $R^{i,(p,c)}$ and the argument in the proof of Theorem \ref{verification}, the process $R^{i,(p^*,c^*)}$ is a local martingale and thus the optimality follows once $(p^{i,*},c^{i,*})\in\mathcal{A}^i_{\mathrm{EQG}}(\vartheta)$ is achieved. 
It then suffices to show that the function $\widetilde{U}^i(p^{i,*},c^{i,*})$ defined in \eqref{tilde-U} has a finite value and that the process $R^{i,(p^*,c^*)}$ is of class $\mathcal{D}$.

Let us write
\begin{equation*}
    \begin{split}
        \phi^0_t := \int_0^t e^{-K_0(t-s)}dW^0_s,~~~\phi^i_t := \int_0^t e^{-K(t-s)}dW^i_s,~~t\in[0,T].
    \end{split}
\end{equation*}
Then we have
\begin{equation*}
    \begin{split}
    x^0_t =  x_0^0e^{-K_0 t}+m_0(1-e^{-K_0 t}) + \Sigma_0 \phi^0_t,~~~x^i_t =  x^i_0e^{-Kt} + m(1-e^{-Kt}) + \Sigma \phi^i_t,~~t\in[0,T],
    \end{split}
\end{equation*}
and in particular, $|x^0_t|^2 + |x^i_t|^2 \leq \widetilde{C}(1+|x^i_0|^2+|\Sigma_0|^2|\phi^0_t|^2+|\Sigma|^2|\phi^i_t|^2)$ for all $t\in[0,T]$. We shall show that there exists a constant $\eta>0$ such that
\begin{equation}
    \label{u.i.-EQG}
    \sup_{t\in[0,T]}\mathbb{E}\Bigl[\exp\Bigl(-(1+\eta)\gamma \mathcal{W}^{i,(p^*,c^*)}_t + (1+\eta)M \Bigl\{|F^i_t| + |Y^i_t| + |c^{i,*}_t| + |X^{i,c^*}_t|\Bigr\}\Bigr)\Bigr] <\infty,
\end{equation}
where $M$ is a constant satisfying $M\geq \max\{\gamma, \beta, \sup_{t\in[0,T]}|\gamma\zeta_t|\}$. If this is the case, we clearly have $\widetilde{U}^i(p^{i,*},c^{i,*})>-\infty$. 
Moreover, Jensen's inequality and Doob's submartingale inequality yields
\begin{equation}
    \begin{split}
    &\mathbb{E}\Bigl[\sup_{t\in[0,T]}|R^{i,(p^*,c^*)}_t|\Bigr]^{1+\eta}\leq \mathbb{E}\Bigl[\sup_{t\in[0,T]}|R^{i,(p^*,c^*)}_t|^{1+\eta}\Bigr]\\
    &\leq \tilde{C}\mathbb{E}\Bigl[|R^{i,(p^*,c^*)}_T|^{1+\eta}\Bigr]= \tilde{C}\sup_{t\in[0,T]}\mathbb{E}\Bigl[|R^{i,(p^*,c^*)}_t|^{1+\eta}\Bigr]<\infty.
    \end{split}
\end{equation}
This implies that the process $R^{i,(p^*,c^*)}$ is dominated by an integrable random variable $\displaystyle\sup_{t\in[0,T]}|R^{i,(p^*,c^*)}_t|$. Using Medvegyev \cite{Medvegyev} [Corollary 1.145], we deduce that $R^{i,(p^*,c^*)}$ is a martingale of class $\mathcal{D}$.

Without loss of generality, we set $i=1$ and omit the superscript ``1'' when obvious. As $(A_{00},A_{11},A_{10},B_0,B_1,C)$ is a global solution and hence is bounded, we have, from \eqref{EQG-F-habit} and \eqref{EQG solution YZ-habit},
\begin{equation}
    \begin{split}
    \label{estimate for p}
    &|\vartheta_t|\leq \widetilde{C}(1+|x^0_t|),~~~ |p^{*}_t| \leq \widetilde{C}(1+|x^0_t| + |x^1_t|),\\
    &|Y_t| \leq \widetilde{C}(1 + |x^0_t|^2 + |x^1_t|^2),~~~ |F_t| \leq \widetilde{C}(1 + |x^0_t|^2 + |x^1_t|^2)
    \end{split}
\end{equation}
for all $t\in[0,T]$. Moreover, by Gronwall's inequality, we have $\displaystyle |X^{c^*}_t| \leq  |X_0| + \widetilde{C} + \widetilde{C}\int_0^t |c^{*}_s| ds$. Using this, we get
\begin{equation*}
    \begin{split}
    |c^{*}_t|\leq \widetilde{C}(1 + |X^{c^*}_t| + |Y_t| + |F_t|) \leq  \widetilde{C}(1 + |X_0| + |Y_t| + |F_t|) + \widetilde{C}\int_0^t |c^{*}_s| ds.
\end{split}
\end{equation*}
and then $|c^{*}_t| \leq  \widetilde{C}(1 + |X_0| + |Y_t| + |F_t|)$, again, by Gronwall's inequality. Together with \eqref{estimate for p}, we obtain
\begin{equation*}
    \begin{split}
    &|c^{*}_t| \leq  \widetilde{C}(1 + |X_0| + |x^0_t|^2 + |x^1_t|^2), \\
    &|X^{c^*}_t| \leq  |X_0| + \widetilde{C} + \widetilde{C}\int_0^t |c^{*}_s| ds  \leq \widetilde{C}\Bigl(1+ |X_0| + \int_0^t (|x^0_s|^2 + |x^1_s|^2) ds\Bigr),
\end{split}
\end{equation*}
for each $t\in[0,T]$.
Using these estimates, we have
\begin{equation*}
    \begin{split}
        \mathcal{W}^{(p^*,c^*)}_t
        &=
        \xi + \int_0^t (p^{*}_s\vartheta_s - c^{*}_s) ds  + \int_0^t p^{*}_s dW^0_s\\
        &\geq
        - |\xi| - \int_0^t (|p^{*}_s||\vartheta_s| + |c^{*}_s|) ds - \gamma(1+\eta)\int_0^t |p^{*}_s|^2 ds + \gamma(1+\eta)\int_0^t |p^{*}_s|^2 ds + \int_0^t p^{*}_s dW^0_s \\
        &\geq
        - \widetilde{C} (1+|\xi|+|X_0|) - \widetilde{C}\int_0^t (|x^0_s|^2 + |x^1_s|^2) ds + \gamma(1+\eta)\int_0^t |p^{*}_s|^2 ds +  \int_0^t p^{*}_s dW^0_s,~~t\in[0,T].
    \end{split}
\end{equation*}
Putting these together, it follows that, for all $t\in[0,T]$,
\begin{equation*}
    \begin{split}
        &-\gamma \mathcal{W}^{(p^*,c^*)}_t + M (|F_t| + |Y_t| + |c^{*}_t| + |X^{c^*}_t|)\\
        &\leq
        \widetilde{C} (1+|\xi|+ |X_0|) + \widetilde{C}(|x^0_t|^2 + |x^1_t|^2) + \widetilde{C}\int_0^t (|x^0_t|^2 + |x^1_t|^2) ds - \gamma^2(1+\eta)\int_0^t |p^{*}_s|^2 ds -\gamma  \int_0^t p^{*}_s dW^0_s \\
        &\leq
        \widetilde{C} (1+|\xi|+ |X_0|) + \widetilde{C}(|\Sigma_0|^2|\phi^0_t|^2 + |\Sigma|^2|\phi^1_t|^2 + |x^1_0|^2) \\
        &~~~~+ \widetilde{C}\int_0^t (|\Sigma_0|^2|\phi^0_s|^2 + |\Sigma|^2|\phi^1_s|^2 + |x^1_0|^2) ds - \gamma^2(1+\eta)\int_0^t |p^{*}_s|^2 ds -\gamma  \int_0^t p^{*}_s dW^0_s.
    \end{split}
\end{equation*}
Then, for all $t\in[0,T]$,
\begin{equation*}
    \begin{split}
        &\mathbb{E}\Bigl[\exp\Bigl(-(1+\eta)\gamma \mathcal{W}^{(p^*,c^*)}_t + (1+\eta)M (|F_t| + |Y_t| + |c^{*}_t| + |X^{c^*}_t|)\Bigr)\Bigr]\\
        &\leq
        \widetilde{C}\mathbb{E}\Bigl[\exp\Bigl(\widetilde{C}(|\xi| + |X_0|) + \widetilde{C}\Bigl((|\Sigma_0|^2|\phi^0_t|^2 + |\Sigma|^2|\phi^1_t|^2 + |x^1_0|^2) + \int_0^T (|\Sigma_0|^2|\phi^0_s|^2 + |\Sigma|^2|\phi^1_s|^2 + |x^1_0|^2) ds\Bigr)\Bigr.\Bigr.\\
        &~~~~~~~~~~~~~~~~~~~~~~\Bigl.\Bigl. - \gamma^2(1+\eta)^2\int_0^t |p^{*}_s|^2 ds -\gamma (1+\eta) \int_0^t p^{*}_s dW^0_s \Bigr)\Bigr]\\
        &\leq
        \widetilde{C}\mathbb{E}\Bigl[\exp\Bigl(\widetilde{C}(|\xi| + |X_0|) + \widetilde{C}(|\Sigma_0|^2|\phi^0_t|^2 + |\Sigma|^2|\phi^1_t|^2 + |x^1_0|^2)\Bigr)\Bigr]^{\frac{1}{4}}\\
        &~~~~~~~~~~~~~~~~~~\times \mathbb{E}\Bigl[\exp\Bigl(\widetilde{C}\int_0^T (|\Sigma_0|^2|\phi^0_s|^2 + |\Sigma|^2|\phi^1_s|^2 + |x^1_0|^2)ds \Bigr)\Bigr]^{\frac{1}{4}} \mathbb{E}\Bigl[\mathcal{E}\Bigl(-2\gamma(1+\eta)\int_0^\cdot p^{*}_s dW^0_s\Bigr)_t\Bigr]^{\frac{1}{2}}\\
        &=
        \widetilde{C}\mathbb{E}\Bigl[\exp\Bigl(\widetilde{C}(|\xi| + |X_0|)\Bigr)\Bigr]^{\frac{1}{4}}\mathbb{E}\Bigl[\exp\Bigl(\widetilde{C}(|\Sigma_0|^2|\phi^0_t|^2 + |\Sigma|^2|\phi^1_t|^2 + |x^1_0|^2)\Bigr)\Bigr]^{\frac{1}{4}}\\
        &~~~~~~~~~~~~~~~~~~\times \mathbb{E}\Bigl[\exp\Bigl(\widetilde{C}\int_0^T (|\Sigma_0|^2|\phi^0_s|^2 + |\Sigma|^2|\phi^1_s|^2 + |x^1_0|^2)ds \Bigr)\Bigr]^{\frac{1}{4}} \mathbb{E}\Bigl[\mathcal{E}\Bigl(-2\gamma(1+\eta)\int_0^\cdot p^{*}_s dW^0_s\Bigr)_t\Bigr]^{\frac{1}{2}}
    \end{split}
\end{equation*}
by using Holder's inequality.\par 
As $\xi$ and $X_0$ are independent and normally distributed, we have 
\[
    \mathbb{E}\Bigl[\exp\Bigl(\widetilde{C}(|\xi| + |X_0|)\Bigr)\Bigr] = \mathbb{E}\Bigl[e^{\widetilde{C}|\xi|}\Bigr]\mathbb{E}\Bigl[e^{\widetilde{C}|X_0|}\Bigr] <\infty.
\]
By Lemma \ref{crisan} and \eqref{estimate for p}, we deduce
\[
    \sup_{t\in[0,T]}\mathbb{E}\Bigl[\mathcal{E}\Bigl(-2\gamma(1+\eta)\int_0^\cdot p^{*}_s dW^0_s\Bigr)_t\Bigr] < \infty.
\]
Furthermore, since the random variables $\phi^0_t$, $\phi^1_t$ and $x^1_0$ are mutually independent, we have
\begin{equation*}
    \begin{split}
        &\mathbb{E}\Bigl[\exp\Bigl(\widetilde{C}\Bigl\{(|\Sigma_0|^2|\phi^0_t|^2 + |\Sigma|^2|\phi^1_t|^2 + |x^1_0|^2)\Bigr\}\Bigr)\Bigr] \\
        &=
        \mathbb{E}\Bigl[\exp\Bigl(\widetilde{C}|\Sigma_0|^2|\phi^0_t|^2\Bigr)\Bigr]\mathbb{E}\Bigl[\exp\Bigl(\widetilde{C}|\Sigma|^2|\phi^1_t|^2\Bigr)\Bigr]\mathbb{E}\Bigl[\exp\Bigl(\widetilde{C}|x^1_0|^2\Bigr)\Bigr]\\
        &\leq
        \widetilde{C}\mathbb{E}\Bigl[\exp\Bigl(\widetilde{C}|\Sigma_0|^2v^0_t Z^2\Bigr)\Bigr]\mathbb{E}\Bigl[\exp\Bigl(\widetilde{C}|\Sigma|^2v^1_t Z^2\Bigr)\Bigr]\mathbb{E}\Bigl[\exp\Bigl(\widetilde{C}|\mathrm{Var}(x^1_0)|Z^2\Bigr)\Bigr],
    \end{split}
\end{equation*}
where $Z\sim N(0,1)$ and 
\begin{equation*}
    \begin{split}
        v^0_t := \int_0^t e^{-2K_0(t-s)} ds = \frac{1}{2K_0}(1-e^{-2K_0 t}) < \frac{1}{2K_0},~~~v^1_t := \int_0^t e^{-2K(t-s)} ds = \frac{1}{2K}(1-e^{-2K t}) < \frac{1}{2K}
    \end{split}
\end{equation*}
for $t\in[0,T]$. Therefore, we have
\[
    \mathbb{E}\Bigl[\exp\Bigl(\widetilde{C}\Bigl\{(|\Sigma_0|^2|\phi^0_t|^2 + |\Sigma|^2|\phi^1_t|^2 + |x^1_0|^2)\Bigr\}\Bigr)\Bigr]<\infty
\]
if and only if 
\[
    \widetilde{C}(|\Sigma_0|^2v^0_t\lor |\Sigma|^2v^1_t \lor |\mathrm{Var}(x^1_0)|)<\frac{1}{2}.
\]
Similarly, we have
\[
    \mathbb{E}\Bigl[\exp\Bigl(\widetilde{C}\int_0^T (|\Sigma_0|^2|\phi^0_s|^2 + |\Sigma|^2|\phi^1_s|^2 + |x^1_0|^2)ds \Bigr)\Bigr] < \infty
\]
if and only if
\[
    \widetilde{C}\Bigl(|\Sigma_0|^2\int_0^T v^0_tdt \lor |\Sigma|^2\int_0^T v^1_tdt \lor |\mathrm{Var}(x^1_0)|T\Bigr)<\frac{1}{2}.
\]

Above all, if
\begin{equation}
    \label{var-sigma}
    |\Sigma_0|^2\lor |\Sigma|^2 \lor |\mathrm{Var}(x^1_0)| < \widetilde{C}^{-1}(1\land T^{-1})(K_0\land K \land 2^{-1}) =: \varsigma
\end{equation}
holds, we get \eqref{u.i.-EQG}, which implies $(p^{*},c^{*})\in\mathcal{A}^1_{\mathrm{EQG}}(\vartheta)$. $\square$

\begin{rem}
    If the quadratic form of $(\phi^0,\phi^i)$ in the exponential function of $R^{i,(p^*,c^*)}$ happens to be negative semidefinite, we need no constraints on the diffusion coefficients.
\end{rem}

This result also recovers the corresponding asymptotic properties of Theorem \ref{Asymp}. To be specific, the process $\vartheta$ satisfies the market clearing condition in the large population limit.

\begin{thm} (Asymptotic equilibrium in the EQG model)\\
    Let Assumption \ref{asmEQG-market} and \ref{asm6-habit} be in force. Assume further that the equation \eqref{Riccati eqn-habit} has a global solution $(A_{00},A_{11},A_{10},B_0,B_1,C)\in\mathcal{C}^1([0,T];\mathbb{M}_{d_0})\times\mathcal{C}^1([0,T];\mathbb{M}_{d})\times\mathcal{C}^1([0,T];\mathbb{R}^{d\times d_0})\times\mathcal{C}^1([0,T];\mathbb{R}^{d_0})\times\mathcal{C}^1([0,T];\mathbb{R}^{d})\times\mathcal{C}^1([0,T];\mathbb{R})$ and that 
    \[
        |\Sigma_0|^2\lor |\Sigma|^2 \lor |\mathrm{Var}(x^1_0)| < \varsigma,
    \]
    where $\varsigma>0$ is a positive constant specified in \eqref{var-sigma}.
    Then, as long as each agent adopts \eqref{EQG-optimal-habit} as his/her optimal strategy, the process $\vartheta$ defined by \eqref{EQG risk premium} clears the financial market in the large population limit, i.e., the agents' optimal trading strategies $(\pi^{i,*}_t;t\in[0,T])_{i\in\mathbb{N}}$ given by \eqref{EQG-optimal-habit} satisfy
    \begin{equation}
        \lim_{N\to\infty}\mathbb{E}\int_0^T \Bigl|\frac{1}{N}\sum_{i=1}^N \pi^{i,*}_t\Bigr|^2dt = 0.
    \end{equation}
\end{thm}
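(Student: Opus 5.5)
The plan is to repeat the argument of Theorem~\ref{Asymp}, replacing the bounded/BMO estimates by Gaussian moment bounds. First, reduce from $\pi^{i,*}$ to $p^{i,*}$. Since $\pi^{i,*}_t=(\sigma_t\sigma_t^\top)^{-1}\sigma_t(p^{i,*}_t)^\top$ and $\underline{\lambda}I_n\leq\sigma_t\sigma_t^\top$ by Assumption~\ref{asmEQG-market}, we get, exactly as in \eqref{pi-and-p},
\[
\mathbb{E}\int_0^T\Bigl|\frac1N\sum_{i=1}^N\pi^{i,*}_t\Bigr|^2dt\leq C\,\mathbb{E}\int_0^T\Bigl|\frac1N\sum_{i=1}^Np^{i,*}_t\Bigr|^2dt.
\]
Because $\gamma$ is now common to all agents, $\hat\gamma=\gamma$ and $\vartheta_t^\top/\gamma=-\mathbb{E}[Z^{1,0\|}_t|\mathcal{F}^0]$. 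Hence
\[
\frac1N\sum_{i=1}^Np^{i,*}_t=\frac1N\sum_{i=1}^N\bigl(Z^{i,0\|}_t-\mathbb{E}[Z^{1,0\|}_t|\mathcal{F}^0]\bigr).
\]
The second term in the decomposition used for Theorem~\ref{Asymp}, the one involving $1-\hat\gamma/\gamma^i$, therefore vanishes identically.

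Second, establish conditional i.i.d.\ structure directly from the explicit form \eqref{EQG solution YZ-habit}. We have
\[
Z^{i,0\|}_t=\bigl\{\hat\Sigma_0^\top(A_{00}(t)x^0_t+A_{10}(t)^\top x^i_t+B_0(t))\bigr\}^\top .
\]
Here $x^0$ is $\mathbb{F}^0$-adapted, the functions $A_{00},A_{10},B_0$ are deterministic and common to all agents, and $x^i$ is a fixed measurable functional of $(x^i_0,W^i)$. Since the pairs $(x^i_0,W^i)_{i}$ are i.i.d.\ and independent of $\mathcal{F}^0$, the family $(Z^{i,0\|}_t)_{i\in\mathbb{N}}$ is $\mathcal{F}^0$-conditionally i.i.d.\ with conditional mean $\mathbb{E}[Z^{1,0\|}_t|\mathcal{F}^0]$. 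This route avoids the Yamada--Watanabe argument, which is not needed here and would not straightforwardly apply given the non-uniqueness noted after Theorem~\ref{EQG-verification-habit}. In fact, the centered summands reduce to
\[
Z^{i,0\|}_t-\mathbb{E}[Z^{1,0\|}_t|\mathcal{F}^0]=\bigl(\hat\Sigma_0^\top A_{10}(t)^\top(x^i_t-\mu^1_t)\bigr)^\top,
\]
which are genuinely i.i.d.\ and independent of $\mathcal{F}^0$.

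Third, expand the square. By the above independence the cross terms vanish, so
\[
\mathbb{E}\int_0^T\Bigl|\frac1N\sum_{i=1}^Np^{i,*}_t\Bigr|^2dt=\frac1N\int_0^T\mathbb{E}\bigl|\hat\Sigma_0^\top A_{10}(t)^\top(x^1_t-\mu^1_t)\bigr|^2dt\leq\frac{C}{N}\sup_{t\in[0,T]}\|A_{10}(t)\|^2\,\sup_{t\in[0,T]}\mathbb{E}|x^1_t-\mu^1_t|^2 .
\]
The right-hand side is finite because $A_{10}$ is continuous on $[0,T]$ and $x^1$ is an Ornstein--Uhlenbeck process with Gaussian initial condition, so $\mathbb{E}|x^1_t-\mu^1_t|^2\leq C(|\mathrm{Var}(x^1_0)|+|\Sigma|^2/(2K))$. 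This yields the bound $C/N$, which tends to $0$.

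The main obstacle is justifying that the $p^{i,*}$ in the statement really are the agents' strategies. This is supplied by Theorem~\ref{EQG-verification-habit} under the smallness condition $|\Sigma_0|^2\lor|\Sigma|^2\lor|\mathrm{Var}(x^1_0)|<\varsigma$, and it is only there that the condition enters. The limit computation itself needs no such condition, only square integrability, which the Gaussian structure gives.
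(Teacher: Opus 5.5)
Your proposal is correct and follows essentially the same route as the paper: reduce from $\pi^{i,*}$ to $p^{i,*}$ via the uniform bound on $(\sigma_t\sigma_t^\top)^{-1}\sigma_t$, use the explicit form \eqref{EQG solution YZ-habit} to write $p^{i,*}_t=(x^i_t-\mu^1_t)^\top A_{10}(t)\hat\Sigma_0$, and then use the independence of the $x^i$ together with $\mathbb{E}[x^i_t]=\mu^1_t$ to get the bound $C/N$. Your observation that the condition $|\Sigma_0|^2\lor|\Sigma|^2\lor|\mathrm{Var}(x^1_0)|<\varsigma$ matters only through Theorem~\ref{EQG-verification-habit} (admissibility and optimality of $p^{i,*}$), and not through the limit computation, is consistent with the paper's proof.
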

\begin{rem}
    Since the optimal strategy (under market risk-premium process $\vartheta$) is not shown to be unique, this statement is only valid for the specific choice of optimal strategies \eqref{EQG-optimal-habit}.
\end{rem}
\noindent
\textbf{\textit{Proof}}\\
Again, we denote the general constant by $\widetilde{C}$ to avoid misinterpretation with the function $C$. Since $\pi^{i,*}_t=(\sigma_t\sigma_t^\top)^{-1}\sigma_t (p^{i,*}_t)^\top$ for $t\in[0,T]$ and $|(\sigma_t\sigma_t^\top)^{-1}\sigma_t|\leq \widetilde{C}$ uniformly in $t$ by Assumption \ref{asmEQG-market}, we have
\begin{equation*}
    \begin{split}
        \mathbb{E}\int_0^T \Bigl|\frac{1}{N}\sum_{i=1}^N \pi^{i,*}_t\Bigr|^2dt
        &\leq
        \widetilde{C}\mathbb{E}\int_0^T \Bigl|\frac{1}{N}\sum_{i=1}^N p^{i,*}_t\Bigr|^2dt.
    \end{split}
\end{equation*}
By \eqref{EQG solution YZ-habit}, the process $p^{i,*}$ can be written as
\begin{equation*}
    \begin{split}
        p^{i,*}_t
        :=
        Z^{i,0\|}_t + \frac{\vartheta_t^\top}{\gamma}
        =
        Z^{i,0\|}_t - \mathbb{E}[Z_t^{1,0\|} |\mathcal{F}^0]
        =
        (x^i_t-\mu^1_t)^\top A_{10}(t) \hat\Sigma_0
    \end{split}
\end{equation*}
for every $i\in\mathbb{N}$ and $t\in[0,T]$. It is then easy to see
\begin{equation*}
    \begin{split}
       \mathbb{E}\int_0^T \Bigl|\frac{1}{N}\sum_{i=1}^N \pi^{i,*}_t\Bigr|^2dt
       \leq
       \widetilde{C}\mathbb{E}\int_0^T \Bigl|\frac{1}{N}\sum_{i=1}^N \Bigl(x^i_t-\mu^1_t\Bigr)\Bigr|^2 dt \\
       \leq
       \frac{\widetilde{C}}{N^2}\sum_{i=1}^N\mathbb{E}\int_0^T \Bigl|x^i_t-\mu^1_t\Bigr|^2 dt 
       \leq
       \frac{\widetilde{C}}{N}
       \to
       0~~~~(N\to\infty),
    \end{split}
\end{equation*}
where we have used the fact that $(x^i_t;t\in[0,T])_{i\in\mathbb{N}}$ are mutually independent and that $\mathbb{E}[x^i_t]=\mu^1_t$ for every $i\in\mathbb{N}$ and $t\in[0,T]$. $\square$

\section{Some Extensions of the Model} \label{Section 5}
In this section, we propose some extensions of the preceding models.
Specifically, we examine the cases where consumption is nonnegative, the coefficients of risk aversion differ between the terminal and running utilities, and the habit trend process $(\rho_t)_{t\in[0,T]}$ is determined by the agents' average consumption.
\subsection{Nonnegative Consumption}
In the model of the previous sections, the consumption process $c$ can be negative, and it can be understood as labour income. 
Despite this, we are also interested in the case of nonnegative consumptions. We formulate the problem as follows:
\[
    \sup_{(\pi,c)\in\mathbb{A}^1_+} U^1(\pi,c)
\]
subject to
\[
    \mathcal{W}^{1,(\pi,c)}_t =\xi^1 + \int_0^t (\pi_s^\top\sigma_s\theta_s - c_s)ds + \int_0^t \pi_s^\top\sigma_s dW_s^0,~~~t\in[0,T].
\]
The utility function is
\begin{equation}
    \begin{split}
        \label{utility-nonnegative}
        &U^1(\pi,c) \\
        &:= 
        \mathbb{E}\Bigl[-\exp\Bigl(-\delta T-\gamma^1(\mathcal{W}^{1,(\pi,c)}_T-F^1_T)\Bigr) -a \int_0^T \exp\Bigl(-\delta t-\gamma^1(\mathcal{W}^{1,(\pi,c)}_t-F^1_t)-\beta^1 c_t\Bigr)dt\Bigr].
    \end{split}
\end{equation}
Note that we do not consider habit formation in this case.
We define the admissible space of this case in the following way.
\begin{dfn} (Admissible space for agent-1)\\
    \label{Admissible space for agent-1 nonnegative}
        The admissible space $\mathbb{A}^1_+$ is the set of trading and consumption strategies $(\pi,c)\in \mathbb{H}^2(\mathbb{P}^{0,1},\mathbb{F}^{0,1},\mathbb{R}^{n})\times \mathbb{H}^2(\mathbb{P}^{0,1},\mathbb{F}^{0,1},\mathbb{R}_+)$ such that the family
        \[
            \Bigl\{\exp\Bigl(-\gamma^1\mathcal{W}^{1,(\pi,c)}_\tau - \beta^1 c_\tau  \Bigr) ; \tau\in\mathcal{T}^{0,1}\Bigr\}
        \]
        is uniformly integrable. Moreover, we define $\mathcal{A}^1_+:=\{(p,c)=(\pi^\top\sigma,c) ; (\pi,c)\in\mathbb{A}^1\}$.
\end{dfn}
We set $p_t := \pi_t^\top\sigma_t$ for all $t\in[0,T]$. As we did in Section 5.2.2, we search for a process $R^{(p,c)}$ that satisfies the condition-R (i)-(iii) to find the optimal strategy.
We suppose that $R^{(p,c)}$ has the following form.
\begin{equation}
    \begin{split}
        \label{ansatz-nonnegative}
        R^{1,(p,c)}_t 
        = &-\exp\Bigl(-\delta t-\gamma^1(\mathcal{W}^{1,(p,c)}_t-Y^1_t)\Bigr)\\
        & -a \int_0^t \exp\Bigl(-\delta s-\gamma^1(\mathcal{W}^{1,(p,c)}_s-F^1_s)-\beta^1c_s\Bigr)ds,~~~t\in[0,T],
    \end{split}
\end{equation}
where $Y^1$ solves
\begin{equation}
    \begin{split}
            Y^1_t&= F^1_T + \int_t^T f^1(s,Y^1_s,Z^{1,0}_s,Z^1_s)ds - \int_t^T Z^{1,0}_s dW^0_s - \int_t^T Z^1_s dW^1_s,~~~t\in[0,T].
    \end{split}
  \end{equation}
By Ito formula,
\begin{equation}
    \begin{split}
        dR^{1,(p,c)}_t 
        = &-\gamma^1\exp\Bigl(-\delta t-\gamma^1(\mathcal{W}^{1,(p,c)}_t-Y^1_t)\Bigr)\Bigl\{-\frac{\delta}{\gamma^1}-(p_t\theta_t-c_t+f^1(t,Y^1_t,Z^{1,0}_t,Z^1_t))\\
        &+\frac{\gamma^1}{2}(|p_t-Z_t^{0,1}|^2+|Z^1_t|^2)+\frac{a}{\gamma^1}\exp(-\gamma^1(Y^1_t-F^1_t)-\beta^1c_t)\Bigr\}dt\\
        & +\gamma^1\exp\Bigl(-\delta t-\gamma^1(\mathcal{W}^{1,(p,c)}_t-Y^1_t)\Bigr)\Bigl\{(p_t-Z_t^{0,1})dW_t^{0}-Z_t^1dW_t^1\Bigr\},~~~t\in[0,T].
    \end{split}
\end{equation}
As in Section 5.2.2, we need
\begin{equation}
    \begin{split}
        \label{driver-f-nonnegative}
    &f^1(t,Y^1_t,Z^{1,0}_t,Z^1_t)\\
    &=
    -\frac{\delta}{\gamma^1} + \inf_{p\in L_t}\Bigl\{ -p\theta_t + \frac{\gamma^1}{2}(|p-Z^{1,0}_t|^2+|Z^1_t|^2)\Bigr\}+ \inf_{c\in\mathbb{R}_+} \Bigl\{ c  + \frac{a}{\gamma^1} \exp\Bigl(-\gamma^1(Y^1_t-F^1_t)-\beta c\Bigr)\Bigr\}
    \end{split}
\end{equation}
to make $R^{(p,c)}$ satisfy the condition-R (iii).
The candidate for the optimal strategy reads:
\begin{equation}
    \begin{split}
        \label{optimal-nonnegative}
        p^{1,*}_t &= Z^{1,0\|}_t + \frac{\theta^\top_t}{\gamma^1},~~~t\in[0,T],\\
        c^{1,*}_t &= 0\lor \Bigl[\frac{1}{\beta^1}\Bigl\{\log\Bigl(\frac{a\beta^1}{\gamma^1}\Bigr)-\gamma^1(Y^1_t-F^1_t)\Bigr\}\Bigr],~~~t\in[0,T].
    \end{split}
  \end{equation}
The admissibility of $(p^{1,*},c^{1,*})$ is verified later.
Note that, we have $c^{1,*}_t=0$ if
\[
  Y^1_t > F^1_t + \frac{1}{\gamma^1}\log\Bigl(\frac{a\beta^1}{\gamma^1}\Bigr) =: Y^{1,*}_t.
\]
Then,
  \begin{equation}
    \begin{split}
        &\inf_{c\in\mathbb{R}_+} \Bigl\{ c  + \frac{a}{\gamma^1} \exp\Bigl(-\gamma^1(Y^1_t-F^1_t)-\beta^1 c\Bigr)\Bigr\}\\
        &=
        \left\{
        \begin{array}{ll}
        \dfrac{a}{\gamma^1}\exp\Bigl(-\gamma^1(Y^1_t-F^1_t)\Bigr) & \mathrm{if}~~Y^1_t > Y^{1,*}_t, \\
        \dfrac{1}{\beta^1}\Bigl\{1+\log\Bigl(\dfrac{a\beta^1}{\gamma^1}\Bigr)-\gamma^1(Y^1_t-F^1_t)\Bigr\}& \mathrm{if}~~Y^1_t\leq Y^{1,*}_t
        \end{array}
        \right.
    \end{split}
\end{equation}
for each $t\in[0,T]$. Now we obtain
\begin{equation*}
    \begin{split}
    f^1(t,Y^1_t,Z^{1,0}_t,Z^1_t)
    =
    &- \frac{\delta}{\gamma^1} -Z^{1,0\|}_t\theta_t - \frac{|\theta_t|^2}{2\gamma^1} + \frac{\gamma^1}{2}(|Z^{1,0\perp}_t|^2 + |Z^1_t|^2)  + h^1(t,Y^1_t),
    \end{split}
\end{equation*}
where the function $h^1:[0,T]\times\mathbb{R}\to\mathbb{R}$ is defined by
\begin{equation}
    \begin{split}
        h^1(t,y)
        :=
        \left\{
        \begin{array}{ll}
        \dfrac{a}{\gamma^1}\exp\Bigl(-\gamma^1(y-F^1_t)\Bigr) & \mathrm{if}~~y > Y^*_t, \\
        \dfrac{1}{\beta^1}\Bigl\{1+\log\Bigl(\dfrac{a\beta^1}{\gamma^1}\Bigr)-\gamma^1(y-F^1_t)\Bigr\}& \mathrm{if}~~y\leq Y^*_t.
        \end{array}
        \right.
    \end{split}
\end{equation}
Note that $h^1$ is positive and monotone decreasing. In addition, $h^1$ is Lipschitz continuous in $y$; for all $t\in[0,T]$ and $y,y'\in\mathbb{R}$, it holds that
\[
    |h^1(t,y)-h^1(t,y')|\leq \frac{\overline{\gamma}}{\underline{\beta}}|y-y'|.
\]
The BSDE which characterizes the optimality for agent-1 is given by
\begin{equation}
    \begin{split}
        \label{BSDE-habit-nonnegative}
            Y^1_t&= F^1_T + \int_t^T f^1(s,Y^1_s,Z^{1,0}_s,Z^1_s)ds - \int_t^T Z^{1,0}_s dW^0_s - \int_t^T Z^1_s dW^1_s,~~~t\in[0,T]
    \end{split}
  \end{equation}
with
\begin{equation*}
    \begin{split}
        f^1(s,Y^1_s,Z^{1,0}_s,Z^1_s)
        =&
         -Z^{1,0\|}_t\theta_t - \frac{|\theta_t|^2}{2\gamma^1} + \frac{\gamma^1}{2}(|Z^{1,0\perp}_t|^2 + |Z^1_t|^2)  + h^1(t,Y^1_t) - \frac{\delta}{\gamma^1},~~~s\in[0,T].
    \end{split}
\end{equation*}

\begin{thm}
    Let Assumptions \ref{asm1-habit} and \ref{asm2-habit} be in force. Then, the BSDE \eqref{BSDE-habit-nonnegative} has a unique solution $(Y,Z^{1,0},Z^1)\in\mathbb{S}^\infty(\mathbb{P}^{0,1},\mathbb{F}^{0,1},\mathbb{R})\times\mathbb{H}^2_{\mathrm{BMO}}(\mathbb{P}^{0,1},\mathbb{F}^{0,1},\mathbb{R}^{1\times d_0})\times\mathbb{H}^2_{\mathrm{BMO}}(\mathbb{P}^{0,1},\mathbb{F}^{0,1},\mathbb{R}^{1\times d})$.
    Moreover, the processes $(p^{1,*},c^{1,*})$ defined by \eqref{optimal-nonnegative} is the unique optimal strategy for agent-1.
\end{thm}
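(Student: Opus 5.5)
The proof runs in three stages: an a priori BMO estimate with uniqueness, existence by truncation and monotone limits, and a verification of (Condition-R) for \eqref{optimal-nonnegative}. The only new ingredient compared with Theorem~\ref{sec2-well-posed} and Theorem~\ref{verification} is the $y$-dependence through $h^1$. This term is Lipschitz (constant $\overline{\gamma}/\underline{\beta}$), nonnegative and monotone decreasing, so it should play the same role as the linear term $-\gamma(1+b\zeta)Y/\beta$ there. Throughout I omit the superscript ``1''.

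\textbf{Step 1 (BMO estimate and uniqueness).} Suppose $(Y,Z^0,Z^1)\in\mathbb{S}^\infty\times\mathbb{H}^2\times\mathbb{H}^2$ solves \eqref{BSDE-habit-nonnegative}. Since $Y$ and $F$ are bounded, $0\le h(t,Y_t)\le C(1+\|Y\|_{\mathbb{S}^\infty}+\|F\|_\infty)$. Hence the driver is bounded above by $\frac{\gamma}{2}(|Z^0|^2+|Z^1|^2)+C$; here I use $-Z^{0\|}\theta-|\theta|^2/(2\gamma)\le \frac{\gamma}{2}|Z^{0\|}|^2$. Applying Ito's formula to $e^{2\gamma Y}$ exactly as in Lemma~\ref{a-priori1} then gives $(Z^0,Z^1)\in\mathbb{H}^2_{\rm BMO}$. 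For uniqueness, take two bounded solutions. The difference of the $h$ terms is written as $\lambda_s\Delta Y_s$ with $|\lambda|\le\overline{\gamma}/\underline{\beta}$ bounded, and the quadratic $Z$-terms are absorbed through the Girsanov change used in Lemma~\ref{a-priori1}. This leaves a linear Lipschitz BSDE with zero terminal value, whose solution is zero.

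\textbf{Step 2 (existence).} I pass to $\mathbb{Q}$ defined by \eqref{risk-neutral-habit} and truncate $|\theta|^2$ to $|\theta|^2\wedge n$, as in \eqref{qgBSDE-Q-trunc}. Each truncated driver is quadratic in $z$ and Lipschitz in $y$ with bounded coefficients, so Kobylanski's result gives a unique bounded solution $(Y^n,Z^{n,0},Z^{n,1})$. Comparison gives $Y^{n+1}\le Y^n$. For a uniform upper bound I take $\overline{Y}$ with driver $\frac{\overline{\gamma}}{2}(|z^{0\perp}|^2+|z^1|^2)+\frac{\overline{\gamma}}{\underline{\beta}}|y|+C$. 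For a uniform lower bound I take $\underline{Y}$ with driver $-\frac{|\theta|^2}{2\underline{\gamma}}-\frac{\delta}{\underline{\gamma}}$, using $h\ge0$; then $\underline{Y}\ge-\|F\|_\infty-C-\frac{1}{2\underline{\gamma}}\|\theta\|^2_{\mathbb{H}^2_{\rm BMO}(\mathbb{Q})}$. Step 1 then yields uniform BMO bounds on $Z^n$. The convergence argument with $\phi(y)=\frac{1}{2C_0^2}(e^{2C_0y}-2C_0y-1)$ is repeated verbatim; the extra term $\phi'(\Delta Y^{n,m})\bigl(h(Y^n)-h(Y^m)\bigr)$ is $\le0$ because $h$ is decreasing and $\Delta Y^{n,m}\ge0$, so it can simply be dropped. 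This gives strong $\mathbb{H}^2$ convergence of $Z^n$, and the limit solves \eqref{BSDE-habit-nonnegative}.

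\textbf{Step 3 (verification).} Since $Y$ and $F$ are bounded, $c^*$ in \eqref{optimal-nonnegative} is bounded and nonnegative. Set $\psi_t=\exp(-\delta t-\gamma(\mathcal{W}^{(p^*,c^*)}_t-Y_t))$. As in Theorem~\ref{verification}, $\psi$ equals a bounded factor times $\mathcal{E}\bigl(-\int(\theta-\gamma Z^{0\perp})dW^0+\int\gamma Z^1dW^1\bigr)$, which is of class $\mathcal{D}$ by Kazamaki. Together with boundedness of $c^*$ this gives admissibility $(p^*,c^*)\in\mathcal{A}^1_+$. By construction of $f$ via the two infima in \eqref{driver-f-nonnegative}, the drift of $R^{(p,c)}$ is $\le0$ for every $(p,c)$ and $=0$ at $(p^*,c^*)$. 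Uniform integrability, which follows from the admissible class and boundedness of $Y$, turns the local statements into supermartingale and martingale properties, so (Condition-R) holds. Uniqueness of the optimizer follows from strict convexity of $p\mapsto -p\theta+\frac{\gamma}{2}|p-Z^0|^2$ and of $c\mapsto c+\frac{a}{\gamma}e^{-\gamma(Y-F)-\beta c}$ on $\mathbb{R}_+$.

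The main obstacle I expect is Step 2. There I must check that $h$ is Lipschitz across the kink at $Y^*_t$ (it is $C^1$ there, since both branches have derivative $-\gamma$ at the switching point) so that Kobylanski and comparison apply. I must also confirm that the monotone sign of $h$ is what makes the $\phi$-argument close without an extra Gronwall step.
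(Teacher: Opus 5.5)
Your proposal is correct and follows essentially the same route as the paper. The shared steps are: truncation of $|\theta|^2$ under $\mathbb{Q}$; monotone comparison against an upper BSDE whose driver is linear in $|y|$ and a lower BSDE obtained from $h\ge 0$; the $\phi$-convergence argument, where the decreasing $h$-term has a favourable sign and can be dropped; and the $\psi$-representation together with (Condition-R) for the verification.

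Two small remarks. First, the common slope of the two branches of $h$ at $Y^*_t$ is $-\gamma/\beta$, not $-\gamma$. Second, uniform integrability of $R^{(p,c)}$ needs control of $\exp(-\gamma\mathcal{W}^{(p,c)}_\tau)$. The literal admissibility condition controls only $\exp(-\gamma\mathcal{W}^{(p,c)}_\tau-\beta c_\tau)$, which with $c\ge 0$ is the smaller quantity and so does not provide this. That condition should therefore be read with $+\beta c_\tau$; the paper's proof relies on the same reading.
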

\noindent
\textbf{\textit{Proof}}\\
The well-posedness can be shown by using the same method as in the proof of Theorem \ref{sec2-well-posed}.

Under the risk-neutral measure $\mathbb{Q}$, the BSDE \eqref{BSDE-habit-nonnegative} becomes:
\begin{equation}
    \begin{split}
            Y^1_t =& F^1_T + \int_t^T \Bigl\{- \frac{|\theta_t|^2}{2\gamma^1} + \frac{\gamma^1}{2}(|Z^{1,0\perp}_t|^2 + |Z^1_t|^2)  + h^1(t,Y^1_t)- \frac{\delta}{\gamma^1}\Bigr\}ds\\
            & - \int_t^T Z^{1,0}_s dW^{0,\mathbb{Q}}_s - \int_t^T Z^1_s dW^{1,\mathbb{Q}}_s,~~~t\in[0,T].
    \end{split}
\end{equation}
We consider the truncated BSDE
\begin{equation}
\begin{split}
        Y^{n,1}_t =& F^{1}_T + \int_t^T \Bigl\{ - \frac{|\theta_t|^2\land n}{2\gamma^1} + \frac{\gamma^1}{2}(|Z^{n,1,0\perp}_t|^2 + |Z^{n,1}_t|^2)  + h^1(t,Y^{n,1}_t)- \frac{\delta}{\gamma^1}\Bigr\}ds\\
        & - \int_t^T Z^{n,1,0}_s dW^{0,\mathbb{Q}}_s - \int_t^T Z^{n,1}_s dW^{1,\mathbb{Q}}_s,~~~t\in[0,T]
\end{split}
\end{equation}
for each $n\in\mathbb{N}$. Since $h^1$ is Lipschitz continuous in $y$, the result of \cite{Kobylanski2000BackwardSD} shows that this truncated BSDE has a unique solution $(Y^{n,1},Z^{n,1,0},Z^{n,1})\in\mathbb{S}^\infty\times\mathbb{H}^2_{\mathrm{BMO}}\times\mathbb{H}^2_{\mathrm{BMO}}$ for all $n\in\mathbb{N}$. In addition, the comparison principle presented in the same work shows that $Y^{n+1,1}\leq Y^{n,1}$ holds for all $n\in\mathbb{N}$.

For the rest of the proof, we omit the superscript ``1''. We now consider the following two BSDEs: for $t\in[0,T]$
\begin{equation}
    \begin{split}
        \label{overunder_BSDE-habit-nonnegative}
    \overline{Y}_t &= \|F\|_{\mathbb{L}^\infty} +\int_t^T \Bigl\{\frac{\overline{\gamma}}{2}(|\overline{Z}^{0\perp}_s|^2 + |\overline{Z}^1_s|^2)+ \overline{h}(t,\overline{Y}_t)\Bigr\} ds - \int_t^T \overline{Z}^0_s dW^{0,\mathbb{Q}}_s - \int_t^T \overline{Z}^1_s dW^{1,\mathbb{Q}}_s, \\
    \underline{Y}_t &= -\|F\|_{\mathbb{L}^\infty} -\int_t^T \Bigl(\frac{|\theta_s|^2}{2\underline{\gamma}}+ \frac{\delta}{\underline{\gamma}}\Bigr) ds - \int_t^T \underline{Z}^0_s dW^{0,\mathbb{Q}}_s - \int_t^T \underline{Z}^1_s dW^{1,\mathbb{Q}}_s,
\end{split}
\end{equation}
where
\begin{equation}
    \begin{split}
        \overline{h}(t,y)
        :=
        \dfrac{1}{\underline{\beta}}\Bigl(1+\log\Bigl(\dfrac{a\overline{\beta}}{\underline{\gamma}}\Bigr)+\overline{\gamma} |y|+\overline{\gamma}\|F\|_{\mathbb{L}^\infty}\Bigr)
        \geq h^1(t,y)
    \end{split}
\end{equation}
for all $(t,y)\in[0,T]\times\mathbb{R}$. By \cite{Kobylanski2000BackwardSD}, each of \eqref{overunder_BSDE-habit-nonnegative} has a unique solution.
The comparison principle gives an estimate $\underline{Y}_t\leq Y^n_t\leq \overline{Y}_t, ~\mathbb{Q}$-a.s.
Moreover, it is easy to see $\overline{Z}^0=0$, $\overline{Z}^1=0$ and
\begin{equation}
    \begin{split}
    \overline{Y}_t 
    &\leq
    \|F\|_{\mathbb{L}^\infty} +\int_t^T \Bigl\{ \dfrac{\overline{\gamma}}{\underline{\beta}} |\overline{Y}_t| + \dfrac{1}{\underline{\beta}}\Bigl(1+\log\Bigl(\dfrac{a\overline{\beta}}{\underline{\gamma}}\Bigr)+\overline{\gamma}\|F\|_{\mathbb{L}^\infty}\Bigr)\Bigr\} ds.
\end{split}
\end{equation}
By the backward Gronwall's inequality, we have
\begin{equation}
    \begin{split}
    \overline{Y}_t 
    &\leq
    C(1+\|F\|_{\mathbb{L}^\infty} )<\infty.
\end{split}
\end{equation}
For $\underline{Y}$, we have
\begin{equation}
    \begin{split}
    \underline{Y}_t 
    &= 
    -\|F\|_{\mathbb{L}^\infty} -\mathbb{E}\Bigl[\int_t^T \Bigl(\frac{|\theta_s|^2}{2\underline{\gamma}}+ \frac{\delta}{\underline{\gamma}}\Bigr) ds | \mathcal{F}^{0,1}_t \Bigr]\\
    &\geq
    -C(1 + \|F\|_{\mathbb{L}^\infty} + \|\theta\|^2_{\mathbb{H}^2_{\mathrm{BMO}}})\\
    &>-\infty.
\end{split}
\end{equation}
These show $\sup_{n\in\mathbb{N}}\|Y^n\|_{\mathbb{S}^\infty} <\infty$. We can now apply the same method as in the proof of Theorem \ref{sec2-well-posed} to show the well-posedness of \eqref{BSDE-habit-nonnegative}.

We show the optimality of \eqref{optimal-nonnegative}. Since $c^*$ is bounded, we need to show $\Bigl\{\exp\Bigl(-\gamma^1\mathcal{W}^{1,(p^*,c^*)}_\tau  \Bigr) ; \tau\in\mathcal{T}^{0,1}\Bigr\}$ is uniformly integrable.
We set
\[
    \psi_t:=\exp\Bigl(-\delta t-\gamma(\mathcal{W}^{(p^*,c^*)}_t-Y_t)\Bigr),~~~t\in[0,T].
\]
Then, we have
\begin{equation}
    \begin{split}
        dR^{(p^*,c^*)}_t 
        = 
        \psi_t\Bigl\{(\theta^\top_t-\gamma Z_t^{0\perp})dW_t^{0}-\gamma Z_t^1dW_t^1\Bigr\},~~~t\in[0,T].
    \end{split}
\end{equation}
Moreover, it holds that
\begin{equation}
    \begin{split}
        R^{(p^*,c^*)}_t 
        = &-\psi_t -a \int_0^t\psi_s \exp\Bigl(-\gamma(Y_s-F_s)-\beta c^*_s\Bigr)ds,~~~t\in[0,T],
    \end{split}
\end{equation}
and thus
\[
    d\psi_t = -\psi_t\Bigl\{(\theta^\top_t-\gamma Z_t^{0\perp})dW_t^{0}-\gamma Z_t^1dW_t^1\Bigr\} -a\psi_t \exp\Bigl(-\gamma(Y_t-F_t)-\beta c^*_t\Bigr)dt,~~~t\in[0,T].
\]
Therefore, we obtain 
\begin{equation}
    \begin{split}
        \psi_t
        = &
        \exp\Bigl(-\gamma(\xi-Y_0)-a\int_0^t \exp\Bigl(-\gamma(Y_s-F_s)-\beta c^*_s\Bigr)ds\Bigr)\\
        &\times\mathcal{E}\Bigl(-\int_0^\cdot (\theta^\top_s-\gamma Z_s^{0\perp})dW_s^{0} + \int_0^\cdot \gamma Z_s^1dW_s^1\Bigr),~~~t\in[0,T],
    \end{split}
\end{equation}
which indicates that $\{\psi_\tau;\tau\in\mathcal{T}^{0,1}\}$ is uniformly integrable since $\xi,Y,F$ and $c^*$ are bounded and $\theta,Z^0,Z^1$ belongs to $\mathbb{H}^2_{\mathrm{BMO}}$. This also shows $\{R^{(p^*,c^*)}_\tau;\tau\in\mathcal{T}^{0,1}\}$ and $\Bigl\{\exp\Bigl(-\gamma^1\mathcal{W}^{1,(p^*,c^*)}_\tau  \Bigr) ; \tau\in\mathcal{T}^{0,1}\Bigr\}$ are uniformly integrable.
Following the same argument as in the proof of Theorem \ref{verification}, it is straightforward to see that $\{R^{(p,c)};(p,c)\in\mathcal{A}^1_+\}$ satisfies the condition-R in Definition \ref{condition-R} and that the optimal strategy is unique. $\square$

Following the argument of Section 5.3, we are also motivated to study a mean field BSDE:
\begin{equation}
    \begin{split}
        \label{MF-BSDE-habit-nonnegative}
            Y^i_t&= F^i_T + \int_t^T f^i(s,Y^i_s,Z^{i,0}_s,Z^i_s)ds - \int_t^T Z^{i,0}_s dW^0_s - \int_t^T Z^i_s dW^i_s,~~~t\in[0,T]
    \end{split}
  \end{equation}
with
\begin{equation*}
    \begin{split}
        f^i(s,Y^i_s,Z^{i,0}_s,Z^i_s)
        = 
        \hat\gamma Z^{i,0\|}_s\mathbb{E}[Z^{i,0\|}_s|\mathcal{F}^0]^{\top} - \frac{\hat\gamma^2}{2\gamma^i}|\mathbb{E}[Z^{i,0\|}_s|\mathcal{F}^0]|^2 + \frac{\gamma^i}{2}(|Z^{i,0\perp}_s|^2 + |Z^i_s|^2)  + h^i(s,Y^i_s) - \frac{\delta}{\gamma^i}.
    \end{split}
  \end{equation*}
for $s\in[0,T]$. Note that $f^i(s,y,z^0,z^i)$ is uniformly Lipschitz continuous in $y$:
\begin{equation*}
    \begin{split}
        |f^i(s,y,z^0,z^i)-f^i(s,y',z^0,z^i)|
        =
        |h^i(s,y) - h^i(s,y')|
        \leq
        \frac{\overline\gamma}{\underline\beta}|y-y'|
    \end{split}
  \end{equation*}
  for $y,y'\in\mathbb{R}$.
To show the well-posedness of \eqref{MF-BSDE-habit-nonnegative}, we again need to make an additional assumption on the size of the terminal liability $F^i_T$ and the process $g^i$.
  \begin{asm}
      \label{asm4-habit-nonnegative}
      Assume that, for each $i\in\{1,\ldots,N\}$, the random variable $F^i_T$ and the process $(g^i_t)_{t\in[0,T]}$ satisfy
      \begin{equation}
          \label{small-nonnegative}
          \sqrt{\|F^i_T\|^2_{\infty} + 4\Bigl\|\int_0^T |\tilde h^i_s| ds\Bigr\|^2_{\infty}} <   \frac{1}{16 c_\gamma}\land \frac{1}{32 C_\gamma},
      \end{equation}  
      where
      \[
        \tilde h^i_s :=  \dfrac{1}{\beta^i}\Bigl(1+\Bigl|\log\Bigl(\dfrac{a\beta^i}{\gamma^i}\Bigr)\Bigr|+\gamma^i |F^i_s|\Bigr) + \frac{\delta}{\gamma^i}
      \]
      \begin{equation*}
          \begin{split}
              c_\gamma:=\frac{\overline{\gamma}}{2} \lor \frac{\hat\gamma^2}{\underline{\gamma}},~~~C_\gamma:=\hat\gamma + \Bigl(\frac{\hat\gamma^2}{2\underline{\gamma}} \lor \frac{\overline{\gamma}}{2}\Bigr).
          \end{split}
      \end{equation*}
  \end{asm}

We have a similar result to Theorem \ref{MF-BSDE-wellposed1}.
  \begin{thm}
      \label{MF-BSDE-wellposed1-nonnegative}
      Under Assumptions \ref{asm1-habit}, \ref{asm3-habit}, and \ref{asm4-habit-nonnegative}, the mean field BSDE \eqref{MF-BSDE-habit-nonnegative} has a bounded solution $(Y^i,Z^{i,0},Z^i)\in\mathbb{S}^\infty(\mathbb{P}^{0,i},\mathbb{F}^{0,i},\mathbb{R}) \times\mathbb{H}^2_{\mathrm{BMO}}(\mathbb{P}^{0,i},\mathbb{F}^{0,i},\mathbb{R}^{1\times d_0})\times\mathbb{H}^2_{\mathrm{BMO}}(\mathbb{P}^{0,i},\mathbb{F}^{0,i},\mathbb{R}^{1\times d})$. 
  \end{thm}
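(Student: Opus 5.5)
The plan is to repeat the Tevzadze-type contraction argument of Theorem~\ref{MF-BSDE-wellposed1} essentially verbatim. The only change is that the linear term $-\frac{\gamma(1+b\zeta_s)}{\beta}Y_s+g_s$ is replaced by $h^i(s,Y_s)-\delta/\gamma^i$. As before we take agent-1 as representative and drop superscripts.

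\emph{Step I (the map).} For an input $(z^0,z^1)\in\mathbb{H}^2_{\rm BMO}\times\mathbb{H}^2_{\rm BMO}$, consider the BSDE with driver
\[
\hat\gamma z^{0\|}_s\mathbb{E}[z^{0\|}_s|\mathcal{F}^0]^\top-\frac{\hat\gamma^2}{2\gamma}|\mathbb{E}[z^{0\|}_s|\mathcal{F}^0]|^2+\frac{\gamma}{2}(|z^{0\perp}_s|^2+|z^1_s|^2)+h(s,Y_s)-\frac{\delta}{\gamma}.
\]
This driver is Lipschitz in $Y$ with constant $\overline{\gamma}/\underline{\beta}$. The $z$-part has finite conditional integrals by Lemma~\ref{lemma-f-abs} and the BMO comparison \eqref{bmo-comparison}. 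The standard Lipschitz BSDE theory therefore gives a unique solution $(Y,Z^0,Z^1)$ with $Y$ bounded, and this defines $\Gamma(z^0,z^1)=(Z^0,Z^1)$.

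\emph{Step II (a priori bound and stable ball).} The key elementary observation is a pointwise bound on $y\,h(s,y)$. Write $h(s,y)=h(s,0)+(h(s,y)-h(s,0))$. Since $h$ is monotone decreasing, $y\,(h(s,y)-h(s,0))\le 0$. It remains to bound $|h(s,0)|$. If $0\le Y^*_s$, then $h(s,0)=\frac1\beta\{1+\log\frac{a\beta}{\gamma}+\gamma F_s\}$. Otherwise $h(s,0)=\frac a\gamma e^{\gamma F_s}$, and $0>Y^*_s$ means $\gamma F_s<-\log\frac{a\beta}{\gamma}$, so $h(s,0)<\frac1\beta$. In either case, including the term $\delta/\gamma$,
\[
y\Bigl(h(s,y)-\frac{\delta}{\gamma}\Bigr)\le |y|\,\tilde h_s .
\]
With this, Itô's formula for $|Y_t|^2$ yields exactly the chain of inequalities in Step I of Theorem~\ref{MF-BSDE-wellposed1}, with $|g_s|$ replaced by $\tilde h_s$. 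Consequently
\[
\|(Z^0,Z^1)\|^2_{\mathbb{H}^2_{\rm BMO}}\le 2\|F_T\|^2_\infty+8\Bigl\|\int_0^T\tilde h_sds\Bigr\|_\infty^2+32c_\gamma^2\|(z^0,z^1)\|^4_{\mathbb{H}^2_{\rm BMO}},
\]
and Assumption~\ref{asm4-habit-nonnegative} lets us choose $R=2\sqrt{\|F_T\|_\infty^2+4\|\int_0^T\tilde h_sds\|^2_\infty}$ with $\Gamma(\mathcal{B}_R)\subset\mathcal{B}_R$. The same computation also bounds $\|Y\|_{\mathbb{S}^\infty}$ by $R$.

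\emph{Step III (contraction).} For two inputs, the $Y$-dependent part of the difference of drivers satisfies
\[
\Delta Y_s\bigl(h(s,Y_s)-h(s,\acute Y_s)\bigr)\le 0,
\]
again by monotonicity. This plays the role of the nonpositive term $-\frac{\gamma(1+b\zeta)}{\beta}|\Delta Y|^2$ in the earlier proof. The $z$-difference estimate is unchanged, so we obtain $\|\Delta Z\|^2_{\mathbb{H}^2_{\rm BMO}}\le 256C_\gamma^2R^2\|\Delta z\|^2_{\mathbb{H}^2_{\rm BMO}}$, and $256C_\gamma^2R^2<1$ under \eqref{small-nonnegative}. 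Banach's fixed point theorem on the closed set $\mathcal{B}_R$ then yields a fixed point, whose associated $Y$ is bounded, giving the claimed solution.

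\emph{Main obstacle.} The main point needing care is the bound on $y\,h(s,y)$ in Step II. The term $h$ is neither linear nor bounded (it grows exponentially as $y\to-\infty$ in the first regime), so one cannot estimate $|h|$ directly. Monotonicity is what rescues the argument, and the case check on $h(s,0)$ is what produces exactly the quantity $\tilde h$ appearing in Assumption~\ref{asm4-habit-nonnegative}.
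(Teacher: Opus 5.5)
Your proposal is correct and follows the paper's proof essentially step by step. It uses the same Tevzadze-type contraction of $\Gamma$ on $\mathcal{B}_R$: the bound $y\,(h(s,y)-\delta/\gamma)\le |y|\,\tilde h_s$ replaces the $g$-term in the stability estimate, and monotonicity, $\Delta Y_s\,(h(s,Y_s)-h(s,\acute{Y}_s))\le 0$, removes the $Y$-dependence in the contraction estimate. The only cosmetic difference is how you get the bound on $y\,h(s,y)$: you split off $h(s,y)-h(s,0)$ and do a case check on $h(s,0)$, whereas the paper checks the two regimes $y\le Y^*_s$ and $y>Y^*_s$ directly.

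One correction to your closing remark: $h$ does not grow exponentially as $y\to-\infty$. It is affine on $\{y\le Y^*_s\}$ and takes values in $(0,1/\beta)$ on $\{y>Y^*_s\}$, so it is globally Lipschitz, as you yourself used in Step~I. This does not affect the argument.
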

  \noindent
  \textbf{\textit{Proof}}\\
  We consider the case of agent-1 and omit the superscript ``1'' when obvious. 
  In the same way as in the proof of Theorem \ref{MF-BSDE-wellposed1}, we have
  \begin{equation*}
    \begin{split}
        \Bigl|\hat\gamma Z^{0\|}_s\mathbb{E}[Z^{0\|}_s|\mathcal{F}^0]^{\top} - \frac{\hat\gamma^2}{2\gamma}|\mathbb{E}[Z^{0\|}_s|\mathcal{F}^0]|^2 + \frac{\gamma}{2}(|Z^{0\perp}_s|^2 + |Z^1_s|^2) \Bigr|
        &\leq
        \frac{\hat\gamma^2}{\gamma}|\mathbb{E}[Z^{0\|}_s|\mathcal{F}^0]|^2 + \frac{\gamma}{2}(|Z^{0}_s|^2 + |Z^1_s|^2)\\
        &\leq
        c_\gamma (|\mathbb{E}[Z^{0\|}_s|\mathcal{F}^0]|^2 + |Z^{0}_s|^2 + |Z^1_s|^2)
    \end{split}
\end{equation*}
for
\[
    c_\gamma=\frac{\overline{\gamma}}{2} \lor \frac{\hat\gamma^2}{\underline{\gamma}}.
\]
Moreover, we have
\begin{equation*}
    \begin{split}
        y h(t,y)
        =
        -\frac{\gamma}{\beta}y^2 + \dfrac{y}{\beta}\Bigl(1+\log\Bigl(\dfrac{a\beta}{\gamma}\Bigr)+\gamma F_t\Bigr)
        \leq
        \dfrac{|y|}{\beta}\Bigl(1+\Bigl|\log\Bigl(\dfrac{a\beta}{\gamma}\Bigr)\Bigr|+\gamma |F_t|\Bigr)
    \end{split}
\end{equation*}
for $y\leq Y^*_t$, and
\begin{equation*}
    \begin{split}
        y h(t,y)
        =
        \dfrac{ay}{\gamma}\exp\Bigl(-\gamma(y-F_t)\Bigr)
        \leq
        \frac{|y|}{\beta}
    \end{split}
\end{equation*}
for $y> Y^*_t$. Thus, in either case, it holds that
\begin{equation*}
    \begin{split}
        y h(t,y)
        &\leq
        \dfrac{|y|}{\beta}\Bigl(1+\Bigl|\log\Bigl(\dfrac{a\beta}{\gamma}\Bigr)\Bigr|+\gamma |F_t|\Bigr).
    \end{split}
\end{equation*}
These yield:
\begin{equation*}
    \begin{split}
        &Y_s f(s,Y_s,Z^0_s,Z^1_s)\\
        &\leq
        |Y_s|\Bigl|\hat\gamma Z^{0\|}_s\mathbb{E}[Z^{0\|}_s|\mathcal{F}^0]^{\top} - \frac{\hat\gamma^2}{2\gamma}|\mathbb{E}[Z^{0\|}_s|\mathcal{F}^0]|^2 + \frac{\gamma}{2}(|Z^{0\perp}_s|^2 + |Z^1_s|^2)\Bigr| + Y_s h(s,Y_s) - \frac{\delta}{\gamma}Y_s\\
        &\leq
        c_\gamma |Y_s|(|\mathbb{E}[Z^{0\|}_s|\mathcal{F}^0]|^2 + |Z^{0}_s|^2 + |Z^1_s|^2) + \dfrac{|Y_s|}{\beta}\Bigl(1+\Bigl|\log\Bigl(\dfrac{a\beta}{\gamma}\Bigr)\Bigr|+\gamma |F_s|\Bigr) + \frac{\delta}{\gamma}|Y_s|\\
        &=
        c_\gamma |Y_s|(|\mathbb{E}[Z^{0\|}_s|\mathcal{F}^0]|^2 + |Z^{0}_s|^2 + |Z^1_s|^2) + |Y_s||\tilde h_s|.
    \end{split}
\end{equation*}

Let us now consider a BSDE
\begin{equation*}
    \begin{split}
        Y_t&= F_T + \int_t^T f(s,Y_s,z^0_s,z^1_s)ds - \int_t^T Z^0_s dW^0_s - \int_t^T Z^1_s dW^1_s,~~~t\in[0,T]
    \end{split}
\end{equation*}
for an arbitrary $(z^0,z^1)\in\mathbb{H}^2_{\mathrm{BMO}}\times\mathbb{H}^2_{\mathrm{BMO}}$ as an input. 
From the standard result for Lipschitz BSDEs, there exists a unique solution $(Y,Z^0,Z^1)\in\mathbb{S}^\infty\times\mathbb{H}^2_{\mathrm{BMO}}\times\mathbb{H}^2_{\mathrm{BMO}}$ for every given $(z^0,z^1)\in\mathbb{H}^2_{\mathrm{BMO}}\times\mathbb{H}^2_{\mathrm{BMO}}$. 
In this manner, we define a map $\Gamma:\mathbb{H}^2_{\mathrm{BMO}}(\mathbb{P}^{0,1},\mathbb{F}^{0,1},\mathbb{R}^{1\times d_0}\times \mathbb{R}^{1\times d})\to \mathbb{H}^2_{\mathrm{BMO}}(\mathbb{P}^{0,1},\mathbb{F}^{0,1},\mathbb{R}^{1\times d_0}\times \mathbb{R}^{1\times d})$ by $\Gamma(z^0,z^1)=(Z^0,Z^1)$.

We can follow the same argument of the proof of Theorem \ref{MF-BSDE-wellposed1} to show
\begin{equation*}
    \begin{split}
       \|(Z^0,Z^1)\|^2_{\mathbb{H}^2_{\mathrm{BMO}}}
        &\leq
        2\|F_T\|^2_{\infty} + 8\Bigl\|\int_0^T |\tilde h_s| ds\Bigr\|^2_{\infty} + 32c_\gamma^2 \|(z^0,z^1)\|^4_{\mathbb{H}^2_{\mathrm{BMO}}}.
    \end{split}
\end{equation*}
Since we have assumed that 
\begin{equation*}
    \begin{split}
        \|F_T\|^2_{\infty} + 4\Bigl\|\int_0^T |\tilde h_s| ds\Bigr\|^2_{\infty} \leq \frac{1}{256c_\gamma^2},
    \end{split}
\end{equation*}
there exists $R>0$ such that the inequality
\begin{equation*}
    \begin{split}
        2\|F_T\|^2_{\infty} + 8\Bigl\|\int_0^T |\tilde h_s| ds\Bigr\|^2_{\infty} + 32c_\gamma^2 R^4\leq R^2
    \end{split}
\end{equation*}
holds true. We can choose, for instance,
\begin{equation}
    \label{R-choice-nonnegative}
    R = 2\sqrt{\|F_T\|^2_{\infty} + 4\Bigl\|\int_0^T |\tilde h_s| ds\Bigr\|^2_{\infty}}\leq \frac{1}{8c_\gamma}.
\end{equation}
(Step II)\\
From the results of (Step I), we have $\Gamma(\mathcal{B}_R)\subset \mathcal{B}_R$, where
\[
    \mathcal{B}_R:=\{(z^0,z^1)\in \mathbb{H}^2_{\mathrm{BMO}}(\mathbb{F}^{0,1},\mathbb{R}^{1\times d_0}\times \mathbb{R}^{1\times d}) ;  \|(z^0,z^1)\|_{\mathbb{H}^2_{\mathrm{BMO}}}\leq R\}.
\]
Our objective is now to prove that $\Gamma|_{\mathcal{B}_R}:\mathcal{B}_R\to \mathcal{B}_R$ is a strict contraction. For $(z^0,z^1), (\acute{z}^0,\acute{z}^1)\in \mathcal{B}_R$, we set $(Z^0,Z^1):=\Gamma(z^0,z^1)$ and $(\acute{Z}^0,\acute{Z}^1):=\Gamma(\acute{z}^0,\acute{z}^1)$. Also, let $Y$ and $\acute{Y}$ be corresponding solutions and set $\Delta Y := Y-\acute{Y}$ and $\Delta Z^i := Z^i-\acute{Z}^i$. 
Since $h$ is monotone decreasing in $y$, we have
\begin{equation*}
    \begin{split}
        (y-y')\{h(s,y)-h(s,y')\}
        &\leq
        0
    \end{split}
\end{equation*}
for all $y,y'\in\mathbb{R}$. Then,
\begin{equation*}
    \begin{split}
        &\Delta Y_s\{f(s,Y_s,z^0_s,z^1_s)-f(s,\acute{Y}_s,\acute{z}^0_s,\acute{z}^1_s)\}\\
        &\leq
        |\Delta Y_s|\Bigl\{\hat\gamma(|\mathbb{E}[z^{0\|}_s |\mathcal{F}^0]| + |\acute{z}^{0\|}_s|)(|\Delta z^{0\|}_s| + |\mathbb{E}[\Delta z^{0\|}_s |\mathcal{F}^0]|) + \frac{\hat\gamma^2}{2\underline{\gamma}}(|\mathbb{E}[z^{0\|}_s |\mathcal{F}^0]| + |\mathbb{E}[\acute{z}^{0\|}_s |\mathcal{F}^0]|)|\mathbb{E}[\Delta z^{0\|}_s |\mathcal{F}^0]|\Bigr.\\
        &~~~~~~~~~~~~~\Bigl. +\frac{\overline{\gamma}}{2}(|z^{0\perp}_s| + |\acute{z}^{0\perp}_s| + |z^1_s| + |\acute{z}^1_s| )(|\Delta z^{0\perp}_s| + |\Delta z^1_s|)\Bigr\} + \Delta Y_s\{h(s,Y_s)-h(s,\acute{Y}_s)\}\\
        &\leq
        |\Delta Y_s|\Bigl\{\Bigl(\Bigl(\hat\gamma + \frac{\hat\gamma^2}{2\underline{\gamma}}\Bigr)|\mathbb{E}[z^{0\|}_s |\mathcal{F}^0]| + \frac{\hat\gamma^2}{2\underline{\gamma}}|\mathbb{E}[\acute{z}^{0\|}_s |\mathcal{F}^0]| + \hat\gamma|\acute{z}^{0\|}_s |\Bigr) |\mathbb{E}[\Delta z^{0\|}_s |\mathcal{F}^0]|  \Bigr.\\
        &~~~~~~~~~~~~~\Bigl. +\Bigl(\hat\gamma + \frac{\overline{\gamma}}{2}\Bigr)(|\mathbb{E}[z^{0\|}_s |\mathcal{F}^0]| + |\acute{z}^{0}_s| +|z^{0}_s| + |z^1_s| + |\acute{z}^1_s| )(|\Delta z^{0}_s| + |\Delta z^1_s|)\Bigr\}
    \end{split}
\end{equation*}
and we can show
\begin{equation*}
    \begin{split}
       |\Delta Y_t|^2 + \mathbb{E}\Bigl[\int_t^T (|\Delta Z_s^0|^2 + |\Delta Z_s^1|^2) ds | \mathcal{F}^{0,1}_t\Bigr]
       \leq
       \frac{1}{2}\|\Delta Y\|_{\mathbb{S}^\infty}^2 + 128C_\gamma^2 R^2 \|(\Delta z^0, \Delta z^1)\|^2_{\mathbb{H}^2_{\mathrm{BMO}}},
    \end{split}
\end{equation*}
where
\[
    C_\gamma = \hat\gamma + \Bigl(\frac{\hat\gamma^2}{2\underline{\gamma}} \lor \frac{\overline{\gamma}}{2}\Bigr)
\]
in the same way. This yields
\begin{equation*}
    \begin{split}
        \|(\Delta Z^0, \Delta Z^1)\|^2_{\mathbb{H}^2_{\mathrm{BMO}}}
       &\leq
       256C_\gamma^2R^2\|(\Delta z^0, \Delta z^1)\|^2_{\mathbb{H}^2_{\mathrm{BMO}}}.
    \end{split}
\end{equation*}
Under \eqref{small-nonnegative}, $\Gamma|_{\mathcal{B}_R}$ becomes a strict contraction. Indeed, having chosen $R$ by \eqref{R-choice-nonnegative}, we clearly have $256 C_\gamma^2R^2 < 1$. 
This shows that there exists a unique fixed point of $\Gamma|_{\mathcal{B}_R}$, which represents a bounded solution of the BSDE \eqref{MF-BSDE-habit-nonnegative}. $\square$

\subsection{Generalization of Parameters}
For the first extension, we consider a case in which the coefficient of absolute risk aversion with respect to the agent's wealth differs between the terminal utility and the running utility.
For the filtered probability space $(\Omega^i,\mathcal{F}^i,\mathbb{P}^i,\mathbb{F}^i)$ ($i\in\{1,\ldots,N\})$, we set $\mathcal{F}^i_0$ by the completion of $\sigma(\xi^i,\gamma^i,\eta^i,\beta^i,X^i_0,F^i_0)$, where $\eta^i$ is an $\mathbb{R}_{++}$-valued random variable. 

The utility maximization problem (for agent-$i$) is formulated by
\[
    \sup_{(\pi,c)\in\mathbb{A}^i_{\mathrm{gen}}} U^i(\pi,c)
\]
subject to
\[
    \mathcal{W}^{i,(\pi,c)}_t =\xi^i + \int_0^t (\pi_s^\top\sigma_s\theta_s - c_s)ds + \int_0^t \pi_s^\top\sigma_s dW_s^0,~~~t\in[0,T].
\]
The admissible space $\mathbb{A}^i_{\mathrm{gen}}$ is defined later. Instead of \eqref{utility}, we consider the following utility function
\begin{equation}
    \begin{split}
        \label{utility-gen}
        &U^i(\pi,c) \\
        &:= 
        \mathbb{E}\Bigl[-\exp\Bigl(-\delta T-\gamma^i(\mathcal{W}^{i,(\pi,c)}_T-F^i_T)\Bigr) -a \int_0^T \exp\Bigl(-\delta t-\eta^i(\mathcal{W}^{i,(\pi,c)}_t-F^i_t)-\beta^i(c_t-X_t^{i,c})\Bigr)dt\Bigr].
    \end{split}
\end{equation}
Here, $\eta^i$ is an $\mathbb{R}_{++}$-valued, bounded, and $\mathcal{F}^i_0$-measurable random variable satisfying $\underline{\eta}\leq \eta^i\leq \overline{\eta}$ with some constants $0<\underline{\eta}\leq \overline{\eta}$. We assume $(\eta^i)_{i\in\{1,\ldots,N\}}$ have the same distribution.
\begin{asm}
    \label{asm7-habit}
    For each $i\in\{1,\ldots,N\}$, the system of ODEs
    \begin{equation}
        \begin{split}
            \label{h-zeta}
            \dot{h}^i_t &= \frac{1}{\beta^i}(h^i_t + b\zeta^i_t)(\gamma^i h^i_t-\eta^i),~~~t\in[0,T],\\
            \dot{\zeta}^i_t &= -\frac{1}{\beta^i}(h^i_t + b\zeta^i_t)(\beta^i-\gamma^i\zeta^i_t) + \kappa\zeta^i_t,~~~t\in[0,T]
        \end{split}
    \end{equation}
    with terminal condition $(h^i_T, \zeta^i_T)=(1,0)$ has a unique pair of solutions $(h^i_t,\zeta^i_t)_{t\in[0,T]}\in\mathcal{C}^1([0,T],\mathbb{R})\times\mathcal{C}^1([0,T],\mathbb{R})$ satisfying $h^i_t>0$ and $h^i_t + b\zeta^i_t>0$ for all $t\in[0,T]$.
\end{asm}
To solve this utility maximization problem, we set agent-1 as a representative agent. As usual, we set $p_t:=\pi_t^\top\sigma_t$ for $t\in[0,T]$. The supermartingale candidate $R^{1,(p,c)}(=R^{1,(\pi^\top\sigma,c)})$ is given by
\begin{equation}
    \begin{split}
        \label{ansatz-gen}
        R^{1,(p,c)}_t 
        = &-\exp\Bigl(-\delta t-\gamma^1\{h^1_t(\mathcal{W}^{1,(p,c)}_t-Y^1_t)-\zeta^1_tX_t^{1,c}\}\Bigr)\\
        & -a \int_0^t \exp\Bigl(-\delta s-\eta^1(\mathcal{W}^{1,(p,c)}_s-F^1_s)-\beta^1(c_s-X_s^{1,c})\Bigr)ds,~~~t\in[0,T],
    \end{split}
\end{equation}
where $(h_t,\zeta_t)_{t\in[0,T]}$ is the solution of \eqref{h-zeta} and $Y^1$ solves
\begin{equation}
    \begin{split}
            Y^1_t&= F^1_T + \int_t^T f^1(s,Y^1_s,Z^{1,0}_s,Z^1_s)ds - \int_t^T Z^{1,0}_s dW^0_s - \int_t^T Z^1_s dW^1_s,~~~t\in[0,T].
    \end{split}
\end{equation}
\begin{dfn} (Admissible space for agent-1)\\
    \label{Admissible space for agent-1 general}
        The admissible space $\mathbb{A}^1_{\mathrm{gen}}$ is the set of trading and consumption strategies $(\pi,c)\in \mathbb{H}^2(\mathbb{P}^{0,1},\mathbb{F}^{0,1},\mathbb{R}^{n})\times \mathbb{H}^2(\mathbb{P}^{0,1},\mathbb{F}^{0,1},\mathbb{R})$ such that the family
        \[
            \Bigl\{R^{1,(p,c)}_\tau ; \tau\in\mathcal{T}^{0,1}\Bigr\}
        \]
        is uniformly integrable. Moreover, we define $\mathcal{A}^1_{\mathrm{gen}}:=\{(p,c)=(\pi^\top\sigma,c) ; (\pi,c)\in\mathbb{A}^1_{\mathrm{gen}}\}$.
\end{dfn}
As usual, we omit the superscript ``1'' when obvious. By Ito formula,
\begin{equation}
    \begin{split}
        dR_t^{(p,c)}
        &=
        -\exp\Bigl(-\delta t - \gamma h_t (\mathcal{W}^{(p,c)}-Y_t) + \gamma\zeta_t X_t\Bigr)\\
        &~~~\times\Bigl\{-\delta dt -\gamma h_t d(\mathcal{W}^{(p,c)}_t-Y_t) + \frac{\gamma^2 h_t^2}{2}d\langle \mathcal{W}^{(p,c)}-Y\rangle_t - \gamma\dot{h}_t(\mathcal{W}^{(p,c)}_t-Y_t)dt\\
        &~~~~~~~~+ \gamma\dot{\zeta}_tX^c_t dt + \gamma\zeta_t (-\kappa X^c_t + bc_t + (\kappa-b)\rho_t)dt \\
        &~~~~~~~~+ a\exp\Bigl(\gamma h_t(\mathcal{W}^{(p,c)}_t-Y_t)-\gamma\zeta_tX^c_t-\eta(\mathcal{W}^{(p,c)}_t-F_t)-\beta(c_t-X^c_t)\Bigr)dt\Bigr\}\\
        &=
        -\gamma h_t\exp\Bigl(-\delta t - \gamma h_t (\mathcal{W}^{(p,c)}-Y_t) + \gamma\zeta_t X_t\Bigr)\\
        &~~~\times\Bigl\{-(p_t\theta_t - c_t) - f(t,Y_t,Z^0_t,Z^1_t) + \frac{\gamma h_t}{2}(|p_t-Z^0_t|^2 + |Z^1_t|^2)-\frac{\delta}{\gamma h_t}-\frac{\dot{h}_t}{h_t}(\mathcal{W}^{(p,c)}_t-Y_t)\\
        &~~~~~+\frac{b\zeta_t}{h_t}c_t + \frac{\dot{\zeta}_t-\kappa\zeta_t}{h_t}X^c_t + \frac{\kappa-b}{h_t}\zeta_t\rho_t \\
        &~~~~~+ \frac{a}{\gamma h_t}\exp\Bigl(\gamma h_t(\mathcal{W}^{(p,c)}_t-Y_t)-\gamma\zeta_tX^c_t-\eta(\mathcal{W}^{(p,c)}_t-F_t)-\beta(c_t-X^c_t)\Bigr)\Bigr\}dt\\
        &~~~+\gamma h_t\exp\Bigl(-\delta t - \gamma h_t (\mathcal{W}^{(p,c)}-Y_t) + \gamma\zeta_t X^c_t\Bigr) \{(p_t-Z_t^0)dW_t^0 - Z_t^1 dW_t^1\}.
    \end{split}
\end{equation}
This implies for $t\in[0,T]$,
\begin{equation}
    \begin{split}
        &f(t,Y_t,Z^0_t,Z^1_t) \\
        &=
        \inf_{p\in L_t}\Bigl\{-p\theta_t + \frac{\gamma h_t}{2}(|p-Z^0_t|^2 + |Z_t^1|^2)\Bigr\}\\
        &+\inf_{c\in\mathbb{R}}\Bigl\{\Bigl(1+\frac{b\zeta_t}{h_t}\Bigr)c + \frac{a}{\gamma h_t}\exp\Bigl(\gamma h_t(\mathcal{W}^{(p,c)}_t-Y_t)-\gamma\zeta_tX^c_t-\eta(\mathcal{W}^{(p,c)}_t-F_t)-\beta(c_t-X^c_t)\Bigr)\Bigr\}\\
        &+\frac{1}{h_t}\Bigl(-\frac{\delta}{\gamma} - \dot{h}_t(\mathcal{W}^{(p,c)}_t-Y_t) + (\dot{\zeta}_t - \kappa\zeta_t)X^c_t + (\kappa-b)\zeta_t\rho_t\Bigr).
    \end{split}
\end{equation}
Then, under the condition $h^1_t + b\zeta^1_t >0$ for all $t\in[0,T]$, we obtain:
\begin{equation}
    \begin{split}
        \label{optimal-gen}
        p^{1,*}_t &= Z^{1,0\|}_t + \frac{\theta_t^\top}{\gamma^1 h^1_t},\\
        c^{1,*}_t &= \frac{1}{\beta^1}\Bigl\{\log\Bigl(\frac{a\beta^1}{\gamma^1(h^1_t+b\zeta^1_t)}\Bigr)+\eta^1 F^1_t - \gamma^1 h^1_t Y^1_t + (\gamma^1 h^1_t - \eta^1)\mathcal{W}^{1,(p^{1,*,}c^{1,*})}_t + (\beta^1-\gamma^1\zeta^1_t)X^{c^{1,*}}_t\Bigr\}.
    \end{split}
\end{equation}
for $t\in[0,T]$.
The driver $f$ becomes
\begin{equation}
    \begin{split}
        f(t,Y_t,Z^0_t,Z^1_t)
        =&
        -Z^{0\|}_t\theta_t - \frac{|\theta_t|^2}{2\gamma h_t} + \frac{\gamma h_t}{2}(|Z^{0\perp}_t|^2 + |Z^1_t|^2) - \frac{\delta}{\gamma h_t} + \frac{(\kappa-b)}{h_t}\zeta_t\rho_t\\
        &- \frac{\dot{h}_t}{h_t}(\mathcal{W}^{(p,c)}_t-Y_t) + \frac{\dot{\zeta}_t-\kappa \zeta_t}{h_t}X^{c^*}_t +\frac{h_t + b\zeta_t}{\beta h_t}(\beta c^*_t + 1) \\
        =&
        -Z^{0\|}_t\theta_t - \frac{|\theta_t|^2}{2\gamma h_t} + \frac{\gamma h_t}{2}(|Z^{0\perp}_t|^2 + |Z^1_t|^2) - \frac{\delta}{\gamma h_t} + \frac{(\kappa-b)}{h_t}\zeta_t\rho_t\\
        &- \frac{\dot{h}_t}{h_t}(\mathcal{W}^{(p,c)}_t-Y_t) + \frac{\dot{\zeta}_t-\kappa \zeta_t}{h_t}X^{c^*}_t \\
        &+\frac{h_t + b\zeta_t}{\beta h_t}\Bigl\{1 + \log\Bigl(\frac{a\beta}{\gamma(h_t+b\zeta_t)}\Bigr)+\eta F_t - \gamma h_t Y_t + (\gamma h_t - \eta)\mathcal{W}^{(p^*,c^*)}_t + (\beta-\gamma\zeta_t)X^{c^*}_t\Bigr\}\\
        =&
        -Z^{0\|}_t\theta_t - \frac{|\theta_t|^2}{2\gamma h_t} + \frac{\gamma h_t}{2}(|Z^{0\perp}_t|^2 + |Z^1_t|^2) - \frac{\delta}{\gamma h_t} + \frac{(\kappa-b)}{h_t}\zeta_t\rho_t\\
        &+ \frac{h_t + b\zeta_t}{\beta h_t}\Bigl\{1 + \log\Bigl(\frac{a\beta}{\gamma(h_t+b\zeta_t)}\Bigr)+\eta F_t - \gamma h_t Y_t\Bigr\} + \frac{\dot{h}_t}{h_t} Y_t\\
        &+ \frac{1}{h_t}\Bigl\{\frac{1}{\beta}(h_t + b\zeta_t)(\gamma h_t-\eta)-\dot{h}_t\Bigr\}\mathcal{W}^{(p^*,c^*)}_t + \frac{1}{h_t}\Bigl\{\frac{1}{\beta}(h_t + b\zeta_t)(\beta-\gamma\zeta_t)-\kappa\zeta_t+\dot{\zeta}_t\Bigr\}X^{c^*}_t\\
        =&
        -Z^{0\|}_t\theta_t - \frac{|\theta_t|^2}{2\gamma h_t} + \frac{\gamma h_t}{2}(|Z^{0\perp}_t|^2 + |Z^1_t|^2) - \frac{\delta}{\gamma h_t} + \frac{(\kappa-b)}{h_t}\zeta_t\rho_t\\
        &+ \frac{h_t + b\zeta_t}{\beta h_t}\Bigl\{1 + \log\Bigl(\frac{a\beta}{\gamma(h_t+b\zeta_t)}\Bigr)+\eta (F_t - Y_t)\Bigr\}.
    \end{split}
\end{equation}
for $t\in[0,T]$. Here, we used the ODE \eqref{h-zeta} in the last equality. Note that, in order for $R^{(p,c)}$ to satisfy the condition-R (ii), the processes $h$ and $\zeta$ were chosen to make $f$ independent of the control variable $(p,c)$.
\begin{rem}
If $\eta=\gamma$, the ODE for $\zeta$ becomes
\begin{equation}
    \begin{split}
        h_t &\equiv 1,\\
        \dot{\zeta}_t &= \Bigl(\kappa-b+\frac{\gamma}{\beta}\Bigr)\zeta_t + \frac{\gamma b}{\beta}\zeta^2_t-1,~~~t\in[0,T],~~~\zeta_T = 0,
    \end{split}
\end{equation}
which is consistent with \eqref{zeta-ODE}.
\end{rem}
The BSDE which characterizes the optimality for this case is as follows:
\begin{equation}
    \begin{split}
        \label{BSDE-habit-gen}
            Y^1_t&= F^1_T + \int_t^T f^1(s,Y^1_s,Z^{1,0}_s,Z^1_s)ds - \int_t^T Z^{1,0}_s dW^0_s - \int_t^T Z^1_s dW^1_s,~~~t\in[0,T]
    \end{split}
  \end{equation}
with
\begin{equation*}
    \begin{split}
        f^1(s,Y^1_s,Z^{1,0}_s,Z^1_s)
        =&
        -Z^{1,0\|}_s\theta_s - \frac{|\theta_s|^2}{2\gamma^1 h^1_s} + \frac{\gamma^1 h^1_s}{2}(|Z^{1,0\perp}_s|^2 + |Z^1_s|^2)-\frac{\eta^1(h^1_s + b\zeta^1_s)}{\beta^1 h^1_s}Y^1_s + g^1_s,
    \end{split}
\end{equation*}
where
\begin{equation*}
    \begin{split}
        g^1_s
        =
        - \frac{\delta}{\gamma^1 h^1_s} + \frac{(\kappa-b)}{h^1_s}\zeta^1_s\rho_s + \frac{h^1_s + b\zeta^1_s}{\beta^1 h^1_s}\Bigl\{1 + \log\Bigl(\frac{a\beta^1}{\gamma^1(h^1_s+b\zeta^1_s)}\Bigr)+ \eta^1 F^1_s \Bigr\}.
    \end{split}
\end{equation*}
for $s\in[0,T]$.
Under Assumptions \ref{asm1-habit}, \ref{asm2-habit}, and \ref{asm7-habit}, we can show the well-posedness of the BSDE \eqref{BSDE-habit-gen} in the same way as in Theorem \ref{sec2-well-posed}.
Moreover, the market-clearing condition motivates us to study a mean field BSDE:
\begin{equation}
    \begin{split}
            Y^1_t&= F^1_T + \int_t^T f^1(s,Y^1_s,Z^{1,0}_s,Z^1_s)ds - \int_t^T Z^{1,0}_s dW^0_s - \int_t^T Z^1_s dW^1_s,~~~t\in[0,T]
    \end{split}
  \end{equation}
with
\begin{equation*}
    \begin{split}
        f^1(s,Y^1_s,Z^{1,0}_s,Z^1_s)
        =&
        \mathbb{E}\Bigl[\dfrac{1}{\gamma^1h^1_t}\Bigr]^{-1}Z^{1,0\|}_s\mathbb{E}[Z^{1,0\|}_t|\mathcal{F}^0]^{\top} -\mathbb{E}\Bigl[\dfrac{1}{\gamma^1h^1_t}\Bigr]^{-2} \frac{|\mathbb{E}[Z^{1,0\|}_t|\mathcal{F}^0]|^2}{2\gamma^1 h^1_s} \\
        &+ \frac{\gamma^1 h^1_s}{2}(|Z^{1,0\perp}_s|^2 + |Z^1_s|^2)-\frac{\eta^1(h^1_s + b\zeta^1_s)}{\beta^1 h^1_s}Y^1_s + g^1_s,
    \end{split}
\end{equation*}
where
\begin{equation*}
    \begin{split}
        g^1_s
        =
        - \frac{\delta}{\gamma^1 h^1_s} + \frac{(\kappa-b)}{h^1_s}\zeta^1_s\rho_s + \frac{h^1_s + b\zeta^1_s}{\beta^1 h^1_s}\Bigl\{1 + \log\Bigl(\frac{a\beta^1}{\gamma^1(h^1_s+b\zeta^1_s)}\Bigr)+ \eta^1 F^1_s \Bigr\}.
    \end{split}
\end{equation*}
We can also show that this mean field BSDE has a solution under Assumptions \ref{asm1-habit}, \ref{asm3-habit}, \ref{asm7-habit}, and an assumption that is similar to Assumption \ref{asm4-habit}, in the same way as in Theorem \ref{MF-BSDE-wellposed1}.

\subsection{Additional Condition on the Habit Trend}
In Chapter 5, the process $\rho$ represents the trend of the habit determined exogenously in the market. In this section, we consider the following assumption on $\rho$:
\begin{equation}
    \label{rho_average}
    \rho_t = \mathbb{E}[c_t^{1,*}|\mathcal{F}^0],~~~t\in[0,T],
\end{equation}
so that agents interact not only through the trading strategy but also through the consumption strategy due to \eqref{rho_average}.
By \eqref{optimal}, we have
\begin{equation*}
    \begin{split}
        \rho_t 
        &=
        \mathbb{E}\Bigl[ \frac{1}{\beta^1}\log\Bigl(\frac{a\beta^1}{\gamma^1(1+b\zeta^1_t)}\Bigr)\Bigr] - \mathbb{E}\Bigl[\frac{\gamma^1}{\beta^1}(Y^1_t-F^1_t)|\mathcal{F}^0\Bigr]  + \mathbb{E}\Bigl[X^{1,c^{1,*}}_t\Bigl(1-\frac{\gamma^1\zeta^1_t}{\beta^1}\Bigr) |\mathcal{F}^0\Bigr]
    \end{split}
\end{equation*}

If we search for an equilibrium satisfying both market clearing condition \eqref{MC-eqn-habit} and \eqref{rho_average}, we are motivated to study the mean field forward-backward SDEs:
\begin{equation}
    \begin{split} 
        &X_t^{i,c^{i,*}} = X^i_0 + \int_0^t \{-\kappa(X^{i,c^{i,*}}_s-\mathbb{E}[c_s^{i,*}|\mathcal{F}^0]) + b(c^{i,*}_s-\mathbb{E}[c_s^{i,*}|\mathcal{F}^0])\}ds,~~~t\in[0,T]\\
        &Y^i_t=F^i_T + \int_t^T f^i(s,X^{i,c^{i,*}}_s,Y^i_s,Z^{i,0}_s,Z^i_s)ds - \int_t^T Z^{i,0}_sdW_s^0 - \int_t^T Z^i_sdW_s^i,~~~t\in[0,T],
    \end{split}
\end{equation}
with
\begin{equation*}
    \begin{split}
        &f^i(s,X^{i,c^{i,*}}_s,Y^i_s,Z^{i,0}_s,Z^i_s)\\
        &= 
        \hat\gamma Z^{i,0\|}_s\mathbb{E}[Z^{i,0\|}_s|\mathcal{F}^0]^{\top} - \frac{\hat\gamma^2}{2\gamma^i}|\mathbb{E}[Z^{i,0\|}_s|\mathcal{F}^0]|^2 + \frac{\gamma^i}{2}(|Z^{i,0\perp}_s|^2 + |Z^i_s|^2) -\frac{\gamma^i(1+b\zeta^i_s)}{\beta^i}Y^i_s + g^i_s,
    \end{split}
  \end{equation*}
where
\[
    g^i_s := - \frac{\delta}{\gamma^i} + (\kappa-b)\zeta^i_s-\mathbb{E}[c_s^{i,*}|\mathcal{F}^0] + \frac{1 + b\zeta^i_s}{\beta^i}\Bigl\{1 + \log\Bigl(\frac{a\beta^i}{\gamma^i(1 + b\zeta^i_s)}\Bigr)+\gamma^i F^i_s\Bigr\}.
\]
Here, $c^{i,*}$ is given by \eqref{optimal}.

\section{Conclusion and Discussions} \label{Section 6}
In this chapter, we have studied a theoretical model of asset pricing among heterogeneous agents with habit formation in consumption preferences using the mean field game theory. In Section \ref{Section 3}, the market clearing condition has motivated us to study the mean field BSDE \eqref{MF-BSDE1-habit}, which 
was shown to have a bounded solution under additional assumptions on the size of the parameters. Furthermore, we have proved that the solution of this equation does indeed characterize the market equilibrium in the large population limit.
In addition, Section \ref{Section 4} introduced an exponential Gaussian model, in which an unbounded solution of the mean field BSDE can be obtained in a semi-analytic form, characterized by a system of ODEs, under appropriate assumptions. Subsequently, we have verified an optimal strategy and the asymptotic market equilibrium in the large population limit within the EQG framework. 

    The result of Section \ref{Section 4} helps us to conduct a numerical analysis in future studies. The solutions of \eqref{Riccati eqn-habit} can be calculated by the standard Euler method. A solution to the mean field BSDE can be then obtained by \eqref{EQG solution YZ-habit} with pathwise simulations. 
The numerical analysis can provide a visualization of our equilibrium model, allowing us to investigate the distribution of wealth and the effect of the habit formation. We can also expect an application to the empirical study using the market data.
See also Carmona \& Delarue \cite{carmonaProbabilisticTheoryMean2018} [Section 3.5] for linear quadratic mean field games and [Section 3.6] for numerical results.

%% file: part3.tex
\vspace{0mm}

\section{Preliminary}
The theory of asset pricing is one of the major interests in financial economics. It examines the formulation of prices in the market at equilibrium, the state where the demand for securities matches the supply. 
Comprehensive overviews of this topic can be found in, for example, Back \cite{backAssetPricingPortfolio2017} and Munk \cite{Munk}. Additionally, we refer to Karatzas \& Shreve \cite{karatzas_methods_1998} [Section 4] for details of the asset pricing in a complete market and Jarrow \cite{jarrow2018continuous} [Part III] for an organized review of the asset pricing in an incomplete market.

Investors generally do not have full access to market information, which necessitates them to infer the risk premium from the observable security price in order to make decisions about their trading strategies. 
This type of problem has intrigued researchers and has led to numerous studies, including the mean-variance hedging (MVH) problem and the utility maximization problem. See, for instance, \cite{pham_meanvariance,MTT_MVH,fujiiMakingMeanvarianceHedging2014} for the MVH problem and \cite{lanker_optim_trading, pham_quenez_OP, ManiaSantacroce} for the utility maximization problem.
The key theory behind these partially observable market problems is the stochastic filtering theory. The objective of this theory is to provide the ``best estimate" of the state process based on the observations. 
As a particular case, the linear filtering, developed by Kalman \& Bucy \cite{kalman-bucy}, has been widely used to address these problems. 
Comprehensive literature on the stochastic filtering theory can be found in, for example, Bain \& Crisan \cite{bain_fundamentals_2009} and Liptser \& Shiryayev \cite{LiptserShiryayev}.

Mean field game theory was independently formulated by Lasry \& Lions \cite{lasryMeanFieldGames2007} and Huang, Malham\'{e} \& Caines \cite{huangLargePopulationStochastic2006}, providing a powerful framework for analyzing the problem of multi-agent games. 
Traditional approaches to such problems typically become intractable because of complex interactions among agents, whereas the mean field game theory overcomes this issue by replacing these problems with a stochastic control problem of a single representative agent and a fixed-point problem. 
Carmona \& Delarue \cite{carmonaProbabilisticAnalysisMeanField2013, carmonaForwardBackwardStochastic2015} proposed the probabilistic approach to the mean field problem, involving forward-backward stochastic differential equations (FBSDEs) of McKean-Vlasov type.
The solution of the mean field game is known to yield an $\varepsilon$-Nash equilibrium of the original multi-agent game. Their theory is extensively covered in the two-volume monographs Carmona \& Delarue \cite{carmonaProbabilisticTheoryMean2018,carmonaProbabilisticTheoryMean2018a} with thorough details and applications.

For research on the mean field game theory under partial observation, we refer to Huang, Caines \& Malham\'{e} \cite{huangPO} for an early study of mean field linear quadratic Gaussian (LQG) games with partial information, in which each agent has a local noisy measurement of its own state. Huang, Wang \& Wu \cite{HWW_LQGPO_2016} originally developed a backward mean field LQG game under partial information.
Bensoussan, Feng \& Huang \cite{Bensoussan_et_al_2021} offers an extension for mean field LQG games under partial observation with common noise. 
Huang \& Wang \cite{huangdynamic_optim_2016} investigates dynamic optimization problems of a large-population system and \c{S}en \& Caines \cite{SenCaines_2019} studies a partially observed mean field game with nonlinear cost functionals and dynamics. 
Recent contributions include Li, Nie \& Wu \cite{Li_et_al2023} for a stochastic large-population problem with partial information, where the diffusion term depends also on the control variable, and Li, Li \& Wu \cite{Li_et_al_2024} for problems where agents are coupled through the control average term.

In recent years, there has been an increasing number of studies on asset pricing in financial markets employing the mean field game theory. Evangelista, Saporito \& Thamsten \cite{evangelista2022price} developed an asset pricing model considering liquidity issues using the mean field game theory. 
Fujii \& Takahashi \cite{fujiiMeanFieldGame2022, Fujii-Takahashi_strong, fujii2022equilibrium} presented a mean field price formation model under stochastic order flow.  
Fujii \cite{Fujii-equilibrium-pricing} developed a price formation model that considers market participants of two groups: cooperative and non-cooperative ones.

The main contribution of this chapter is an extension of the model of Chapter 4 to the case of partial observation under the exponential quadratic Gaussian framework. 
As mentioned above, we assume that investors can only observe the security price but cannot distinguish between the risk-premium process and the Brownian noise. 
Our objective is to derive the market risk-premium process, which cannot be directly observed by agents, endogenously from the optimal behavior of agents and the market clearing condition by using the linear filtering theory. 
As in the previous chapters, we assume that agents are characterized by exponential-type preferences and adopt self-financing strategies. 
In addition, we allow agents to have heterogeneity in initial wealth and terminal liability, in contrast to the traditional asset pricing theory which considers a single representative agent.
We employ an exponential quadratic Gaussian framework, which was introduced in Chapter 5 to carry out the analysis of the partially observable case. 
This framework not only provides a semi-explicit solution of the mean field equilibrium but also allows us to conduct numerical simulations. 

This chapter is organized as follows. 
Section 6.2 presents a formulation of the partially observable market and the utility maximization problem of an agent, along with the derivation of the conditions for an optimal strategy. 
In Section 6.3, we introduce the asymptotic market clearing condition and consider the relevant mean field BSDE. By associating the BSDE with a system of ordinary differential equations (ODEs), we show that the solution of the BSDE allows a semi-explicit solution. Furthermore, we verify that this solution does indeed characterize the market clearing condition in the large population limit. We then construct the risk-premium process under the Kalman-Bucy framework.
Section 6.4 provides a numerical simulation to visualize this model. The chapter concludes in Section 6.5 with suggestions for possible extensions.

\section{Market With Partial Observation} 
This section studies an optimal investment problem for a single agent in a partially observable market. It basically follows Fujii \& Sekine \cite{fujiiMeanFieldEquilibriumPrice2023a} by adopting the technique of Hu, Imkeller \& M\"{u}ller \cite{huUtilityMaximizationIncomplete2005a}. 
To deal with partial observation, we shall mention some results of the filtering theory for completeness.\par

We denote by $(\Omega^0,\mathcal{F}^0,\mathbb{P}^0)$ a complete probability space with a complete and right-continuous filtration $\mathbb{F}^0:=(\mathcal{F}^0_t)_{t\in[0,T]}$ generated by a $d_0$-dimensional standard Brownian motion $W^0:=(W^0_t)_{t\in[0,T]}$, a $k$-dimensional standard Brownian motion $B^0:=(B^0_t)_{t\in[0,T]}$ and an $\mathbb{R}^{d_0}$-valued random variable $\mu_0$. Here, we assume that $W^0$ and $B^0$ are independent. $\mathcal{F}^0_0$ is the completion of $\sigma(\mu_0)$. We set $\mathcal{F}^0 := \mathcal{F}^0_T$. $(\Omega^0,\mathcal{F}^0,\mathbb{P}^0)$ is used to describe the randomness of the financial market. \par
\subsection{Market Set-up}
The market dynamics and its properties are given in the following assumption.
\begin{asm}~\\
    \label{asm1}
    \textup{(i)} The risk-free interest rate is zero.\\
    \textup{(ii)} There are $d_0$ non-dividend paying risky stocks with price dynamics
    \begin{equation}
        \begin{split}
            \label{stock price-EQG}
            S_t&= S_0 + \int_0^t \mathrm{diag}(S_r)(\mu_rdr + \sigma_r dW^0_r),~~t\in[0,T],
        \end{split}
    \end{equation}
    for $S_0\in\mathbb{R}^{d_0}_{++}$, $\mu := (\mu_t)_{t\in[0,T]}\in\mathbb{H}^2(\mathbb{P}^{0},\mathbb{F}^0,\mathbb{R}^{d_0})$ with $\mu_0\in\mathbb{L}^2(\mathbb{P}^0,\mathcal{F}^0_0,\mathbb{R}^{d_0})$ and a measurable function $\sigma:[0,T]\to \mathbb{R}^{d_0\times d_0}$. \\
    \textup{(iii)} $\sigma_t$ is invertible for all $t\in[0,T]$ and satisfies
    \[
        \underline{\lambda}I_{d_0}\leq \sigma_t\sigma_t^\top \leq\overline{\lambda}I_{d_0},~~~~dt\text{-}\mathrm{a.e.}
    \]
    for some positive constants $0<\underline{\lambda}\leq\overline{\lambda}$ and $I_{d_0}$, an identity matrix of size $d_0$.\\
    \textup{(iv)} The risk-premium process $\theta\in\mathbb{H}^2(\mathbb{P}^{0},\mathbb{F}^0,\mathbb{R}^{d_0})$, defined by $\theta_t = \sigma_t^{-1}\mu_t$ for $t\in[0,T]$, is a process such that the Dol\'{e}ans-Dade exponential $\displaystyle\Bigl\{\mathcal{E}\Bigl(-\int_0^\cdot \theta_s^\top dW^0_s\Bigr)_t; t\in[0,T]\Bigr\}$ is a martingale.
\end{asm}

\begin{rem}~\\
    \label{well-posedness_S}
    Although $\mu$ is unbounded, the well-posedness of the stock price process $(S_t)_{t\in[0,T]}$ can be shown by changing the original measure $\mathbb{P}^0$ to the risk neutral measure $\mathbb{Q}$, defined by
    \begin{equation}
        \label{risk-neutral}
        \Bigl.\frac{d\mathbb{Q}}{d\mathbb{P}^{0}}\Bigr|_{\mathcal{F}^{0}_t} = \mathcal{E}\Bigl(-\int_0^\cdot \theta_s^\top dW^0_s \Bigr)_t,~~t\in[0,T],
    \end{equation}
    which is well-defined thanks to Assumption \ref{asm1}(iv).
\end{rem}

In this model, we consider a case in which agents can observe the stock prices but cannot identify their drifts and Brownian shocks independently. The available market information for agents is modelled by a filtration $\mathbb{G}^0$.

\begin{dfn}~\\
    $\mathbb{G}^0:=(\mathcal{G}_t^0)_{t\in[0,T]}$ is a complete and right-continuous filtration generated by the stock price process $(S_t)_{t\in[0,T]}$.
\end{dfn}
\begin{rem}
    Since $S_0\in\mathbb{R}^{d_0}_{++}$, $\mathcal{G}^0_0$ is trivial, unlike $\mathcal{F}^0_0$.
\end{rem}
We set $\mathcal{G}^0:=\mathcal{G}^0_T$. It is obvious that $\mathcal{G}^0_t\subset\mathcal{F}^0_t$ for all $t\in[0,T]$. Define a process $\widetilde{W}^0$ by
\begin{equation}
    \begin{split}
        \label{tilde_W}
        \widetilde{W}^0_t := \int_0^t \sigma_r^{-1} \mathrm{diag}(S_r)^{-1}dS_r = W^0_t + \int_0^t \theta_s ds,~~~t\in[0,T].
    \end{split}
\end{equation}
We have the following property.
\begin{lem}~\\
    \label{G=FW}
    Let Assumption \ref{asm1} be in force. Moreover, let $\mathbb{F}^{\widetilde{W}^0}$ be a complete and right-continuous filtration generated by $(\widetilde{W}^0_t)_{t\in[0,T]}$. Then, we have $\mathbb{G}^0=\mathbb{F}^{\widetilde{W}^0}$.
\end{lem}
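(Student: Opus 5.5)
We prove the two inclusions $\mathbb{F}^{\widetilde{W}^0}\subset\mathbb{G}^0$ and $\mathbb{G}^0\subset\mathbb{F}^{\widetilde{W}^0}$ separately. Both filtrations are complete and right-continuous by definition. It therefore suffices to show that, for each $t$, $\widetilde{W}^0_t$ is $\mathcal{G}^0_t$-measurable (up to null sets) and that $S_t$ is $\mathcal{F}^{\widetilde{W}^0}_t$-measurable. Taking completions and right-continuous augmentations then preserves the inclusions.

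For $\mathbb{F}^{\widetilde{W}^0}\subset\mathbb{G}^0$, use the defining formula \eqref{tilde_W}: $\widetilde{W}^0_t=\int_0^t\sigma_r^{-1}\mathrm{diag}(S_r)^{-1}dS_r$.

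1. Since $\sigma$ is a deterministic measurable function, the integrand $\sigma_r^{-1}\mathrm{diag}(S_r)^{-1}$ is $\mathbb{G}^0$-progressively measurable. It is well defined because $S$ stays in $\mathbb{R}^{d_0}_{++}$.
2. $S$ is a continuous semimartingale in its own filtration $\mathbb{G}^0$. This follows from Stricker's theorem, or more concretely by working under $\mathbb{Q}$ of \eqref{risk-neutral}, where $S$ is a local martingale.
3. The stochastic integral computed in $\mathbb{G}^0$ coincides a.s. with the one computed in $\mathbb{F}^0$. One way to see this is to take it as a limit in probability of Riemann sums $\sum \sigma_{t_k}^{-1}\mathrm{diag}(S_{t_k})^{-1}(S_{t_{k+1}}-S_{t_k})$, first for piecewise-continuous approximations of $\sigma^{-1}$, then passing to the limit with dominated convergence for stochastic integrals using the bounds $\underline{\lambda}I\le\sigma\sigma^\top$.
4. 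Each Riemann sum is $\mathcal{G}^0_t$-measurable, so $\widetilde{W}^0_t$ is $\mathcal{G}^0_t$-measurable up to null sets.

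For $\mathbb{G}^0\subset\mathbb{F}^{\widetilde{W}^0}$, rewrite the stock price dynamics as $dS_t=\mathrm{diag}(S_t)\sigma_t\,d\widetilde{W}^0_t$, using $\sigma\theta=\mu$.

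1. This linear SDE in $\widetilde{W}^0$ has the explicit componentwise solution
\[
S^j_t=S^j_0\exp\Bigl(\int_0^t\sigma^{(j)}_r d\widetilde{W}^0_r-\frac12\int_0^t|\sigma^{(j)}_r|^2dr\Bigr),
\]
where $\sigma^{(j)}$ is the $j$-th row of $\sigma$.
2. Apply Itô's formula to $\log S^j$ and substitute $d\widetilde{W}^0=\sigma^{-1}\mathrm{diag}(S)^{-1}dS$ to confirm that this representation holds pathwise a.s.
3. Since $\sigma$ is deterministic and bounded, $\int_0^t\sigma^{(j)}_rd\widetilde{W}^0_r$ is a Wiener-type integral with respect to the continuous semimartingale $\widetilde{W}^0$. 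The same Riemann-sum/approximation argument shows it is $\mathcal{F}^{\widetilde{W}^0}_t$-measurable.
4. Hence $S_t$ is $\mathcal{F}^{\widetilde{W}^0}_t$-measurable.

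The main obstacle is the filtration-change issue for stochastic integrals: the integrals are a priori defined relative to $\mathbb{F}^0$, and we need them to be measurable with respect to the smaller filtrations. The cleanest route is to work under $\mathbb{Q}$. There $\widetilde{W}^0$ is an $(\mathbb{F}^0,\mathbb{Q})$-Brownian motion, hence also a Brownian motion in its own filtration, and $S$ is a $\mathbb{Q}$-local martingale adapted to $\mathbb{G}^0$. Stochastic integrals of processes adapted to a subfiltration to which the integrator is also adapted, and in which it remains a semimartingale, agree with those taken in the larger filtration. Since $\mathbb{Q}\sim\mathbb{P}^0$, the resulting a.s. identities and null sets transfer back to $\mathbb{P}^0$.
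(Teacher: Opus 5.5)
Your proposal is correct and follows the paper's route: you rewrite $dS_t=\mathrm{diag}(S_t)\sigma_t\,d\widetilde W^0_t$ to see that $S$ is a functional of $\widetilde W^0$, and conversely you read $\widetilde W^0_t$ off its definition as $\int_0^t\sigma_r^{-1}\mathrm{diag}(S_r)^{-1}dS_r$. The differences are minor. You use the explicit exponential solution where the paper invokes strong solvability of the Lipschitz SDE driven by $\widetilde W^0$, and you spell out the filtration-shrinkage argument for $\mathbb{F}^{\widetilde W^0}\subset\mathbb{G}^0$, which the paper simply calls obvious.
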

\noindent
\textbf{\textit{Proof}}\\
Notice that the dynamics of $S$ is given by
\begin{equation}
    \begin{split}
        \label{S_with_tilde_W}
        S_t = S_0 + \int_0^t \mathrm{diag}(S_r)\sigma_r d\widetilde{W}_r^0,~~~t\in[0,T].
    \end{split}
\end{equation} 
Since $\sigma$ is bounded, the standard result for Lipschitz SDEs implies that \eqref{S_with_tilde_W} has a unique $\mathbb{F}^{\widetilde{W}^0}$-adapted solution. (cf. Remark \ref{well-posedness_S}.) This shows $\mathbb{G}^0\subset\mathbb{F}^{\widetilde{W}^0}$. Conversely, $\mathbb{G}^0\supset\mathbb{F}^{\widetilde{W}^0}$ is obvious by \eqref{tilde_W}. $\square$

By Girsanov's theorem, $\widetilde{W}^0$ is a $(\mathbb{G}^0,\mathbb{Q})$-Brownian motion, where $\mathbb{Q}$ is the risk-neutral measure defined in Remark \ref{risk-neutral}. We denote the expectation of the risk-premium process $\theta$ conditionally on $\mathcal{G}^0_t$ by
\begin{equation}
    \begin{split}
        \label{theta_expected}
        \widehat\theta_t := \mathbb{E}[\theta_t|\mathcal{G}^0_t],~~~t\in[0,T].
    \end{split}
\end{equation}
Moreover, we introduce a process $\widehat{W}^0$ by
\begin{equation}
    \begin{split}
        \widehat{W}^0_t := \widetilde{W}^0_t - \int_0^t \widehat\theta_s ds =  W^0_t + \int_0^t (\theta_s-\widehat\theta_s) ds,~~~t\in[0,T],
    \end{split}
\end{equation}
which is called ``innovation process'' in the filtering theory. The dynamics of $S$ can be written as
\begin{equation}
    \begin{split}
        S_t&= S_0 + \int_0^t \mathrm{diag}(S_r)\sigma_r(\widehat\theta_rdr + d\widehat{W}^0_r),~~t\in[0,T].
    \end{split}
\end{equation}
The following property is well-known.
\begin{lem}
    Under Assumption \ref{asm1}, the process $\widehat{W}^0$ is a $(\mathbb{G}^0,\mathbb{P}^0)$-Brownian motion.
\end{lem}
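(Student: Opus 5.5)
I would prove this with Lévy's characterization of Brownian motion, applied to the $\mathbb{G}^0$-filtration under $\mathbb{P}^0$. The process $\widehat{W}^0$ is continuous and $\mathbb{G}^0$-adapted. Indeed, $\widetilde{W}^0$ is $\mathbb{G}^0$-adapted by Lemma \ref{G=FW}, and $\widehat\theta$ is $\mathbb{G}^0$-adapted; here I take a $\mathbb{G}^0$-progressively measurable version of the optional projection defining \eqref{theta_expected}. Its quadratic covariation is
\[
\langle \widehat{W}^{0,j},\widehat{W}^{0,k}\rangle_t=\langle W^{0,j},W^{0,k}\rangle_t=\delta_{jk}t,
\]
because $\widehat{W}^0-W^0=\int_0^\cdot(\theta_s-\widehat\theta_s)ds$ has finite variation. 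It therefore suffices to show that $\widehat{W}^0$ is a $(\mathbb{G}^0,\mathbb{P}^0)$-martingale.

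For the martingale property, I would fix $0\le s\le t\le T$ and write
\[
\widehat{W}^0_t-\widehat{W}^0_s=(W^0_t-W^0_s)+\int_s^t(\theta_r-\widehat\theta_r)dr.
\]
Since $\mathcal{G}^0_s\subset\mathcal{F}^0_s$ and $W^0$ is an $(\mathbb{F}^0,\mathbb{P}^0)$-Brownian motion, the tower property gives $\mathbb{E}[W^0_t-W^0_s|\mathcal{G}^0_s]=0$. For the drift term, I would use Fubini's theorem for conditional expectations, which is justified by $\theta\in\mathbb{H}^2$ from Assumption \ref{asm1}(iv) together with Jensen's inequality (so $\widehat\theta\in\mathbb{H}^2$ as well, giving $\mathbb{E}\int_0^T(|\theta_r|+|\widehat\theta_r|)dr<\infty$). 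For each $r\ge s$ the tower property then yields
\[
\mathbb{E}[\theta_r-\widehat\theta_r|\mathcal{G}^0_s]=\mathbb{E}\bigl[\mathbb{E}[\theta_r|\mathcal{G}^0_r]-\widehat\theta_r\,\big|\,\mathcal{G}^0_s\bigr]=0,
\]
so $\mathbb{E}[\widehat{W}^0_t-\widehat{W}^0_s|\mathcal{G}^0_s]=0$. The same bound shows $\widehat{W}^0_t\in\mathbb{L}^1$, so $\widehat{W}^0$ is a genuine martingale, and Lévy's theorem finishes the argument.

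The main technical point is measurability. I need $\widehat\theta$ to be jointly measurable and $\mathbb{G}^0$-progressive, so that the integral $\int_0^t\widehat\theta_s\,ds$ defines a $\mathbb{G}^0$-adapted process and Fubini can be applied; I would handle this by invoking the existence of the optional projection of the $\mathbb{H}^2$ process $\theta$ onto $\mathbb{G}^0$. Equally important, the integrability check is precisely where Assumption \ref{asm1}(iv), namely $\theta\in\mathbb{H}^2$, is used. Note that the filtration $\mathbb{G}^0$ satisfies the usual conditions by definition, which Lévy's theorem requires.
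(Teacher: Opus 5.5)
Your proposal is correct and follows the same route as the paper: the paper simply invokes Lévy's characterization and cites Pardoux [Proposition 2.2.7], and your tower-property and Fubini verification of the $\mathbb{G}^0$-martingale property is exactly the standard argument behind that reference. Your quadratic-variation computation uses only that the two processes differ by a finite-variation term, and your $\mathbb{H}^2$ integrability check correctly supplies what the citation leaves implicit.
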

\noindent
\textbf{\textit{Proof}}\\
This is a consequence of L\'{e}vy's theorem. See, e.g., Pardoux \cite{pardoux_filtrage} [Proposition 2.2.7]. $\square$\\
\begin{rem}~\\
    Although the filtration $\mathbb{G}^{0}$ is larger than the augmented filtration generated by $\widehat{W}^0$ in general, we can show that every $(\mathbb{G}^{0},\mathbb{P}^{0})$-local martingale has a representation through a stochastic integral with respect to $\widehat{W}^0$. (See, e.g., Jeanblanc, Yor \& Chesney \cite{jeanblanc_mathematical_2009} [Proposition 1.7.7.1].)
\end{rem}

\subsection{Optimal Investment Problem With Exponential Utility}
Suppose there are countably infinitely many agents in the common financial market. The relevant probability spaces are defined as follows.\\

\noindent
(1) We denote by $(\Omega^i,\mathcal{F}^i,\mathbb{P}^i)$ ($i\in\mathbb{N}$) a complete probability space with a complete and right-continuous filtration $\mathbb{F}^i:=(\mathcal{F}^i_t)_{t\in[0,T]}$, generated by a $d$-dimensional standard Brownian motion $W^i:=(W^i_t)_{t\in[0,T]}$ and a $\sigma$-algebra $\sigma(\xi^i,x^i_0)$. $\mathcal{F}^i_0$ is the completion of $\sigma(\xi^i,x^i_0)$. 
Here, $\xi^i$ is an $\mathbb{R}$-valued random variable and $x^i_0$ is an $\mathbb{R}^d$-valued random variable. We set $\mathcal{F}^i:=\mathcal{F}^i_T$. \\

\noindent
(2) We denote by $(\Omega^{0,i},\mathcal{F}^{0,i},\mathbb{P}^{0,i})$ ($i\in\mathbb{N}$) a complete probability space with $\Omega^{0,i} := \Omega^0 \times \Omega^i$ and with $(\mathcal{F}^{0,i},\mathbb{P}^{0,i})$, which is the completion of $(\mathcal{F}^0 \otimes \mathcal{F}^i,\mathbb{P}^{0}\otimes \mathbb{P}^{i})$. Also, we define a $\sigma$-algebra $\mathcal{G}^{0,i}$ by the completion of $\mathcal{G}^0 \otimes \mathcal{F}^i$.
We denote by $\mathbb{F}^{0,i}:=(\mathcal{F}^{0,i}_t)_{t\in[0,T]}$ the complete and right-continuous augmentation of $(\mathcal{F}_t^0 \otimes \mathcal{F}_t^i)_{t\in[0,T]}$ and by $\mathbb{G}^{0,i}:=(\mathcal{G}^{0,i}_t)_{t\in[0,T]}$ the complete and right-continuous augmentation of $(\mathcal{G}_t^0 \otimes \mathcal{F}_t^i)_{t\in[0,T]}$.\\

\noindent
(3) We denote by $(\Omega,\mathcal{F},\mathbb{P})$ a complete probability space with $\Omega:=\prod_{i=0}^\infty\Omega^i$ and with $(\mathcal{F},\mathbb{P})$, which is the completion of $\Bigl(\bigotimes_{i=0}^\infty\mathcal{F}^i,\bigotimes_{i=0}^\infty\mathbb{P}^i\Bigr)$. The $\sigma$-algebra $\mathcal{G}$ is defined by the completion of  $\bigotimes_{i=1}^\infty\mathcal{F}^i\otimes\mathcal{G}^0$.
The filtration $\mathbb{F}:=(\mathcal{F}_t)_{t\in[0,T]}$ is the complete and right-continuous augmentation of $(\bigotimes_{i=0}^\infty\mathcal{F}^{i}_t)_{t\in[0,T]}$ and $\mathbb{G}:=(\mathcal{G}_t)_{t\in[0,T]}$ is the complete and right-continuous augmentation of $(\bigotimes_{i=1}^\infty\mathcal{F}^{i}_t\otimes \mathcal{G}^0_t)_{t\in[0,T]}$.\\

We denote by $\mathbb{E}[\cdot]$ the expectation with respect to $\mathbb{P}$ unless otherwise noted. In this chapter, the heterogeneity of agents is characterized by $(W^i,\xi^i,x^i_0)_{i\in\mathbb{N}}$. The economy is modelled through an exponential quadratic Gaussian framework.
\begin{asm} ~\\
    \label{asm2}
    \textup{(i)} For each $i\in\mathbb{N}$, $\xi^i$ is an $\mathbb{R}$-valued, $\mathcal{F}^i_0$-measurable, and normally-distributed random variable representing agent-$i$'s initial wealth. $x^i_0$ is an $\mathbb{R}^d$-valued, $\mathcal{F}^i_0$-measurable, and normally-distributed random variable.
    
    \noindent
    \textup{(ii)} The random variables $\xi^i$ and $x_0^i$ are mutually independent for each $i\in\mathbb{N}$ and $(\xi^i,x_0^i)_{i\in\mathbb{N}}$ have the same distribution.

    \noindent
    \textup{(iii)} For each $i\in\mathbb{N}$, $(F^i)_{i\in\mathbb{N}}$ is an $\mathbb{R}$-valued and $\mathcal{G}_T^{0,i}$-measurable random variable, which represents the amount of liability of agent-$i$ at time $T$. Each $F^i$  is given by a quadratic form\footnote{The symbol $\langle \cdot,\cdot\rangle$ denotes the Euclidean inner product, i.e. $\langle x, y\rangle:=x^\top y$ for $x,y\in\mathbb{R}^n$.}
    \begin{equation}
        \begin{split}
        \label{EQG-F}
            F^i := \frac{1}{2} \langle A^{F}_{00}x^0_T,x^0_T\rangle + \frac{1}{2} \langle A^{F}_{11}x^i_T,x^i_T\rangle + \langle A^{F}_{10}x^0_T,x^i_T\rangle + \langle B^{F}_0,x^0_T\rangle + \langle B_1^{F},x^i_T\rangle + C^{F},
        \end{split}
    \end{equation}
    for $(A^{F}_{00}, A^{F}_{11}, A^{F}_{10},B^F_0,B^F_1,C^F)\in\mathbb{M}_{d_0}\times\mathbb{M}_{d}\times\mathbb{R}^{d\times d_0}\times\mathbb{R}^{d_0}\times\mathbb{R}^{d}\times\mathbb{R}$ and Gaussian factor processes $(x^0,x^i)\in\mathbb{L}^0(\mathbb{G}^0,\mathbb{R}^{d_0})\times\mathbb{L}^0(\mathbb{F}^i,\mathbb{R}^{d})$ defined by
    \begin{equation*}
        \begin{split}
            x_t^0 = x^0_0 -\int_0^t K_0(x^0_s - m_0)ds + \Sigma_0 \widehat{W}_t^0,~~~x_t^i = x_0^i -\int_0^t K(x^i_s - m)ds + \Sigma W_t^i,~~~t\in[0,T]
        \end{split}
    \end{equation*}
    for $x^0_0\in\mathbb{R}^{d_0}$, $(K_0,K)\in\mathbb{R}_{++}\times \mathbb{R}_{++}$, $(m_0,m)\in\mathbb{R}^{d_0}\times \mathbb{R}^{d}$, and $(\Sigma_0,\Sigma)\in\mathbb{R}^{d_0\times d_0}\times \mathbb{R}^{d\times d}$.

    \noindent
    \textup{(iv)} Each agent is a price taker; agent-$i$ must accept the prevailing prices as he/she has no market share to influence the price.
\end{asm}
\begin{rem}
    In this model, the agent-$i$'s liability $F^i$ is subject to both common noise and idiosyncratic noise. As an example of financial interpretation, suppose that the agents are financial firms and have derivative liability at time $T$. 
    In this case, $F^i$ denotes the total amount of payoff, which usually depends on the price of securities and the idiosyncratic information, such as the corporate size and the number of contracts or clients the agent-$i$ has.
\end{rem}

The trading strategy of agent-$i$ is denoted by an $\mathbb{R}^{d_0}$-valued, $\mathbb{G}^{0,i}$-progressively measurable process $\pi^i:=(\pi^i_t)_{t\in[0,T]}$. Each element of $\pi^i_t$ represents the amount of money invested in each stock at time $t$. The wealth process of agent-$i$ with strategy $\pi$ is denoted by $\mathcal{W}^{i,\pi}\in\mathbb{L}^0(\mathbb{G}^{0,i},\mathbb{R})$ and its dynamics is given by
\begin{equation}
    \begin{split}
        \mathcal{W}^{i,\pi}_t 
        &:= 
        \xi^i + \int_0^t \pi_r^\top \mathrm{diag}(S_r)^{-1}dS_r \\
        &= 
        \xi^i + \int_0^t \pi_s^\top\sigma_s\widehat\theta_sds + \int_0^t \pi_s^\top\sigma_s d\widehat{W}_s^0
    \end{split}
\end{equation}
for $t\in[0,T]$. The agents' problems are modelled on the probability space $(\Omega,\mathcal{G},\mathbb{P},\mathbb{G})$; for each $i\in\mathbb{N}$, agent-$i$ solves
\begin{equation}
    \begin{split}
        \sup_{\pi\in\mathbb{A}^i} \mathbb{E}\Bigl[-\exp\Bigl(-\gamma(\mathcal{W}^{i,\pi}_T-F^i)\Bigr) \Bigr]
    \end{split}
\end{equation}
subject to
\[
    \mathcal{W}^{i,\pi}_t =\xi^i + \int_0^t \pi_s^\top\sigma_s\widehat\theta_sds + \int_0^t \pi_s^\top\sigma_s d\widehat{W}_s^0,~~t\in[0,T].
\]
Here, $\gamma\in\mathbb{R}_{++}$ is the coefficient of absolute risk aversion, and $\mathbb{A}^i$ is the admissible set for agent-$i$, whose definition is to be given. By writing $p_t:=\pi_t^\top\sigma_t$ for each $t\in[0,T]$, the problem can equivalently be written as
\begin{equation}
    \begin{split}
        \sup_{p\in\mathcal{A}^i} \mathbb{E}\Bigl[-\exp\Bigl(-\gamma(\mathcal{W}^{i,p}_T-F^i)\Bigr) \Bigr]
    \end{split}
\end{equation}
subject to
\[
    \mathcal{W}^{i,p}_t =\xi^i + \int_0^t p_s\widehat\theta_sds + \int_0^t p_s d\widehat{W}_s^0,~~~t\in[0,T],
\]
where the set $\mathcal{A}^i$ is defined by $\mathcal{A}^i:=\{p=\pi^\top\sigma;\pi\in\mathbb{A}^i\}$.\\

To deal with the optimal control problem, let us introduce a BSDE: for each $i\in\mathbb{N}$,
\begin{equation}
    \begin{split}
        \label{BSDE-optim}
        Y^i_t = F^i + \int_t^T \Bigl(-Z^{i,0}_s\widehat\theta_s - \frac{|\widehat\theta_s|^2}{2\gamma} + \frac{\gamma}{2}|Z^i_s|^2\Bigr)ds - \int_t^T Z^{i,0}_s d\widehat{W}^0_s - \int_t^T Z^{i}_s dW^i_s,~~~t\in[0,T].
    \end{split}
\end{equation}
Suppose that the BSDE \eqref{BSDE-optim} has a solution $(Y^i,Z^{i,0},Z^i)\in\mathbb{S}^2(\mathbb{G}^{0,i},\mathbb{R})\times\mathbb{H}^2(\mathbb{G}^{0,i},\mathbb{R}^{1\times d_0})\times\mathbb{H}^2(\mathbb{G}^{0,i},\mathbb{R}^{1\times d})$. 
Then, define a process $R^{i,p}\in\mathbb{L}^0(\mathbb{G}^{0,i},\mathbb{R})$ by
\begin{equation}
    \begin{split}
        R^{i,p}_t := -\exp\Bigl(-\gamma(\mathcal{W}^{i,p}_t-Y^i_t)\Bigr),~~~t\in[0,T],~~~i\in\mathbb{N}.
    \end{split}
\end{equation}

\begin{dfn} (Admissible space)\\
    The admissible space $\mathbb{A}^i$ is the set of trading strategies $\pi\in\mathbb{H}^2(\mathbb{P}^{0,i},\mathbb{G}^{0,i},\mathbb{R}^{d_0})$ such that a family $\{R^{i,p}_\tau;\tau\in\mathcal{T}(\mathbb{G}^{0,i})\}$ is uniformly integrable.
\end{dfn}

\begin{rem}~\\
    \textup{(i)} If the BSDE \eqref{BSDE-optim} has no solution, we set $\mathbb{A}^i=\emptyset$.\\
    \textup{(ii)} Since $F^i$ and $\theta$ are unbounded, the method of Kobylanski \cite{Kobylanski2000BackwardSD} cannot be applied to show the well-posedness of \eqref{BSDE-optim}. 
    The property of quadratic growth BSDE with unbounded generator and terminal value is studied by Briand \& Hu \cite{briandBSDEQuadraticGrowth2006, briandQuadraticBSDEsConvex2008}. 
    In this chapter, however, we do not delve into the general well-posedness result of \eqref{BSDE-optim}, as we are going to search for a special solution in the exponential quadratic Gaussian framework.\\
    \textup{(iii)} The motivation of considering the BSDE \eqref{BSDE-optim} and the process $R^{i,p}$ is explained in Fujii \& Sekine \cite{fujiiMeanFieldEquilibriumPrice2023a} [Section 3.2]. 
    This method was originally proposed by Hu, Imkeller \& M\"{u}ller \cite{huUtilityMaximizationIncomplete2005a}.
\end{rem}

\begin{thm}~\\
    \label{optimality}
    Let Assumption \ref{asm1} and \ref{asm2} be in force. 
    For each $i\in\mathbb{N}$, assume further that the BSDE \eqref{BSDE-optim} has a solution $(Y^i,Z^{i,0},Z^i)\in\mathbb{S}^2(\mathbb{G}^{0,i},\mathbb{R})\times\mathbb{H}^2(\mathbb{G}^{0,i},\mathbb{R}^{1\times d_0})\times\mathbb{H}^2(\mathbb{G}^{0,i},\mathbb{R}^{1\times d})$
    and that the process $p^{i,*}:=(p^{i,*}_t)_{t\in[0,T]}$ defined by
        \[
            p^{i,*}_t := Z^{i,0}_t + \frac{\widehat{\theta}^\top_t}{\gamma},~~~t\in[0,T]
        \]
        belongs to $\mathcal{A}^i$. Then, $p^{i,*}$ is an optimal strategy for agent-$i$.
\end{thm}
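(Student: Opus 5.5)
The plan is the Hu--Imkeller--M\"uller martingale-optimality argument already used for Theorem~\ref{verification-part1}, now on $(\Omega,\mathcal{G},\mathbb{P},\mathbb{G})$ with the innovation $\widehat W^0$ in place of $W^0$ and $\widehat\theta$ in place of $\theta$. Since $d_0=n$ and $\sigma_t$ is invertible, $L_t=\mathbb{R}^{1\times d_0}$, so there is no projection and $Z^{i,0\|}=Z^{i,0}$.

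\textbf{Step 1: the family $\{R^{i,p}\}$ and its drift.} For $p\in\mathcal{A}^i$, apply It\^o's formula to $R^{i,p}_t=-\exp(-\gamma(\mathcal{W}^{i,p}_t-Y^i_t))$. Use $d\mathcal{W}^{i,p}=p\widehat\theta\,dt+p\,d\widehat W^0$ and the dynamics of \eqref{BSDE-optim}. The drift of $R^{i,p}$ is
\[
R^{i,p}_t\,\gamma\Bigl(-p_t\widehat\theta_t+Z^{i,0}_t\widehat\theta_t+\frac{|\widehat\theta_t|^2}{2\gamma}+\frac{\gamma}{2}|p_t-Z^{i,0}_t|^2\Bigr)
=R^{i,p}_t\,\frac{\gamma^2}{2}\Bigl|p_t-Z^{i,0}_t-\frac{\widehat\theta_t^\top}{\gamma}\Bigr|^2 ,
\]
obtained by completing the square. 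The $\frac{\gamma}{2}|Z^i|^2$ terms cancel against the quadratic variation from $dW^i$. Since $R^{i,p}<0$, this drift is nonpositive, and it vanishes exactly when $p=p^{i,*}$. The martingale part is $R^{i,p}\gamma\{(p-Z^{i,0})d\widehat W^0-Z^i dW^i\}$. Here I need $\widehat W^0$ to be a $(\mathbb{G}^{0,i},\mathbb{P})$-Brownian motion; this holds because it is a $(\mathbb{G}^0,\mathbb{P}^0)$-Brownian motion (the innovation lemma) and is independent of $\mathcal{F}^i$.

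\textbf{Step 2: supermartingale and martingale properties.} Each $R^{i,p}$ is therefore a local supermartingale. Take a localizing sequence $\tau_n\uparrow T$. By admissibility, $\{R^{i,p}_\tau;\tau\in\mathcal{T}(\mathbb{G}^{0,i})\}$ is uniformly integrable, so passing to the limit in $\mathbb{E}[R^{i,p}_{t\wedge\tau_n}\mid\mathcal{G}^{0,i}_s]\le R^{i,p}_{s\wedge\tau_n}$ makes $R^{i,p}$ a true supermartingale. For $p^{i,*}$ the drift is zero, so $R^{i,p^*}$ is a local martingale. Because $p^{i,*}\in\mathcal{A}^i$ by hypothesis, it is of class $\mathcal{D}$ and hence a true martingale.

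\textbf{Step 3: comparison.} We have $R^{i,p}_T=-\exp(-\gamma(\mathcal{W}^{i,p}_T-F^i))$ because $Y^i_T=F^i$, and $R^{i,p}_0=-\exp(-\gamma(\xi^i-Y^i_0))$ for every $p$. Taking expectations,
\[
\mathbb{E}\bigl[-e^{-\gamma(\mathcal{W}^{i,p}_T-F^i)}\bigr]\le\mathbb{E}[R^{i,p}_0]=\mathbb{E}[R^{i,p^*}_0]=\mathbb{E}\bigl[-e^{-\gamma(\mathcal{W}^{i,p^*}_T-F^i)}\bigr].
\]
This requires $\mathbb{E}|R^{i,p}_0|<\infty$, which follows from uniform integrability of the family. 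This gives optimality.

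The main obstacle is Step 2. No BMO structure is available here, because $\widehat\theta$, $Y^i$ and $F^i$ are unbounded Gaussian-type quantities. The argument must therefore rely entirely on the admissibility assumption (class $\mathcal{D}$) rather than on Kazamaki-type estimates, and must also check that the stochastic integrals are well defined given $Z\in\mathbb{H}^2$ and $p\in\mathbb{H}^2$. The second delicate point is the filtration claim in Step 1, that $\widehat W^0$ and $W^i$ are independent $\mathbb{G}^{0,i}$-Brownian motions. I would justify it through the product structure and the martingale-representation remark.
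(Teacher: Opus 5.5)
Your proposal is correct and follows the paper's proof essentially step by step. You apply It\^o's formula to $R^{i,p}$, complete the square to get the drift $\frac{\gamma^2}{2}R^{i,p}_t|p_t-Z^{i,0}_t-\widehat\theta_t^\top/\gamma|^2\le 0$, and derive the supermartingale and martingale properties from the class-$\mathcal{D}$ admissibility condition; your localization argument only spells out what the paper leaves implicit. You then compare expectations using $R^{i,p}_0=-e^{-\gamma(\xi^i-Y^i_0)}$, exactly as the paper does.

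The only slip is a harmless sign in the martingale part of $dR^{i,p}$, which should read $R^{i,p}_t\gamma\{-(p_t-Z^{i,0}_t)d\widehat W^0_t+Z^i_t dW^i_t\}$.
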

\noindent
\textbf{\textit{Proof}}\\
To begin with, notice that $R^{i,p}_0=-e^{-\gamma(\xi^i-Y_0^i)}$ is independent of the control variable $p$. By Ito formula, we have
\begin{equation*}
    \begin{split}
        dR^{i,p}_t 
        &=
        R^{i,p}_t \Bigl(-\gamma d(\mathcal{W}^{i,p}_t-Y^i_t) + \frac{\gamma^2}{2}d\langle\mathcal{W}^{i,p}-Y^i\rangle_t\Bigr)\\
        &=
        \frac{\gamma^2}{2} R^{i,p}_t \Bigl|p_t- Z^{i,0}_t - \frac{\widehat{\theta}^\top_t}{\gamma} \Bigr|^2 dt + R^{i,p}_t (-\gamma(p_t-Z^{i,0}_t)d\widehat{W}^0_t + \gamma Z^i_t dW^i_t ).
    \end{split}
\end{equation*}
Then, for any $p\in\mathcal{A}^i$, we have
\[
    \frac{\gamma^2}{2} R^{i,p}_t \Bigl|p_t- Z^{i,0}_t - \frac{\widehat{\theta}^\top_t}{\gamma} \Bigr|^2 \leq 0,~~~dt\otimes \mathbb{P}\text{-}\mathrm{a.e.}
\]
Together with the definition of admissibility, this clearly implies that the process $R^{i,p}$ is a $(\mathbb{G}^{0,i},\mathbb{P}^{0,i})$-supermartingale for every $p\in\mathcal{A}^i$. Moreover, if we choose $p=p^{i,*}$, it holds that
\[
    dR^{i,p^{i,*}}_t = R^{i,p^{i,*}}_t (- \widehat{\theta}^\top_t d\widehat{W}^0_t + \gamma Z^i_t dW^i_t ).
\]
Having assumed $p^{i,*}\in\mathcal{A}^i$, we deduce that the process $R^{i,p^{i,*}}$ is a martingale. With these observations, we obtain a relation
\begin{equation*}
    \begin{split}
    \mathbb{E}\Bigl[-\exp\Bigl(-\gamma(\mathcal{W}^{i,p}_T-F^i)\Bigr) \Bigr] 
    &= \mathbb{E}[R^{i,p}_T]\\
    &\leq \mathbb{E}\Bigl[-\exp\Bigl(-\gamma(\xi^i-Y_0^i)\Bigr) \Bigr]\\
    &= \mathbb{E}[R^{i,p^{i,*}}_T]\\
    &= \mathbb{E}\Bigl[-\exp\Bigl(-\gamma(\mathcal{W}^{i,p^{i,*}}_T-F^i)\Bigr) \Bigr]
    \end{split}
\end{equation*}
for any $p\in\mathcal{A}^i$. This indicates the optimality of $p^{i,*}$. $\square$

\section{Mean Field Equilibrium Model Under Partial Observation}
In this section, we construct the risk-premium process endogenously under the asymptotic market clearing condition, whose definition is given below. 
Section 6.3.1 introduces a relevant mean field BSDE and finds its solution in a semi-analytical form by deriving an associated system of ordinary differential equations. 
Section 6.3.2 verifies that the solution obtained in Section 6.3.1 does indeed characterize the optimal strategy and the asymptotic market clearing. 
In Section 6.3.3, we derive the dynamics of the market risk-premium process endogenously using the Kalman-Bucy filtering theory.

\subsection{The Mean Field BSDE}
\begin{dfn}~(Asymptotic market clearing condition)\\
    \label{market-clearing}
    The financial market satisfies the asymptotic market clearing condition (or the market clearing condition in the large population limit) if
    \begin{equation}
        \label{MC-eqn}
        \lim_{N\to\infty}\frac{1}{N}\sum_{i=1}^N \pi_t^{i,*} = 0,~~~dt\otimes \mathbb{P}\text{-}\mathrm{a.e.}
    \end{equation}
    holds. Here, $\pi_t^{i,*}$ denotes the optimal trading strategy of the agent-$i$.
\end{dfn}
From an economic perspective, this condition means that the excess demand (or supply) per capita converges to zero (in the sense of $dt\otimes \mathbb{P}$-almost everywhere) as the population of investors tends to infinity.
For each $i\in\mathbb{N}$, if all assumptions in Theorem \ref{optimality} hold, 
\[
    p^{i,*}_t :=(\pi_t^{i,*})^\top\sigma_t = Z^{i,0}_t + \frac{\widehat{\theta}^\top_t}{\gamma},~~~t\in[0,T]
\]
is an optimal strategy for agent-$i$. In this case, the asymptotic market clearing condition \eqref{MC-eqn} requires $\widehat{\theta}$ to satisfy
\[
    \lim_{N\to\infty}\frac{1}{N}\sum_{i=1}^N Z^{i,0}_t + \frac{\widehat\theta_t^\top}{\gamma} = 0,~~~dt\otimes \mathbb{P}\text{-}\mathrm{a.e.},
\] 
which is inconsistent with the assumption that $\widehat\theta$ is $\mathbb{G}^0$-adapted. Nevertheless, since the interactions among agents are symmetric and made only through $\widehat{\theta}$, the random variables $(Z^{i,0}_t)_{i\in\mathbb{N}}$ are expected to be exchangeable for each $t\in[0,T]$. 
Moreover, $\mathcal{F}^i_t$ and $\mathcal{F}^j_t$ being independent for $i\neq j$, we can expect, at least heuristically, that
\[
    \lim_{N\to\infty}\frac{1}{N}\sum_{i=1}^N Z^{i,0}_t = \mathbb{E}[Z^{i,0}_t|\mathcal{G}^0],~~~\mathbb{P}\text{-}\mathrm{a.s.}
\]
for each $t\in[0,T]$\footnote{
    For a $\mathbb{G}^{0,i}$-adapted process $X$, we have $\mathbb{E}[X_t|\mathcal{G}^0]=\mathbb{E}[X_t|\mathcal{G}^0_t],~\mathbb{P}^0$-a.s. for each $t\in[0,T]$ since $X_t$ is independent of $(\widehat{W}^0_s-\widehat{W}^0_t)_{s\in[t,T]}$.
}. 
For these reasons, we expect that the risk-premium process $\theta\in\mathbb{H}^2(\mathbb{P}^0,\mathbb{F}^0,\mathbb{R}^{d_0})$ satisfying
\begin{equation}
    \label{hat_MRP_eqbm}
    \widehat{\theta}_t = -\gamma\mathbb{E}[Z^{i,0}_t|\mathcal{G}^0]^\top,~~~t\in[0,T]
\end{equation}
achieves the asymptotic market clearing condition. Such an observation motivates us to study the following mean field BSDE defined on $(\Omega^{0,i},\mathcal{G}^{0,i},\mathbb{P}^{0,i},\mathbb{G}^{0,i})$ for each $i\in\mathbb{N}$:
\begin{equation}
    \begin{split}
        \label{MF-BSDE1}
            Y^i_t =& F^i+ \int_t^T  \Bigl(\gamma Z^{i,0}_s\mathbb{E}[Z^{i,0}_s|\mathcal{G}^0]^{\top} - \frac{\gamma}{2}|\mathbb{E}[Z^{i,0}_s|\mathcal{G}^0]|^2 + \frac{\gamma}{2}|Z^i_s|^2 \Bigr)ds\\
            & - \int_t^T Z^{i,0}_s d\widehat{W}^0_s - \int_t^T Z^i_s dW^i_s,~~~t\in[0,T].
    \end{split}
  \end{equation}

The mean field BSDE \eqref{MF-BSDE1} can be shown to have a semi-analytical solution under certain assumptions. See also Fujii \& Sekine \cite{fujiiMeanFieldEquilibriumPrice2023a} [Section 4.1].

\begin{thm}~\\
    \label{EQG solution}
    Let Assumption \ref{asm1} and \ref{asm2} be in force. In addition, assume that the system of ordinary differential equations
    \begin{equation}
        \begin{split}
            \label{Riccati eqn}
            &\dot{A}_{00}(t) = -\gamma A_{00}(t)\Sigma_0\Sigma_0^\top A_{00}(t)  - \gamma A_{10}(t)^\top \Sigma\Sigma^\top A_{10}(t) + 2K_0  A_{00}(t), \\
            &\dot{A}_{11}(t) = -\gamma A_{11}(t) \Sigma\Sigma^\top A_{11}(t)   + 2K  A_{11}(t), \\
            &\dot{A}_{10}(t)  = -\gamma A_{10}(t)\Sigma_0\Sigma_0^\top A_{00}(t) - \gamma A_{11}(t) \Sigma\Sigma^\top A_{10}(t) + (K_0+K)A_{10}(t) ,\\
            &\dot{B}_0(t)=\Bigl(- \gamma A_{00}(t)\Sigma_0\Sigma_0^\top  + K_0\Bigr)B_{0}(t) - \gamma A_{10}(t)^\top\Sigma\Sigma^\top B_{1}(t) - K_0A_{00}(t)m_0 - KA_{10}(t)^\top m,\\
            &\dot{B}_1(t)= \Bigl(-\gamma A_{11}(t)\Sigma\Sigma^\top + K\Bigr)B_1(t) - \gamma \Bigl(A_{10}(t){\Sigma}_0{\Sigma}_0^\top A_{10}(t)^\top \mu_t^1 + A_{10}(t)\Sigma_0\Sigma_0^\top B_0(t)\Bigr)\\
            &~~~~~~~~~~~- KA_{11}(t)m - K_0A_{10}(t)m_0, \\
            &\dot{C}(t)=  - \frac{\gamma}{2}\langle \Sigma_0^\top B_0(t),\Sigma_0^\top B_0(t) \rangle - \frac{\gamma}{2} \langle \Sigma^\top B_1(t),\Sigma^\top B_1(t) \rangle - \langle K_0 B_0(t) ,m_0 \rangle - \langle K B_1(t) ,m \rangle \\
            &~~~~~~~~~~~+ \frac{\gamma}{2}\langle A_{10}(t){\Sigma}_0{\Sigma}_0^\top A_{10}(t)^\top \mu_t^1,\mu_t^1 \rangle - \frac{1}{2}\mathrm{tr}[A_{00}(t)\Sigma_0\Sigma_0^\top] - \frac{1}{2}\mathrm{tr}[A_{11}(t)\Sigma\Sigma^\top],\\
            &A_{00}(T)=A_{00}^F,~~ A_{11}(T)=A_{11}^F, ~~A_{10}(T)=A_{10}^F, ~~B_0(T)=B_0^F, ~~B_1(T)=B_1^F, ~~C(T)=C^F
        \end{split}
    \end{equation}
    for $t\in[0,T]$ has a global solution $(A_{00},A_{11},A_{10},B_0,B_1,C)\in\mathcal{C}^1([0,T];\mathbb{M}_{d_0})\times\mathcal{C}^1([0,T];\mathbb{M}_{d})\times\mathcal{C}^1([0,T];\mathbb{R}^{d\times d_0})\times\mathcal{C}^1([0,T];\mathbb{R}^{d_0})\times\mathcal{C}^1([0,T];\mathbb{R}^{d})\times\mathcal{C}^1([0,T];\mathbb{R})$. Here, $\mu^1_t:=\mathbb{E}[x^1_t] = \mathbb{E}[x_0^1]e^{-K t}+m(1-e^{-K t})$ for $t\in[0,T]$.
    Then, for each $i\in\mathbb{N}$, the processes $(Y^i,Z^{i,0},Z^i)\in\mathbb{S}^2(\mathbb{P}^{0,i},\mathbb{G}^{0,i},\mathbb{R}) \times \mathbb{H}^2(\mathbb{P}^{0,i},\mathbb{G}^{0,i},\mathbb{R}^{1\times d_0}) \times \mathbb{H}^2(\mathbb{P}^{0,i},\mathbb{G}^{0,i},\mathbb{R}^{1\times d})$, defined by
    \begin{equation}
        \begin{split}
            \label{EQG solution YZ}
            &Y^i_t  := \frac{1}{2} \langle A_{00}(t)x^0_t,x^0_t\rangle + \frac{1}{2} \langle A_{11}(t)x^i_t,x^i_t\rangle + \langle A_{10}(t)x^0_t,x^i_t\rangle + \langle B_0(t),x^0_t\rangle + \langle B_1(t),x^i_t\rangle + C(t), \\
            &Z_t^{i,0} := \Bigl\{\Sigma_0^\top (A_{00}(t)x^0_t + A_{10}(t)^\top x^i_t + B_0(t))\Bigr\}^\top,\\
            &Z_t^i := \Bigl\{\Sigma^\top (A_{10}(t)x^0_t + A_{11}(t)x^i_t + B_1(t))\Bigr\}^\top
        \end{split}
    \end{equation}
    for $t\in[0,T]$, solve the mean field BSDE \eqref{MF-BSDE1}.
\end{thm}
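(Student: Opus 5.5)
The plan is a direct verification: apply Ito's formula to the quadratic ansatz \eqref{EQG solution YZ}, and match its drift and diffusion against the mean field BSDE \eqref{MF-BSDE1}. This mirrors the argument that led to Theorem~\ref{EQG solution-habit} in Chapter 5. Here the structure is simpler, because there is no projection ($L=\mathbb{R}^{1\times d_0}$ since $\sigma_t$ is invertible) and no $Y$-linear term.

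\textbf{Step 1 (Ito expansion).} Fix $i$. The factor $x^0$ is driven by the $(\mathbb{G}^0,\mathbb{P}^0)$-Brownian motion $\widehat W^0$, and $x^i$ by $W^i$. These are independent Brownian motions under $\mathbb{G}^{0,i}$, so Ito's formula applies to $Y^i_t$ as defined in \eqref{EQG solution YZ}. This gives the analogue of \eqref{Y-ito-habit} with $W^0$ replaced by $\widehat W^0$. The diffusion coefficients are exactly $Z^{i,0}_t$ and $Z^i_t$ as in \eqref{EQG solution YZ}. The drift is a quadratic form in $(x^0_t,x^i_t)$ whose coefficients involve $\dot A_{00}-2K_0A_{00}$, and so on. The terminal conditions in \eqref{Riccati eqn} give $Y^i_T=F^i$ by \eqref{EQG-F}.

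\textbf{Step 2 (evaluating the driver).} Next compute $\mathbb{E}[Z^{i,0}_s|\mathcal{G}^0]$. Since $x^0$ is $\mathbb{G}^0$-adapted and $x^i$ is independent of $\mathcal{G}^0$ with $\mathbb{E}[x^i_s]=\mu^1_s$, we get
\[
\mathbb{E}[Z^{i,0}_s|\mathcal{G}^0]=\bigl\{\Sigma_0^\top(A_{00}(s)x^0_s+A_{10}(s)^\top\mu^1_s+B_0(s))\bigr\}^\top.
\]
By completing the square as before, the driver equals
\[
-\tfrac{\gamma}{2}\bigl|Z^{i,0}_s-\mathbb{E}[Z^{i,0}_s|\mathcal{G}^0]\bigr|^2+\tfrac{\gamma}{2}(|Z^{i,0}_s|^2+|Z^i_s|^2),
\]
with $Z^{i,0}_s-\mathbb{E}[Z^{i,0}_s|\mathcal{G}^0]=(x^i_s-\mu^1_s)^\top A_{10}(s)\Sigma_0$. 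Expanding gives a quadratic form in $(x^0_s,x^i_s)$. Its coefficients match \eqref{f-EQG} with $\hat\Sigma_0=\Sigma_0$, $\check\Sigma_0=0$, and all $\frac{\gamma(1+b\zeta)}{\beta}$ and $\widetilde g$ terms removed. In particular the $x^ix^i$ coefficient is $\frac{\gamma}{2}A_{11}\Sigma\Sigma^\top A_{11}$: the $A_{10}\Sigma_0\Sigma_0^\top A_{10}^\top$ contributions cancel.

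\textbf{Step 3 (matching and integrability).} Comparing coefficients of $x^0x^0$, $x^ix^i$, $x^0x^i$, $x^0$, $x^i$ and the constant shows that the drift of $Y^i$ equals minus the driver exactly when $(A_{00},\dots,C)$ satisfies \eqref{Riccati eqn}. That is assumed, so integrating from $t$ to $T$ yields \eqref{MF-BSDE1}. Finally, the coefficients are continuous on $[0,T]$ and hence bounded, and $x^0,x^i$ are Gaussian OU processes with all moments bounded uniformly in $t$. It follows that $Y^i\in\mathbb{S}^2$ and $Z^{i,0},Z^i\in\mathbb{H}^2$, so the stochastic integrals are true martingales.

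The main obstacle is careful bookkeeping in Step 2. The cross term and the $\mu^1$-dependent linear and constant terms must be assigned correctly: the $\mu^1$ terms appear only in $\dot B_1$ and $\dot C$. Symmetrization of the quadratic forms must also be handled consistently, so that the $\frac12$ factors on $A_{00}$ and $A_{11}$ line up with \eqref{Riccati eqn}.
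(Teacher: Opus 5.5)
Your proposal is correct and is essentially the paper's proof. The paper applies Ito's formula to the quadratic ansatz and computes $\mathbb{E}[Z^{i,0}_t|\mathcal{G}^0]$ from the facts that $x^0_t$ is $\mathcal{G}^0$-measurable while $x^i_t$ is independent of $\mathcal{G}^0$ with mean $\mu^1_t$. It then uses the Riccati system to show that the drift equals minus the driver, and gets $Y^i_T=F^i$ from the terminal conditions. Your completed-square bookkeeping (including the cancellation of the $A_{10}\Sigma_0\Sigma_0^\top A_{10}^\top$ terms in the $x^ix^i$ coefficient) and your integrability check are small additions that the paper leaves implicit. For the integrability check, note that $Y^i\in\mathbb{S}^2$ needs $\mathbb{E}[\sup_{t}|x_t|^4]<\infty$. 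This follows from standard SDE (BDG) estimates, not from moment bounds that are merely uniform in $t$.
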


\noindent
\textbf{\textit{Proof}}\\
By the terminal condition of \eqref{Riccati eqn} and Assumption \ref{asm2} (iii), it follows that $Y^i_T = F^i$. Applying Ito formula to \eqref{EQG solution YZ}, we have
\begin{equation*}
    \begin{split}
        \label{Y-ito}
        dY^i_t 
        &=
        \Bigl\{\Bigl\langle \Bigl(\frac{1}{2} \dot{A}_{00}(t) - K_0 A_{00}(t)\Bigr)x^0_t,x^0_t\Bigr\rangle + \Bigl\langle \Bigl(\frac{1}{2} \dot{A}_{11}(t) - K A_{11}(t)\Bigr)x^i_t,x^i_t\Bigr\rangle\\
        &~~~+ \Bigl\langle \Bigl( \dot{A}_{10}(t) - (K_0+K)A_{10}(t)\Bigr)x^0_t,x^i_t\Bigr\rangle \Bigr.~~~~ \\
        &~~~+ \langle \dot{B}_0(t)-K_0B_0(t) + K_0A_{00}(t)m_0 + K A_{10}(t)^\top m,x^0_t\rangle \\
        &~~~+ \langle \dot{B}_1(t)-K B_1(t) + K A_{11}(t)m + K_0A_{10}(t)m_0,x^i_t\rangle \\
        &~~~+ \Bigl.\dot{C}(t) + \langle K_0 B_0(t) ,m_0 \rangle + \langle K B_1(t) ,m \rangle +\frac{1}{2}\mathrm{tr}[A_{00}(t)\Sigma_0\Sigma_0^\top] + \frac{1}{2}\mathrm{tr}[A_{11}(t)\Sigma\Sigma^\top] \Bigr\}dt \\
        &~~~+   \langle\Sigma_0^\top (A_{00}(t)x^0_t + A_{10}(t)^\top x^i_t + B_0(t)) ,d\widehat{W}^0_t \rangle + \langle \Sigma^\top (A_{10}(t)x^0_t + A_{11}(t)x^i_t + B_1(t)) , dW^i_t \rangle.\\
        &=
        \Bigl\{\Bigl\langle \Bigl(\frac{1}{2} \dot{A}_{00}(t) - K_0 A_{00}(t)\Bigr)x^0_t,x^0_t\Bigr\rangle + \Bigl\langle \Bigl(\frac{1}{2} \dot{A}_{11}(t) - K A_{11}(t)\Bigr)x^i_t,x^i_t\Bigr\rangle \\
        &~~~+ \Bigl\langle \Bigl( \dot{A}_{10}(t) - (K_0+K)A_{10}(t)\Bigr)x^0_t,x^i_t\Bigr\rangle \Bigr.~~~~ \\
        &~~~+ \langle \dot{B}_0(t)-K_0B_0(t) + K_0A_{00}(t)m_0 + K A_{10}(t)^\top m,x^0_t\rangle \\
        &~~~+ \langle \dot{B}_1(t)-K B_1(t) + K A_{11}(t)m + K_0A_{10}(t)m_0,x^i_t\rangle \\
        &~~~+ \Bigl.\dot{C}(t) + \langle K_0 B_0(t) ,m_0 \rangle + \langle K B_1(t) ,m \rangle +\frac{1}{2}\mathrm{tr}[A_{00}(t)\Sigma_0\Sigma_0^\top] + \frac{1}{2}\mathrm{tr}[A_{11}(t)\Sigma\Sigma^\top] \Bigr\}dt \\
        &~~~+   Z^{i,0}_t d\widehat{W}^0_t + Z^i_t dW^i_t
    \end{split}
\end{equation*}
for $t\in[0,T]$. By the ODE \eqref{Riccati eqn}, it holds that
\begin{equation}
    \begin{split}
        &\Bigl\langle \Bigl(\frac{1}{2} \dot{A}_{00}(t) - K_0 A_{00}(t)\Bigr)x^0_t,x^0_t\Bigr\rangle + \Bigl\langle \Bigl(\frac{1}{2} \dot{A}_{11}(t) - K A_{11}(t)\Bigr)x^i_t,x^i_t\Bigr\rangle\\
        &~~~+ \Bigl\langle \Bigl( \dot{A}_{10}(t) - (K_0+K)A_{10}(t)\Bigr)x^0_t,x^i_t\Bigr\rangle \Bigr.~~~~ \\
        &~~~+ \langle \dot{B}_0(t)-K_0B_0(t) + K_0A_{00}(t)m_0 + K A_{10}(t)^\top m,x^0_t\rangle \\
        &~~~+ \langle \dot{B}_1(t)-K B_1(t) + K A_{11}(t)m + K_0A_{10}(t)m_0,x^i_t\rangle \\
        &~~~+ \Bigl.\dot{C}(t) + \langle K_0 B_0(t) ,m_0 \rangle + \langle K B_1(t) ,m \rangle +\frac{1}{2}\mathrm{tr}[A_{00}(t)\Sigma_0\Sigma_0^\top] + \frac{1}{2}\mathrm{tr}[A_{11}(t)\Sigma\Sigma^\top]  \\
        &=
        -\Bigl\langle \frac{\gamma}{2}\Bigl(A_{00}(t)\Sigma_0\Sigma_0^\top A_{00}(t) + A_{10}(t)^\top \Sigma\Sigma^\top A_{10}(t) \Bigr) x_t^0,x_t^0 \Bigr\rangle - \Bigl\langle \frac{\gamma}{2}A_{11}(t) \Sigma\Sigma^\top A_{11}(t) x_t^i ,x_t^i \Bigr\rangle \\
        &~~~-\Bigl\langle \gamma (A_{10}(t)\Sigma_0\Sigma_0^\top A_{00}(t) + A_{11}(t) \Sigma\Sigma^\top A_{10}(t) )x_t^0,x_t^i \Bigr\rangle \\
        &~~~-\Bigl\langle \gamma (A_{00}(t)\Sigma_0\Sigma_0^\top B_{0}(t) + A_{10}(t)^\top\Sigma\Sigma^\top B_{1}(t))  ,x_t^0 \Bigr\rangle \\
        &~~~-\Bigl\langle \gamma (A_{10}(t){\Sigma}_0{\Sigma}_0^\top A_{10}(t)^\top \mu_t^1 + A_{10}(t)\Sigma_0\Sigma_0^\top B_0(t) + A_{11}(t)\Sigma\Sigma^\top B_1(t)) ,x_t^i \Bigr\rangle\\
        &~~~+ \frac{\gamma}{2}\langle A_{10}(t){\Sigma}_0{\Sigma}_0^\top A_{10}(t)^\top \mu_t^1,\mu_t^1 \rangle - \frac{\gamma}{2}\langle \Sigma_0^\top B_0(t),\Sigma_0^\top B_0(t) \rangle - \frac{\gamma}{2}\langle \Sigma^\top B_1(t),\Sigma^\top B_1(t) \rangle\\
        &=
        -\gamma Z^{i,0}_t\mathbb{E}[Z^{i,0}_t|\mathcal{G}^0]^{\top} + \frac{\gamma}{2}|\mathbb{E}[Z^{i,0}_t|\mathcal{G}^0]|^2 - \frac{\gamma}{2}|Z^i_t|^2
    \end{split}
\end{equation}
for $t\in[0,T]$. Here, we used
\begin{equation}
    \begin{split}
        \mathbb{E}[Z^{i,0}_t|\mathcal{G}^0] = \Bigl\{\Sigma_0^\top (A_{00}(t)x^0_t + A_{10}(t)^\top \mu^1_t + B_0(t))\Bigr\}^\top,~~~t\in[0,T]
    \end{split}
\end{equation}
in the last equality, since $x^0_t$ is $\mathcal{G}^0$-measurable and $x^i_t$ ($i\in\mathbb{N}$) is independent of $\mathcal{G}^0$ for every $t\in[0,T]$.\footnote{By the construction of the probability spaces, $W^i$ ($i\in\mathbb{N}$) is independent of ${W}^0$ on $(\Omega, \mathcal{F}, \mathbb{P})$.} These observations show
\begin{equation}
    \begin{split}
        dY^i_t &= -\Bigl(\gamma Z^{i,0}_t\mathbb{E}[Z^{i,0}_t|\mathcal{G}^0]^{\top} - \frac{\gamma}{2}|\mathbb{E}[Z^{i,0}_t|\mathcal{G}^0]|^2 + \frac{\gamma}{2}|Z^i_t|^2 \Bigr)dt + Z^{i,0}_t d\widehat{W}^0_t + Z^i_t dW^i_t,~~~t\in[0,T]
    \end{split}
\end{equation}
with $Y^i_T = F^i$, $\mathbb{P}^{0,i}\text{-}\mathrm{a.s.}$. This shows that $(Y^i,Z^{i,0},Z^i)$ solve the mean field BSDE \eqref{MF-BSDE1}. $\square$

\begin{rem}~\\
    \label{rem1}
    \textup{(i)} Notice that, for each $i\in\mathbb{N}$, if $A^F_{10}=0$, namely $F^i$ has no cross-term of $x^i_T$ and $x^0_T$, we can write $F^i = \widetilde{F}^0 + \widetilde{F}^i$, where 
    \begin{equation}
        \begin{split}
        \widetilde{F}^0 := \frac{1}{2} \langle A^{F}_{00}x^0_T,x^0_T\rangle + \langle B^{F}_0,x^0_T\rangle  + C^{F},~~~
        \widetilde{F}^i :=\frac{1}{2} \langle A^{F}_{11}x^i_T,x^i_T\rangle  + \langle B_1^{F},x^i_T\rangle.
        \end{split}
    \end{equation}  
    It is clear that $\widetilde{F}^0$ (resp. $\widetilde{F}^i$) is an $\mathcal{G}^0_T$-measurable (resp. $\mathcal{F}^i_T$-measurable) random variable. In this case, we can find a solution to the mean field BSDE \eqref{MF-BSDE1} by considering the following two non-mean field BSDEs:
    \begin{equation}
        \begin{split}
            \label{separated-BSDE}
            \widetilde{Y}^0_t &= \widetilde{F}^0 + \int_t^T \frac{\gamma}{2}|\widetilde{Z}^0_s|^2ds - \int_t^T \widetilde{Z}^0_s d\widehat{W}^0_s,~~~t\in[0,T],\\
            \widetilde{Y}^i_t &= \widetilde{F}^i + \int_t^T \frac{\gamma}{2}|\widetilde{Z}^i_s|^2ds - \int_t^T \widetilde{Z}^i_s dW^i_s,~~~t\in[0,T].
        \end{split}
    \end{equation}
    Indeed, if the BSDEs $\eqref{separated-BSDE}$ have solutions $(\widetilde{Y}^0,\widetilde{Z}^0)$ and $(\widetilde{Y}^i,\widetilde{Z}^i)$, we deduce that $\widetilde{Y}^0$ also solves
    \[
        \widetilde{Y}^0_t = \widetilde{F}^0 + \int_t^T \Bigl(\gamma Z^{0}_s\mathbb{E}[Z^{0}_s|\mathcal{G}^0]^{\top} - \frac{\gamma}{2}|\mathbb{E}[Z^{0}_s|\mathcal{G}^0]|^2\Bigr)ds - \int_t^T \widetilde{Z}^0_s d\widehat{W}^0_s,~~~t\in[0,T]
    \]
    since $\mathbb{E}[Z^{0}_t|\mathcal{G}^0] = Z^{0}_t$. Then, it is clear that $(\widetilde{Y}^0 + \widetilde{Y}^i, \widetilde{Z}^0, \widetilde{Z}^i)$ solves the mean field BSDE \eqref{MF-BSDE1}. \\
    \textup{(ii)} If $\exp(\widetilde{F}^0)$ and $\exp(\widetilde{F}^i)$ are integrable, $\eqref{separated-BSDE}$ have closed-form solutions: 
    \[
        \widetilde{Y}^0_t=\log\mathbb{E}[\exp(\widetilde{F}^0)|\mathcal{G}^0_t],~~~\widetilde{Y}^i_t=\log\mathbb{E}[\exp(\widetilde{F}^i)|\mathcal{F}^i_t],~~~t\in[0,T].
    \]
    \textup{(iii)} For instance, $F^i$ with $A^F_{10}=0$ can be interpreted as a liability that is additively separated into the performance of a benchmark portfolio $\widetilde{F}^0$ quoted in the market and an additional gain $\widetilde{F}^i$ required by the manager or clients of agent-$i$.\\  
    \textup{(iv)} By the local Lipschitz property, the solution of \eqref{Riccati eqn} is locally unique if it exists. As a result, \eqref{EQG solution YZ} is the unique solution to the BSDE \eqref{MF-BSDE1} among those of the form \eqref{EQG solution YZ}.\\
    \textup{(v)} In this model, agents are assumed to be homogeneous in the risk-aversion parameter in order to simplify the mathematical analysis. 
    We may possibly allow heterogeneity in the risk-aversion parameter, namely agent-$i$'s risk-aversion parameter is expressed by $\mathcal{F}^i_0$-measurable positive random variable $\gamma^i$ for each $i\in\mathbb{N}$
    \footnote{The definition of the filtered probability space $(\Omega^i,\mathcal{F}^i,\mathbb{P}^i,\mathbb{F}^i)$ should be modified to make $\gamma^i$ measurable: $\mathcal{F}^i_0$ is set to be a completion of $\sigma(\xi^i,x_0^i,\gamma^i)$. 
    We also assume that $0<\underline{\gamma}\leq\gamma^i\leq \overline{\gamma}$ ($i\in\mathbb{N}$) for some constants $0<\underline{\gamma}\leq\overline{\gamma}$ and that $(\gamma^i)_{i\in\mathbb{N}}$ are i.i.d. on $(\Omega,\mathcal{F},\mathbb{P})$. See Chapter 4 for the general settings.}.
    In such a case, however, we need to consider a system of mean field type ODEs, whose well-posedness is much more difficult to prove than \eqref{Riccati eqn}.
\end{rem}

\subsection{Optimal Control and Asymptotic Market Clearing Condition}
Suppose that the equation \eqref{Riccati eqn} has a global solution $(A_{00},A_{11},A_{10},B_0,B_1,C)\in\mathcal{C}^1([0,T];\mathbb{M}_{d_0})\times\mathcal{C}^1([0,T];\mathbb{M}_{d})\times\mathcal{C}^1([0,T];\mathbb{R}^{d\times d_0})\times\mathcal{C}^1([0,T];\mathbb{R}^{d_0})\times\mathcal{C}^1([0,T];\mathbb{R}^{d})\times\mathcal{C}^1([0,T];\mathbb{R})$
and define the processes $(Y^i,Z^{i,0},Z^i)$ by \eqref{EQG solution YZ}.
From \eqref{hat_MRP_eqbm}, if the market risk-premium process $\theta$ satisfies
\begin{equation}
    \label{eqbm-theta}
    \widehat\theta_t := \mathbb{E}[\theta_t|\mathcal{G}^0_t] = -\gamma\mathbb{E}[Z^{i,0}_t|\mathcal{G}^0]^\top = -\gamma {\Sigma}_0^\top \Bigl(A_{00}(t)x^0_t + A_{10}(t)^\top \mu^1_t + B_0(t) \Bigr),~~~t\in[0,T],
\end{equation}
we expect that the asymptotic market clearing condition is satisfied with strategies
\begin{equation}
    \begin{split}
        \label{EQG-optimal}
        p^{i,*}_t &:= (\pi^{i,*}_t)^\top\sigma_t := Z^{i,0}_t + \frac{\widehat\theta_t^\top}{\gamma},~~~t\in[0,T],~~~i\in\mathbb{N}.
    \end{split}
\end{equation}
The following theorem proves this observation under additional assumptions.
\begin{thm} ~\\
    \label{EQG-verification}
    Let Assumptions \ref{asm1} and \ref{asm2} be in force. Assume further that the equation \eqref{Riccati eqn} has a global solution $(A_{00},A_{11},A_{10},B_0,B_1,C)\in\mathcal{C}^1([0,T];\mathbb{M}_{d_0})\times\mathcal{C}^1([0,T];\mathbb{M}_{d})\times\mathcal{C}^1([0,T];\mathbb{R}^{d\times d_0})\times\mathcal{C}^1([0,T];\mathbb{R}^{d_0})\times\mathcal{C}^1([0,T];\mathbb{R}^{d})\times\mathcal{C}^1([0,T];\mathbb{R})$
    and that $\mathrm{Var}(x^1_0)^{-1}-\gamma A_{11}(0)$ is a positive definite matrix. Then, if the market risk-premium process $\theta$ satisfies \eqref{eqbm-theta}, the following statements hold. \\
    (1) For each $i\in\mathbb{N}$, the process $p^{i,*}$, defined by \eqref{EQG-optimal}, is an optimal strategy for agent-$i$. \\
    (2) The asymptotic market clearing condition \eqref{MC-eqn} is satisfied as long as each agent adopts \eqref{EQG-optimal} as his/her optimal strategy.\\
    Here, $\mathrm{Var}(x^1_0)$ is the covariance matrix of $x^1_0$, i.e. $\mathrm{Var}(x^1_0):=\mathbb{E}[(x^1_0-\mathbb{E}[x^1_0])(x^1_0-\mathbb{E}[x^1_0])^\top]$ and the processes $(Y^i,Z^{i,0},Z^i)\in\mathbb{S}^2(\mathbb{P}^{0,i},\mathbb{G}^{0,i},\mathbb{R}) \times \mathbb{H}^2(\mathbb{P}^{0,i},\mathbb{G}^{0,i},\mathbb{R}^{1\times d_0}) \times \mathbb{H}^2(\mathbb{P}^{0,i},\mathbb{G}^{0,i},\mathbb{R}^{1\times d})$ are given by \eqref{EQG solution YZ}. 
\end{thm}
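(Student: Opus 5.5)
For (1), I will invoke Theorem \ref{optimality}. The processes $(Y^i,Z^{i,0},Z^i)$ from \eqref{EQG solution YZ} solve the mean field BSDE \eqref{MF-BSDE1} by Theorem \ref{EQG solution}. Once $\widehat\theta$ is given by \eqref{eqbm-theta}, we have $\widehat\theta^\top/\gamma=-\mathbb{E}[Z^{i,0}_t|\mathcal{G}^0]$, and substituting this into the driver of \eqref{BSDE-optim} recovers exactly the driver of \eqref{MF-BSDE1}. So the same triple solves \eqref{BSDE-optim}, and what remains is to show that $p^{i,*}\in\mathcal{A}^i$.

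First I note that
\[
p^{i,*}_t=Z^{i,0}_t-\mathbb{E}[Z^{i,0}_t|\mathcal{G}^0]=(x^i_t-\mu^1_t)^\top A_{10}(t)\Sigma_0,
\]
which is Gaussian with bounded coefficients, so $\pi^{i,*}\in\mathbb{H}^2$ because $\sigma^{-1}$ is bounded. The main obstacle is the uniform integrability of $R^{i,p^{i,*}}$. From the proof of Theorem \ref{optimality},
\[
R^{i,p^{i,*}}_t=-e^{-\gamma(\xi^i-Y^i_0)}\,\mathcal{E}\Bigl(-\int_0^\cdot\widehat\theta_s^\top d\widehat W^0_s+\int_0^\cdot\gamma Z^i_s dW^i_s\Bigr)_t.
\]
Conditionally on $\mathcal{F}^i_0\vee\mathcal{G}^0_0$, the stochastic exponential is a martingale. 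This follows from Lemma \ref{crisan} applied to the joint linear SDE for $(x^0,x^i)$, since the integrands are affine in $(x^0,x^i)$. Hence $R^{i,p^{i,*}}$ is a martingale on $[0,T]$ with a nonrandom $t$-independent bound on its $\mathcal{F}_0$-conditional mean, provided the prefactor is integrable.

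Integrability of the prefactor is where the hypothesis enters. We have $\mathbb{E}[e^{-\gamma\xi^i}]<\infty$ because $\xi^i$ is Gaussian. Further, $\gamma Y^i_0=\tfrac{\gamma}{2}\langle A_{11}(0)x^i_0,x^i_0\rangle+(\text{affine in }x^i_0)$, and $\xi^i$ is independent of $x^i_0$. Writing out the Gaussian density of $x^i_0$, the integral $\mathbb{E}[e^{\gamma Y^i_0}]$ converges if and only if $\mathrm{Var}(x^1_0)^{-1}-\gamma A_{11}(0)$ is positive definite, which is exactly what is assumed. A nonnegative martingale with integrable terminal value is of class $\mathcal{D}$, so $\{R^{i,p^{i,*}}_\tau\}$ is uniformly integrable and $p^{i,*}\in\mathcal{A}^i$. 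Optimality then follows from Theorem \ref{optimality}.

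For (2), I will use the explicit form of $p^{i,*}$ above:
\[
\frac1N\sum_{i=1}^N p^{i,*}_t=\Bigl(\frac1N\sum_{i=1}^N(x^i_t-\mu^1_t)\Bigr)^\top A_{10}(t)\Sigma_0.
\]
The $x^i_t$ are i.i.d. with mean $\mu^1_t$ and finite variance, so the strong law of large numbers gives a.s. convergence to $0$ for each $t$, and therefore $dt\otimes\mathbb{P}$-a.e. Since $\pi^{i,*}_t=(\sigma_t^\top)^{-1}(p^{i,*}_t)^\top$ is a linear map applied to $p^{i,*}_t$, the average $\frac1N\sum_i\pi^{i,*}_t$ tends to $0$ as well, which is \eqref{MC-eqn}. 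As a complement, I would record the $L^2$ bound $\mathbb{E}\int_0^T|\cdot|^2dt\le C/N$, as in the habit-formation chapter.
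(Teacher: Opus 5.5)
Your proof is correct. Part (1) follows the paper's argument essentially step for step:
\begin{itemize}
\item under \eqref{eqbm-theta}, the triple \eqref{EQG solution YZ} solves \eqref{BSDE-optim};
\item $R^{i,p^{i,*}}$ is the $\mathcal{G}^{0,i}_0$-measurable prefactor $-e^{-\gamma(\xi^i-Y^i_0)}$ times a stochastic exponential whose integrand is affine in $(x^0,x^i)$, which is a true martingale by Lemma \ref{crisan};
\item the prefactor is integrable because of the Gaussian integral controlled by $\mathrm{Var}(x^1_0)^{-1}-\gamma A_{11}(0)$;
\item uniform integrability then follows by optional sampling.
\end{itemize}
You also check $\pi^{i,*}\in\mathbb{H}^2$, which the paper leaves implicit. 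A small wording fix: the $\mathcal{G}^{0,i}_0$-conditional mean of $R^{i,p^{i,*}}_t$ is the random variable $R^{i,p^{i,*}}_0$. What is nonrandom and independent of $t$ is $\mathbb{E}|R^{i,p^{i,*}}_t|=\mathbb{E}[e^{-\gamma(\xi^i-Y^i_0)}]$. Also, every martingale on the closed interval $[0,T]$ is of class $\mathcal{D}$, so its sign plays no role.

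For (2) your route differs from the paper's in a way that matters. The paper proves only $\mathbb{E}\int_0^T|\frac1N\sum_{i=1}^N\pi^{i,*}_t|^2dt\le \widetilde C/N$ and asserts that this implies \eqref{MC-eqn}. Since $\sum_N 1/N=\infty$, that bound alone gives $dt\otimes\mathbb{P}$-a.e. convergence only along a subsequence. Your argument applies the strong law of large numbers at each fixed $t$ to the i.i.d. Gaussian variables $(x^i_t)_{i\in\mathbb{N}}$, then uses Fubini; the exceptional set is measurable because the averages are jointly measurable in $(t,\omega)$. This yields exactly the pointwise $dt\otimes\mathbb{P}$-a.e. limit required in Definition \ref{market-clearing}. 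The two arguments complement each other: the paper's estimate gives the $N^{-1}$ rate in mean square, while your SLLN step is what actually establishes \eqref{MC-eqn} as stated.
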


\noindent
\textbf{\textit{Proof}}\\
For (1), it suffices to show $p^{i,*}\in\mathcal{A}^i$ by Theorem \ref{optimality} and \ref{EQG solution}.
In this proof, we use $\widetilde{C}>0$ as a general constant, whose value may change line by line. \par
Recall that $(x^i_0)_{i\in\mathbb{N}}$ are Gaussian random variables and are independently and identically distributed (i.i.d.) on $(\Omega,\mathcal{G},\mathbb{P})$. If $\mathrm{Var}(x^1_0)^{-1}-\gamma A_{11}(0)$ is positive definite, we have
\begin{equation}
    \begin{split}
    \mathbb{E}[e^{\gamma Y^i_0}]
    &=
    \mathbb{E}\Bigl[\exp\Bigl(\frac{\gamma}{2} \langle A_{00}(0)x^0_0,x^0_0\rangle + \frac{\gamma}{2} \langle A_{11}(0)x^i_0,x^i_0\rangle + \gamma\langle A_{10}(0)x^0_0,x^i_0\rangle\\
    &~~~~+ \gamma\langle B_0(0),x^0_0\rangle + \gamma\langle B_1(0),x^i_0\rangle + \gamma C(0)\Bigr)\Bigr]\\
    &=
    \widetilde{C}\mathbb{E}\Bigl[\exp\Bigl(\frac{\gamma}{2} \langle A_{11}(0)x^i_0,x^i_0\rangle + \gamma\langle A_{10}(0)x^0_0 + B_1(0),x^i_0\rangle\Bigr)\Bigr]\\
    &=
    \widetilde{C} \int_{\mathbb{R}^{d_0}} \exp\Bigl(-\frac{1}{2}(\bm{x}-\mu^1_0)^\top \mathrm{Var}(x^1_0)^{-1} (\bm{x}-\mu^1_0) + \frac{\gamma}{2} \bm{x}^\top A_{11}(0)\bm{x} + \gamma\langle A_{10}(0)x^0_0 + B_1(0),\bm{x}\rangle\Bigr) d\bm{x}\\
    &\leq
    \widetilde{C} \int_{\mathbb{R}^{d_0}} \exp\Bigl(-\frac{1}{2}\bm{x}^\top (\mathrm{Var}(x^1_0)^{-1}-\gamma A_{11}(0))\bm{x} + \gamma\langle A_{10}(0)x^0_0 + B_1(0) + \mathrm{Var}(x^1_0)^{-1}\mu^1_0,\bm{x}\rangle \Bigr) d\bm{x}\\
    &<
    \infty.
    \end{split}
\end{equation}
Since $\xi^i$ is independent of $x^i_0$ and normally distributed, this observation implies
\[
    \mathbb{E}[e^{-\gamma (\xi^i-Y^i_0)}] = \mathbb{E}[e^{-\gamma \xi^i}] \mathbb{E}[e^{\gamma Y^i_0}]  < \infty.
\]

We have
\[
    R^{i,p^{i,*}}_t = -\exp\Bigl(-\gamma(\xi^i-Y_0^i)\Bigr)\mathcal{E}\Bigl(-\int_0^\cdot \widehat\theta^\top_s d\widehat{W}_s^0 + \int_0^\cdot \gamma Z^i_s dW^i_s\Bigr)_t,~~~t\in[0,T]
\]
by the definition of $R^{i,p}$ and $p^{i,*}$. Define a process $V^i\in\mathbb{L}^0(\mathbb{G}^{0,i},\mathbb{R}_{++})$ by
\[
    V^i_t := \mathcal{E}\Bigl(-\int_0^\cdot \widehat\theta^\top_s d\widehat{W}_s^0 + \int_0^\cdot \gamma Z^i_s dW^i_s\Bigr)_t,~~~t\in[0,T].
\]
By writing $\Theta^i := (-\widehat\theta^\top, \gamma Z^i) \in\mathbb{S}^2(\mathbb{P}^{0,i},\mathbb{G}^{0,i},\mathbb{R}^{1\times (d_0 + d)})$ and $\bm{W}^{0,i}:= \begin{pmatrix}
    \widehat{W}^0 \\
    W^i \\
    \end{pmatrix}
    $, $V^i$ can be written as
\[
    V^i_t = \mathcal{E}\Bigl(\int_0^\cdot \Theta^i_s d\bm{W}^{0,i}_s\Bigr)_t,~~~t\in[0,T].
\]
We set $\bm{x}^{0,i}:= \begin{pmatrix}
    x^0 \\
    x^i \\
    \end{pmatrix}
    $. Then, $\bm{x}^{0,i}$ follows the dynamics
\[
    d\bm{x}^{0,i}_t = -\bm{K} \Bigl(\bm{x}^{0,i}_t - \bm{m}\Bigr) dt + \bm{\Sigma} d\bm{W}^{0,i}_s,
\]
where
\[
    \bm{K} := \begin{pmatrix} K_0 I_{d_0} & 0 \\ 0 & K I_{d} \\ \end{pmatrix},~~~\bm{m}:=\begin{pmatrix} m_0 \\ m \\ \end{pmatrix},~~~\bm{\Sigma}:= \begin{pmatrix} \Sigma_0 & 0 \\ 0 & \Sigma \\ \end{pmatrix}.
\]
Note that $\bm{x}^{0,i}_0\in\mathbb{L}^2(\mathbb{P}^{0,i},\mathcal{G}^{0,i}_0,\mathbb{R}^{d_0 + d})$, $|\Theta^i_t|^2\leq \widetilde{C}(|\widehat\theta_t|^2 + |Z^i_t|^2) \leq \widetilde{C}(1+|\bm{x}^{0,i}_t|^2)$ for all $t\in[0,T]$ and that $\bm{W}^i$ is a $(d_0 + d)$-dimensional standard $(\mathbb{G}^{0,i},\mathbb{P}^{0,i})$-Brownian motion. 
Then, by Bain \& Crisan \cite{bain_fundamentals_2009} [Exercise 3.11], $V^i$ is a $(\mathbb{G}^{0,i},\mathbb{P}^{0,i})$-martingale. It is now easy to see that
\[
    \mathbb{E}[|R^{i,p^{i,*}}_t|] = \mathbb{E}[e^{-\gamma (\xi^i-Y^i_0)}V^i_t] = \mathbb{E}[e^{-\gamma (\xi^i-Y^i_0)} \mathbb{E}[V^i_t|\mathcal{G}_0^{0,i}]] = \mathbb{E}[e^{-\gamma (\xi^i-Y^i_0)}] < \infty,~~~t\in[0,T],
\]
and that, for all $0\leq s\leq t\leq T$,
\[
    \mathbb{E}[R^{i,p^{i,*}}_t|\mathcal{G}^{0,i}_s]=  -e^{-\gamma (\xi^i-Y^i_0)}\mathbb{E}[V^i_t|\mathcal{G}^{0,i}_s] = -e^{-\gamma (\xi^i-Y^i_0)}V^i_s = R^{i,p^{i,*}}_s,~~~\mathbb{P}^{0,i}\text{-}\mathrm{a.s.}
\]
This clearly shows that $R^{i,p^{i,*}}$ is a martingale. By the optional sampling theorem (see, e.g. \cite{Medvegyev} [Theorem 1.86]) and $\mathbb{E}[|R^{i,p^{i,*}}_T|]<\infty$, we conclude that the family $\{R^{i,p^{i,*}}_\tau;\tau\in\mathcal{T}(\mathbb{G}^{0,i})\}$ is uniformly integrable, i.e. $p^{i,*}\in\mathcal{A}^i$. 

We now verify (2). Notice that $\pi^{i,*}$ can be written as
\[
    \pi^{i,*}_t = (\sigma_t^\top)^{-1} (p_t^{i,*})^\top  = (\sigma_t^\top)^{-1} \Sigma_0^\top A_{10}(t)^\top (x^i_t-\mu^1_t),~~~t\in[0,T].
\]
Since, for each $t\in[0,T]$, $(x^i_t)_{i\in\mathbb{N}}$ are i.i.d. and $\mathbb{E}[x^i_t]=\mu^1_t$ for all $i\in\mathbb{N}$, we have
\[
    \mathbb{E}\int_0^T \left|\frac{1}{N} \sum_{i=1}^N \pi_t^{i,*}\right|^2 dt \leq \widetilde{C} \mathbb{E}\int_0^T \left|\frac{1}{N} \sum_{i=1}^N (x^i_t-\mu^1_t)\right|^2 dt \leq \frac{\widetilde{C}}{N^2} \sum_{i=1}^N \mathbb{E}\int_0^T \left|x^i_t-\mu^1_t\right|^2 dt \leq \frac{\widetilde{C}}{N}\to 0
\]
as $N\to\infty$, which implies \eqref{MC-eqn}. $\square$

\begin{rem}
    If the matrix $A^F_{11}$ is negative semidefinite, the ODE
    \[
        \dot{A}_{11}(t) = -\gamma A_{11}(t) \Sigma\Sigma^\top A_{11}(t)   + 2K  A_{11}(t),~~~t\in[0,T],~~~A_{11}(T)=A^F_{11}
    \]
    has a unique solution on $[0,T]$ for any $T>0$ (See, e.g., \cite{mat_riccati} [Theorem 8]), and the solution $A_{11}(t)$ is negative semidefinite for all $t\in[0,T]$ (See, e.g., \cite{mat_riccati} [Theorem 9].) In such a case, the condition that $\mathrm{Var}(x^1_0)^{-1}-\gamma A_{11}(0)$ is positive definite is satisfied.
\end{rem}

\subsection{Market Risk-Premium Process}
In this section, we assume that the risk-premium process $\theta$ follows a linear Gaussian dynamics on $(\mathbb{P}^0,\mathbb{F}^0)$. Using the Kalman-Bucy filtering theory, we construct a semi-explicit formulation of the risk-premium process.
\begin{asm}~\\
    \label{asm_kalman}
    \textup{(i)} The market risk-premium process $\theta$ follows
                \begin{equation}
                    \begin{split}
                        \label{LG_MRP}
                        \theta_t = \theta_0 + \int_0^t (\alpha_s\theta_s + \beta_s)ds + \int_0^t \zeta_s dW^0_s + \int_0^t \eta_s dB^0_s,~~~t\in[0,T],
                    \end{split}
                \end{equation}
                for $\alpha\in\mathcal{C}([0,T];\mathbb{R}^{d_0\times d_0})$, $\beta\in\mathcal{C}([0,T];\mathbb{R}^{d_0})$, $\zeta\in\mathcal{C}^1([0,T];\mathbb{M}_{d_0})$ and $\eta\in\mathcal{C}([0,T];\mathbb{R}^{d_0\times k})$. The initial condition $\theta_0\in\mathbb{L}^2(\mathbb{P}^0,\mathcal{F}^0_0,\mathbb{R}^{d_0})$ is normally distributed: $\theta_0\sim N(m,v)$ for $(m,v)\in\mathbb{R}^{d_0}\times \mathbb{M}_{d_0}$. \\
    \textup{(ii)} $\Sigma_0$ is invertible.\\
    \textup{(iii)} The system of ordinary differential equations \eqref{Riccati eqn} has a global solution $(A_{00},A_{11},A_{10},B_0,B_1,C)\in\mathcal{C}^1([0,T];\mathbb{M}_{d_0})\times\mathcal{C}^1([0,T];\mathbb{M}_{d})\times\mathcal{C}^1([0,T];\mathbb{R}^{d\times d_0})\times\mathcal{C}^1([0,T];\mathbb{R}^{d_0})\times\mathcal{C}^1([0,T];\mathbb{R}^{d})\times\mathcal{C}^1([0,T];\mathbb{R})$ and $A_{00}(t)$ is invertible for all $t\in[0,T]$.\\
    \textup{(iv)} $\mathrm{Var}(x^1_0)^{-1}-\gamma A_{11}(0)$ is a positive definite matrix.
\end{asm}
Recall that $B^0:=(B^0_t)_{t\in[0,T]}$ is a $k$-dimensional $(\mathbb{F}^0,\mathbb{P}^0)$-standard Brownian motion independent of $W^0$. The SDE \eqref{LG_MRP} is well-posed due to the standard result for Lipschitz SDEs. The objective of this section is to find appropriate coefficients $(\alpha,\beta,\zeta,\eta)$ in \eqref{LG_MRP} with which $\theta$ satisfies \eqref{eqbm-theta}.
The following lemma shows that Assumption \ref{asm_kalman} (i) is consistent with Assumption \ref{asm1} (iv).

\begin{lem}~\\
    Under Assumption \ref{asm_kalman} (i), the Dol\'{e}ans-Dade exponential $\Bigl\{\mathcal{E}\Bigl(-\displaystyle\int_0^\cdot \theta_s^\top dW^0_s\Bigr)_t;t\in[0,T]\Bigr\}$ is a martingale.
\end{lem}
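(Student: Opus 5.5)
The plan is to reduce the statement to Lemma~\ref{crisan}, the Bain--Crisan criterion already used in Chapter~5, applied to the pair $\mathscr{X}:=(\theta,\cdot)$ driven by the Brownian motion $(W^0,B^0)$. Lemma~\ref{crisan} is stated for time-homogeneous Lipschitz coefficients $B(\mathscr{X})$ and $\Xi(\mathscr{X})$. Here the coefficients $\alpha,\beta,\zeta,\eta$ are continuous functions of time. I would therefore augment the state with time: set $\mathscr{X}_t:=(\theta_t,t)\in\mathbb{R}^{d_0+1}$, with the $(\mathbb{F}^0,\mathbb{P}^0)$-Brownian motion $W:=(W^0,B^0)$ of dimension $k':=d_0+k$.

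The coefficients become
\[
B(x,s):=\bigl(\alpha_{\bar s}x+\beta_{\bar s},\,1\bigr),\qquad \Xi(x,s):=\begin{pmatrix}\zeta_{\bar s}&\eta_{\bar s}\\ 0&0\end{pmatrix},
\]
where $\bar s:=(s\vee0)\wedge T$.

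The clamping makes these maps defined on all of $\mathbb{R}^{d_0+1}$. They are Lipschitz in $x$, uniformly in $s$, because $\alpha$ is continuous on a compact interval and hence bounded. Lipschitz continuity in $s$ is the one subtle point. Mere continuity of $\alpha,\beta,\eta$ does not give it, although $\zeta\in\mathcal{C}^1$ is fine.

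To handle this, the first thing I would try is to note that the proof of Lemma~\ref{crisan} (Bain--Crisan, Exercise~3.11) only uses linear growth of the coefficients and existence of a solution with Gaussian-type moment bounds. So I would invoke the lemma in its time-dependent form with bounded measurable coefficients satisfying a uniform Lipschitz/linear-growth condition in $x$. If one insists on the stated autonomous version, a direct argument is equally short, as follows.

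Take $h(\mathscr{X}):=(-\theta,0)$, padded by zero in the $B^0$ components, so that $|h(x)|^2\le C(1+|x|^2)$ holds trivially. The mechanism of the exercise is Beneš' partition argument. Split $[0,T]$ into intervals $[t_j,t_{j+1}]$ of small length $\delta$. On each interval, $\sup_{s\in[t_j,t_{j+1}]}|\theta_s|^2$ has conditional exponential moments of order $\delta^{-1}$, given $\mathcal{F}^0_{t_j}$. This holds because $\theta$ is Gaussian conditionally on $\theta_{t_j}$ (the SDE is linear), with conditional variance $O(\delta)$ and mean affine in $\theta_{t_j}$.

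Novikov's criterion then applies on each short interval conditionally. Chaining the conditional expectations gives $\mathbb{E}[\mathcal{E}(-\int_0^\cdot\theta_s^\top dW^0_s)_T]=1$. Since the Doléans-Dade exponential is a nonnegative local martingale, hence a supermartingale, constant expectation $1$ yields the martingale property. This is the conclusion required by Assumption~\ref{asm1}(iv).

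The main obstacle is the conditional Novikov estimate on a short interval. Its unconditional mean term involves $|\theta_{t_j}|^2$, which is not exponentially integrable to arbitrary order. I would handle it with the standard trick of conditioning on $\mathcal{F}^0_{t_j}$. Conditionally, Novikov's quantity is
\[
\exp\Bigl(\tfrac12\int_{t_j}^{t_{j+1}}|\theta_s|^2ds\Bigr)\le\exp\Bigl(\delta\sup_{s}|\theta_s-m_s|^2+\delta\sup_s|m_s|^2\Bigr),
\]
where $m_s$ is the conditional mean. This is a.s. finite in conditional expectation once $\delta$ is smaller than a constant depending on $\sup|\zeta|,\sup|\eta|,\sup|\alpha|$ and $T$, by Fernique's theorem applied to the Gaussian process $\theta-m$. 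The conditional Novikov criterion then gives $\mathbb{E}[\mathcal{E}_{t_{j+1}}/\mathcal{E}_{t_j}\mid\mathcal{F}^0_{t_j}]=1$, and the tower property over $j$ finishes the argument.
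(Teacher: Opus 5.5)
Your proof is correct, but your main argument takes a different route from the paper.

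Your first fallback, running the argument behind Lemma~\ref{crisan} with time-dependent coefficients, is essentially what the paper does. The paper bypasses Lemma~\ref{crisan} and invokes Bain \& Crisan's Lemma~3.9 directly: $\Lambda:=\mathcal{E}(-\int_0^\cdot\theta_s^\top dW^0_s)$ is a martingale once $\mathbb{E}\int_0^T|\theta_s|^2ds<\infty$ and $\mathbb{E}\int_0^T\Lambda_s|\theta_s|^2ds<\infty$. The second bound is only sketched there, as an It\^o-plus-Gronwall estimate for $\Lambda_t|\theta_t|^2$ following Exercise~3.11. That reduction uses only linear growth. So it needs no time regularity of $\alpha,\beta,\eta$, which is the obstruction you correctly spotted when augmenting the state with time. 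It would also survive nonlinear Lipschitz dynamics for $\theta$.

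Your direct Bene\v{s}-type argument is genuinely different. It exploits the linear-Gaussian structure of Assumption~\ref{asm_kalman}(i) and gets $\mathbb{E}[\Lambda_T]=1$ from Fernique's theorem, Novikov on short intervals, and the tower property. This is more explicit and self-contained than the paper's citation-based sketch, but it is tied to Gaussianity.

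Two minor remarks on your version:
\begin{itemize}
\item The conditioning on $\mathcal{F}^0_{t_j}$ is not needed. Since $\theta_0\sim N(m,v)$ is independent of $(W^0,B^0)$ and the coefficients are deterministic, $\theta$ is a continuous Gaussian process on all of $[0,T]$. Fernique then gives $\mathbb{E}[\exp(\epsilon\sup_{[0,T]}|\theta_s|^2)]<\infty$ for some $\epsilon>0$, so the unconditional Novikov criterion holds on every interval of length $\delta\le 2\epsilon$.
\item If you keep the conditional version, make it rigorous by localization. Apply Novikov to $\theta\,\mathbbm{1}_{\{|\theta_{t_j}|\le n\}}$ on $[t_j,t_{j+1}]$, then let $n\to\infty$.
\end{itemize}
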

\noindent
\textbf{\textit{Proof}}\\
We write:
\[
    \Lambda_t := \mathcal{E}\Bigl(-\displaystyle\int_0^\cdot \theta_s^\top dW^0_s\Bigr)_t,~~~t\in[0,T].
\]
By Bain \& Crisan \cite{bain_fundamentals_2009} [Lemma 3.9.], it suffices to show
\[
    \mathbb{E}\Bigl[\int_0^T |\theta_s|^2 ds\Bigr] <\infty,~~~\mathbb{E}\Bigl[\int_0^T  \Lambda_s|\theta_s|^2 ds\Bigr] <\infty.
\]
The first condition is obvious by the standard result for Lipschitz SDEs. The second condition can be shown similarly by following Bain \& Crisan \cite{bain_fundamentals_2009} [Exercise 3.11] and its solution in [Section 3.9].
$\square$

The observation is made according to the stock price process $(S_t)_{t\in[0,T]}$. By Lemma \ref{G=FW}, we can set
\begin{equation}
    \begin{split}
        \widetilde{W}^0_t = W^0_t + \int_0^t \theta_s ds
    \end{split}
\end{equation}
as an observation process. The dynamics of $\widehat\theta$ is given as follows.
\begin{lem}~\\
    \label{theta_hat_dynamics}
    Let Assumptions \ref{asm_kalman} (i) be in force. Then, the process $\widehat\theta$, defined by \eqref{theta_expected}, satisfies the following SDE:
    \begin{equation}
        \begin{split}
            \label{hat_theta}
            d\widehat\theta_t = (\alpha_t\widehat\theta_t + \beta_t)dt + (\zeta_t + \varrho_t) d\widehat{W}^0_t,~~~t\in[0,T],~~~\widehat\theta_0 = m,
        \end{split}
    \end{equation}
    where $\varrho \in\mathcal{C}^1([0,T];\mathbb{M}_{d_0})$ is a function which satisfies the following Riccati equation:
    \begin{equation}
        \begin{split}
            \label{ODE_for_vrho}
            \dot{\varrho }_t=\eta_t\eta_t^\top + \alpha_t \varrho_t + \varrho_t\alpha_t^\top - \zeta_t\varrho_t - \varrho_t\zeta_t - \varrho_t^2,~~~t\in[0,T],~~~\varrho_0 = v.
        \end{split}
    \end{equation}
\end{lem}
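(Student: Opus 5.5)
This is the classical Kalman–Bucy filter. The plan is to put the problem in the standard linear Gaussian form and read off the filter equations from Bain \& Crisan \cite{bain_fundamentals_2009} (or Liptser \& Shiryayev \cite{LiptserShiryayev}). The one non-standard feature is that the signal noise $\zeta_t dW^0_t$ is correlated with the observation noise.

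\textbf{Setting up the linear system.} The signal is $\theta$, given by \eqref{LG_MRP}. It is driven by $W^0$ and $B^0$. The observation is $\widetilde{W}^0_t=W^0_t+\int_0^t\theta_sds$. By Lemma \ref{G=FW} its filtration is $\mathbb{G}^0$. The observation noise is the same $W^0$, so the signal noise has cross-covariance $\zeta_t$ with it. The observation noise has identity covariance, and the observation matrix is $I_{d_0}$. The initial law is $\theta_0\sim N(m,v)$ and $\mathcal{G}^0_0$ is trivial, so $\widehat\theta_0=m$.

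\textbf{Deriving the filter.} I would use the innovation approach rather than quote the correlated-noise formula.
\begin{enumerate}
\item By the preceding lemma, $\widehat W^0$ is a $(\mathbb{G}^0,\mathbb{P}^0)$-Brownian motion. By the representation remark after it, every $\mathbb{G}^0$-martingale is a stochastic integral against $\widehat W^0$.
\item Write $\theta_t-\int_0^t(\alpha_s\theta_s+\beta_s)ds=\theta_0+\int_0^t\zeta_sdW^0_s+\int_0^t\eta_sdB^0_s$ and project onto $\mathbb{G}^0_t$. The process $\widehat\theta_t-\int_0^t(\alpha_s\widehat\theta_s+\beta_s)ds$ is then a $\mathbb{G}^0$-martingale, so it equals $m+\int_0^t\kappa_sd\widehat W^0_s$ for some gain $\kappa$.
\item Identify $\kappa$ by the usual product argument. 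For any bounded $\mathbb{G}^0$-adapted $h$, compute $\mathbb{E}[\theta_t\int_0^th_s^\top d\widetilde W^0_s]$ in two ways: once from the signal dynamics, once from the filter dynamics.
\item The comparison gives $\kappa_t=\zeta_t+P_t$, where $P_t=\mathbb{E}[(\theta_t-\widehat\theta_t)(\theta_t-\widehat\theta_t)^\top]$ is the error covariance. The $\zeta_t$ term comes from $d\langle W^0,W^0\rangle$ (the correlation) and the $P_t$ term from the $\theta_s ds$ part of $\widetilde W^0$. Gaussianity makes $P_t$ deterministic, so the conditional and unconditional covariances agree.
\end{enumerate}

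\textbf{Error covariance equation.} Set $e_t=\theta_t-\widehat\theta_t$. It satisfies
\[
de_t=(\alpha_t-\zeta_t-P_t)e_tdt+\eta_tdB^0_t-P_tdW^0_t .
\]
Applying Itô's formula to $e_te_t^\top$ and taking expectations gives
\[
\dot P=\eta\eta^\top+(\alpha-\zeta-P)P+P(\alpha-\zeta-P)^\top+P^2 .
\]
Using that $\zeta$ and $P$ are symmetric, this rearranges to $\eta\eta^\top+\alpha P+P\alpha^\top-\zeta P-P\zeta-P^2$, which is \eqref{ODE_for_vrho} with $\varrho=P$ and $\varrho_0=v$.

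\textbf{Main obstacle.} The delicate step is justifying the gain and the error equation in this setting. The observation coefficient is the unbounded $\theta$, and $\mathcal{F}^0_0$ is nontrivial, so the standard Kalman–Bucy hypotheses have to be checked. For square integrability, $\mathbb{E}\int_0^T|\theta|^2<\infty$ by the lemma just proved. The linear structure then keeps everything Gaussian, which gives a deterministic $P$. I would also check that the Riccati equation \eqref{ODE_for_vrho} has a global symmetric positive semidefinite solution. Global existence follows because the solution equals a covariance matrix and is therefore bounded a priori. Symmetry of $\zeta$ is what lets the two $\zeta$ terms combine into the stated form.
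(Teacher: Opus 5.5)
Your proposal is correct, but it takes a different route from the paper. The paper's proof is a one-line citation of the Kalman--Bucy filter for a multidimensional linear system with correlated signal and observation noise (Liptser \& Shiryayev, Theorem 10.3), using the identification $a_0=\beta$, $a_1=\alpha$, $b_1=\zeta$, $b_2=\eta$, $A_1=B_1=I_{d_0}$, with all other coefficients zero. With these choices, the generic gain of that theorem collapses to $\zeta_t+\varrho_t$. Its Riccati equation, $\dot\varrho=\alpha\varrho+\varrho\alpha^\top+\zeta\zeta^\top+\eta\eta^\top-(\zeta+\varrho)(\zeta+\varrho)^\top$, reduces to \eqref{ODE_for_vrho} once $\zeta^\top=\zeta$ is used.

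You instead rederive that theorem. The innovation representation together with the product argument gives the gain $\zeta_t+\mathrm{Cov}(\theta_t\,|\,\mathcal{G}^0_t)$. Gaussianity makes that conditional covariance deterministic. It\^o's formula applied to $e_te_t^\top$ then produces the Riccati equation; your error dynamics and the final rearrangement both check out.

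Your version is self-contained and shows where each term comes from. The $\zeta_t$ in the gain is the cross-variation between the signal noise $\zeta\,dW^0$ and the observation noise, and the cancellation of the $\zeta\zeta^\top$ terms is built into your error equation. It also gives existence of a global $\varrho$ for free, provided you order the argument correctly. First define $P_t$ as the error covariance, which is finite on all of $[0,T]$. Then show $P$ solves the ODE. Finally, identify $\varrho=P$ by local uniqueness. This ordering removes any appearance of circularity in ``bounded because it is a covariance.''

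The price is that you must verify the points the citation outsources:
\begin{itemize}
\item joint Gaussianity of $(\theta_t,(\widetilde W^0_s)_{s\le t})$, which uses the independence of $\theta_0$ from $(W^0,B^0)$;
\item enough moment bounds that the local martingales in the product argument have zero expectation;
\item the representation of $\mathbb{G}^0$-martingales as integrals against $\widehat W^0$, which the paper only asserts in a remark.
\end{itemize}
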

\noindent
\textbf{\textit{Proof}}\\
See Liptser \& Shiryayev \cite{LiptserShiryayev} [Theorem 10.3] and set
\begin{equation}
    \begin{split}
    &a_0=\beta,~~~a_1=\alpha,~~~a_2\equiv 0,~~~b_1=\zeta,~~~b_2=\eta,\\
    &A_0\equiv 0,~~~A_1\equiv I_{d_0},~~~A_2\equiv 0,~~~B_1\equiv I_{d_0},~~~B_2\equiv 0
    \end{split}
\end{equation}
therein. $\square$

In addition to Assumptions \ref{asm1} and \ref{asm2}, let Assumption \ref{asm_kalman} be in force. If $\widehat\theta$ satisfies
\begin{equation}
    \label{eqbm_hat_theta}
    \widehat\theta_t = -\gamma {\Sigma}_0^\top \Bigl(A_{00}(t)x^0_t + A_{10}(t)^\top \mu^1_t + B_0(t) \Bigr),~~~t\in[0,T],
\end{equation}
the processes $(p^{i,*})_{i\in\mathbb{N}}$ defined by
\[
    p^{i,*}_t := (\pi^{i,*}_t)^\top\sigma_t := Z^{i,0}_t + \frac{\widehat\theta_t^\top}{\gamma},~~~t\in[0,T],~~~i\in\mathbb{N},
\]
are optimal strategies and satisfy the asymptotic market clearing condition by Theorem \ref{EQG-verification}. Plugging \eqref{eqbm_hat_theta} into \eqref{hat_theta}, we get: for $t\in[0,T]$,
\begin{equation}
    \begin{split}
        \label{eq_3.3-1}
    d\widehat\theta_t 
    &=
    (\alpha_t \widehat\theta_t + \beta_t)dt + (\zeta_t + \varrho_t)d\widehat{W}^0_t\\
    &=
    \{-\gamma\alpha_t {\Sigma}_0^\top A_{00}(t)x_t^0 - \gamma\alpha_t {\Sigma}_0^\top (A_{10}(t)^\top \mu^1_t + B_0(t)) + \beta_t\}dt + (\zeta_t + \varrho_t)d\widehat{W}^0_t,\\
    \widehat\theta_0 &= m = -\gamma {\Sigma}_0^\top \Bigl(A_{00}(0)x^0_0 + A_{10}(0)^\top \mathbb{E}[x^1_0] + B_0(0) \Bigr).
    \end{split}
\end{equation}
On the other hand, by applying Ito formula to \eqref{eqbm_hat_theta}, we have:
\begin{equation}
    \begin{split}
    \label{eq_3.3-2}
    d\widehat\theta_t 
    &=
    -\gamma {\Sigma}_0^\top \{\dot{A}_{00}(t)x^0_tdt + A_{00}(t)dx^0_t +  \dot{A}_{10}(t)^\top \mu^1_t dt + A_{10}(t)^\top \dot{\mu}^1_t dt + \dot{B}_0(t) dt\}\\
    &=
    -\gamma {\Sigma}_0^\top\{(\dot{A}_{00}(t) - K_0 A_{00}(t))x^0_t\\
    &~~~~~+ (K_0A_{00}(t)m_0 + \dot{A}_{10}(t)^\top \mu^1_t + A_{10}(t)^\top \dot{\mu}^1_t + \dot{B}_0(t) )\}dt\\
    &~~~~~-\gamma {\Sigma}_0^\top A_{00}(t)\Sigma_0 d\widehat{W}_t^0,~~~t\in[0,T].
    \end{split}
\end{equation}
Comparing the coefficients of \eqref{eq_3.3-1} and \eqref{eq_3.3-2} with respect to the $x^0$-term and the constant term in the drift term, as well as the diffusion term, we obtain
\begin{equation}
    \begin{split}
    &-\gamma\alpha_t {\Sigma}_0^\top A_{00}(t) = -\gamma {\Sigma}_0^\top(\dot{A}_{00}(t) - K_0 A_{00}(t)),\\
    &- \gamma\alpha_t {\Sigma}_0^\top (A_{10}(t)^\top \mu^1_t + B_0(t)) + \beta_t = -\gamma {\Sigma}_0^\top (K_0A_{00}(t)m_0 + \dot{A}_{10}(t)^\top \mu^1_t + A_{10}(t)^\top \dot{\mu}^1_t + \dot{B}_0(t) ),\\
    &\zeta_t + \varrho_t = -\gamma {\Sigma}_0^\top A_{00}(t)\Sigma_0
    \end{split}
\end{equation}
for each $t\in[0,T]$. Rearranging the terms, we get, for $t\in[0,T]$,
\begin{equation}
    \begin{split}
    &\alpha_t = \Sigma_0^\top \dot{A}_{00}(t)A^{-1}_{00}(t)(\Sigma_0^\top)^{-1} - K_0 I_{d_0}, \\
    &\beta_t = \gamma\alpha_t {\Sigma}_0^\top (A_{10}(t)^\top \mu^1_t + B_0(t)) -\gamma {\Sigma}_0^\top (K_0A_{00}(t)m_0 + \dot{A}_{10}(t)^\top \mu^1_t + A_{10}(t)^\top \dot{\mu}^1_t + \dot{B}_0(t) ),\\
    &\varrho_t = -\gamma {\Sigma}_0^\top A_{00}(t)\Sigma_0 - \zeta_t.
    \end{split}
\end{equation}

It is easy to see, for $t\in[0,T]$,
\begin{equation}
    \begin{split}
    &\dot\varrho_t = -\gamma {\Sigma}_0^\top \dot{A}_{00}(t)\Sigma_0 - \dot\zeta_t,\\
    &\alpha_t\varrho_t = -\gamma\Sigma^\top_0 (\dot A_{00}(t) - K_0 A_{00}(t))\Sigma_0 - \alpha_t\zeta_t,\\
    &\varrho_t\alpha_t^\top = -\gamma\Sigma^\top_0 (\dot A_{00}(t) - K_0 A_{00}(t))\Sigma_0 - \zeta_t\alpha_t^\top,\\
    &\zeta_t\varrho_t = -\gamma\zeta_t\Sigma^\top A_{00}\Sigma_0 - \zeta_t^2,\\
    &\varrho_t \zeta_t = -\gamma\Sigma^\top A_{00}\Sigma_0 \zeta_t - \zeta_t^2,\\
    &\varrho_t^2 = \gamma^2 \Sigma_0^\top A_{00}(t)\Sigma_0\Sigma_0^\top A_{00}(t)\Sigma_0 + \gamma\zeta_t\Sigma^\top A_{00}\Sigma_0 + \gamma\Sigma^\top A_{00}\Sigma_0 \zeta_t + \zeta_t^2.
    \end{split}
\end{equation}
By \eqref{ODE_for_vrho}, we have
\begin{equation}
    \begin{split}
        0
        &=
        \dot{\varrho }_t - \eta_t\eta_t^\top - \alpha_t \varrho_t - \varrho_t\alpha_t^\top + \zeta_t\varrho_t + \varrho_t\zeta_t + \varrho_t^2 \\
        &=
        -\gamma {\Sigma}_0^\top \dot{A}_{00}(t)\Sigma_0 - \dot\zeta_t - \eta_t\eta_t^\top \\
        &~~~+ \gamma\Sigma^\top_0 (\dot A_{00}(t) - K_0 A_{00}(t))\Sigma_0 + \alpha_t\zeta_t + \gamma\Sigma^\top_0 (\dot A_{00}(t) - K_0 A_{00}(t))\Sigma_0 + \zeta_t\alpha_t^\top \\
        &~~~-\gamma\zeta_t\Sigma^\top A_{00}\Sigma_0 - \zeta_t^2 -\gamma\Sigma^\top A_{00}\Sigma_0 \zeta_t - \zeta_t^2 \\
        &~~~+\gamma^2 \Sigma_0^\top A_{00}(t)\Sigma_0\Sigma_0^\top A_{00}(t)\Sigma_0 + \gamma\zeta_t\Sigma^\top A_{00}\Sigma_0 + \gamma\Sigma^\top A_{00}\Sigma_0 \zeta_t + \zeta_t^2\\
        &=
        - \eta_t\eta_t^\top - \dot\zeta_t + \alpha_t\zeta_t + \zeta_t\alpha_t^\top - \zeta_t^2\\
        &~~~+ \gamma\Sigma_0^\top (\dot A_{00}(t) - 2K_0 A_{00}(t) + \gamma  A_{00}(t)\Sigma_0\Sigma_0^\top A_{00}(t))\Sigma_0\\
        &=
        - \eta_t\eta_t^\top - \dot\zeta_t + \alpha_t\zeta_t + \zeta_t\alpha_t^\top - \zeta_t^2 - \gamma^2\Sigma_0^\top A_{10}(t)^\top \Sigma\Sigma^\top A_{10}(t)\Sigma_0.\\
    \end{split}
\end{equation}
Here, we used the ODE \eqref{Riccati eqn} in the last equality.\par
Above all, in order to make \eqref{hat_theta} consistent with \eqref{eqbm_hat_theta}, the initial condition $\widehat\theta_0$ and the coefficients $(\alpha_t,\beta_t,\rho_t)$ must be given by
\begin{equation}
    \begin{split}
    &\widehat\theta_0 = -\gamma {\Sigma}_0^\top \Bigl(A_{00}(0)x^0_0 + A_{10}(0)^\top \mathbb{E}[x^1_0] + B_0(0) \Bigr),\\
    &\alpha_t = \Sigma_0^\top \dot{A}_{00}(t)A^{-1}_{00}(t)(\Sigma_0^\top)^{-1} - K_0 I_{d_0},\\
    &\beta_t = \gamma\alpha_t {\Sigma}_0^\top (A_{10}(t)^\top \mu^1_t + B_0(t)) -\gamma {\Sigma}_0^\top (K_0A_{00}(t)m_0 + \dot{A}_{10}(t)^\top \mu^1_t + A_{10}(t)^\top \dot{\mu}^1_t + \dot{B}_0(t) ),\\
    &\varrho_t = -\gamma {\Sigma}_0^\top A_{00}(t)\Sigma_0 - \zeta_t,
    \end{split}
\end{equation}
for $t\in[0,T]$, where $\zeta_t$ needs to satisfy the Riccati equation
\begin{equation}
    \begin{split}
        \label{coeff_ze_eta}
    &\dot\zeta_t = - \zeta_t^2  + \alpha_t\zeta_t + \zeta_t\alpha_t^\top  - \eta_t\eta_t^\top - \gamma^2\Sigma_0^\top A_{10}(t)^\top \Sigma\Sigma^\top A_{10}(t)\Sigma_0,~~~t\in[0,T],\\
    &\zeta_0 = -\gamma {\Sigma}_0^\top A_{00}(0)\Sigma_0 - v.
    \end{split}
\end{equation}
for $\eta\in\mathcal{C}([0,T];\mathbb{R}^{d_0\times k})$. These observations result in the following theorem.
\begin{thm} ~ \\
    Let Assumptions \ref{asm1}, \ref{asm2} and \ref{asm_kalman} be in force. Furthermore, assume that the mean $m(:=\mathbb{E}[\theta_0])$ and the coefficients $(\alpha,\beta,\zeta,\eta)\in\mathcal{C}([0,T];\mathbb{R}^{d_0\times d_0})\times \mathcal{C}([0,T];\mathbb{R}^{d_0}) \times \mathcal{C}^1([0,T];\mathbb{M}_{d_0})\times \mathcal{C}([0,T];\mathbb{R}^{d_0\times k})$ satisfy
    \begin{equation}
        \begin{split}
            \label{coeff_al_be}
        &m = -\gamma {\Sigma}_0^\top \Bigl(A_{00}(0)x^0_0 + A_{10}(0)^\top \mathbb{E}[x^1_0] + B_0(0) \Bigr),\\
        &\alpha_t = \Sigma_0^\top \dot{A}_{00}(t)A^{-1}_{00}(t)(\Sigma_0^\top)^{-1} - K_0 I_{d_0},~~~t\in[0,T], \\
        &\beta_t = \gamma\alpha_t {\Sigma}_0^\top (A_{10}(t)^\top \mu^1_t + B_0(t)) -\gamma {\Sigma}_0^\top (K_0A_{00}(t)m_0 + \dot{A}_{10}(t)^\top \mu^1_t + A_{10}(t)^\top \dot{\mu}^1_t + \dot{B}_0(t) ),~~~t\in[0,T],~~~~\\
        &\dot\zeta_t = - \zeta_t^2  + \alpha_t\zeta_t + \zeta_t\alpha_t^\top  - \eta_t\eta_t^\top - \gamma^2\Sigma_0^\top A_{10}(t)^\top \Sigma\Sigma^\top A_{10}(t)\Sigma_0,~~~t\in[0,T],\\
        &\zeta_0 = -\gamma {\Sigma}_0^\top A_{00}(0)\Sigma_0 - v.
        \end{split}
    \end{equation}
    and that such $\zeta$ is well-defined. Then, the asymptotic market clearing condition \eqref{MC-eqn} is satisfied as long as each agent adopts 
    \begin{equation}
       p^{i,*}_t := (\pi^{i,*}_t)^\top \sigma_t := Z^{i,0}_t + \frac{\widehat{\theta}_t^\top}{\gamma}, ~~~t\in[0,T], ~~~i\in\mathbb{N},
    \end{equation}
    as his/her optimal strategy. Here, $Z^{i,0}$ is given by \eqref{EQG solution YZ} and $\widehat{\theta}_t := \mathbb{E}[\theta_t|\mathcal{G}^0_t]$.
\end{thm}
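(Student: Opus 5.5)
The plan is to reduce the statement to Theorem~\ref{EQG-verification}: it suffices to show that, under \eqref{coeff_al_be}, the filter $\widehat\theta$ coincides with the target process \eqref{eqbm_hat_theta}. Once that identity holds, part (2) of Theorem~\ref{EQG-verification} gives the asymptotic market clearing condition \eqref{MC-eqn} for the strategies $p^{i,*}$. Its remaining hypotheses, namely the global solution of \eqref{Riccati eqn} and the positive definiteness of $\mathrm{Var}(x^1_0)^{-1}-\gamma A_{11}(0)$, are supplied by Assumption~\ref{asm_kalman}~(iii)--(iv).

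First, I would invoke Lemma~\ref{theta_hat_dynamics}. With the given $(\alpha,\beta,\zeta,\eta)$, the filter $\widehat\theta$ solves the linear SDE \eqref{hat_theta} with $\widehat\theta_0=m$ and gain $\zeta+\varrho$, where $\varrho$ solves \eqref{ODE_for_vrho} with $\varrho_0=v$. Next, I define $\varrho^\sharp_t:=-\gamma\Sigma_0^\top A_{00}(t)\Sigma_0-\zeta_t$ and check that it solves \eqref{ODE_for_vrho}. Substituting into the right-hand side, and using the $\zeta$-equation in \eqref{coeff_al_be}, the $\alpha$-formula and the $A_{00}$-equation of \eqref{Riccati eqn}, gives exactly the cancellation computed just before the theorem, read in reverse. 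The initial condition matches because $\zeta_0=-\gamma\Sigma_0^\top A_{00}(0)\Sigma_0-v$. The Riccati equation \eqref{ODE_for_vrho} has locally Lipschitz coefficients and both $\varrho$ and $\varrho^\sharp$ exist on $[0,T]$, so uniqueness gives $\varrho=\varrho^\sharp$, and hence $\zeta_t+\varrho_t=-\gamma\Sigma_0^\top A_{00}(t)\Sigma_0$.

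Second, let $\vartheta_t$ denote the right-hand side of \eqref{eqbm_hat_theta}. By Itô's formula and the dynamics of $x^0$ driven by $\widehat W^0$, $\vartheta$ satisfies \eqref{eq_3.3-2}. The choices of $\alpha$ and $\beta$ in \eqref{coeff_al_be}, which use the invertibility of $A_{00}(t)$ and $\Sigma_0$, are exactly those that turn the drift of \eqref{eq_3.3-2} into $\alpha_t\vartheta_t+\beta_t$. The diffusion coefficient of $\vartheta$ equals $\zeta_t+\varrho_t$ by the first step, and $\vartheta_0=m$ by the condition on $m$. Therefore $\vartheta$ and $\widehat\theta$ solve the same linear SDE with deterministic continuous coefficients, driven by the same $(\mathbb{G}^0,\mathbb{P}^0)$-Brownian motion $\widehat W^0$ and started from the same point. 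Pathwise uniqueness then gives $\widehat\theta=\vartheta$, which is \eqref{eqbm-theta}, and Theorem~\ref{EQG-verification} concludes.

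I expect the main obstacle to be the first step: verifying that the filter's Riccati solution $\varrho$ is forced to equal $\varrho^\sharp$. The other steps are routine coefficient matching, but this one needs careful algebra, including the symmetry of $A_{00}$ and $\zeta$ and the term $\gamma^2\Sigma_0^\top A_{10}^\top\Sigma\Sigma^\top A_{10}\Sigma_0$ arising from \eqref{Riccati eqn}. It also needs an argument that uniqueness of solutions to \eqref{ODE_for_vrho} applies, which is where the well-definedness of $\zeta$ assumed in the statement is used.
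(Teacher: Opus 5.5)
Your proposal is correct and follows the paper's proof essentially step for step. The paper likewise applies Lemma~\ref{theta_hat_dynamics} and uses the preceding computation together with local Lipschitz uniqueness to identify $\varrho_t=-\gamma\Sigma_0^\top A_{00}(t)\Sigma_0-\zeta_t$ as the solution of \eqref{ODE_for_vrho}; it then matches \eqref{eq_3.3-1} with \eqref{eq_3.3-2}, invokes uniqueness for the linear SDE \eqref{theta-hat-sde}, and concludes with Theorem~\ref{EQG-verification}.
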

\noindent
\textbf{\textit{Proof}}\\
By Lemma \ref{theta_hat_dynamics}, $\widehat{\theta}$ follows \eqref{hat_theta}, where $\varrho \in\mathcal{C}^1([0,T];\mathbb{M}_{d_0})$ satisfies \eqref{ODE_for_vrho}. 
By the observation above, $\varrho_t = -\gamma {\Sigma}_0^\top A_{00}(t)\Sigma_0 - \zeta_t$ for $t\in[0,T]$ solves \eqref{ODE_for_vrho}, and the local Lipschitz property implies that it is the unique solution. 
Then, the dynamics of $\widehat{\theta}$ reads
\begin{equation}
    \label{theta-hat-sde}
    d\widehat\theta_t = (\alpha_t\widehat\theta_t + \beta_t)dt -\gamma {\Sigma}_0^\top A_{00}(t)\Sigma_0 d\widehat{W}^0_t,~~~t\in[0,T].
\end{equation}
Notice that the process $\widehat{\theta}$ satisfying \eqref{theta-hat-sde} is unique due to the standard result for Lipschitz SDEs. By \eqref{eq_3.3-1} and \eqref{eq_3.3-2}, this clearly shows that $\widehat\theta$ is given by \eqref{eqbm_hat_theta}. 
The statement follows immediately from Theorem \ref{EQG-verification}. $\square$
\section{Numerical Analysis}
In this section, we provide a numerical simulation to visualize the dynamics of our model. We consider an economy with $N=5000$ agents with time horizon $T=1$ and set $d_0 = d = k = 1$ for simplicity. Moreover, we set: \\
\begin{center}
\begin{tabular}{|c|c|c|c|c|c|c|c|c|c|c|c|c|c|c|c|} \hline
    $\gamma$ & $K_0$ & $K$ & $m_0$ & $m$ & $\Sigma_0$ & $\Sigma$ & $A^F_{00}$ & $A^F_{11}$ & $A^F_{10}$ & $B^F_0$ & $B^F_1$ & $C^F$  \\ \hline
    1.5 & 0.05 & 0.05 & $-0.5$ & $-0.5$ & 0.3 & 0.3 & 0.7 & 0.2 & 0.3 & $-1.3$ & $-0.7$ & 1.2 \\ \hline
\end{tabular}
\end{center}
Furthermore, we set $x^i_0 \sim N(-0.7,0.5)$, $\xi^i\sim N(2,0.3)$ for each $i=1,\ldots, 5000$, $v=0.1$, $x_0^0 = 0$, $\sigma_t\equiv 0.2$ and $\eta_t = (t-0.6)\mathbbm{1}_{[0.6,1]}(t)$. 
\begin{figure}[H] 
    \centering
    \includegraphics[width=16cm]{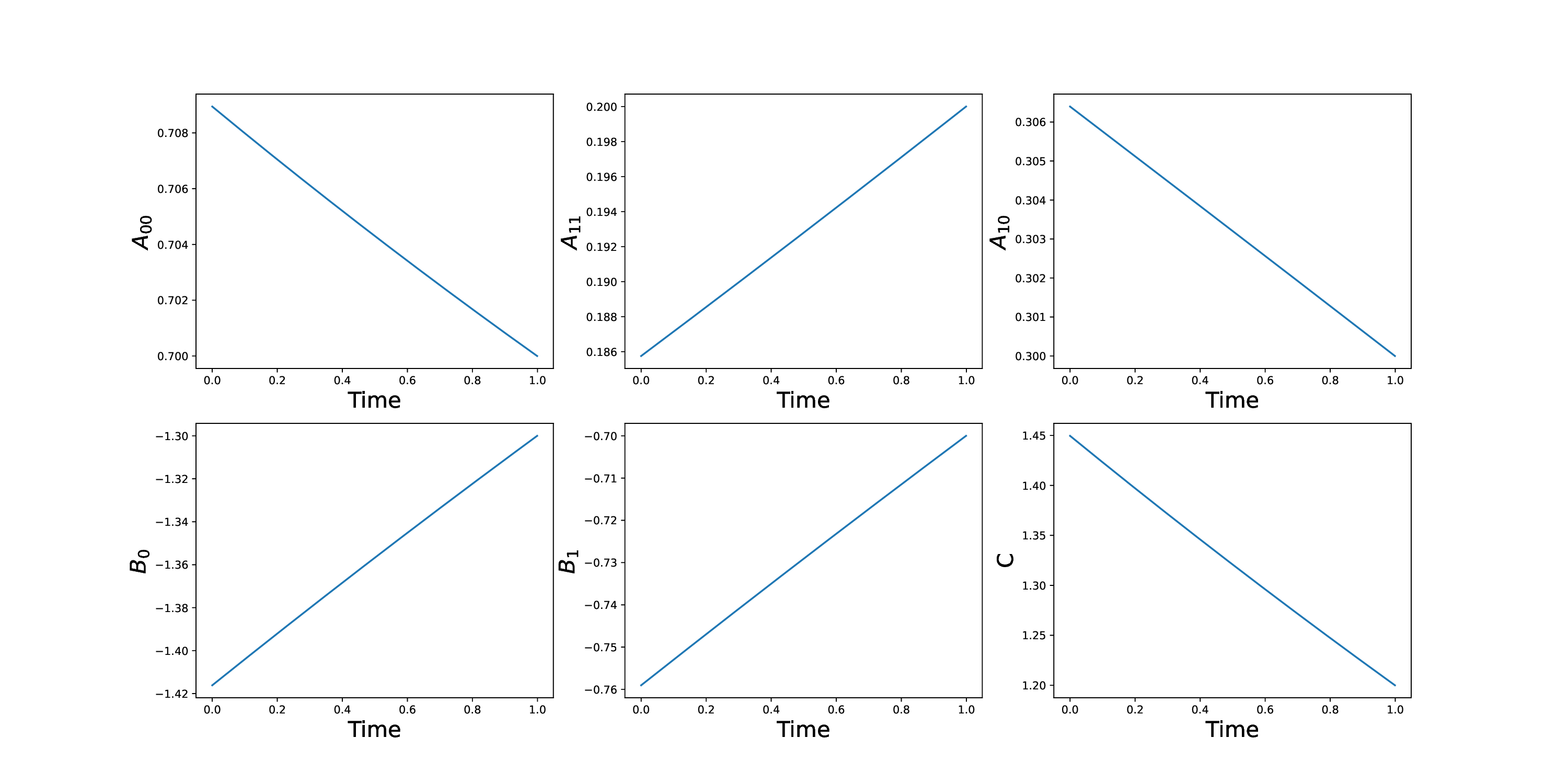}
    \caption{Solutions of Eq.\eqref{Riccati eqn}}
\end{figure}
\begin{figure}[H]
    \centering
    \begin{minipage}[b]{0.49\columnwidth}
        \centering
        \includegraphics[width=7.2cm]{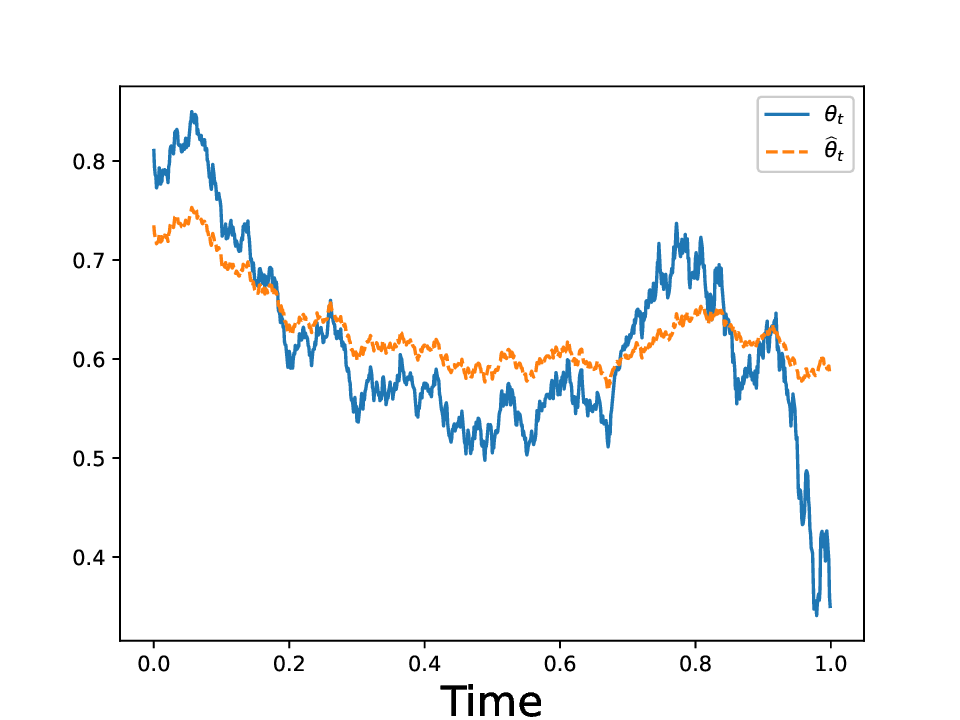}
        \caption{Market risk-premium process}
    \end{minipage}
    \begin{minipage}[b]{0.49\columnwidth}
        \centering
        \includegraphics[width=7.2cm]{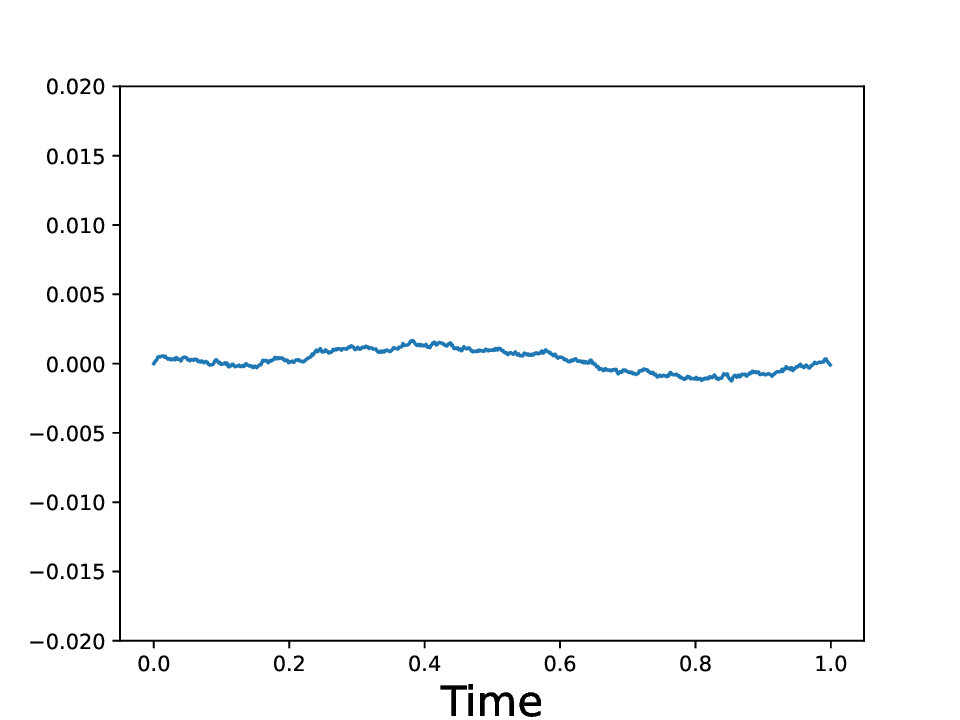}
        \caption{Asymptotic market clearing}
    \end{minipage}
\end{figure}
Figure 6.1 illustrates the numerical solution of the ODEs \eqref{Riccati eqn}.
In Figure 6.2, the sample paths for the risk-premium process $\theta$ (blue solid line) and the estimated one $\widehat\theta$ (orange dashed line) are depicted. Moreover, Figure 6.3 draws $\displaystyle\frac{1}{N}\sum_{i=1}^N \pi^{i,*}_t$ and illustrates the asymptotic market clearing property.
\begin{figure}[H] 
    \centering
    \includegraphics[width=16cm]{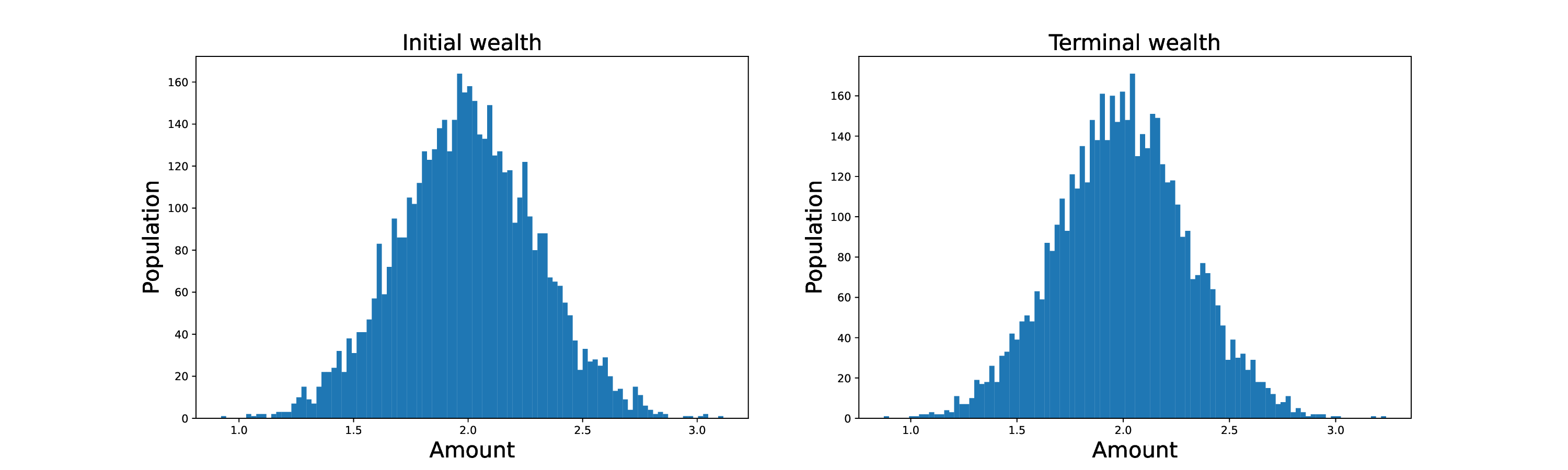}
    \caption{Initial and terminal wealth}
\end{figure}

\begin{figure}[H] 
    \centering
    \includegraphics[width=16cm]{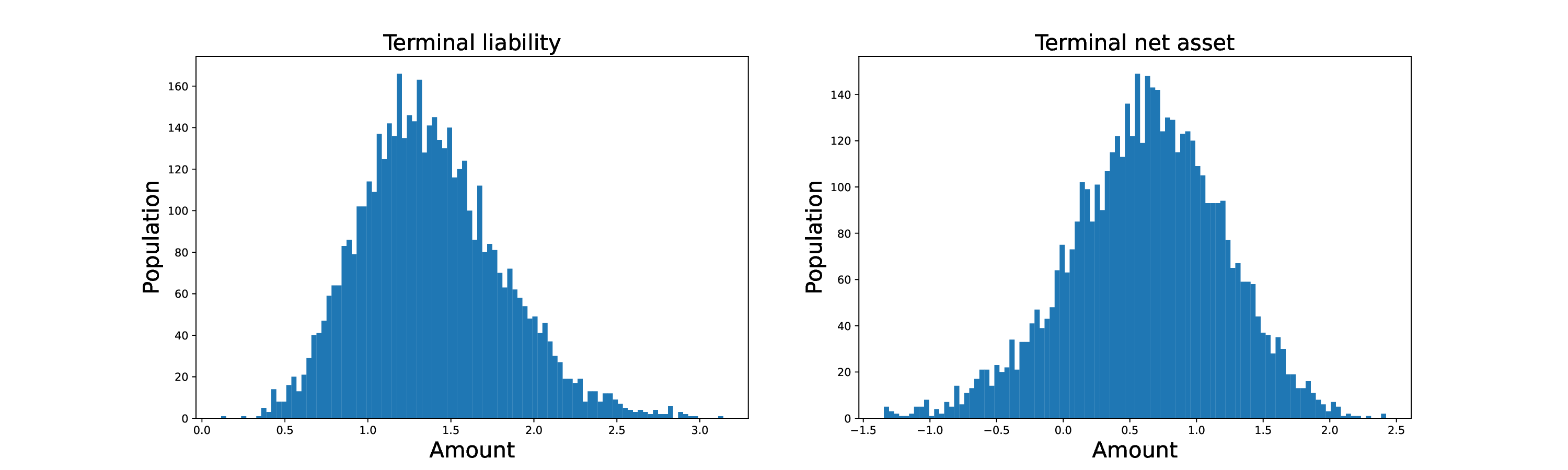}
    \caption{Terminal liability and net asset}
\end{figure}
Figure 6.4 provides the distributions of agents' initial wealths $(\xi^i)_{i=1,\ldots,5000}$ and terminal wealths $(\mathcal{W}^{i,p^{i,*}}_T)_{i=1,\ldots,5000}$. Figure 6.5 draws the distribution of terminal liabilities $(F^i)_{i=1,\ldots,5000}$ and terminal net assets $(\mathcal{W}^{i,p^{i,*}}_T-F^i)_{i=1,\ldots,5000}$.
\section{Conclusion and Discussions}
In this chapter, we studied the mean field equilibrium asset pricing model in a partially observable market. 
In Section 6.2, we formulated the utility maximization problem under partial observation and derived the condition for optimal strategies. 
In Section 6.3, within the exponential quadratic Gaussian framework, we associated the solution of the mean field BSDE with matrix ODEs and verified the asymptotic market clearing condition in the large population limit. 
We then constructed the risk-premium process endogenously using the Kalman-Bucy filtering theory.
Section 6.4 presented a simple numerical example that visualizes a sample path of the risk-premium process as well as distributions of agents' wealth.\par
As a direction for future research, we may possibly generalize the dynamics \eqref{LG_MRP} by, for example, adding a jump process to formulate the possibility of default. This may lead us to consider the non-linear filtering for jump-diffusion processes.

%% file: conclusion.tex
\vspace{0mm}
In this thesis, we have studied three asset pricing models using the mean field game theory.
The first model, provided in Chapter 4, is the fundamental model, which provides an important insight into the equilibrium risk-premium process in an incomplete market. 
The second and third models are successful extensions of the first model to general cases, allowing for semi-analytic solutions and numerical simulations.
These visualized the impact of changes in each market parameter on the equilibrium risk premium.
These models succeed in describing how risk premium is formed through the interaction of many heterogeneous investors, allowing for analysis in a setting closer to the real market.

The key to these models is the mean field BSDEs such as \eqref{mfg-BSDE-org}.
These are a novel form of equations in this field and an analysis of its properties forms the basis of this study.
However, the well-posedness of such an equation can be shown in quite limited cases; when parameters of the equation are sufficiently small or when the terminal condition has a special structure.
A detailed analysis of this equation and the discovery of further general properties will be an important challenge for future studies.

In concluding the thesis, we would like to discuss the limitations, as well as the potential for future development and applications of this study.
Firstly, the optimal martingale method, proposed by \cite{huUtilityMaximizationIncomplete2005a}, can be applied to other classes of utility functions, including power utility and Epstein-Zin-type recursive utility.
However, the resultant mean field (F)BSDEs for such classes of utility functions become much more complex than \eqref{mfg-BSDE-org}. 
One major reason for this complexity is that the optimal trading strategy depends on the wealth process for these utilities, unlike the case of exponential utility.
Currently, it is difficult to prove the solvability of such equations, even with strong assumptions.

Moreover, as mentioned in Chapter 3, the return vector $\mu$ and the volatility matrix $\sigma$ cannot be derived endogenously in the model.
It is explained in \cite{karatzas_methods_1998} that this is due to the possibility of replacing stocks by self-financing portfolios.
We will delve a little deeper into this fact.
In the models of this thesis, the financial market is characterized by the parameters of excess return $\mu$, volatility $\sigma$, and initial price $S_0$. (See Assumption \ref{assumption-market}.)
We use the symbol $\mathcal{M}=(\mu,\sigma,S_0)$ to specify the market.
For $1\leq k \leq n$, let $(\widetilde{S}^{(k)}_t)_{t\in[0,T]}$ be an $\mathbb{R}$-valued $\mathbb{F}^0$-progressively measurable process representing the value of the self-financing portfolio managed by the strategy $(q_k(t))_{t\in[0,T]}\in\mathbb{L}^\infty(\mathbb{F}^0,\mathbb{R}^n)$ with initial value $\widetilde{S}^{(k)}_0\in\mathbb{R}_{++}$ in the market $\mathcal{M}$.
For each $1\leq l \leq n$ and $t\in[0,T]$, $q_{kl}(t)$ denotes the $l$-th element of the vector $q_{k}(t)$ and represents the proportion of $\widetilde{S}^{(k)}_t$ invested in the $l$-th security at time $t$. 
In other words, the amount of money $\widetilde{S}^{(k)}_t q_{kl}(t)$ is invested in the $l$-th security at time $t$.
Then, for each $1\leq k \leq n$, the dynamics of $(\widetilde{S}^{(k)}_t)_{t\in[0,T]}$ is given by
\begin{equation}
		\widetilde{S}^{(k)}_t=\widetilde{S}^{(k)}_0+\int_0^t \widetilde{S}^{(k)}_rq_{k}(r)^\top(\mu_r dr+\sigma_r dW_r^0), ~~~t\in[0,T].
\end{equation}
We assume that the vectors $q_1(t),\ldots,q_n(t)$ are linearly independent for all $t\in[0,T]$.
Define an $\mathbb{R}^{n\times n}$-valued $\mathbb{F}^0$-progressively measurable process $Q:=(Q_t)_{t\in[0,T]}$ by $Q_t:=(q_1(t),\ldots,q_n(t))^\top$ for $t\in[0,T]$. Then, $Q$ is bounded and non-singular for all $t\in[0,T]$. We further assume that $(Q^{-1}_t)_{t\in[0,T]}$ is bounded.
We set $\widetilde{S}_t:=(\widetilde{S}^{(1)}_t,\ldots,\widetilde{S}^{(n)}_t)^\top$ for $t\in[0,T]$. Then, the dynamics of $\widetilde{S}$ is given by
\begin{equation}
		\widetilde{S}_t=\widetilde{S}_0+\int_0^t {\rm diag}(\widetilde{S}_r)Q_r(\mu_r dr+\sigma_r dW_r^0), ~~~t\in[0,T] 
\label{eq-stock-price-widetilde}
\end{equation}
with $\widetilde{S}_0\in\mathbb{R}^n_{++}$. 
Define $\widetilde\mu_t:=Q_t\mu_t$ and $\widetilde\sigma_t:=Q_t\sigma_t$ for $t\in[0,T]$. We can write 
\begin{equation}
		\widetilde{S}_t=\widetilde{S}_0+\int_0^t {\rm diag}(\widetilde{S}_r)(\widetilde\mu_r dr+\widetilde\sigma_r dW_r^0), ~~~t\in[0,T].
\end{equation}

We can then consider another financial market $\widetilde{\mathcal{M}}=(\widetilde\mu,\widetilde\sigma,\widetilde{S}_0)$.
The risk premium of the market $\widetilde{\mathcal{M}}$, denoted by $\widetilde\theta$, is given by
\[
    \widetilde\theta_t := \widetilde\sigma_t^\top(\widetilde\sigma_t\widetilde\sigma_t^\top)^{-1}\widetilde\mu_t=\sigma_t^\top Q_t^\top (Q_t\sigma_t\sigma_t^\top Q^\top_t)^{-1}Q_t\mu_t=\sigma_t^\top (\sigma_t\sigma_t^\top)^{-1}\mu_t = \theta_t
\]
for $t\in[0,T]$, which implies that the market risk premium of $\widetilde{\mathcal{M}}$ is the same as that of $\mathcal{M}$.
In the market $\widetilde{\mathcal{M}}$, the agents' utility maximization problem is formulated as
\begin{equation}
    \begin{split}
        \sup_{\widetilde\pi\in\widetilde{\mathbb{A}}^i} \mathbb{E}\Bigl[-\exp\Bigl(-\gamma^i(\widetilde{\mathcal{W}}^{i,\widetilde\pi}_T-F^i)\Bigr) \Bigr]
    \end{split}
\end{equation}
subject to
\[
    \widetilde{\mathcal{W}}^{i,\widetilde\pi}_t =\xi^i + \int_0^t \widetilde\pi_s^\top\widetilde\sigma_s(\theta_sds +  dW_s^0),~~~t\in[0,T],
\]
where the admissible space $\widetilde{\mathbb{A}}^i$ is defined as the set of strategies $\widetilde\pi\in\mathbb{H}^2(\mathbb{F}^{0,i},\mathbb{R}^n)$ such that the family $\{\exp(-\gamma^i \widetilde{\mathcal{W}}^{i,\widetilde\pi}_\tau);\tau\in\mathcal{T}^{0,i} \}$ is uniformly integrable.
Set $\pi_t:=Q_t^\top\widetilde\pi_t$ for $t\in[0,T]$. 
It is straightforward to see that
\begin{equation}
    \begin{split}
    \widetilde{\mathcal{W}}^{i,\widetilde\pi}_t 
    &=
    \xi^i + \int_0^t \widetilde\pi_s^\top\widetilde\sigma_s(\theta_sds +  dW_s^0)\\
    &=
    \xi^i + \int_0^t (Q_s^\top\widetilde\pi_s)^\top\sigma_s(\theta_sds +  dW_s^0)\\
    &=
    \xi^i + \int_0^t \pi_s^\top\sigma_s(\theta_sds +  dW_s^0)\\
    &=
    \mathcal{W}^{i,\pi}_t.
    \end{split}
\end{equation}
Since $Q_t$ is non-singular and both $Q_t$ and $Q^{-1}_t$ are bounded for $t\in[0,T]$, we deduce that $\pi\in\mathbb{A}^i$ if and only if $\widetilde\pi\in\widetilde{\mathbb{A}}^i$.
Thus, the utility maximization problem in the market $\widetilde{\mathcal{M}}$ can be written as
\begin{equation}
    \begin{split}
        \sup_{\pi\in\mathbb{A}^i} \mathbb{E}\Bigl[-\exp\Bigl(-\gamma^i(\mathcal{W}^{i,\pi}_T-F^i)\Bigr) \Bigr]
    \end{split}
\end{equation}
subject to
\[
    \mathcal{W}^{i,\pi}_t =\xi^i + \int_0^t \pi_s^\top\sigma_s(\theta_sds +  dW_s^0),~~~t\in[0,T],
\]
which is exactly the same problem as in the market $\mathcal{M}$. 
If the markets $\mathcal{M}$ and $\widetilde{\mathcal{M}}$ satisfy $\theta^{\rm mfg}_t=\sigma_t^\top (\sigma_t\sigma_t^\top)^{-1}\mu_t=\widetilde\sigma_t^\top(\widetilde\sigma_t\widetilde\sigma_t^\top)^{-1}\widetilde\mu_t$ for $t\in[0,T]$, the asymptotic market clearing condition can be achieved in both markets.
Here, $\theta^{\rm mfg}$ is given by \eqref{theta-mfg}. In other words, the difference between $\mu$ and $\widetilde\mu$ or $\sigma$ and $\widetilde\sigma$ does not affect the equilibrium state in these cases. 
Therefore, the excess return $\mu$ and the volatility $\sigma$ cannot be determined by considering the market clearing condition in the models of this thesis.

Intuitively, one might think that this result is attributed to the fact that the market is completely liquid and trades can be made without transaction costs in these models.
When agents can trade an arbitrary amount of securities at any time without incurring additional costs in either market $\mathcal{M}$ or $\widetilde{\mathcal{M}}$, the optimal trajectory of the wealth dynamics should inevitably be identical in both markets.
This is due to the uniqueness of the optimal strategy in each market, as shown in Theorem \ref{verification-part1}.
Then, one possible direction for future research would be to develop a price formation model for markets with transaction costs and liquidity restrictions.
The optimal investment problem with transaction costs is being actively researched, and in recent years, Kallsen \& Muhle-Karbe \cite{shadow-price} has developed an approach to address the problem with transaction costs using the concept of a shadow price.
Furthermore, modelling limit order books is another possible way to formulate the liquidity restriction.
Recently, Cont \& M\"{u}ller \cite{spde-lob} has developed a model for dynamic limit order books, employing the stochastic partial differential equations (SPDEs).
In any case, extending such models to an equilibrium model may contribute to the discovery of new results for asset pricing theory, but at the same time, it will inevitably lead to more advanced mathematical problems.